\pdfoutput=1
\documentclass[a4paper,onecolumn,11pt,nopdfoutputerror]{quantumarticle}
\usepackage[T1]{fontenc}
\usepackage[utf8]{inputenc}
\def\techreport{true}

\ifx\papertech\undefined
  \ifx\techreport\undefined
    \newcommand{\onlytech}[1]{\ignorespaces}
    \newcommand{\onlypaper}[1]{#1}
  \else
    \newcommand{\onlytech}[1]{#1}
    \newcommand{\onlypaper}[1]{\ignorespaces}
    \makeatletter
    \def\@copyrightspace{\relax}
    \makeatother
  \fi
\else
  \newcommand{\onlytech}[1]{ {\color{blue}[(TECH ONLY) #1]} }
  \newcommand{\onlypaper}[1]{ {\color{olive}[(PAPER ONLY) #1]} }
\fi

\usepackage{mathtools,amssymb,lipsum}

\usepackage{tikz}
\usetikzlibrary{arrows,chains,matrix,positioning,scopes,shapes,arrows.meta,calc,fit}
\pgfdeclarelayer{background}
\pgfdeclarelayer{behind}
\pgfdeclarelayer{above}
\pgfdeclarelayer{glass}
\pgfsetlayers{background,behind,main,above,glass}
\makeatletter
\tikzset{join/.code=\tikzset{after node path={%
\ifx\tikzchainprevious\pgfutil@empty\else(\tikzchainprevious)%
edge[every join]#1(\tikzchaincurrent)\fi}}}
\makeatother
\tikzset{>=stealth',every on chain/.append style={join},
         every join/.style={->}}
\tikzstyle{labeled}=[execute at begin node=$\scriptstyle,
   execute at end node=$]
\usepackage{amsmath}
\usepackage{color}
\usepackage{amsfonts}
\usepackage[numbers,sort&compress]{natbib}
\usepackage{lettrine}
\usepackage{amsthm}

\allowdisplaybreaks[4]

\usepackage{graphicx}
\usepackage{footnote}
\usepackage{booktabs}
\usepackage{float}
\floatstyle{ruled}
\newfloat{algorithm}{tbp}{loa}
\floatname{algorithm}{Algorithm}

\usepackage{algpseudocode} 

\usepackage{subeqnarray}
\usepackage{cases}
\usepackage{url}
\usepackage{hyperref}
\usepackage{bm}
\usepackage{physics}
\usepackage{xr}
\usepackage{scalerel}
\usepackage{enumitem}
\newtheorem{theorem}{Theorem}

\newtheorem{assumption}[theorem]{Assumption}
\newtheorem{lemma}[theorem]{Lemma}
\newtheorem{corollary}[theorem]{Corollary}
\newtheorem{proposition}[theorem]{Proposition}

\newtheorem{remark}[theorem]{Remark}

\newtheorem{definition}[theorem]{Definition}

\newcommand{\vect}[1]{ \mathbf{#1} }

\newcommand{\edefinition}{\hfill$\triangle$}

\newcommand{\eremark}{\hfill$\triangle$}
\usepackage{etoolbox}
\providecommand{\splitmainmode}{full}
\newif\ifbuildmainpart
\newif\ifbuildappendixpart
\newif\ifsplitmainonly
\newif\ifsplitappendixonly
\buildmainparttrue
\buildappendixparttrue
\splitmainonlyfalse
\splitappendixonlyfalse
\def\splitmodemain{main}
\def\splitmodeappendix{appendix}
\ifx\splitmainmode\splitmodemain
  \buildmainparttrue
  \buildappendixpartfalse
  \splitmainonlytrue
  \splitappendixonlyfalse
\else
  \ifx\splitmainmode\splitmodeappendix
    \buildmainpartfalse
    \buildappendixparttrue
    \splitmainonlyfalse
    \splitappendixonlytrue
  \else
    \buildmainparttrue
    \buildappendixparttrue
    \splitmainonlyfalse
    \splitappendixonlyfalse
  \fi
\fi
\ifsplitmainonly
  \let\splitmodeSavedBibcite\bibcite
  \renewcommand{\bibcite}[2]{}
  \let\bibcite\splitmodeSavedBibcite
\fi
\ifsplitappendixonly
  \let\splitmodeSavedBibcite\bibcite
  \renewcommand{\bibcite}[2]{}
  \let\bibcite\splitmodeSavedBibcite
\fi
\AtEndEnvironment{definition}{\edefinition}
\newcommand{\set}[1]{\mathcal{#1}}

\newcommand{\matr}[1]{\mathbf{#1}}
\newcommand{\imagunit}{\iota}

\newcommand{\setF}{ \mathcal{F} }
\newcommand{\setE}{ \mathcal{E} }
\newcommand{\e}{ \mathrm{e} }
\newcommand{\setpf}{ \partial f }

\newcommand{\sR}{\mathbb{R}}
\newcommand{\sRp}{\mathbb{R}_{\geq 0}}
\newcommand{\sRpp}{\mathbb{R}_{> 0}}
\newcommand{\sC}{\mathbb{C}}
\newcommand{\sZ}{\mathbb{Z}}
\newcommand{\sZpp}{\mathbb{Z}_{>0}}

\newcommand{\Herm}{\mathsf{H}}

\newcommand{\setx}{ \mathcal{X} }

\newcommand{\vx}{ \vect{x} }
\newcommand{\vy}{ \vect{y} }
\newcommand{\vz}{ \vect{z} }

\newcommand{\defeq}{\triangleq}
\newcommand*{\ie}{\textit{i.e.}}
\newcommand*{\eg}{\textit{e.g.}}
\newcommand*{\etal}{\textit{et al.}}

\newcommand{\vk}{ \mathbf{k} }

\newcommand{\np}{ n_{\mathrm{p}} }
\newcommand{\Np}{ N_{\mathrm{p}} }

\newcommand{\setV}{ \set{V} }
\newcommand{\setVb}{ \set{V}_{\mathrm{B}} }
\newcommand{\setVbnsfn}{ \set{V}_{\mathrm{B},\nsfn} }
\newcommand{\setVbnsfnn}[1]{ \set{V}_{\mathrm{B},#1} }

\newcommand{\vqpn}[1]{ \mathbf{q}_{\mathrm{p},#1} }

\newcommand{\qpn}[1]{ q_{\mathrm{p},#1} }

\newcommand{\mUpha}[2]{ \matr{U}_{\mathrm{pe-sub}}\left( #1, #2 \right) }
\newcommand{\mUphamd}[2]{ \matr{U}_{\mathrm{pe}}\left( #1, #2 \right) }
\newcommand{\mUphano}{ \matr{U}_{\mathrm{pe-sub}} }

\newcommand{\graphG}{ \mathsf{G} }

\newcommand{\Nsfn}{ N_{\graphG} }
\newcommand{\Nw}{ N_{\mathrm{w}} }
\newcommand{\nsfn}{ n_{\graphG} }
\newcommand{\nw}{ n_{\mathrm{w}} }

\newcommand{\QPU}{ \mathrm{QPU} }
\newcommand{\QPUc}{ \mathrm{QPU}_{\mathrm{c}} }

\newcommand{\vqfrn}[1]{ \mathbf{q}_{\mathrm{1st},#1} }
\newcommand{\vqsrn}[1]{ \mathbf{q}_{\mathrm{2nd},#1} }

\newcommand{\zdec}{ z_{\mathrm{dec}} }

\newcommand{\vqB}{\mathbf{q}_{\mathrm{B}}}
\newcommand{\vqBauxn}{\mathbf{q}_{\mathrm{Baux},\nsfn}}

\newcommand{\qB}{q_{\mathrm{B}}}

\newcommand{\rmc}{\mathrm{c}}
\newcommand{\rmp}{\mathrm{p}}

\newcommand{\rmini}{\mathrm{ini}}

\newcommand{\qpc}{ q_{\mathrm{p,c}} }
\newcommand{\vqpc}{ \mathbf{q}_{\mathrm{p,c}} }

\newcommand{\vqstn}[1]{\mathbf{q}_{1\mathrm{st},#1}}

\newcommand{\vqloc}{\mathbf{q}_{\mathrm{loc},[\Nsfn]}}
\newcommand{\vqlocn}{\mathbf{q}_{\mathrm{loc},\nsfn}}
\newcommand{\vqcen}{\mathbf{q}_{\mathrm{cen}}}
\newcommand{\vqaux}{\mathbf{q}_{\mathrm{aux}}}
\newcommand{\vqauxc}{\mathbf{q}_{\mathrm{aux,c}}}
\newcommand{\vqauxd}{\mathbf{q}_{\mathrm{aux,d}}}

\newcommand{\vqoran}[1]{\mathbf{q}_{\mathrm{ora},#1}}

\newcommand{\Npe}{N_{\mathrm{pe}}}

\newcommand{\na}{n_{\mathrm{a}}}

\newcommand{\gmax}{g_{\max}}
\newcommand{\gmaxn}[1]{g_{\max}^{(#1)}}

\newcommand{\vxmax}{\mathbf{x}_{\max}}
\newcommand{\svzero}{\ket{\mathbf{0}}}

\newcommand{\szero}{\ket{0}}
\newcommand{\sone}{\ket{1}}

\algrenewcommand\algorithmicindent{1em}

\begin{document}

\title{A Scalable Distributed Quantum Optimization Framework via Factor Graph Paradigm}

\author[a]{Yuwen Huang}
\email{yuwen.huang@ieee.org}
\thanks{Corresponding author.}
\author[b]{Xiaojun Lin}
\email{xjlin@ie.cuhk.edu.hk}
\author[a]{Bin Luo}
\email{binluo@link.cuhk.edu.hk}
\author[a]{John C.S. Lui}
\email{cslui@cse.cuhk.edu.hk}
\affiliation[a]{Department of Computer Science and Engineering, The Chinese University of Hong Kong, Hong Kong SAR, China}
\affiliation[b]{Department of Information Engineering, The Chinese University of Hong Kong, Hong Kong SAR, China}
%



\maketitle
\ifbuildmainpart

\begin{abstract}
Distributed quantum computing (DQC) connects many small quantum processors into a single logical machine, offering a
practical route to scalable quantum computation. However, most existing DQC paradigms are structure-agnostic. Circuit
cutting proposed by Peng \etal~in [Phys. Rev. Lett., Oct. 2020] reduces per-device qubits at the cost of exponential classical post-processing, while
search-space partitioning proposed by Avron \etal~in [Phys. Rev. A., Nov. 2021] distributes the workload but weakens Grover's ideal quadratic speedup. In
this paper, we introduce a structure-aware framework for distributed quantum optimization that resolves this complexity-resource trade-off.
We model the objective function as a factor graph and expose its sparse interaction structure. We cut the graph along its natural
``seams'', \ie, a separator of boundary variables, to obtain loosely coupled subproblems that fit on
resource-constrained processors. We coordinate these subproblems with shared entanglement, so the network executes a
single globally coherent search rather than independent local searches.
We prove that this design preserves Grover-like scaling: for a search space of size $N$, our framework achieves
$O(\sqrt{N})$ query complexity up to processors and separator dependent factors, while relaxing the qubit requirement of
each processor. We extend the framework with a hierarchical divide-and-conquer strategy that scales to large-scale
optimization problems and supports two operating modes: a fully coherent mode for fault-tolerant networks and a hybrid mode that
inserts measurements to cap circuit depth on near-term devices.
We validate the predicted query-entanglement trade-offs through simulations over diverse network topologies, and we show
that structure-aware decomposition delivers a practical path to scalable distributed quantum optimization on quantum
networks.

\end{abstract}

\section{Introduction}
\label{sec:introduction}

Quantum computing offers a potential to solve certain problems that are widely believed to be intractable for even the most powerful classical computers. Present quantum hardware, however, remains constrained by fundamental physical limits. Current devices operate in the noisy intermediate-scale quantum (NISQ) regime~\cite{Preskill2018}, where processors of 50--100s of qubits are available but are susceptible to environmental noise and decoherence~\cite{Kempe2001, Clerk2010}. These imperfections restrict the achievable circuit depth and make the construction of a single, large-scale, monolithic fault-tolerant quantum computer a formidable long-term engineering challenge~\cite{Kim2023}.

An alternative approach to scalable quantum computation is a distributed, modular architecture that connects many smaller, higher-fidelity quantum processing units (QPUs)~\cite{Monroe2014, Wehner2018}. This modular approach, analogous to classical high-performance computing~\cite{Hager2010}, envisions a network of QPUs collaborating on a single computational task~\cite{Kimble2008}. This distributed quantum computing (DQC) model uses quantum networks to establish shared entanglement between modules, enabling non-local operations and allowing the network to function as a single, coherent logical processor~\cite{Pompili2021}. The central challenge, however, lies in designing quantum algorithms and network protocols that do not sacrifice the very quantum advantage they are meant to enable.

The current DQC literature is largely shaped by two \emph{structure-agnostic} approaches that split a computation either at
the circuit level or at the input-space level.
\emph{Circuit cutting} decomposes a monolithic circuit into fragments that fit on separate QPUs and then reconstructs the
global outcome by classical post-processing of fragment statistics~\cite{Peng2020, Tang2021,Piveteau2024}. A common
instance is to replace nonlocal gates by sampling local operations (``virtual gates'')~\cite{Mitarai2021}. Circuit
cutting saves qubits but shifts the burden to classical computation: the reconstruction cost grows exponentially with the
number of inter-processor cuts $K$, typically as $O(4^K)$~\cite{Peng2020, Piveteau2024} or $O(9^K)$~\cite{Mitarai2021}.
When $K$ is large, this overhead can erase the benefit of quantum speedups.
\emph{Search-space partitioning} instead divides a search space of size $N$ across $K$ processors, which run independent
local searches and coordinate only through classical communication~\cite{Avron2021}. This breaks the coherence required
by Grover's algorithm: rather than the $O(\sqrt{N})$ queries of a monolithic search, the total query count becomes
$O(K\sqrt{N/K})=O(\sqrt{KN})$~\cite{Avron2021}. The extra factor $\sqrt{K}$ implies that adding processors to
satisfy local qubit limits makes the computation harder in $K$.
These limitations share a common cause: both paradigms distribute the workload without maintaining global quantum
coherence across processors, which is essential for quadratic speedups.
This motivates a \emph{structure-aware} alternative: decompose the problem along a small set of boundary variables so that
most computation can be localized, while a limited amount of shared entanglement preserves coherence across the boundary
and enables Grover-like global amplification.
Other decomposition methods, such as entanglement forging~\cite{Eddins2022}, target specific settings (\eg, a bipartite
cut with low entanglement) and therefore do not provide a general framework for distributed optimization.

This paper introduces a \emph{structure-aware} paradigm for distributed quantum optimization that resolves the complexity--resource trade-off by using
two resources that structure-agnostic methods leave on the table: \emph{problem structure} and \emph{global coherence}.
We represent the objective function with a factor graph~\cite{Kschischang2001}, which makes sparse interactions explicit through a
sum of local terms. We then cut the graph along a small set of \emph{boundary variables} $\setVb$ and assign each
resulting subproblem to a resource-constrained worker QPU. The key technical point is that the workers do not run as
independent solvers: shared entanglement lets the network implement the required distributed primitives while keeping
the computation coherent across subproblems.

\begin{table*}[t]
  \scriptsize
  \centering
  \caption{Quantitative comparison of DQC paradigms.
  Here $N=2^n$ and $K$ denotes the number of workers (or the number of cuts in circuit cutting).
  Query complexity counts oracle queries. ``Light'' post-processing means polynomial-time aggregation or coordination.
  }
  \label{tab:dqc_comparison_extended}

  \setlength{\tabcolsep}{7pt}
  \renewcommand{\arraystretch}{1.25}

  \begin{tabular}{@{}p{2.2cm}p{2cm}p{3cm}p{6.2cm}@{}}
    \toprule
    \textbf{Work} & \textbf{Query complexity} &
    \textbf{Classical post-processing} &
    \textbf{Inter-node communication} \\
    \midrule

    \textbf{This work} &
    $O(\sqrt{N})$ &
    Light (poly-time aggregation) &
    Shared EPR pairs and classical messages (to maintain global coherence) \\
    \midrule
    \addlinespace[2pt]

    Peng \etal~\cite{Peng2020} &
    $O(\sqrt{N})$ &
    $O\bigl( 4^K\cdot \mathrm{poly}(n) \bigr)$ &
    Classical messages (fragment outcomes for reconstruction) \\
    \midrule
    \addlinespace[2pt]

    Avron \etal~\cite{Avron2021} &
    $O(\sqrt{KN})$ &
    Light (poly-time coordination) &
    Classical messages (coordination only) \\
    \bottomrule
  \end{tabular}
\end{table*}

This design yields a concrete win. For a search space of size $N$, our distributed algorithm preserves Grover-like scaling:
it uses $O(\sqrt{N})$ oracle queries, up to factors that depend on the number of processors and the chosen separator,
while reducing the qubit footprint of each worker to $O\bigl(|\set{V}_{\nsfn}|\bigr)$ for a local search space of size
$2^{|\set{V}_{\nsfn}|}$. Table~\ref{tab:dqc_comparison_extended} makes this advantage explicit by contrasting our trade-offs with circuit cutting and search-space partitioning. We further corroborate the predicted resource drivers with simulations on
quantum networks of diverse topologies.
 \emph{In short, we pay a small boundary search and shared entanglement to cut per-QPU qubits sharply, while retaining Grover-like scaling.}

\noindent\textbf{Main Contributions.}
\begin{itemize}
  \item \textbf{Structure-aware DQC via factor graphs:}
  We propose a distributed framework that decomposes a broad class of structured optimization problems using a factor
  graph separator $\setVb$, producing subproblems that fit on small QPUs while remaining coupled through a coherent
  network protocol.

  \item \textbf{Grover-like query scaling with reduced per-QPU qubits:}
  We show that the resulting distributed algorithm achieves $O(\sqrt{N})$ query complexity up to processor and
  separator dependent factors, while each worker uses only $O\bigl(|\set{V}_{\nsfn}|\bigr)$ qubits. We corroborate these claims with gate-level state vector simulations on small instances, which provide end-to-end sanity checks and confirm the predicted scaling trends.

  \item \textbf{Hierarchical divide-and-conquer for arbitrary size:}
  We extend the single-level framework by a recursive construction that handles oversized subproblems. We define
  (i) a fully coherent mode, suited to fault-tolerant networks, and (ii) a hybrid mode that inserts measurements to cap
  circuit depth on NISQ devices, and we quantify the resulting query--entanglement trade-off in simulation.

  \item \textbf{End-to-end protocol and resource guarantees:}
  We specify the communication protocol and derive performance guarantees for success probability, query complexity,
  and quantum-network resources, including per-QPU qubits and EPR-pair consumption.
\end{itemize}

The paper is organized as follows. Section~\ref{sec:quantum_network_fg_decomposition_setup} introduces the network model,
factor-graph formulation, and decomposition strategy. Section~\ref{sec: distributed_quantum_algorithm} presents the
distributed algorithm and its guarantees. Section~\ref{sec:implementation_of_U_components} describes the distributed
implementation of $\matr{U}_{\mathrm{ini}}$, and Section~\ref{sec:overall_performance} analyzes overall performance and
resource use. Section~\ref{sec:hierarchical_algorithm} extends the framework with a scalable hierarchical design for both
near-term and fault-tolerant hardware. Section~\ref{sec:conclusion} concludes the paper.
Appendix~\ref{apx:basic_notation} lists basic notation. The remaining appendices provide proofs and additional results, including simulation studies (Appendices~\ref{sec:gate_level_validation} and~\ref{sec:hier_chain_multilevel}).


\section{Foundations: Quantum Network, Optimization Framework, and Decomposition}
\label{sec:quantum_network_fg_decomposition_setup}

This section establishes the foundational concepts for the distributed quantum algorithm presented in Section~\ref{sec: distributed_quantum_algorithm}. We first define the quantum network model, then define the factor graph optimization framework, and finally present a problem decomposition strategy. These elements are crucial for structuring and distributing complex optimization tasks across multiple quantum processors.

\begin{figure}[t!]
\centering
\captionsetup{font=scriptsize}
\begin{tikzpicture}[node distance=2cm, remember picture]
\tikzstyle{state_c}=[shape=diamond, draw, minimum size=0.3cm]
\node[state_c] (qc) at (0,0) [label=above: $\QPUc$] {};
\node[state_c] (q1) at (1.5,0) [label=above: $\QPU_{1}$] {};
\node[state_c] (q2) at (3,0) [label=above: $\QPU_{2}$] {};
\node[state_c] (q3) at (4.5,0) [label=above: $\QPU_{3}$] {};
\begin{pgfonlayer}{background}
\draw[-,draw]
 (qc) edge (q1)
 (q1) edge (q2)
 (q2) edge (q3)
 (q1) edge [bend right=20] (q3);
\end{pgfonlayer}
\end{tikzpicture}
\caption{An illustration of an example quantum network, where $\QPUc$ is the coordinator processor and $\QPU_{1}, \QPU_{2}, \QPU_{3}$ are worker processors.}
\label{fig: example quantum network}
\end{figure}
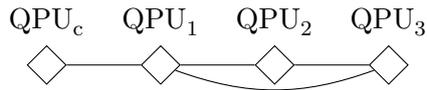

\subsection{The Quantum Network Model}
\label{def: the quantum network setup}

A quantum network comprises one coordinator quantum processor ($\QPUc$) and $\Nw$ worker quantum processors ($\QPU_1, \ldots, \QPU_{\Nw}$). Network operations are restricted to local operations and classical communications (LOCCs) (see, \eg,~\cite{Chitambar2014}): each processor executes quantum operations only on its local qubits. Processors communicate classically and can share Einstein-Podolsky-Rosen (EPR) pairs with their direct neighbors.

We represent the network as a graph $\mathsf{N}(\set{V}_{\mathsf{N}}, \set{E}_{\mathsf{N}})$. (Fig.~\ref{fig: example quantum network} illustrates a simple example where $\Nsfn = 3$.)
\begin{enumerate}
    \item The vertex set $\set{V}_{\mathsf{N}} = \{\mathrm{c}\} \cup \{\nw \mid \nw \in [\Nw]\}$ corresponds to the quantum processors, with $\mathrm{c}$ denoting $\QPUc$ and $\nw$ denoting $\QPU_{\nw}$.
    \item An edge $(v_1, v_2) \in \set{E}_{\mathsf{N}} \subseteq \set{V}_{\mathsf{N}} \times \set{V}_{\mathsf{N}}$ indicates that processors $\mathrm{QPU}_{v_1}$ and $\mathrm{QPU}_{v_2}$ are direct neighbors, enabling them to establish shared EPR pairs and a classical communication channel.
\end{enumerate}
We assume $\mathsf{N}$ is a connected graph. This connectivity ensures two properties:
\begin{enumerate}
    \item Any two processors can communicate classically, possibly via intermediate nodes.
    \item Shared EPR pairs can be established between $\QPU_{v_1}$ and $\QPU_{v_2}$ either directly if $(v_1, v_2) \in \set{E}_{\mathsf{N}}$, or indirectly using entanglement swapping~\cite{ZelseZfiukowski1993} along a path in $\mathsf{N}$.
\end{enumerate}
Throughout this paper, we assume that any pair of processors in the network can share EPR pairs and communicate classically, thereby enabling operations such as quantum teleportation~\cite[Chapter 1.3.7]{Nielsen2010}.

\subsection{Factor Graph-Based Optimization Framework}
\label{sec:factor_graph_main_definition}

Factor graphs can be used for representing the factorization of multivariate functions. Such representation is instrumental in defining a general optimization framework applicable to diverse problems. In this paper, factor graphs serve to visualize problem structure and to guide a decomposition strategy (Section~\ref{sec:decomposition_of_factor_graph}), which is essential for our proposed distributed quantum algorithm (Section~\ref{sec: distributed_quantum_algorithm}).

\begin{definition}[Factor graph-based optimization problem]
\label{def: def of snfg}
A factor graph $\graphG(\set{F}, \set{V}, \set{E}, \set{X})$ consists of:
\begin{enumerate}
    \item A bipartite graph $(\set{F} \cup \set{V}, \set{E})$, where $\setF$ is the set of function nodes and $\set{V}$ is the set of variable nodes. Each edge $(v,f) \in \set{E}$ connects a variable node $v \in \set{V}$ to a function node $f \in \set{F}$.
    \item An alphabet $\set{X} \defeq \prod_{v \in \set{V}} \setx_{v}$, where $\setx_{v} = [N_{v}^{\mathrm{card}}]$ is the finite alphabet for variable $x_{v}$ of cardinality $N_{v}^{\mathrm{card}} \in \sZpp$.
\end{enumerate}
Based on $\graphG$, we make the following definitions.
\begin{enumerate}
    \setcounter{enumi}{2} 
    \item We define a configuration $\mathbf{x} \defeq (x_{v})_{v \in \set{V}} \in \set{X}$ to be an assignment of values to all variable nodes.
    \item For any $\set{V}' \subseteq \set{V}$, we define $\setx_{ \set{V}' } \defeq \prod_{ v \in \set{V}' } \setx_{v}$ and $\mathbf{x}_{\set{V}'} \defeq ( x_{v} )_{v \in \set{V}'} \in \setx_{\set{V}'}$ to be the alphabet and a partial configuration for variables in $\set{V}'$.
    \item For each function node $f \in \setF$, we define the set $\setpf \subseteq \set{V}$ to be the set of variable nodes adjacent to $f$.
    \item Each function node $f \in \setF$ corresponds to a local function $f: \setx_{\setpf} \to \sR$ that maps configurations of its neighboring variables $\mathbf{x}_{\setpf}$ to a value $f(\mathbf{x}_{\setpf}) \in \sR$.
    \item We define the global function $g: \setx \to \sRp$ to be $g(\mathbf{x}) \defeq \sum_{f \in \set{F}} f(\mathbf{x}_{\setpf})$.
    \item We define $\mathbf{x}_{\max}$ via an optimization problem: finding the $ \vx $ that maximize $ g(\vx) $, \ie, finding
    \begin{align}
        \mathbf{x}_{\max} \defeq \arg \max\limits_{ \mathbf{x} \in \set{X} } \, g( \mathbf{x} ).
        \label{eqn: def of the location of maximum of the global fun}
    \end{align}
    and the associated maximum value is 
  \begin{align}
          \gmax \defeq \max\limits_{ \mathbf{x} \in \set{X} } \, g( \mathbf{x} ).
          \label{eqn: def of maximum of the global fun}
      \end{align}
  \end{enumerate}
  \end{definition}

Factor graphs have wide applications~\cite{Kschischang2001,Loeliger2004,J.Wainwright2008}, including theoretical computer science (\eg,  satisfiability (SAT) problems), coding theory (\eg, decoding algorithms for low-density parity-check (LDPC) or turbo codes), and portfolio optimization.
Appendix~\ref{ex:local_portfolio_investment_problem_factor_graph} presents a factor-graph formulation of a local portfolio investment problem. The formulation is readily generalized to a broader class of local portfolio investment problems, including NP-complete variants considered in~\cite{Mansini1999}.
One can use factor graphs to represent the complicated dependencies among local functions with a structured graphical form, which is particularly useful for the case where the global function naturally factorizes over local interactions.

Note that the factor graph defined above is based on the semigroup $(\sR, +)$. We emphasis that 
the main contribution in this paper, \ie, our proposed distributed quantum algorithms, can be adapted for the factor graphs defined based on one of the semigroups in $(\sR, \max)$, $(\sR, \min)$, and $(\sRp, \times)$, by appropriately modifying how values are combined or compared within the algorithmic steps.

\textbf{It is sufficient to consider binary variables and normalized global function}.
  For compatibility with quantum algorithm design, particularly for encoding values into quantum states in qubits, in the remaining part of this paper, we consider factor graphs with \emph{binary variables} ($x_v \in \{0,1\}$ for all $v \in \set{V}$), \emph{nonnegative-valued} local functions, \ie, $f(\mathbf{x}_{\setpf}) \in \sR_{\geq 0}$ for all $f \in \setF$ and $\mathbf{x}_{\setpf} \in \set{X}_{\setpf}$, and a global function $g(\mathbf{x})$ normalized to the \emph{interval} $[0,1]$ for all $\mathbf{x}\in \set{X}$.
  Appendix~\ref{apx: binarization_normalization} (Proposition~\ref{prop: appendix_prop_simplification}) details a transformation that converts any factor graph under Definition~\ref{def: def of snfg} into an equivalent factor graph possessing these binary, nonnegativity, and normalized properties. This transformation preserves the essential structure of the optimization problem, allowing the solution $(\gmax, \mathbf{x}_{\max})$ of the original factor graph to be recovered from the solution of the transformed factor graph.

\subsection{Decomposition for Distributed Execution}
\label{sec:decomposition_of_factor_graph}

Our goal is to break a large factor-graph optimization problem into subproblems that fit on QPUs with limited qubits and limited connectivity. We encode each discrete binary variable into one qubit; for large graphs, this quickly exceeds the capacity of any single processor. In order to make the problem distributable, we choose a small set of boundary variables and condition on them so that the remaining graph splits into smaller components. Each component can then be solved on a different QPU, in parallel, with coordination only through the boundary.

Towards this end, our key observation is that the factor graphs of many applications are \emph{local}: each factor involves only nearby variables (\eg, nearest-neighbor Markov random fields and sparse parity-check codes~\cite{J.Wainwright2008}). Under such locality, small vertex separators typically disconnect large regions, so a modest boundary set $\setVb$ suffices to yield balanced subgraphs that can be mapped to distinct QPUs with limited qubits and connectivity. See, for example, standard treatments of factor graphs and localized interactions~\cite{Kschischang2001,J.Wainwright2008}. This section formalizes that split and links it directly to the distributed algorithm in Section~\ref{sec: distributed_quantum_algorithm}.

\begin{definition}[Boundary-based split of a factor graph]
\label{def: setup of the distributed quantum}
Let $\graphG=(\setF,\setV,\setE,\setx)$ be a factor graph (Definition~\ref{def: def of snfg}).
We fix a boundary set $\setVb\subseteq\setV$ of \emph{variable} nodes and form the reduced bipartite graph
\[
\graphG\setminus\setVb \defeq \bigl(\setF,\setV\setminus\setVb,\{(f,v)\in\setE:\ v\notin\setVb\}\bigr),
\]
\ie, delete all boundary variables in $\setVb$ and their incident edges. Let
\[
\graphG_{\mathrm s,1},\ldots,\graphG_{\mathrm s,\Nsfn}
\]
denote the connected components of $\graphG\setminus\setVb$ that contain at least one variable node.
For each $\nsfn\in[\Nsfn]$, we write $\setV_{\nsfn}\subseteq \setV\setminus\setVb$ and $\setF_{\nsfn}\subseteq\setF$
for the variable-node set and function-node set in $\graphG_{\mathrm s,\nsfn}$, respectively.
By construction, the factor sets $\{\setF_{\nsfn}\}_{\nsfn\in[\Nsfn]}$ are disjoint; hence each factor (function) node
belongs to exactly one subgraph $\graphG_{\mathrm s,\nsfn}$.
Finally, we define the boundary neighborhood of $\graphG_{\mathrm s,\nsfn}$ by
\begin{align*}
  \setVbnsfn &\defeq \left\{\, v\in\setVb \ \middle| \ \exists f\in\setF_{\nsfn}\ \text{s.t.}\ (f,v)\in\setE \, \right\}
  = \setVb \cap \bigcup_{f\in\setF_{\nsfn}} \partial f,
\end{align*}
\ie, the boundary variables adjacent to at least one factor in $\setF_{\nsfn}$.

For each boundary assignment $\vx_{\setVb}\in\setx_{\setVb}$ and each $\nsfn \in [\Nsfn]$, we define the \emph{local conditioned objective} based on its internal variables and the boundary variables to be
\begin{align*}
g_{\nsfn,\vx_{\setVbnsfn}}(\vx_{\setV_{\nsfn}})
\defeq \sum_{f\in\setF_{\nsfn}} f(\vx_{\setpf}),
\qquad \vx_{\setV_{\nsfn}}\in\prod_{v\in\setV_{\nsfn}}\setx_v,
\end{align*}
and define its maximum to be
$g^{\max}_{\nsfn,\vx_{\setVbnsfn}}\defeq\max\limits_{\vx_{\setV_{\nsfn}}} g_{\nsfn,\vx_{\setVbnsfn}}(\vx_{\setV_{\nsfn}})$.
The overall optimization objective, defined as a function of the \emph{boundary variables}, is then
\begin{align*}
g_{\mathrm b}(\vx_{\setVb})
\defeq \sum_{\nsfn\in[\Nsfn]} g^{\max}_{\nsfn,\vx_{\setVbnsfn}}.
\end{align*}
Maximizing $g$ reduces to a search over the boundary:
\begin{align*}
\gmax = \max_{\vx_{\setVb}\in\setx_{\setVb}} g_{\mathrm b}(\vx_{\setVb}).
\end{align*}
\end{definition}

See Fig.~\ref{fig:decomposition_example_factor_graph_DQC_boundary_nodes_highlighted} in Appendix~\ref{ex:local_portfolio_investment_problem_factor_graph} for an illustration of the decomposition in Definition~\ref{def: setup of the distributed quantum}.

\noindent\textbf{Choosing the boundary.}
In practice, there are several reasonable strategies for selecting $\setVb$.
A useful guiding principle is to choose a \emph{small} boundary that separates the factor graph into subgraphs of comparable size. This choice balances the computational load across subproblems while keeping the boundary-induced overhead small.
One principled method uses a junction-tree representation: for two adjacent cliques, the separator is given by the intersection of their variable sets, and a small intersection yields a compact separator~\cite{lauritzen1988local}.
Accordingly, the example in Fig.~\ref{fig:decomposition_example_factor_graph_DQC_boundary_nodes_highlighted} in Appendix~\ref{ex:local_portfolio_investment_problem_factor_graph} adopts the boundary suggested by the corresponding junction-tree separator.

\section{A Distributed Quantum Algorithm for Solving the Factor Graph-based Optimization Problem}
\label{sec: distributed_quantum_algorithm}

This section presents $A_{\mathrm{dist}}$, a distributed quantum algorithm to solve the general factor graph-based optimization problem from expression~\eqref{eqn: def of maximum of the global fun}.
The algorithm operates on the quantum network (Section~\ref{def: the quantum network setup}) and leverages the problem decomposition (Definition~\ref{def: setup of the distributed quantum}).
This alignment of the computation with the problem's topology enables a scalable approach to distributed quantum optimization.

The algorithm performs a quantum-accelerated binary search to find the global maximum, $\gmax$.
The core of this search is a quantum phase estimation subroutine that serves as a comparison mechanism.
This global search, coordinated by the $\QPUc$, is built upon a nested architecture: each query coherently executes local quantum optimization routines in parallel on all worker processors. This coherent parallelism is enabled by a crucial state preparation unitary, $\matr{U}_{\mathrm{ini}}$.
The complex distributed protocol for constructing this operator is a cornerstone of our framework and is detailed in Section~\ref{sec:implementation_of_U_components}.

For clarity, a comprehensive summary of all quantum registers is provided in Appendix~\ref{apx:registers}, the distributed implementations of core unitary components are detailed in Appendix~\ref{apx:implementationof_Upe}, and the rigorous proof of this section's main theorem is presented in Appendix~\ref{apx: proof_of_thm_overall_algorithm_properties}.

\begin{assumption}
In Sections~\ref{sec: distributed_quantum_algorithm}--\ref{sec:overall_performance}, we suppose that the number of worker QPUs is equal to the number of sub-graphs, \ie, $\Nw = \Nsfn$.
\eremark
\end{assumption}

\subsection{The Core Unitary: An Amplitude Amplification Operator}
\label{subsec:core_unitary_amp_amp}

The iterative binary search is driven by a family of amplitude amplification operators.
Each operator performs a Grover-type rotation that increases the amplitude of states
whose encoded objective value exceeds a prescribed threshold.

\textbf{Registers induced by the decomposition.}
Recall that removing the boundary-variable set $\setVb$ decomposes the factor graph into
$\Nsfn$ disjoint subgraphs $\{\graphG_{\mathrm s,\nsfn}\}_{\nsfn\in[\Nsfn]}$.
We allocate three groups of qubits:
\begin{itemize}
\item $\vqB$: the \emph{boundary register}, which stores an encoding of a boundary configuration
(\ie, an assignment to the boundary variables in $\setVb$).
\item $\vqloc$: the collection of \emph{local registers} associated with the $\Nsfn$ partitions.
In particular, $\vqloc$ contains the per-partition \emph{value registers}
$(\vqpn{\nsfn})_{\nsfn\in[\Nsfn]}$, where each $\vqpn{\nsfn}\in(\mathbb{C}^2)^{\otimes \Np}$
stores an $\Np$-bit encoding of the local optimum value for partition $\nsfn$.

\item $\vqaux$: auxiliary / work registers used by state preparation and by the marking circuitry.
We single out $\vqauxc\subseteq \vqaux$ for ancillas that are required to be returned to $\ket{\mathbf 0}$.
\end{itemize}
The exact qubit-level layouts are given in Appendix~\ref{apx:registers}; for the discussion below, it suffices to know
which information is stored in $\vqB$ and in the value registers $(\vqpn{\nsfn})_{\nsfn}$. We next introduce two important unitary operators that form the basis of our algorithm.

\textbf{A key state-preparation unitary.}
Given a boundary configuration, each partition $\graphG_{\mathrm s,\nsfn}$ faces a local sub-problem.
The optimal value of this local sub-problem generally depends on the boundary configuration.
We therefore introduce a crucial unitary $\matr U_{\mathrm{ini}}$ that prepares, \emph{coherently},
the local optima for \emph{all} partitions and across the superposition of \emph{all} boundary configurations.
Formally, $\matr U_{\mathrm{ini}}$ acts on $(\vqB,\vqloc,\vqaux)$ and satisfies
\begin{align}
\matr U_{\mathrm{ini}}\cdot
\ket{\mathbf 0}_{\vqB}\ket{\mathbf 0}_{\vqloc}\ket{\mathbf 0}_{\vqaux}
=\ket{\psi_{\mathrm{ini}}}\ket{\mathbf 0}_{\vqauxc}.
\label{eqn: generated state vector}
\end{align}
We assume that the prepared state $\ket{\psi_{\mathrm{ini}}}$ on the RHS of~\eqref{eqn: generated state vector} satisfies the following two properties.
Let $\vx_{\setVb}^{\ast}$ be an optimal boundary assignment attaining $\gmax$, and let
$\mathbf{z}_{\max,\rmp,[\Nsfn]} \triangleq (\mathbf{z}_{\max,\rmp,\nsfn})_{\nsfn\in[\Nsfn]}$
denote the corresponding tuple of $\Np$-bit encodings of the \emph{local} optimal objective values $\bigl(g^{\max}_{\nsfn,\vx_{\setVbnsfn}^{\ast}} \bigr)_{\nsfn \in [\Nsfn]}$ under $\vx_{\setVb}^{\ast}$.\footnote{Our distributed quantum algorithm aims to recover both the global optimum value $\gmax$ and an optimizer $\vx_{\max}$.
For this reason, we track $\mathbf{z}_{\max,\rmp,[\Nsfn]}$, which compactly represents the local optima induced by the optimal boundary assignment $\vx_{\setVb}^{\ast}$.
During the algorithm, for each candidate boundary assignment $\vx_{\setVb}$, the local optimal values are coherently encoded into an entangled
register state. The algorithm then uses a quantum phase-estimation routine to extract the values corresponding to $\vx_{\setVb}^{\ast}$.}
\begin{enumerate}
    \item \textbf{Sufficient weight on the optimal outcome.}
    Define $ p_{\min,\rmc} \defeq (1-\delta)^{2\Np\Nsfn}/|\setx_{\setVb}| $.
    The state $\ket{\psi_{\rmini}}$ assigns a non-negligible probability $p_{\min,\rmc}$ to the basis outcome where the value registers
    equal $\mathbf{z}_{\max,\rmp,[\Nsfn]}$:
    \begin{align}
        \bra{\psi_{\rmini}}
        \bigl(\matr{I}_{\mathrm{rem}}\otimes\matr{P}_{\{\mathbf{z}_{\max,\rmp,[\Nsfn]}\}}\bigr)
        \ket{\psi_{\rmini}}
        \geq p_{\min,\rmc} \geq 0,
    \label{eqn: property of psi vect with positive magnitude: final}
    \end{align}
    where the projection operator $\matr{P}_{\{\mathbf{z}_{\max,\rmp,[\Nsfn]}\}}$, as defined in Definition~\ref{def:marking_operators} in Appendix~\ref{apx:basic_notation}, projects to the subspace consisting of the branch associated with $\ket{\mathbf{z}_{\max,\rmp,[\Nsfn]}}_{(\vqpn{\nsfn})_{\nsfn}}$.

    \item \textbf{Accuracy of the encoded objective value.}
    The decoded sum of these $\Np$-bit values is a tight lower approximation of $\gmax$:
    \begin{align}
        0 \le \gmax - \sum_{\nsfn\in[\Nsfn]} \zdec\bigl(\mathbf{z}_{\max,\rmp,\nsfn}\bigr) 
        \le \Nsfn \cdot 2^{-\Np},
        \label{eqn: bound of gstar and znsfn}
    \end{align}
    where for an $\Np$-bit string $\mathbf z=\bigl(z(1),\ldots,z(\Np)\bigr)\in\{0,1\}^{\Np}$, the threshold encoding $\zdec$ is defined to be
    \begin{align}
      \zdec(\mathbf z)\defeq\sum_{k=1}^{\Np} z(k)\cdot 2^{-k}\in[0,1).
      \label{eqn:zdec_inline}
    \end{align}
    (See Appendix~\ref{apx:basic_notation} for further discussion and related operators.)
\end{enumerate}
Section~\ref{sec:implementation_of_U_components} constructs a state-preparation unitary $\matr{U}_{\mathrm{ini}}$ that
satisfies these requirements. In this section, we treat $\matr{U}_{\mathrm{ini}}$ as a black box and assume only the
specification in \eqref{eqn: generated state vector}--\eqref{eqn: bound of gstar and znsfn}.


\textbf{Amplitude amplification operator.}
We define the set of local-value strings whose \emph{sum} exceeds the threshold $\mathbf z$ to be
\begin{align}
  \set{Z}_{\ge \mathbf z}
  \defeq
  \left\{
  \mathbf z_{\mathrm p,[\Nsfn]}\in\!\!\!\prod_{\nsfn\in[\Nsfn]}\!\!\!\{0,1\}^{\Np}
  \ \middle| \
  \sum_{\nsfn\in[\Nsfn]}\!\!\!\zdec(\mathbf z_{\mathrm p,\nsfn}) \ge \zdec(\mathbf z)
  \right\}.
  \label{eqn: def of Zpz_sum}
\end{align}
Then, we define the following amplitude amplification operator, which generalizes the idea behind Grover's search algorithm~\cite{Grover1998}.
\begin{definition}[\textbf{Amplitude amplification operator}~\cite{Brassard1997}]
\label{def: the set of operators UAA}
  For each threshold string $\mathbf z\in\{0,1\}^{\Np}$, we define
  $\matr U_{\mathrm{AA},\mathsf{N},\mathbf z}$ on the registers $(\vqB,\vqloc,\vqaux)$ to be
  \begin{align}
  \matr U_{\mathrm{AA},\mathsf{N},\mathbf z}
  \defeq
  \matr U_{\mathrm{ini}}
  \cdot\bigl(2\ket{\mathbf 0}\bra{\mathbf 0}-\matr I\bigr)_{\mathrm{all}}
  \cdot \matr U_{\mathrm{ini}}^{\Herm}
  \cdot\bigl(\matr I_{\mathrm{rem}}\otimes \matr U_{\set{Z}_{\ge \mathbf z}}\bigr).
  \label{eqn: def of UAANz}
  \end{align}
  where
  \begin{itemize}
  \item $\bigl(2\ket{\mathbf 0}\bra{\mathbf 0}-\matr I\bigr)_{\mathrm{all}}$ is the reflection about the
  all-zero computational basis state on the full workspace $(\vqB,\vqloc,\vqaux)$,
  \item $\matr U_{\set{Z}_{\ge \mathbf z}}$ is a phase oracle that flips the sign of computational-basis states
  whose local value registers satisfy~\eqref{eqn: def of Zpz_sum}, \ie,
  \begin{align*}
  \matr U_{\set{Z}_{\ge \mathbf z}}\ket{\mathbf z_{\mathrm p,[\Nsfn]}}
  \defeq
  \begin{cases}
  -\ket{\mathbf z_{\mathrm p,[\Nsfn]}} & \text{if }\mathbf z_{\mathrm p,[\Nsfn]}\in \set{Z}_{\ge \mathbf z}\\
  \phantom{-}\ket{\mathbf z_{\mathrm p,[\Nsfn]}} & \text{otherwise}
  \end{cases}\quad,
  \end{align*}
  and act on the value registers $(\vqpn{\nsfn})_{\nsfn\in[\Nsfn]}$ (and any required ancillas),
  \item $\matr I_{\mathrm{rem}}$ denotes the identity on the remaining registers.
  \end{itemize}
\end{definition}

Implementation details for the reflection and oracle are provided in
  Section~\ref{sec:implementation_of_U_components} and Appendix~\ref{apx:implementationof_Upe}.

\subsection{The Distributed Algorithm: Procedure and Formalism}
\label{subsec:dist_algorithm}

\begin{algorithm}[t!]
\caption{Distributed quantum algorithm for factor graph optimization (bit-by-bit refinement)}
\label{alg:dist-opt}
\begin{algorithmic}[1]
\Statex \textbf{Input:} Factor graph $\graphG$, network $\mathsf{N}$, error parameter $\delta \in (0,1)$, precision bits $\Np$.
\Statex \textbf{Output:} A measured string $\mathbf{z}_{\max,\rmp,\rmc}\in\{0,1\}^{\Np}$ in $\vqpc$ that lower-approximates $\gmax$.
\State \textbf{Initialization:} Initialize all quantum registers across the network to $\ket{\mathbf 0}$ and set $\vqpc=\ket{\mathbf 0}$.
\For{$\np=1,\ldots,\Np$}
    \State \textbf{Form candidate threshold:} interpret each computational basis state associated with $\vqpc(1:\np-1)$ as $\mathbf z_{\mathrm{prefix}}$ and set
    $\mathbf z^{(\np)} \defeq (\mathbf z_{\mathrm{prefix}},1,0,\ldots,0)$.
    \State \label{step:threshold-test} \textbf{Threshold test (writes $\qpc(\np)$):} for each computational basis state associated with $\vqpc(1:\np-1)$, coherently implement
    $\mUpha{\matr U_{\mathrm{ini}},\,\matr U_{\mathrm{AA},\mathsf N,\mathbf z^{(\np)}}}{p_{\min,\rmc},\,\delta}$ in a distributed manner 
    so that the decision bit is written to $\qpc(\np)$ and all other registers are (approximately) restored.
\EndFor
\State \textbf{Measure and output:} Measure $\vqpc$ in the computational basis to obtain $\mathbf{z}_{\max,\rmp,\rmc}$.
\end{algorithmic}
\end{algorithm}

Algorithm~\ref{alg:dist-opt} is the main algorithmic contribution of this paper.
It outputs an $\Np$-bit string $\mathbf z_{\max,\rmp,\rmc}$ that lower-approximates the global optimum $\gmax$.
The algorithm proceeds \emph{by increasing precision}: it determines the bits of $\mathbf z_{\max,\rmp,\rmc}$ from the most
significant bit to the least significant bit, refining the approximation by one bit per iteration.

We first outline the high-level flow of the algorithm and its use of the amplitude-amplification operator defined earlier. The coordinator maintains the $\Np$-qubit register $\vqpc$, which stores the running $\Np$-bit estimate of $\gmax$.
At iteration $\np\in[\Np]$, we assume that the first $\np-1$ qubits of $\vqpc$ encode a prefix
$\mathbf z_{\mathrm{prefix}}\in\{0,1\}^{\np-1}$ (possibly in superposition).
In order to determine the next bit, the coordinator performs a threshold test against the candidate value
\begin{align*}
  \mathbf z^{(\np)} \defeq \bigl(\mathbf z_{\mathrm{prefix}},\,1,\,0,\ldots,0\bigr)\in\{0,1\}^{\Np}.
\end{align*}
If $\gmax \ge \zdec\bigl(\mathbf z^{(\np)}\bigr)$, it sets the state associated with $\qpc(\np)$ to be $\ket{1}$; otherwise it sets $\ket{0}$.
Repeating this procedure for $\np=1,\ldots,\Np$ yields an $\Np$-bit approximation of $\gmax$.
This threshold test is implemented coherently by invoking the
amplitude-amplification operator~\cite{brassard2000} $\matr U_{\mathrm{AA},\mathsf N,\mathbf z^{(\np)}}$, which marks states that exceeds the threshold via a phase flip and amplifies their support.

The comparison ``$\gmax \ge \zdec(\mathbf z)$?'' is implemented via the amplitude-amplification operator
$\matr U_{\mathrm{AA},\mathsf N,\mathbf z}$ (Definition~\ref{def: the set of operators UAA}).
Recall that $\matr U_{\mathrm{AA},\mathsf N,\mathbf z}$ is built from $\matr U_{\mathrm{ini}}$ and a marking oracle
$\matr U_{\set Z_{\ge \mathbf z}}$ that phase-flips precisely those computational-basis states whose encoded local
values exceeds the threshold $\mathbf z$. The key point is that $\matr{U}_{\mathrm{AA},\mathsf N,\mathbf z}$ does  not need to know $\gmax$; instead, the inequality $\gmax \ge \zdec(\mathbf z)$ can be determined by checking whether the marked subspace has nonzero overlap with $\ket{\psi_{\rmini}}$.
Intuitively:
\begin{itemize}
\item if $\gmax < \zdec(\mathbf z)$, then no assignment satisfies the threshold, so the oracle marks nothing;
\item if $\gmax \ge \zdec(\mathbf z)$, then the marked subspace has nonzero weight under the prepared state,
and $\matr U_{\mathrm{AA},\mathsf N,\mathbf z}$ induces a nontrivial Grover-type rotation.
\end{itemize}
Thus, deciding whether $\gmax \ge \zdec(\mathbf z)$ reduces to testing whether the relevant eigenphase of
$\matr U_{\mathrm{AA},\mathsf N,\mathbf z}$ is zero or nonzero. 

In order to implement the above threshold test coherently within Algorithm~\ref{alg:dist-opt}, we invoke an approximately reversible \emph{controlled phase-test} unitary,
denoted by $$\mUpha{\matr U_{\mathrm{ini}},\,\matr U_{\mathrm{AA},\mathsf N,\mathbf z}}{p_{\min,\rmc},\delta}.$$
It acts on the joint registers
$$\bigl(\vqstn{\rmc,\np}, \vqB, \vqloc, \vqaux, \qpc(\np)\bigr)$$ and serves as a black-box decision primitive:
conditioned on the current candidate threshold $\mathbf z$, it writes a single decision bit to the designated output qubit
(here, $\qpc(\np)$), while uncomputing its workspace so that all other registers are (approximately) returned to their
pre-call states. We defer the circuit-level construction and error analysis of this primitive to
Appendix~\ref{apx:reversible_phase_estimation} and also a distributed implementation to Appendix~\ref{apx:implementationof_Upe}; the details are not needed to follow the main control flow in
Algorithm~\ref{alg:dist-opt}.

Note that the implementation of the $\np$-th iteration of Algorithm~\ref{alg:dist-opt} can be equivalently written as a block-diagonal controlled unitary
with respect to the already-determined prefix register $\vqpc(1:\np-1)$.
For each $\np\in[\Np]$, the network applies
\begin{align}
\sum_{\mathbf z_{\mathrm{prefix}}\in\{0,1\}^{\np-1}}
&\biggl(
\mUpha{\matr U_{\mathrm{ini}},\,\matr U_{\mathrm{AA},\mathsf N,\mathbf{z}^{(\np)}}}{p_{\min,\rmc},\,\delta}
\biggr)
\otimes
\ket{\mathbf z_{\mathrm{prefix}}}\bra{\mathbf z_{\mathrm{prefix}}}_{\vqpc(1:\np-1)}.
\label{eqn: controll phase estimation_main_alg}
\end{align}
where $\mathbf z^{(\np)}$ is the candidate threshold used at iteration $\np$.
The phase-test primitive writes its binary outcome to $\qpc(\np)$, thereby extending the prefix by one bit.
After $\Np$ iterations, measuring $\vqpc$ yields the final estimate $\mathbf z_{\max,\rmp,\rmc}$.

\subsection{Performance Guarantees}

The performance guarantees for Algorithm~\ref{alg:dist-opt} depend on the properties of the state $\ket{\psi_{\rmini}} = \matr{U}_{\mathrm{ini}}\svzero$, \ie,~\eqref{eqn: property of psi vect with positive magnitude: final} and~\eqref{eqn: bound of gstar and znsfn}.

\begin{theorem}[\textbf{Algorithm Performance Guarantee}]
\label{thm: overall_algorithm_properties}
Under the conditions in~\eqref{eqn: property of psi vect with positive magnitude: final} and~\eqref{eqn: bound of gstar and znsfn}, for a precision of $\Np$ bits and a failure parameter $\delta \in (0,1)$, Algorithm~\ref{alg:dist-opt} has the following performance guarantees:

\begin{enumerate}
    
    \item \label{prop: overall_query_complexity: final} \textbf{Query complexity.} The total number of applications of $\matr{U}_{\rmini}$ or $\matr{U}_{\rmini}^{\Herm}$ is
    \begin{align*}
        N_{\mathrm{queries}} &= \Np \cdot \Bigl( 2^{\,t(p_{\min,\rmc},\delta)+2} - 2 \Bigr)\in  O\Bigl( \Np \cdot (1-\delta)^{-\Np\Nsfn} \cdot \sqrt{|\setx_{\setVb}|} \Bigr),
    \end{align*} 
    where the qubit-counting function $t$ for phase estimation is defined to be
    \begin{align}
      t(c, \delta) \defeq
      \left\lceil -(1/2) \cdot \log_{2}\left(
          \delta \cdot c
      \right) - 1/2 \right\rceil. \label{eqn: def of fun t}
    \end{align}

    \item \label{prop: overall_success_probability: final} \textbf{Success probability and output.} With probability at least $(1-\delta)^{2\Np}$, a measurement of the coordinator's result register, $\vqpc$, yields the $\Np$-bit string $\mathbf{z}_{\max,\rmp,\rmc}$, which is the largest $\Np$-bit value not exceeding the sum of the local maxima:
    \begin{align}
        \begin{aligned}
            \mathbf{z}_{\max,\rmp,\rmc} \defeq &\arg\max_{\mathbf{z}_{\rmc}\in\{0,1\}^{\Np}} \quad \zdec(\mathbf{z}_{\rmc}) \\
            &\hspace*{1.2 cm} \text{s.t.} \quad \zdec(\mathbf{z}_{\rmc}) \le \sum_{\nsfn\in[\Nsfn]} &\zdec\bigl(\mathbf{z}_{\max,\rmp,\nsfn}\bigr)
         \end{aligned} \quad . \label{eqn: def_z_max_p_c_thm_main_overall}
    \end{align}
    Recall that $ \zdec $ is defined in~\eqref{eqn:zdec_inline}. 
    In order to ensure \((1-\delta)^{2\Np} \ge 1-\delta'\) for any $0<\delta'<1$, Bernoulli's inequality implies that it suffices to choose
    $
      \delta \le \delta'/(2\Np).
    $

    \item \label{prop: overall_approximation_precision} \textbf{Additive-error certificate.} The reported value $\mathbf{z}_{\max,\rmp,\rmc}$ is a certified lower bound on the true global maximum $\gmax$, satisfying:
    \begin{align}
        0 \le \gmax - \zdec\bigl(\mathbf{z}_{\max,\rmp,\rmc}\bigr) \le (\Nsfn+1) \cdot 2^{-\Np}.
    \label{eqn:prop_approx_gmax_revised}
    \end{align}
\end{enumerate}
\end{theorem}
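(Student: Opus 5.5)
We prove the three items separately, treating the controlled phase-test primitive $\mUpha{\cdot}{\cdot}$ as a reversible phase-estimation routine applied to the Grover-type operator $\matr U_{\mathrm{AA},\mathsf N,\mathbf z}$. The basic structural fact is the standard two-dimensional invariant-subspace analysis of amplitude amplification~\cite{Brassard1997}. Decompose $\ket{\psi_{\rmini}}\ket{\mathbf 0}_{\vqauxc}$ into its components in the marked subspace (the image of $\matr I_{\mathrm{rem}}\otimes\matr P_{\set Z_{\ge\mathbf z}}$) and the unmarked subspace, and let $a_{\mathbf z}$ be the marked weight. On the span of these two components, $\matr U_{\mathrm{AA},\mathsf N,\mathbf z}$ is a rotation with eigenphases $\pm 2\theta_{\mathbf z}$, where $\sin^2\theta_{\mathbf z}=a_{\mathbf z}$. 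There are two cases. If $\zdec(\mathbf z)>\sum_{\nsfn}\zdec(\mathbf z_{\max,\rmp,\nsfn})$, then $a_{\mathbf z}=0$ and the initial state is a fixed eigenvector with phase $0$; this requires arguing from the construction that no branch of $\ket{\psi_{\rmini}}$ carries value strings whose decoded sum exceeds the optimal one, so I would either invoke the construction in Section~\ref{sec:implementation_of_U_components} or phrase the threshold relative to the largest encoded sum present. Otherwise, property~\eqref{eqn: property of psi vect with positive magnitude: final} gives $a_{\mathbf z}\ge p_{\min,\rmc}$, hence $\theta_{\mathbf z}\ge\arcsin\sqrt{p_{\min,\rmc}}$.

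\emph{Query complexity.} Phase estimation with $t=t(p_{\min,\rmc},\delta)$ ancilla qubits applies controlled powers $\matr U_{\mathrm{AA}}^{2^j}$ for $j=0,\dots,t-1$, which is $2^t-1$ applications of $\matr U_{\mathrm{AA}}$ in total. Each application contains one $\matr U_{\rmini}$ and one $\matr U_{\rmini}^{\Herm}$. Reversibility requires running the routine and its inverse (uncomputation), which doubles the count to $2\cdot 2(2^t-1)=2^{t+2}-4$; adding the initial preparation and the final unpreparation gives $2^{t+2}-2$ per bit. I would fix this bookkeeping to match the construction in Appendix~\ref{apx:reversible_phase_estimation}. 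Multiplying by $\Np$ iterations gives the exact count. For the asymptotic bound, $2^{t}\le 2\cdot(\delta p_{\min,\rmc})^{-1/2}$, and substituting $p_{\min,\rmc}=(1-\delta)^{2\Np\Nsfn}/|\setx_{\setVb}|$ gives $O\bigl((1-\delta)^{-\Np\Nsfn}\sqrt{|\setx_{\setVb}|}\bigr)$ with $\delta$ treated as constant.

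\emph{Success probability.} For the per-bit correctness, $t$ is chosen so that $2^{-t}$ resolves $\theta_{\mathbf z}\gtrsim\sqrt{p_{\min,\rmc}}$ from $0$ with error at most $\delta$. In the zero-phase case the estimate is exactly $0$ and the decision bit is correct with certainty. In the nonzero case, the standard phase-estimation tail bound shows the probability of outputting a phase below the cutoff $2^{-t}$ is at most $\delta$. This step is the main obstacle: the Fejér-kernel tail bound must be carried out carefully with the specific form of $t$ in~\eqref{eqn: def of fun t}, for both eigenvectors $\pm2\theta$, since the state is a superposition of them. Because the test is coherent and the workspace is only approximately restored, the error of each iteration must then be propagated. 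I would show that after computing the decision bit and uncomputing, the state has fidelity at least $(1-\delta)^2$ with the ideal state, with one factor from the test and one from the uncomputation. Composing over $\Np$ iterations, where each iteration is block-diagonal in the prefix as in~\eqref{eqn: controll phase estimation_main_alg}, yields the bound $(1-\delta)^{2\Np}$ on the weight of the correct string. By induction on $\np$, the ideal bits are exactly the greedy binary expansion: bit $\np$ is $1$ iff $\zdec(\mathbf z^{(\np)})\le\sum_{\nsfn}\zdec(\mathbf z_{\max,\rmp,\nsfn})$. This greedy expansion produces the maximizer in~\eqref{eqn: def_z_max_p_c_thm_main_overall}. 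The Bernoulli remark follows from $(1-\delta)^{2\Np}\ge1-2\Np\delta$.

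\emph{Certificate.} Let $S=\sum_{\nsfn}\zdec(\mathbf z_{\max,\rmp,\nsfn})$. By definition, $\zdec(\mathbf z_{\max,\rmp,\rmc})\le S$, and it is the largest $\Np$-bit dyadic value not exceeding $S$. When $S<1$ this gives $S-\zdec(\mathbf z_{\max,\rmp,\rmc})<2^{-\Np}$. When $S\ge1$ the output saturates at $1-2^{-\Np}$, and since $\gmax\le1$ the gap to $\gmax$ is still at most $2^{-\Np}$. Combining this with~\eqref{eqn: bound of gstar and znsfn}, which gives $0\le\gmax-S\le\Nsfn2^{-\Np}$, yields~\eqref{eqn:prop_approx_gmax_revised}. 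The lower bound $\zdec(\mathbf z_{\max,\rmp,\rmc})\le\gmax$ follows from $\zdec(\mathbf z_{\max,\rmp,\rmc})\le S\le\gmax$ in the first case and from $\gmax\le1$ in the saturated case; the saturated case needs care to check.
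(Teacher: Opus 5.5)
Your route is the paper's: the two-dimensional rotation analysis of $\matr U_{\mathrm{AA},\mathsf N,\mathbf z}$, the dichotomy ``marked weight $0$ or at least $p_{\min,\rmc}$'', the count $2(2^{t+1}-2)+2$ per bit, the bit-by-bit induction, and the two-term error split. For the dichotomy, the paper takes your second option and redefines $\mathbf z_{\max,\rmp,[\Nsfn]}$ as the tuple of largest decoded sum in the support of $\ket{\psi_{\rmini}}$.

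The genuine gap is the sentence ``composing over $\Np$ iterations \ldots\ yields $(1-\delta)^{2\Np}$''. Per-step fidelities do not multiply in general; the triangle inequality for the angle only gives $\cos^2\bigl(\Np\arccos(1-\delta)\bigr)\approx 1-2\Np^2\delta$. Block-diagonality does quarantine error components that carry a wrong prefix. However, a plain compute--copy--uncompute test also leaves, inside the correct-bit branch, a component with unrestored workspace. Its squared norm is $P_0(1-P_0)$, where $P_0$ is the probability of the all-zero estimate. This component keeps the correct prefix and can interfere with the clean branch in later iterations.

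The paper closes this with the error-mitigation step (Step~6 of Definition~\ref{def:reversible_subroutine_apx}). That step moves every component with decision qubit $\ket{1}$ and nonzero workspace to decision $\ket{0}$, flagging an auxiliary qubit. This gives error confinement: the $\ket{1}$ branch is exactly clean, and the test is exact at zero marked weight. Hence every error component carries a wrong bit, the correct-prefix block after $k$ steps is exactly $a_k\ket{\psi_{\mathrm{ideal},k}}$, and amplitudes multiply.

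The same property inside each worker's $\matr U_{\max}$ shows that errors only lower a worker's string. So the maximal-sum tuple in the support is the all-success tuple of some boundary assignment, with weight at least $p_{\min,\rmc}$ and satisfying~\eqref{eqn: bound of gstar and znsfn}. That, not the black-box hypotheses, is what settles the $a_{\mathbf z}=0$ case you flagged.

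Second, the tail bound you call the main obstacle cannot give $\delta$ with this $t$. The all-zero probability is $\left\|2^{-t}\sum_{k=0}^{2^t-1}\matr U_{\mathrm o}^{k}\ket{\psi_{\rmini}}\right\|^2$. The two eigencomponents are orthogonal, so there is no cross term, and this equals $\left|2^{-t}\sum_{k=0}^{2^t-1}\e^{\imagunit k\phi}\right|^2=\sin^2(2^{t-1}\phi)/(4^{t}p)$. The definition~\eqref{eqn: def of fun t} of $t$ only guarantees $4^{t}\ge 1/(2\delta p_{\min,\rmc})$, so the bound you get is $2\delta$, and the excess over $\delta$ really occurs.

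For example, take $\Np=\Nsfn=1$, $|\setx_{\setVb}|=2^{10}$, $\delta=0.01$, and marked weight exactly $p_{\min,\rmc}$. Then $t=8$, the all-zero probability is $\approx0.0159>\delta$, and the success probability is $\approx0.969<(1-\delta)^2$. The paper's appendix reaches $\delta$ only by adding the two eigencomponent amplitudes as scalars. To prove a correct statement you must either drop the $-1/2$ in $t$, which changes the query count, or weaken the guarantee to $(1-2\delta)^{2\Np}$.
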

\begin{proof}
See Appendix~\ref{apx: proof_of_thm_overall_algorithm_properties}.
\end{proof}
Theorem~\ref{thm: overall_algorithm_properties} thus characterizes both the complexity and the performance of Algorithm~\ref{alg:dist-opt}.
Part~\ref{prop: overall_query_complexity: final} states the quantum advantage of the algorithm in terms of the query complexity.
Part~\ref{prop: overall_success_probability: final} defines the output as the best $\Np$-bit approximation of the \textit{summed local solutions}, not the true global maximum.
Part~\ref{prop: overall_approximation_precision} connects this output to the true global maximum $\gmax$ by incorporating the local encoding error from~\eqref{eqn: bound of gstar and znsfn}.

\section{Distributed Construction of the Unitary \texorpdfstring{$\matr{U}_{\mathrm{ini}}$}{Uini}}
\label{sec:implementation_of_U_components}

This section details the distributed implementation of the state preparation unitary $\matr{U}_{\mathrm{ini}}$, a cornerstone of the main algorithm. 
A primary challenge in distributed quantum computing is parallelizing a computation across multiple processors while maintaining the global quantum coherence required for a collective speedup. 
The $\matr{U}_{\mathrm{ini}}$ operator is our solution to this challenge. It constructs the state $\ket{\psi_{\mathrm{initial}}}$ (defined in~\eqref{eqn: generated state vector}), which is a coherent superposition across all boundary configurations where each branch is entangled with the optimal solutions of the corresponding local sub-problems. 
This procedure, a quantum analogue of classically solving all sub-problems for every possible boundary assignment, is performed in a single, coherent operation, thereby preserving the potential for a global quantum advantage.

Our presentation follows a top-down approach. First, we introduce the generic algorithmic building block for function maximization. 
Second, we detail the distributed protocol that leverages this block to construct $\matr{U}_{\mathrm{ini}}$. 
Finally, we analyze the performance and resource costs of this construction.

\subsection{A Generic Subroutine for Quantum Maximization}
\label{sec: setup of quantum optimization algorithm}

The local optimization at each worker QPU is a standard quantum task: finding the maximum value of a function. 
Abstracting this task into a generic, reusable subroutine allows the analysis of local computational complexity to be separated from the complexities of distributed communication and coordination.

Consider the objective of finding the maximum value, $F^{\max} \defeq \max_{\mathbf{y} \in \{0,1\}^{N}} F(\mathbf{y})$, of an arbitrary function $F: \{0,1\}^{N} \to \sR$. 
A quantum algorithm for this task requires access to two types of operators.
\begin{enumerate}
    \item \textbf{State Preparation Oracle $\matr{U}_F$:} This unitary operation prepares an initial state $\ket{\psi_{F,\mathrm{ini}}}$ that has a non-zero overlap with the subspace of optimal solutions. 
    Specifically, we assume that
    \begin{align}
        \bra{\psi_{F,\mathrm{ini}}} (\matr{P}_{ \set{Y}^{\max} } \otimes \matr{I}_{\mathrm{rem}}) \ket{\psi_{F,\mathrm{ini}}} \geq p_{\min,F} > 0, \label{eqn: state preparation property}
    \end{align}
    where $\set{Y}^{\max} \defeq \arg\max\limits_{\mathbf{y} \in \{0,1\}^{N}} F(\mathbf{y})$ is the set of configurations that maximize $F$, the operator $\matr{P}_{\set{Y}^{\max}}$ is the projector onto this set, and $p_{\min,F}$ is a lower bound on the success probability. 
    As established in Appendix~\ref{apx:quantum_is_universal_for_classical}, which details how classical randomized algorithms can be unitarily implemented, the existence of such an oracle is justified. 

    \item \textbf{Marking Unitary Operator $\matr{U}_{\mathcal{Z}_{F \ge \mathbf{z}}}$:} This unitary, formally defined in Appendix~\ref{apx:generic_maximization_subroutine}, applies a phase of $-1$ to any computational basis state $\ket{\mathbf{y}}$ for which the function value $F(\mathbf{y})$ is greater than or equal to a specified threshold $\zdec(\mathbf{z})$.
\end{enumerate}
Equipped with these oracles, a generic subroutine, $\matr{U}_{\max}(\Np, \delta, p_{\min,F}, \matr{U}_F)$, exists to find an $\Np$-bit approximation of $F^{\max}$ with high probability. 
The detailed construction of this subroutine is provided in Appendix~\ref{apx:generic_maximization_subroutine}.
The performance of this generic subroutine is summarized in the following proposition.

\begin{proposition}[Performance of Generic Quantum Maximization]
\label{prop:generic_maximization_performance}
For an arbitrary function $F: \{0,1\}^{N} \to [0,1]$, the generic quantum maximization algorithm \\ $\matr{U}_{\max}(\Np, \delta, p_{\min,F}, \matr{U}_F)$ has the following properties:
\begin{enumerate}
    \item \textbf{Query Complexity:} The total number of queries to the oracles $\matr{U}_F$ and its inverse is $\Np \cdot (2^{t(p_{\min,F}, \delta)+2} - 2)$, where the function $t$,
    as defined in~\eqref{eqn: def of fun t}, determines the query precision.
    \item \textbf{Success Probability and Outcome:} By measuring the qubits that store the approximation, the algorithm produces the best $\Np$-bit lower approximation of $F^{\max}$, denoted $\mathbf{z}_{\max,\rmp,F}$, with a probability of at least $(1-\delta)^{2\Np}$.

    \item \textbf{Approximation Precision:} The final approximation satisfies the bound $0 \le F^{\max} - \zdec(\mathbf{z}_{\max,\rmp,F}) \le 2^{-\Np}$.
\end{enumerate}
\end{proposition}
\begin{proof}
See Appendix~\ref{apx:generic_maximization_subroutine}.
\end{proof}

\subsection{Distributed Protocol for Constructing \texorpdfstring{$\matr{U}_{\mathrm{ini}}$}{Uini}}
The construction of $\matr{U}_{\mathrm{ini}}$ is itself a distributed algorithm, built upon the generic quantum maximization subroutine from Proposition~\ref{prop:generic_maximization_performance}. 
This subroutine is instantiated at each worker processor with a set of local oracles, which we define first.

\begin{definition}[Local Oracles]
\label{def: oracles in the network}
For each subgraph $\nsfn \in [\Nsfn]$ and each fixed configuration $\mathbf{x}_{\setVbnsfn}$ of its relevant boundary variables, we assume the corresponding worker processor $\QPU_{\nsfn}$ can implement a local state preparation oracle, denoted $\matr{U}_{g_{\nsfn,\mathbf{x}_{\setVbnsfn}}}$.
This unitary is a specific instance of the generic oracle $\matr{U}_F$ from Proposition~\ref{prop:generic_maximization_performance}, where the function $F$ is the conditioned local objective $g_{\nsfn,\mathbf{x}_{\setVbnsfn}}$.
Acting on the local registers for internal variables ($\vqsrn{\nsfn}$) and oracle ancillas ($\vqoran{\nsfn}$), both initialized to $\svzero$, its action is to prepare an initial state:
  \begin{align}
  \matr{U}_{g_{\nsfn,\mathbf{x}_{\setVbnsfn}}} \cdot \left( \svzero_{\vqsrn{\nsfn}} \otimes \svzero_{\vqoran{\nsfn}} \right) = \ket{ \psi_{\nsfn,\mathrm{ini}} }.
  \end{align}
This prepared state must have a non-trivial overlap with the subspace of optimal solutions. We assume a minimum success probability $p_{\min,\nsfn} > 0$ such that for all $\mathbf{x}_{\setVbnsfn}$:
\begin{align}
\bra{ \psi_{\nsfn,\mathrm{ini}} } \bigl( \matr{P}_{ \set{Y}_{\nsfn,\mathbf{x}_{\setVbnsfn}}^{\max} } \otimes \matr{I}_{\vqoran{\nsfn}} \bigr) \ket{ \psi_{\nsfn,\mathrm{ini}} } \geq p_{\min,\nsfn},
\end{align}
  where $\set{Y}_{\nsfn,\mathbf{x}_{\setVbnsfn}}^{\max} \defeq \arg\max\limits_{ \mathbf{x}_{\set{V}_{\nsfn}}} g_{\nsfn,\mathbf{x}_{\setVbnsfn}}( \mathbf{x}_{\set{V}_{\nsfn}} )$ is the set of optimal local configurations, and $\matr{P}_{ \set{Y}_{\nsfn,\mathbf{x}_{\setVbnsfn}}^{\max} }$ is the projector onto that subspace (formally defined in Definition~\ref{def:marking_operators}, Appendix~\ref{apx:basic_notation}).
\end{definition}


\begin{algorithm}[t!]
\caption{Distributed Protocol for Implementing $\matr{U}_{\mathrm{ini}}$}
\label{alg:u_ini_protocol}
\begin{algorithmic}[1]
\Statex \textbf{Stage 1: Distribute Boundary Information (at Coordinator $\QPUc$)}
\State Initialize a uniform superposition over all boundary configurations in register $\vqB$.
\For{each worker $\QPU_{\nsfn} \in [\Nsfn]$}
\State Coherently fan out boundary bits by entangling $\vqB$ with the auxiliary register $\vqBauxn$ via CNOTs.
\State Transmit the register $\vqBauxn$ to $\QPU_{\nsfn}$ via quantum teleportation.
\EndFor
\Statex \textbf{Stage 2: Parallel Local Maximization (at all Workers $\QPU_{\nsfn}$)}
\State In parallel, each worker executes the generic maximization subroutine (Proposition~\ref{prop:generic_maximization_performance}) on its local function, conditioned on the state of its received register $\vqBauxn$. The $\Np$-bit result is stored in the local qubit register $\vqpn{\nsfn}$.
\Statex \textbf{Stage 3: Aggregate Quantum Results (at Coordinator $\QPUc$)}
\For{each worker $\QPU_{\nsfn} \in [\Nsfn]$}
\State Receive the registers $(\vqBauxn, \vqpn{\nsfn})$ from $\QPU_{\nsfn}$ via quantum teleportation.
\EndFor
\State Reverse the entanglement operations from Stage 1 to disentangle $\vqB$ and return the auxiliary registers $(\vqBauxn)_{\nsfn}$ to the $\svzero$ state.
\end{algorithmic}
\end{algorithm}

With the local oracles defined, we present the distributed protocol for implementing the unitary operation $\matr{U}_{\mathrm{ini}}$ in Algorithm~\ref{alg:u_ini_protocol}, which acts on the qubits specified in Appendix~\ref{apx:registers}.
The protocol's core is the parallel local maximization (Stage 2), formally described by the application of the following controlled operation at each worker $\QPU_{\nsfn}$:
\begin{align}
\sum_{\mathbf{x}_{\setVbnsfn} \in \{0,1\}^{|\setVbnsfn|}} &\ket{\mathbf{x}_{\setVbnsfn}}\bra{\mathbf{x}_{\setVbnsfn}}_{\vqBauxn}  \otimes \matr{U}_{\max}\bigl(\Np, \delta, p_{\min,\nsfn}, \matr{U}_{g_{\nsfn,\mathbf{x}_{\setVbnsfn}}} \bigr).
\label{eqn: Uini_distributed_optimization_impl_alg}
\end{align}
This expression describes a coherent, parallel execution: for each computational basis state $\ket{\mathbf{x}_{\setVbnsfn}}$ of the received register $\vqBauxn$, the generic maximization unitary $\matr{U}_{\max}$ (from Proposition~\ref{prop:generic_maximization_performance}) is applied to the local registers of $\QPU_{\nsfn}$.
The inverse operation, $\matr{U}_{\mathrm{ini}}^{\Herm}$, is constructed by reversing the stages of Algorithm~\ref{alg:u_ini_protocol} and taking the Hermitian conjugate of each unitary.

\subsection{Performance Analysis of the \texorpdfstring{$\matr{U}_{\mathrm{ini}}$}{Uini} Construction}
The resource costs and formal properties of the $\matr{U}_{\mathrm{ini}}$ construction (Algorithm~\ref{alg:u_ini_protocol}) are summarized below.

\begin{lemma}[Performance of $\matr{U}_{\mathrm{ini}}$ Construction]
\label{lem: properties_approximation_of_oracle_Uini}
The distributed implementation of $\matr{U}_{\mathrm{ini}}$ has the following properties:
\begin{enumerate}
\item \textbf{Query Complexity:} One application of $\matr{U}_{\mathrm{ini}}$ requires a total of
\begin{align}
\sum_{\nsfn \in [\Nsfn]} \Np \cdot (2^{t(p_{\min,\nsfn}, \delta)+2} - 2)
\in O\Biggl( \Np \cdot \sum_{\nsfn}\sqrt{|\set{X}_{\set{V}_{\nsfn}}|} \Biggr)
\label{eqn: Uini_query_complexity_of_oracle_lemma}
\end{align}
queries to the fundamental oracles in $\{ \matr{U}_{g_{\nsfn,\mathbf{x}_{\setVbnsfn}}} \}_{\nsfn,\, \mathbf{x}_{\setVbnsfn}}$ and their inverse across all worker QPUs.

\item \textbf{EPR Pair Consumption:} The protocol consumes a total of $ \sum_{\nsfn \in [\Nsfn]} (2 |\setVbnsfn| + \Np) $ EPR pairs for quantum teleportation.

\item \textbf{State Preparation Guarantees:} The resulting state $\ket{\psi_{\rmini}}$ satisfies the required overlap and accuracy conditions given in~\eqref{eqn: property of psi vect with positive magnitude: final} and~\eqref{eqn: bound of gstar and znsfn}, which are necessary for the main algorithm's performance (detailed in Theorem~\ref{thm: overall_algorithm_properties}).
\end{enumerate}
\end{lemma}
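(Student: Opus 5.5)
I will prove the three items separately, treating Algorithm~\ref{alg:u_ini_protocol} stage by stage and reducing each item to Proposition~\ref{prop:generic_maximization_performance} applied per worker, with the coordinator-side operations contributing no oracle queries.

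\textbf{Query complexity.} Stages 1 and 3 consist only of Hadamards, CNOT fan-outs and teleportations, so they contain no oracle calls. In Stage 2 each worker applies the block-controlled operator~\eqref{eqn: Uini_distributed_optimization_impl_alg}. For a fixed control value $\mathbf{x}_{\setVbnsfn}$, the block $\matr{U}_{\max}(\Np,\delta,p_{\min,\nsfn},\matr{U}_{g_{\nsfn,\mathbf{x}_{\setVbnsfn}}})$ uses $\Np(2^{t(p_{\min,\nsfn},\delta)+2}-2)$ queries by Proposition~\ref{prop:generic_maximization_performance}. This count is independent of $\mathbf{x}_{\setVbnsfn}$, because $p_{\min,\nsfn}$ is a uniform lower bound in Definition~\ref{def: oracles in the network}. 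Hence the controlled sum is implemented with that many calls to a controlled oracle selecting $\matr{U}_{g_{\nsfn,\mathbf{x}_{\setVbnsfn}}}$. Summing over $\nsfn$ gives the exact expression.

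For the asymptotic bound I use~\eqref{eqn: def of fun t}, which gives $2^{t}\le 2^{1/2}(\delta p_{\min,\nsfn})^{-1/2}$. I then take the natural choice $p_{\min,\nsfn}\ge 1/|\set{X}_{\set{V}_{\nsfn}}|$, namely the uniform-superposition oracle from Appendix~\ref{apx:quantum_is_universal_for_classical}. With $\delta$ treated as a constant this yields $O(\Np\sum_{\nsfn}\sqrt{|\set{X}_{\set{V}_{\nsfn}}|})$.

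\textbf{EPR consumption.} Teleporting one qubit costs one EPR pair. Stage 1 sends $|\setVbnsfn|$ qubits to each worker, and Stage 3 returns $|\setVbnsfn|+\Np$ qubits. Summing over workers gives $\sum_{\nsfn}(2|\setVbnsfn|+\Np)$. I must note that only the relevant boundary bits $\setVbnsfn$ are fanned out to worker $\nsfn$.

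\textbf{State preparation guarantees.} This is the main step. I first write $\ket{\psi_{\rmini}}$ explicitly. After Stage 1 the state is $|\setx_{\setVb}|^{-1/2}\sum_{\vx_{\setVb}}\ket{\vx_{\setVb}}\bigotimes_{\nsfn}\ket{\vx_{\setVbnsfn}}$. By linearity Stage 2 acts branchwise, producing $\bigotimes_{\nsfn}\ket{\phi_{\nsfn,\vx_{\setVbnsfn}}}$, where $\ket{\phi_{\nsfn,\cdot}}$ is the output of $\matr{U}_{\max}$.

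Stage 3 CNOTs the copies back to zero. This is valid because $\matr{U}_{\max}$ acts only on local registers controlled by $\vqBauxn$, and it leaves $\vqBauxn$ unchanged. Consequently the ancillas in $\vqauxc$ return to $\svzero$, which verifies~\eqref{eqn: generated state vector}.

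Next I fix the branch $\vx^{\ast}_{\setVb}$, which has weight $1/|\setx_{\setVb}|$. By Proposition~\ref{prop:generic_maximization_performance}, measuring $\vqpn{\nsfn}$ in $\ket{\phi_{\nsfn,\vx^{\ast}}}$ yields $\mathbf{z}_{\max,\rmp,\nsfn}$ with probability at least $(1-\delta)^{2\Np}$. The workers' registers are in a product state within the branch, and distinct boundary branches are orthogonal on $\vqB$. The projected weight is therefore at least $\prod_{\nsfn}(1-\delta)^{2\Np}/|\setx_{\setVb}|=p_{\min,\rmc}$, which is~\eqref{eqn: property of psi vect with positive magnitude: final}.

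For~\eqref{eqn: bound of gstar and znsfn}, item 3 of the proposition gives $0\le g^{\max}_{\nsfn,\vx^{\ast}}-\zdec(\mathbf{z}_{\max,\rmp,\nsfn})\le 2^{-\Np}$. Summing over $\nsfn$ and using $\gmax=g_{\mathrm b}(\vx^{\ast}_{\setVb})=\sum_{\nsfn}g^{\max}_{\nsfn,\vx^{\ast}}$ from Definition~\ref{def: setup of the distributed quantum} gives the claim.

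The delicate point is the orthogonality argument, namely that cross terms between branches vanish under the projector. It holds because the projector $\matr{P}_{\{\mathbf{z}_{\max,\rmp,[\Nsfn]}\}}$ is tensored with the identity on $\vqB$, so distinct $\vqB$ branches stay orthogonal. Summing the contributions of all branches (all nonnegative) then only increases the bound obtained from the $\vx^{\ast}$ branch alone.
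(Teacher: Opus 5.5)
Your proposal is correct and follows the paper's proof item by item. Oracle calls occur only in Stage~2 and are summed over workers via Proposition~\ref{prop:generic_maximization_performance}. EPR pairs are counted as $|\setVbnsfn|$ outgoing plus $|\setVbnsfn|+\Np$ returning qubits per worker. The overlap bound comes from the branch decomposition of $\ket{\psi_{\rmini}}$, restricted to the optimal boundary branch $\vx^{\ast}_{\setVb}$, with the product of the per-worker success probabilities. The accuracy bound comes from summing the per-worker $2^{-\Np}$ errors. Your explicit derivation of the $O(\cdot)$ bound fills in a step the paper leaves implicit; it uses $2^{t}\le\sqrt{2}\,(\delta\, p_{\min,\nsfn})^{-1/2}$ together with the assumptions $p_{\min,\nsfn}\ge 1/|\set{X}_{\set{V}_{\nsfn}}|$ and constant $\delta$.
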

\begin{proof}
See Appendix~\ref{apx: properties_approximation_of_oracle}.
\end{proof}

\section{Performance Guarantees and Resource Analysis}
\label{sec:overall_performance}

This section synthesizes the results from the preceding sections to provide a complete, end-to-end performance analysis of our distributed quantum algorithm, $A_{\mathrm{dist}}$.
Section~\ref{sec: distributed_quantum_algorithm} established the algorithm's theoretical power assuming access to the state preparation unitary $\matr{U}_{\mathrm{ini}}$. Recall that Section~\ref{sec:implementation_of_U_components} then detailed the distributed protocol for constructing this unitary.
We now integrate the abstract performance of the main algorithm with the concrete implementation costs (derived in Lemma~\ref{lem: properties_approximation_of_oracle_Uini} and Appendix~\ref{apx:implementationof_Upe}) to quantify the framework's true computational complexity and resource requirements.
This analysis establishes a holistic view of the algorithm's practicality and scalability.


\textbf{Setup}:
We analyze the distributed algorithm $A_{\mathrm{dist}}$ from Algorithm~\ref{alg:dist-opt}.
This analysis assumes each constituent unitary ($\matr{U}_{\mathrm{ini}}$, $\matr{U}_{\mathrm{PE}}$, and $\matr{U}_{\mathrm{check}}$) is implemented by the distributed protocols detailed in Section~\ref{sec:implementation_of_U_components} and Appendix~\ref{apx:implementationof_Upe}.
The performance guarantees depend on the precision parameters $\delta$ and $\Np$, the graph decomposition ($\{\graphG_{\mathrm s,\nsfn}\}_{\nsfn}$, $\setVb$), and the minimum success probabilities $p_{\min,\mathrm{c}}$ and $p_{\min,\nsfn}$.

\begin{theorem}[\textbf{Overall Algorithm Performance}]
\label{thm: overall_algorithm_performance}
The algorithm $A_{\mathrm{dist}}$, executed under the previously specified Setup, has the following resource requirements and performance guarantees.
\begin{enumerate}
    \item \label{prop:total_qubits_count_overall}
    \textbf{Total qubit count.}
    The total number of qubits in the network, $N_{\mathrm{qubits}}$, equals the sum of the register sizes in Table~\ref{table: number of qubits} (Appendix~\ref{apx:registers}).

    \item \label{prop:qubits_per_processor_thm_overall_alg_perf}
    \textbf{Qubits per processor.}
    The coordinator qubit count is
    \begin{align*}
            N_{\mathrm{qubits}}(\QPUc)
            &= |\setVb|
            + t(p_{\min,\mathrm{c}},\delta) + \Np + \Nsfn\cdot(\Np+1)
            + \sum_{\nsfn\in[\Nsfn]} |\setVbnsfn| \\&\quad+ \Nsfn\cdot t(p_{\min,\mathrm{c}},\delta) + 3
            \\&
            \in O\Bigl(|\set{V}|+\mathrm{Poly}\bigl(\Nsfn,\Np,\log(\delta) \bigr) \Bigr).
    \end{align*}
    The worker qubit count (for each $\QPU_{\nsfn}$) is
    \begin{align*}
            N_{\mathrm{qubits}}(\QPU_{\nsfn})
            &= |\set{V}_{\nsfn}| + |\setVbnsfn|
            + \Np + t(p_{\min,\nsfn},\delta)
            + N_{\mathrm{ora},\nsfn} + t(p_{\min,\mathrm{c}},\delta) + 1.
    \end{align*}
    In particular, the dominant, problem-dependent term at worker $\nsfn$ scales with the local instance size
    $|\set{V}_{\nsfn}|$ and its boundary interface $|\setVbnsfn|$.

    \item \label{prop:overall_query_complexity_main_thm}
    \textbf{Overall query complexity.}
    The total number of queries to the leaf oracles \\ $\bigl\{ \matr{U}_{g_{\nsfn,\mathbf{x}_{\setVbnsfn}}} \bigr\}_{\nsfn \in [\Nsfn],\, \mathbf{x}_{\setVbnsfn} \in \set{X}_{\setVbnsfn}}$ and their inverse (Definition~\ref{def: oracles in the network}) is
    \begin{align}
        C_{\mathrm{distr}}(\graphG,\mathsf{N},\delta,\Np)
        &= \Np^{2} \cdot \bigl( 2^{\,t(p_{\min,\mathrm{c}},\delta)+2} - 2 \bigr)
        \cdot \sum_{\nsfn\in[\Nsfn]}  \bigl( 2^{\,t(p_{\min,\nsfn},\delta)+2} - 2 \bigr)
        \nonumber\\&\in 
        O\Biggl( \! \Np^{2} \cdot (1-\delta)^{-\Np\Nsfn} 
        \cdot \sum_{\nsfn\in[\Nsfn]} \sqrt{|\set{X}_{\setVb}| \cdot |\set{X}_{\setV_{\nsfn}}|} \Biggr),
        \label{eqn: overall_query_complexity_discussion_revised}
    \end{align}
    which reflects a nested search: a global stage scaling with the boundary search space $|\set{X}_{\setVb}|$,
    multiplied by parallel local stages scaling with $|\set{X}_{\setV_{\nsfn}}|$.

    \item \label{prop:overall_EPR_consumption_main_thm}
    \textbf{Entanglement and communication.}
    For a single-hop topology (the coordinator is directly connected to all workers), the total number of EPR pairs consumed is
    \begin{align}
            N_{\mathrm{EPR}}
            &= \Np \cdot \bigl( 2^{\,t(p_{\min,\mathrm{c}},\delta)+2} - 2 \bigr)
            \cdot 
            \sum_{\nsfn\in[\Nsfn]} \bigl( 2\,|\setVbnsfn| + \Np \bigr)
            \nonumber\\&\quad+
            \Np \cdot \bigl(2^{t(p_{\min,\mathrm{c}}, \delta)+2} - 4\bigr) \cdot \Nsfn
            +
            \Np \cdot t(p_{\min,\rmc}, \delta) \cdot (2\Nsfn)
            \nonumber\\&\in O\Bigl(\Np \cdot (1-\delta)^{-\Np\Nsfn} \cdot |\setV| \cdot \sqrt{|\set{X}_{\setVb}|}\Bigr).
        \label{eqn:total_epr_consumption}
    \end{align}
\end{enumerate}
\end{theorem}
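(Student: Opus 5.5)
The proof is mostly bookkeeping: I would compose Theorem~\ref{thm: overall_algorithm_properties}, which counts applications of $\matr{U}_{\mathrm{ini}}$ and $\matr{U}_{\mathrm{ini}}^{\Herm}$, with Lemma~\ref{lem: properties_approximation_of_oracle_Uini}, which gives the leaf-oracle queries and EPR pairs per application of $\matr{U}_{\mathrm{ini}}$. I would add to this the register inventory of Appendix~\ref{apx:registers} and the distributed phase-estimation and checking circuits of Appendix~\ref{apx:implementationof_Upe}. I would take the four parts in order.

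\textbf{Parts 1 and 2.} These follow by summing the register table. For the coordinator I would list each register and its size:
\begin{itemize}
\item the boundary register $\vqB$, with $|\setVb|$ qubits;
\item the phase-estimation register $\vqstn{\rmc,\np}$, with $t(p_{\min,\rmc},\delta)$ qubits;
\item the result register $\vqpc$, with $\Np$ qubits;
\item the received value registers plus one flag each, giving $\Nsfn(\Np+1)$;
\item the fan-out copies $\vqBauxn$, giving $\sum_{\nsfn}|\setVbnsfn|$;
\item per-worker copies of the phase-estimation control used in the distributed controlled-$\matr U_{\mathrm{AA}}$, giving $\Nsfn\, t(p_{\min,\rmc},\delta)$;
\item three ancillas for the reflection and marking.
\end{itemize}
For the big-$O$ statement I would bound $|\setVb|+\sum_{\nsfn}|\setVbnsfn|\le (\Nsfn+1)|\setV|$. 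I would note that $t$ is logarithmic in $1/(\delta p_{\min,\rmc})$, and that $\log(1/p_{\min,\rmc})$ contains $|\setVb|\le|\setV|$ plus $2\Np\Nsfn\log(1/(1-\delta))$; this gives the polynomial form. The worker count is read off the same way: internal variables, received boundary copy, local value register, local phase-estimation register, oracle ancillas, a received copy of the global phase-estimation control, and one flag.

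\textbf{Part 3.} By Theorem~\ref{thm: overall_algorithm_properties}, part~\ref{prop: overall_query_complexity: final}, there are $\Np(2^{t(p_{\min,\rmc},\delta)+2}-2)$ uses of $\matr{U}_{\mathrm{ini}}$ or its inverse. The inverse costs the same as $\matr{U}_{\mathrm{ini}}$ because it is the reversed circuit. Each use costs $\sum_{\nsfn}\Np(2^{t(p_{\min,\nsfn},\delta)+2}-2)$ leaf queries by Lemma~\ref{lem: properties_approximation_of_oracle_Uini}, and the product is the exact formula. For the asymptotics I would use $2^{t(c,\delta)}\le 2^{1/2}(\delta c)^{-1/2}$. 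With $c=p_{\min,\rmc}=(1-\delta)^{2\Np\Nsfn}/|\set{X}_{\setVb}|$ this gives $O((1-\delta)^{-\Np\Nsfn}\sqrt{|\set{X}_{\setVb}|})$, treating $\delta$ as a constant in the $O$. For the local stage I would take $p_{\min,\nsfn}\ge 1/|\set{X}_{\setV_{\nsfn}}|$, which a uniform superposition gives, so that factor is $O(\sqrt{|\set{X}_{\setV_{\nsfn}}|})$. Multiplying and distributing the sum yields~\eqref{eqn: overall_query_complexity_discussion_revised}.

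\textbf{Part 4.} This is the main obstacle, because it requires the exact teleportation accounting of the distributed phase-test primitive in Appendix~\ref{apx:implementationof_Upe}. I would split the EPR count into three sources:
\begin{enumerate}
\item Each of the $\Np(2^{t+2}-2)$ calls to $\matr{U}_{\mathrm{ini}}^{(\Herm)}$ consumes $\sum_{\nsfn}(2|\setVbnsfn|+\Np)$ pairs by Lemma~\ref{lem: properties_approximation_of_oracle_Uini}.
\item The marking oracle and reflection, applied between consecutive $\matr{U}_{\mathrm{ini}}$ calls (there are $2^{t+2}-4$ such interfaces per bit), each need one pair per worker to distribute a control or phase qubit. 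This gives $\Np(2^{t+2}-4)\Nsfn$.
\item In each of the $\Np$ bit iterations, each of the $t$ phase-estimation control qubits is teleported to and back from each worker. This gives $2\Nsfn\, t$ per iteration.
\end{enumerate}
I would verify these counts against the circuit in Appendix~\ref{apx:implementationof_Upe}, in particular which qubits are teleported in the single-hop topology. For the asymptotic bound I would use $2|\setVbnsfn|+\Np\le 2|\setV|+\Np$. The first term then dominates at $O(\Np(1-\delta)^{-\Np\Nsfn}|\setV|\sqrt{|\set{X}_{\setVb}|})$, absorbing $\Np\Nsfn$ into $|\setV|$ up to the stated polynomial factors.
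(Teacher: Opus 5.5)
Your Parts 1--3, and the first and third EPR terms in Part 4, follow the paper's proof essentially verbatim. You use the same register inventory of Appendix~\ref{apx:registers}, the same product of Item~\ref{prop: overall_query_complexity: final} of Theorem~\ref{thm: overall_algorithm_properties} with Item~1 of Lemma~\ref{lem: properties_approximation_of_oracle_Uini}, and the same three-source split of $N_{\mathrm{EPR}}$.

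The real problem is your justification of the second EPR term. Write $t=t(p_{\min,\rmc},\delta)$. You obtain $\Np(2^{t+2}-4)\Nsfn$ by charging one pair per worker to each of the $2^{t+1}-2$ marking calls and to each of the $2^{t+1}-2$ reflections per bit. Neither charge matches the protocol.

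In the paper, the whole term comes from the distributed reflection $(2\ket{\mathbf 0}\bra{\mathbf 0}-\matr I)_{\mathrm{all}}$. Each of the $2^{t+1}-2$ controlled applications of $\matr U_{\mathrm{AA}}$ per bit contains one reflection. Each reflection teleports every flag qubit $q_{\mathrm{CZ},\nsfn}$ to $\QPUc$ and back (Steps 3.2 and 3.5 of Section~\ref{sec:condition_flip}), which costs $2\Nsfn$ pairs. This gives $\Np(2^{t+1}-2)\cdot 2\Nsfn$.

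The marking oracle is charged nothing, and this is consistent with the schedule, for two reasons:
\begin{itemize}
\item No per-call control needs distributing. The estimation-qubit control is distributed once per control qubit by Protocol~1, which is your third term. The prefix register and $\matr U_{\set{Z}_{\ge\mathbf z}}$ both live at $\QPUc$.
\item $\matr U_{\set{Z}_{\ge\mathbf z}}$ is always applied right after a $\matr U_{\mathrm{ini}}$. Stage~3 of that call has already teleported $\vqpn{\nsfn}$ to $\QPUc$, at a cost already included in the $+\Np$ of Lemma~\ref{lem: properties_approximation_of_oracle_Uini}.
\end{itemize}

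Your undercount per reflection and your overcount per marking call cancel only because the two operations occur equally often. So the check against Appendix~\ref{apx:implementationof_Upe} that you defer would refute your attribution rather than confirm it. Note also that the stated formula is this three-source accounting, not a literal tally of Appendix~\ref{apx:implementationof_Upe}. Read literally, three further steps would add teleportations:
\begin{itemize}
\item Steps 2.1 and 2.4 of the marking-oracle description add $2\Nsfn\Np$ per marking call.
\item The inverse execution of Protocol~1 in the reverse QPE adds $2\Nsfn t$ per bit.
\item The zero-state test of the error-mitigation step adds $2\Nsfn$ per bit.
\end{itemize}

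Your asymptotic reductions also fall short of the stated forms. The paper's proof establishes only the exact expressions and never derives the $O(\cdot)$ bounds.

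In Part~2, $\sum_{\nsfn}|\setVbnsfn|$ can be as large as $\Nsfn|\setVb|$, since a boundary variable may touch every subgraph. Also $\Nsfn\,t(p_{\min,\rmc},\delta)\ge\Nsfn(|\setVb|-1)/2$. So your bounds yield $O\bigl(\Nsfn|\setV|+\mathrm{Poly}(\Nsfn,\Np,\log(1/\delta))\bigr)$. The product term $\Nsfn|\setV|$ is not of the form $O\bigl(|\setV|+\mathrm{Poly}(\Nsfn,\Np,\log(1/\delta))\bigr)$.

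In Part~4, $\sum_{\nsfn}(2|\setVbnsfn|+\Np)\le\Nsfn(2|\setVb|+\Np)$. The stated bound has no spare factor into which $\Nsfn$ or the extra $\Np$ could be absorbed. What your argument actually proves, for constant $\delta$, is $O\bigl(\Np\Nsfn(|\setV|+\Np)(1-\delta)^{-\Np\Nsfn}\sqrt{|\set{X}_{\setVb}|}\bigr)$. Reaching the stated form needs an extra assumption, for instance $\sum_{\nsfn}(2|\setVbnsfn|+\Np)=O(|\setV|)$.

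Your Part~3 asymptotics are fine, given the assumption $p_{\min,\nsfn}\ge|\set{X}_{\setV_{\nsfn}}|^{-1}$ that you rightly make explicit.
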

\begin{proof}
A detailed proof is provided in Appendix~\ref{apx: overall_algorithm_performance}.
\end{proof}

We make three remarks that clarify the interpretation of Theorem~\ref{thm: overall_algorithm_performance}.

\textbf{Qubit scaling.} The theorem cleanly separates the coordinator and worker memory costs. The coordinator's dominant problem-dependent load is governed by the boundary size $|\setVb|$ and the number of workers $\Nsfn$, whereas worker $\QPU_{\nsfn}$ mainly scales with the size of its local subproblem, $|\set{V}_{\nsfn}|$. The partitioning trade-off is therefore explicit: a finer decomposition can reduce per-worker qubits, but only at the price of a larger coordination layer at $\QPUc$. In this sense, the separator is the key structural parameter that determines whether the problem can be distributed across small QPUs. Unlike circuit-cutting approaches, this reduction in per-QPU qubits does not rely on exponential classical reconstruction~\cite{Peng2020,Piveteau2024}.

\textbf{Query complexity.} The product form in~\eqref{eqn: overall_query_complexity_discussion_revised} reflects a nested search: an outer search over boundary assignments and, for each candidate boundary, parallel local maximization subroutines. The theorem therefore identifies two quantities that control the cost: the size of the boundary search space and the difficulty of the induced local problems. This is precisely where structure enters the analysis. Better partitions improve $p_{\min,\rmc}$ and $p_{\min,\nsfn}$ and thereby reduce both global and local costs. Compared with black-box search-space partitioning, which typically yields $O(\sqrt{KN})$ query complexity on $K$ processors~\cite{Avron2021}, the present scheme retains the Grover-type advantage up to explicit structure-dependent factors.

\textbf{Communication cost.} Equation~\eqref{eqn:total_epr_consumption} isolates the entanglement cost of maintaining coherence across the network. For one call to $\matr{U}_{\mathrm{ini}}$, the term $\sum_{\nsfn} (2|\setVbnsfn|+\Np)$ counts the EPR pairs needed to distribute the boundary information and return the $\Np$-bit local outputs. The theorem therefore makes the communication overhead explicit in the cut sizes $|\setVbnsfn|$. This is important because query complexity and communication cost are distinct: an instance may have favorable oracle complexity while still being limited by network resources. Compared with circuit-cutting methods, the overhead here remains natively quantum and directly interpretable in terms of the chosen decomposition~\cite{Peng2020,Piveteau2024}.


\subsection{Impact of Network Topology on Communication Overhead}
\label{sec:generalization_and_resource_analysis}

The preceding analysis in Theorem~\ref{thm: overall_algorithm_performance} assumes an idealized, single-hop network where the coordinator can directly communicate with every worker.
Realistic DQC systems, however, will be built on physical networks with specific, and often constrained, topologies~\cite{Caleffi2024,Barral2025}.
In such scenarios, communication between non-adjacent processors relies on protocols such as \emph{entanglement swapping}. Entanglement swapping consumes entangled resources at each intermediate node (or ``hop'') to establish a long-distance entangled pair~\cite{ZelseZfiukowski1993, Briegel1998}, making communication a primary performance bottleneck.

We now generalize our analysis to an arbitrary connected network topology.
We show that the algorithm's computational complexity remains invariant, but its communication cost scales linearly with the network's diameter.
This result cleanly decouples the computational logic from the communication overhead and provides a quantitative tool for the co-design of distributed quantum algorithms and the physical networks on which they run.

\textbf{Setup}:
Let $A_{\mathrm{dist}}(\graphG, \mathsf{N}, \delta, \Np)$ be the algorithm operating on an idealized single-hop (fully connected) network $\mathsf{N}$ with $\Nsfn+1$ processors.
Let $\mathsf{N}'(\set{V}_{\mathsf{N}'}, \set{E}_{\mathsf{N}'})$ be any other quantum network that is a connected graph with at least $\Nsfn+1$ processors.
We construct a corresponding algorithm, $A_{\mathrm{dist}}'(\graphG, \mathsf{N}', \delta, \Np)$, by mapping the coordinator $\mathrm{c}$ and workers $\nsfn$ to distinct nodes in $\mathsf{N}'$. All communication between non-adjacent nodes is implemented using multi-hop entanglement swapping.

\begin{proposition}[\textbf{Performance on General Connected Networks}]
\label{prop:generalization_to_arbitrary_topologies}
The algorithm $A_{\mathrm{dist}}'$ on the general network $\mathsf{N}'$ has the following properties:
\begin{enumerate}
    \item \textbf{Performance Invariance.}
    The algorithm $A_{\mathrm{dist}}'$ achieves the same output guarantees as $A_{\mathrm{dist}}$.
    Its total qubit count, query complexity to local oracles, success probability, and approximation precision are identical to those stated in Theorem~\ref{thm: overall_algorithm_performance}.

    \item \textbf{Communication Cost Scaling.}
    The total number of EPR pairs consumed by $A_{\mathrm{dist}}'$, denoted $N'_{\mathrm{EPR}}$, is upper-bounded by
    \begin{align*}
        N'_{\mathrm{EPR}} \le d(\mathsf{N}') \cdot N_{\mathrm{EPR}},
    \end{align*}
    where $N_{\mathrm{EPR}}$ is the consumption for the idealized network from \eqref{eqn:total_epr_consumption}.
    The term $d(\mathsf{N}')$ is the coordinator-centric diameter of the network $\mathsf{N}'$, defined as the maximum shortest-path distance from the coordinator to any worker:
    \begin{align*}
        d(\mathsf{N}') \defeq \max_{\nsfn \in [\Nsfn]} \bigl\{ 
        \mathrm{dist}_{\mathsf{N}'}(\mathrm{c}, \nsfn) \bigr\},
    \end{align*}
    where $\mathrm{dist}_{\mathsf{N}'}(\mathrm{c}, \nsfn)$ is the number of \emph{edges} in the shortest path connecting nodes $\mathrm{c}$ and $\nsfn$ in $\mathsf{N}'$.
\end{enumerate}
\end{proposition}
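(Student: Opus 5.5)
The proof will be a simulation argument. The key observation is that every inter-processor quantum interaction in $A_{\mathrm{dist}}$ is realized through EPR pairs, used for teleportation or for nonlocal controlled gates in the distributed $\matr{U}_{\mathrm{PE}}$ and $\matr{U}_{\mathrm{check}}$ of Appendix~\ref{apx:implementationof_Upe}. Every such pair is shared between $\QPUc$ and some worker $\QPU_{\nsfn}$. The plan is to replace each direct pair by an end-to-end pair produced by entanglement swapping along a shortest path in $\mathsf{N}'$. Everything else stays unchanged, so we obtain a network-level compilation of the same logical circuit.

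\textbf{Step 1 (Performance Invariance).} Fix an injective placement of $\mathrm{c}$ and the $\nsfn$'s into $\set{V}_{\mathsf{N}'}$; one exists because $|\set{V}_{\mathsf{N}'}|\ge \Nsfn+1$. For every pair $(\mathrm{c},\nsfn)$, fix a shortest path $P_{\nsfn}$ of length $\ell_{\nsfn}=\mathrm{dist}_{\mathsf{N}'}(\mathrm{c},\nsfn)$; such a path exists since $\mathsf{N}'$ is connected. Ideal entanglement swapping, with Bell measurements at the intermediate nodes and Pauli corrections communicated classically, yields an exact EPR pair between the endpoints. Consequently, the joint state on the logical registers after each teleportation or nonlocal gate is identical to the one in $A_{\mathrm{dist}}$. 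The intermediate relay qubits are measured and discarded, and they touch neither the logical registers nor the local oracles.

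By induction over the sequence of communication rounds, the overall logical unitary (together with its final measurement statistics) coincides with that of $A_{\mathrm{dist}}$. Hence the following carry over verbatim from Theorems~\ref{thm: overall_algorithm_properties} and~\ref{thm: overall_algorithm_performance}: the query count to $\{\matr{U}_{g_{\nsfn,\mathbf{x}_{\setVbnsfn}}}\}$, the success probability $(1-\delta)^{2\Np}$, and the precision $(\Nsfn+1)2^{-\Np}$. The logical qubit counts per processor also carry over; relay memory is transient and, following the paper's convention, is not counted as algorithmic register space.

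\textbf{Step 2 (Communication Cost).} Creating one end-to-end pair across a path with $\ell$ edges requires one elementary link pair on each edge, i.e.\ $\ell$ EPR pairs, consumed by $\ell-1$ swaps. So each of the $N_{\mathrm{EPR}}$ logical pairs counted in \eqref{eqn:total_epr_consumption} for worker $\nsfn$ costs $\ell_{\nsfn}\le d(\mathsf{N}')$ physical pairs. Summing over all pairs gives
\begin{align*}
N'_{\mathrm{EPR}}=\sum_{\text{pairs}}\ell_{\nsfn(\text{pair})}\le d(\mathsf{N}')\cdot N_{\mathrm{EPR}}.
\end{align*}
Two bookkeeping checks are needed. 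First, one must confirm from the accounting behind \eqref{eqn:total_epr_consumption} that every pair counted there is indeed coordinator--worker; if some pair were worker--worker, its path could be routed through $\mathrm{c}$ and would have length up to $2d(\mathsf{N}')$, breaking the bound. Second, the classical messages needed for the Pauli corrections add classical communication cost but no EPR pairs.

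\textbf{Main obstacle.} The hardest step is the first check in Step 2: verifying from the distributed constructions in Appendix~\ref{apx:implementationof_Upe} that every EPR pair in $N_{\mathrm{EPR}}$ is used only between $\QPUc$ and a worker, which is what makes the diameter $d(\mathsf{N}')$ the right multiplier. To settle it, I will first trace through Algorithm~\ref{alg:u_ini_protocol} and the $2^{t+2}-4$ and $2\Nsfn t$ terms of \eqref{eqn:total_epr_consumption}, confirming that each originates from a teleportation to or from, or a nonlocal control held at, the coordinator.
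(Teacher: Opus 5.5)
Your proposal is correct and follows essentially the same route as the paper's proof. Replace each coordinator--worker teleportation by entanglement swapping along a shortest path, at a cost of $\mathrm{dist}_{\mathsf{N}'}(\mathrm{c},\nsfn)\le d(\mathsf{N}')$ elementary pairs per logical pair, sum over the communication events, and note that local operations are untouched, so queries, success probability, precision and register counts carry over. The check you flag as the main obstacle, that every pair counted in \eqref{eqn:total_epr_consumption} is between $\QPUc$ and a worker, is only implicit in the paper, and it does hold: every teleportation in Algorithm~\ref{alg:u_ini_protocol} and Appendix~\ref{apx:implementationof_Upe} goes to or from the coordinator.
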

\begin{proof}
See Appendix~\ref{apx:generalization_to_arbitrary_topologies}.
\end{proof}

\begin{remark}
This proposition establishes the network diameter $d(\mathsf{N}')$ as a direct, linear ``penalty factor'' on the required entanglement resources. This leads to several practical design principles for DQC systems:
\begin{enumerate}
    \item \textbf{Strategic Coordinator Placement.} In order to minimize communication overhead, the coordinator role should be assigned to a processor at a graph-theoretically central node of the network (\ie, one that minimizes the maximum distance to all other nodes).
    \item \textbf{Topology-Aware Task Allocation.} For a fixed physical network, worker roles that are more communication-intensive (\eg, those with larger boundary sets $|\setVbnsfn|$), are suggested to be assigned to physically closer QPUs to minimize $d(\mathsf{N}')$.
    \item \textbf{Network Design Principles.} This analysis provides a quantitative basis for preferring low-diameter topologies (\eg, star or fully-connected) over high-diameter ones (\eg, line) when designing dedicated DQC clusters.
\end{enumerate}
The analysis in this paper assumes perfect, deterministic entanglement swapping.
In practice, entanglement swapping is probabilistic and degrades fidelity~\cite{Wehner2018}.
The success probability and fidelity of an end-to-end entangled link decrease exponentially with the path length, which would further penalize high-diameter networks.
This provides a rich avenue for future research into the co-design of fault-tolerant distributed algorithms and realistic network topologies.
\eremark
\end{remark}

\subsection{Gate-Level State Vector Simulation}

\begin{figure}[t]
\centering
\includegraphics[width=0.6\linewidth]{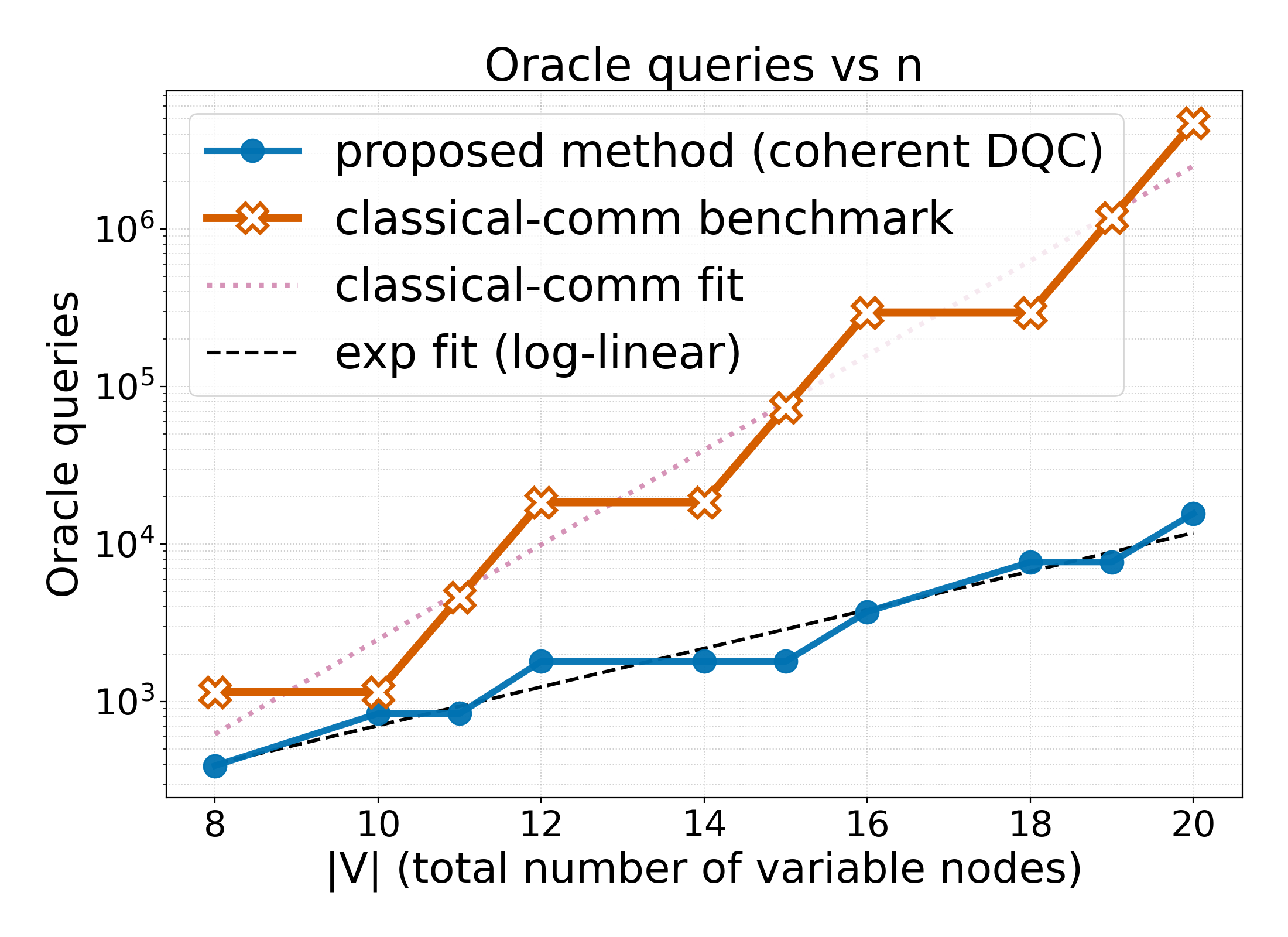}
\caption{Total number of leaf-oracle evaluations per execution of the distributed primitive versus the number of variables $|\setV|$. The $y$-axis is logarithmic. The dashed curve is a log-linear fit (linear in $\log_{10}$).}
\label{fig:queries_vs_n_gate_main}
\end{figure}

\begin{figure}[t]
\centering
\includegraphics[width=.6\linewidth]{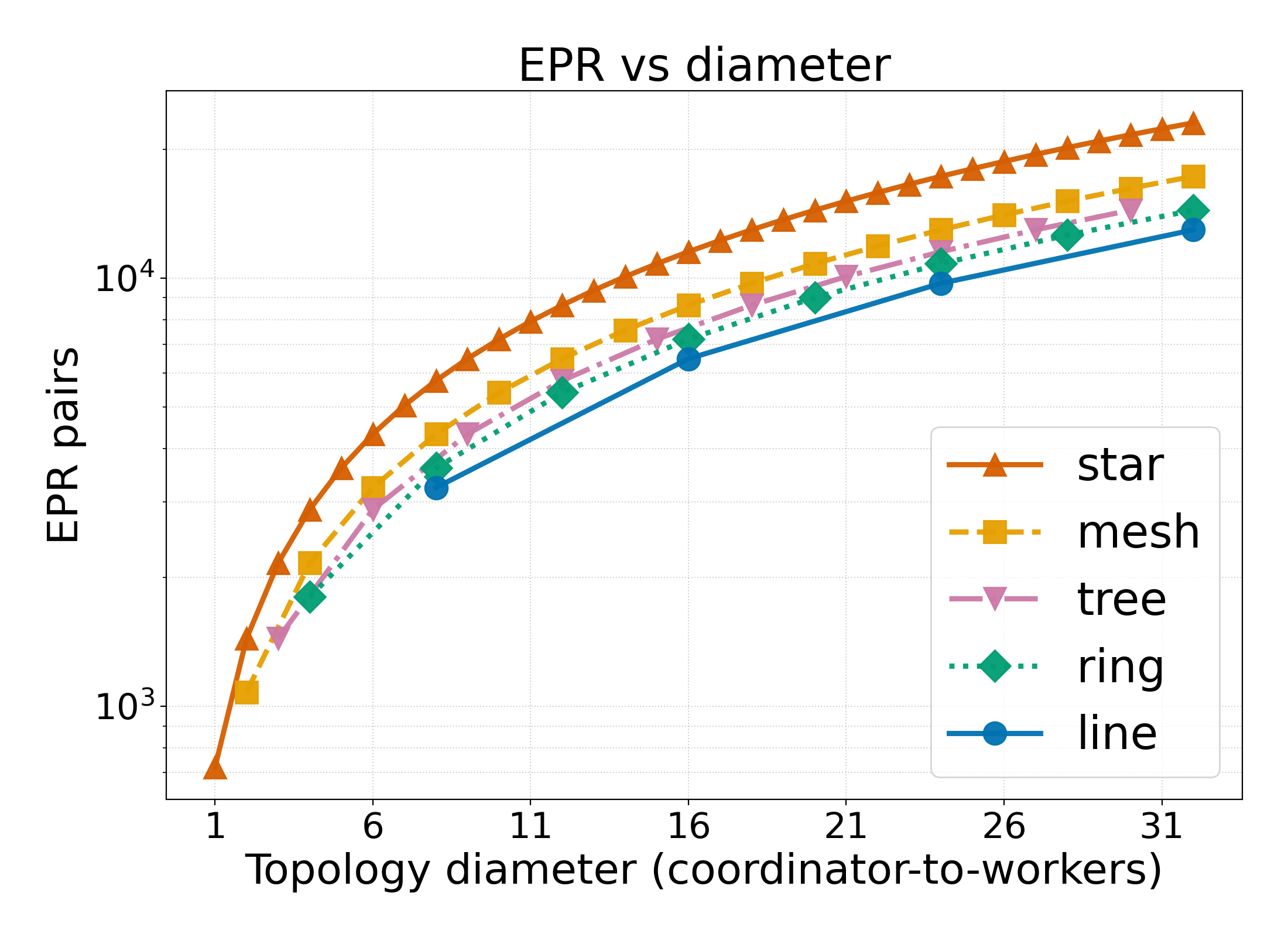}
\caption{Total number of elementary EPR pairs consumed versus the coordinator-to-workers diameter (hops) under topology stretching. The $y$-axis is logarithmic.}
\label{fig:epr_vs_diameter_gate_main}
\end{figure}

Figures~\ref{fig:queries_vs_n_gate_main}--\ref{fig:epr_vs_diameter_gate_main} provide an end-to-end gate-level
state vector validation of the distributed primitives analyzed in
Theorem~\ref{thm: overall_algorithm_performance} and
Proposition~\ref{prop:generalization_to_arbitrary_topologies}.
In order to illustrate the speed-up of our proposed algorithm, we report two full-execution counters:
(i) total number of leaf-oracle evaluations (\ie, oracle queries), and
(ii) total number of elementary EPR pairs consumed by teleportation-based coordination~\cite{Bennett1993}.

\textbf{Query scaling.}
As shown in Fig.~\ref{fig:queries_vs_n_gate_main},
the oracle-query count increases with the number of variable nodes $|\setV|$
and exhibits plateau--jump transitions.
In this sweep, the decomposition scales with problem size according to
$|\setV| = |\setVb| + N_G \cdot r$ (with fixed $N_G$), so both the number of boundary nodes $|\setVb|$ and the per-worker partition size $r$ increase with $|\setV|$ (equivalently, both $2^{|\setVb|}$ and $2^r$ grow).
The jumps occur when the integer phase-estimation register length increases by one
(a ceiling of a logarithmic bound, via the function $t$ in~\eqref{eqn: def of fun t}),
which adds a controlled layer and increases the oracle-query counter by a constant multiplicative factor, as shown in Theorem~\ref{thm: overall_algorithm_performance}.
For comparison, we also plot a classical-communication benchmark (\textit{cf.}~\cite{Avron2021}) in which workers run local quantum maximum-finding and coordinate only through classical messages.
Our distributed quantum algorithm empirically uses fewer worker-oracle queries across the sweep.
This behavior is consistent with Theorem~\ref{thm: overall_algorithm_performance}, where global coherence avoids repeated boundary-conditioned local searches.

\textbf{Topology-driven entanglement cost.}
In Fig.~\ref{fig:epr_vs_diameter_gate_main}, we plot EPR consumption as the coordinator-to-workers hop diameter increases under different topologies.
For all topologies, total EPR consumption grows with diameter.
The EPR counter aggregates routed hop distances in our coordinator--worker entanglement-preserving execution mode, so different topologies can yield different totals even at the same diameter.
(Note that Proposition~\ref{prop:generalization_to_arbitrary_topologies} provides a worst-case bound in terms of diameter, but does not characterize performance gaps among specific topologies.)

Appendix~\ref{sec:gate_level_validation} reports the complete simulation setup and additional results.

\subsection{Finding an Optimal Configuration}
\label{sec:finding_x_max}

The primary algorithm $A_{\mathrm{dist}}$ finds an approximation of the maximum value $\gmax$. However, many applications require the specific configuration $\vxmax$ that produces this value. We now introduce an extension, $A_{\mathrm{dist,config}}$, that finds a specific configuration $\hat{\vx}$ whose value $g(\hat{\vx})$ is provably close to the true maximum.

The algorithm employs iterative self-reduction, a standard technique in complexity theory~\cite{Arora2009}. It determines the optimal assignment for each variable $x_i \in \set{V}$ one by one. In order to decide the value for a variable $x_i$, the main algorithm $A_{\mathrm{dist}}$ runs twice: once with $x_i$ fixed to $0$ and once with $x_i$ fixed to $1$. By comparing the two resulting value-approximations, the algorithm infers the optimal assignment for $x_i$, fixes that value, and proceeds to the next variable.

\textbf{Setup}:
We use the primary algorithm $A_{\mathrm{dist}}(\graphG, \mathsf{N}, \delta, \Np)$ (from Algorithm~\ref{alg:dist-opt}) as a subroutine. This algorithm functions as an oracle that, with a success probability of at least $(1 - \delta)^{2\Np}$, outputs an approximation of $\gmax$ that satisfies the error bound in~\eqref{eqn:prop_approx_gmax_revised}.

\begin{theorem}[\textbf{Finding an Optimal Configuration}]
\label{prop: for an oracle solving the maximum how to find an oracle finds the location}
Given the Setup, there exists a distributed quantum algorithm $A_{\mathrm{dist,config}}(\graphG, \mathsf{N}, \delta, \Np)$ that finds a configuration $\hat{\vx} \in \set{X}$ and has the following performance guarantees:
\begin{enumerate}

    \item \textbf{Query Complexity.} The total query complexity to the fundamental local oracles is upper bounded by $(2|\set{V}| +1) \cdot C_{\mathrm{distr}}(\graphG, \mathsf{N},\delta,\Np) \in O(2^{|\set{V}|/2})$, where $C_{\mathrm{distr}}(\graphG, \mathsf{N},\delta,\Np)$ is given in~\eqref{eqn: overall_query_complexity_discussion_revised}.
    
    \item \textbf{Success Probability.} The algorithm succeeds with probability at least $(1 - \delta)^{4\Np|\set{V}| + 2\Np}$. In order to ensure \((1-\delta)^{4\Np|\set{V}| + 2\Np} \ge 1-\delta'\) for any $0<\delta'<1$,  Bernoulli's inequality implies that it suffices to choose
    $
      \delta \le \delta'/(4\Np|\set{V}| + 2\Np).
    $
    
    \item \textbf{Configuration Accuracy.} The output $\hat{\vx}$ satisfies
    \begin{align}
      |g(\hat{\vx}) - \gmax | \leq 4 \cdot |\set{V}| \cdot (\Nsfn + 1) \cdot 2^{-\Np}.
    \label{eqn:prop_approx_x_accuracy_revised}
    \end{align}
\end{enumerate}
\end{theorem}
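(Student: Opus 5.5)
The proof uses iterative self-reduction with $A_{\mathrm{dist}}$ as a noisy value oracle. Order the variables $\set{V}=\{v_1,\dots,v_{|\set{V}|}\}$ and run $A_{\mathrm{dist}}$ once on the unrestricted graph $\graphG$ to obtain a reference estimate $\hat g_0$. Then, for $i=1,\dots,|\set{V}|$, fix $x_{v_1},\dots,x_{v_{i-1}}$ to the already chosen values $\hat x_{v_1},\dots,\hat x_{v_{i-1}}$. Run $A_{\mathrm{dist}}$ twice on the restricted factor graph $\graphG^{(i,b)}$, in which $x_{v_i}=b$ for $b\in\{0,1\}$, to obtain estimates $\hat g_{i,0}$ and $\hat g_{i,1}$. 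Set $\hat x_{v_i}\defeq\arg\max_b \hat g_{i,b}$.

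Restricting a variable is implemented by letting each leaf oracle $\matr{U}_{g_{\nsfn,\vx_{\setVbnsfn}}}$ ignore or hard-wire the fixed coordinates. If the fixed variable is a boundary variable, we instead restrict the superposition in $\vqB$. Either way the decomposition is unchanged, no extra qubits are needed, and the per-run query cost is bounded by $C_{\mathrm{distr}}(\graphG,\mathsf{N},\delta,\Np)$, because $|\set{X}_{\setVb}|$ and $|\set{X}_{\setV_{\nsfn}}|$ can only shrink and $p_{\min}$ can only increase (one must check that the $t$ values, and hence the counts, are monotone). This gives $2|\set{V}|+1$ runs and the stated query bound. The $O(2^{|\set{V}|/2})$ claim follows from~\eqref{eqn: overall_query_complexity_discussion_revised}, since $\sqrt{|\set{X}_{\setVb}||\set{X}_{\setV_{\nsfn}}|}\le 2^{|\set{V}|/2}$, treating $\Np$, $\Nsfn$ and $\delta$ as constants.

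\textbf{Success probability.} Each of the $2|\set{V}|+1$ runs independently meets Theorem~\ref{thm: overall_algorithm_performance} with probability at least $(1-\delta)^{2\Np}$. A union/product bound over the runs gives at least $(1-\delta)^{2\Np(2|\set{V}|+1)}=(1-\delta)^{4\Np|\set{V}|+2\Np}$. The choice of $\delta$ then follows from Bernoulli's inequality, exactly as in Theorem~\ref{thm: overall_algorithm_performance}.

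\textbf{Accuracy.} On the good event, write $\epsilon\defeq(\Nsfn+1)2^{-\Np}$. By~\eqref{eqn:prop_approx_gmax_revised}, every estimate $\hat g$ of a restricted maximum $g^{\max}$ satisfies $g^{\max}-\epsilon\le\hat g\le g^{\max}$. Let $M_i$ be the true maximum with $v_1,\dots,v_i$ fixed, so $M_0=\gmax$ and $M_{|\set{V}|}=g(\hat\vx)$. Since $M_{i-1}=\max_b g^{\max}(\graphG^{(i,b)})$, the choice of $\hat x_{v_i}$ satisfies
\[
M_i\ \ge\ \hat g_{i,\hat x_{v_i}}\ \ge\ \max_b \hat g_{i,b}\ \ge\ M_{i-1}-\epsilon .
\]
The loss per step is therefore at most $\epsilon$. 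Summing over the $|\set{V}|$ steps gives $\gmax-g(\hat\vx)\le|\set{V}|\epsilon$, and also $g(\hat\vx)\le\gmax$. This is comfortably within the stated $4|\set{V}|(\Nsfn+1)2^{-\Np}$. The extra factor leaves room for a cruder bookkeeping if the restricted instance's $\Nsfn$ changes or renormalization (Proposition~\ref{prop: appendix_prop_simplification}) rescales errors. The reference run $\hat g_0$ is not needed for this bound; it only certifies the final value and accounts for the "$+1$" run.

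\textbf{Main obstacle.} The hardest step is justifying that restricted instances are legitimate inputs to $A_{\mathrm{dist}}$ with the same guarantees. Two things must be verified. First, conditions~\eqref{eqn: property of psi vect with positive magnitude: final}--\eqref{eqn: bound of gstar and znsfn} must still hold under restriction; the local oracles' overlap $p_{\min,\nsfn}$ must remain valid for restricted argmax sets. Second, the query counts must not increase. For the first point I would first try re-deriving Lemma~\ref{lem: properties_approximation_of_oracle_Uini} for the restricted graph, using that Definition~\ref{def: oracles in the network} quantifies over all boundary configurations. If that fails, I would build restricted oracles from the classical-to-unitary construction of Appendix~\ref{apx:quantum_is_universal_for_classical}.
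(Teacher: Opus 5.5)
Your proof is correct, and it departs from the paper's in the one step that carries the accuracy bound: the decision rule.

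\textbf{How the two rules differ.} The paper sets $\hat{x}_i=0$ if and only if $\bigl|\zdec(\mathbf{z}_{\max,\rmp,\rmc}^{(i,0)})-\zdec(\mathbf{z}_{\max,\rmp,\rmc}^{(i-1)})\bigr|\le 2\Delta_{\Np}$, with $\Delta_{\Np}=(\Nsfn+1)2^{-\Np}$. That is, it compares the $x_i=0$ estimate with the estimate carried over from the previous step, which is seeded at $i=1$ by the reference run. It uses only the two-sided bound $|\text{estimate}-\text{true maximum}|\le\Delta_{\Np}$. A triangle inequality (case $\hat{x}_i=0$) and a contradiction argument (case $\hat{x}_i=1$) then give a loss of at most $4\Delta_{\Np}$ per step, which is where the factor $4$ in the statement comes from.

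Your rule $\hat{x}_{v_i}=\arg\max_b\hat g_{i,b}$ exploits the fact that Theorem~\ref{thm: overall_algorithm_properties} certifies a \emph{lower} approximation. Each step therefore loses at most $\Delta_{\Np}$, and at most $2\Delta_{\Np}$ even with two-sided errors. This gives the sharper bound $\gmax-g(\hat{\vx})\le|\set{V}|(\Nsfn+1)2^{-\Np}$, and $g(\hat{\vx})\le\gmax$ comes for free. The reference run is then unnecessary for accuracy, whereas the paper's chained comparison needs it. Your account of how restricted instances are realized (hard-wired leaf oracles, a restricted boundary superposition, monotonicity of $t(c,\delta)$ in $c$) is also more careful than the paper, which simply asserts that fixing variables does not increase the query cost.

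\textbf{Success probability.} The per-run guarantee $(1-\delta)^{2\Np}$ is Item~2 of Theorem~\ref{thm: overall_algorithm_properties}, not Theorem~\ref{thm: overall_algorithm_performance}. Also, $\graphG^{(i,b)}$ depends on earlier outcomes, so the runs are not independent. The product bound should come from the chain rule: each run succeeds with probability at least $(1-\delta)^{2\Np}$ conditioned on any history.

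\textbf{The $O(2^{|\set{V}|/2})$ claim.} Bounding each $\sqrt{|\set{X}_{\setVb}|\cdot|\set{X}_{\setV_{\nsfn}}|}$ by $2^{|\set{V}|/2}$ only gives $(2|\set{V}|+1)\,C_{\mathrm{distr}}\in O(|\set{V}|\,2^{|\set{V}|/2})$. The crude bound is genuinely not $O(2^{|\set{V}|/2})$ in general: take $\setVb=\emptyset$, $\Nsfn=1$ and a Hadamard leaf oracle. To obtain $O(2^{|\set{V}|/2})$ you must use that the instances in round $i$ have only $|\set{V}|-i$ free variables. With restricted oracles whose overlap reflects only the free variables, the per-run cost is $O(\Nsfn\,2^{(|\set{V}|-i)/2})$ and the sum over $i$ is geometric. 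If $\delta$ is chosen as in Item~2, further polynomial factors in $|\set{V}|$ enter through $t$. The paper's proof asserts the $O(2^{|\set{V}|/2})$ membership without argument, so this weakness is inherited from the statement rather than introduced by you.
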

\begin{proof}
See Appendix~\ref{apx: for an oracle solving the maximum how to find an oracle finds the location}.
\end{proof}

\begin{remark}
The performance of $A_{\mathrm{dist,config}}$ scales linearly with the number of variables, $|\set{V}|$, relative to the base algorithm $A_{\mathrm{dist}}$. This $O(|\set{V}|)$ factor arises because the self-reduction technique must iterate through the variables, invoking the $A_{\mathrm{dist}}$ subroutine $2|\set{V}|+1$ times to make its decisions. The error bound \eqref{eqn:prop_approx_x_accuracy_revised} scales linearly with $|\set{V}|$ as the small approximation error from each of the $|\set{V}|$ decision steps can accumulate. The success probability compounds exponentially for this same reason.
\eremark
\end{remark}

\section{Divide-and-Conquer: A Hierarchical Framework for Scalable Distributed Optimization}
\label{sec:hierarchical_algorithm}

The distributed algorithm $A_{\mathrm{dist}}$, presented in Section~\ref{sec: distributed_quantum_algorithm}, provides a robust method for solving optimization problems on factor graphs.
However, its applicability is bounded by a core assumption: each decomposed subproblem must fit within the resource limits of a single worker QPU.
As problem sizes grow, this assumption inevitably fails, presenting a hard barrier to scalability.
The qubit requirement for a sub-problem, established in Theorem~\ref{thm: overall_algorithm_performance}, scales with its size.
If this requirement exceeds the hardware's physical limit, $N_{\mathrm{QPU},\max}$, the sub-problem becomes intractable for one device.

In order to bridge this gap between large-scale problems and the practical constraints of quantum networks, we introduce a hierarchical framework based on the divide-and-conquer paradigm~\cite{Blahut2010}.
This framework acts as a recursive ``quantum compiler,'' systematically mapping a large logical problem onto a network of fixed-capacity quantum processors.
It treats any sub-problem exceeding hardware limits as a new, smaller optimization task to be solved by its own dedicated sub-network.
This strategy transforms our approach from a specialized tool into a universal paradigm, enabling it to tackle problems of arbitrary size by recursively applying the same distributed logic.

Recognizing the diverse capabilities of quantum hardware, we formalize two distinct execution modes for this framework.
These modes create a tunable trade-off, \ie, a ``performance dial'', between quantum advantage and hardware fidelity.
\begin{itemize}
    \item \textbf{Mode 1: Coherent Cascade.} A fully coherent execution that preserves the global Grover-like speedup across the entire problem. This represents a powerful theoretical model for future fault-tolerant networks.
    \item \textbf{Mode 2: Hybrid Approach.} A hybrid quantum-classical execution that introduces intermediate measurements to reduce circuit depth. This practical strategy mitigates noise in near-term (NISQ) devices, albeit at the cost of increased query complexity.
\end{itemize}

\subsection{The Hierarchical Strategy}
The hierarchical strategy transforms the flat coordinator-worker topology into a recursive, tree-like structure.
The process involves two main phases: a top-down hierarchical decomposition (a classical pre-processing step) and a corresponding bottom-up hierarchical execution (the distributed quantum computation).

\textbf{Top-Down Hierarchical Decomposition}.
In a classical pre-processing step, we recursively partition the problem until all computational tasks fit within the capacity of individual QPUs.
This process is formalized in Algorithms~\ref{alg:hierarchical_decomposition}, which establishes a consistent multi-index notation to track subgraphs across different levels of the hierarchy.

\begin{algorithm}[t!]
\caption{Top-Down Construction of the Hierarchical Decomposition}
\label{alg:hierarchical_decomposition}
\begin{algorithmic}[1]
\Procedure{Decompose}{$\graphG^{(\ell)}_{\vk}$, $\ell$, $N_{\mathrm{c},\max}$, $N_{\mathrm{w},\max}$}
    \State Using the qubit bounds in Theorem~\ref{thm: overall_algorithm_performance}, compute the coordinator and worker qubit requirements for $\graphG^{(\ell)}_{\vk}$, denoted by $N_{\mathrm{c},\mathrm{qubits}}(\vk)$ and $N_{\mathrm{w},\mathrm{qubits}}(\vk)$.
    \If{$N_{\mathrm{c},\mathrm{qubits}}(\vk) \le N_{\mathrm{c},\max}$ \textbf{and} $N_{\mathrm{w},\mathrm{qubits}}(\vk) \le N_{\mathrm{w},\max}$}
        \State Mark $\graphG^{(\ell)}_{\vk}$ as a leaf at level $\ell$.
        \State \Return
    \Else
        \State Choose an internal boundary-variable set $\set{V}^{(\ell+1)}_{\mathrm{B},\vk}$.
        \State Partition $\graphG^{(\ell)}_{\vk}$ into child subgraphs $\{\graphG^{(\ell+1)}_{\vk,k_{\ell+1}}\}_{k_{\ell+1}\in\mathcal{K}_{\ell+1}(\vk)}$.
        \For{each $k_{\ell+1}\in\mathcal{K}_{\ell+1}(\vk)$}
            \State \Call{Decompose}{$\graphG^{(\ell+1)}_{\vk,k_{\ell+1}}$, $\ell+1$, $N_{\mathrm{c},\max}$, $N_{\mathrm{w},\max}$}
        \EndFor
    \EndIf
\EndProcedure
\end{algorithmic}
\end{algorithm}

An $L$-level hierarchical decomposition of a factor graph $\graphG$ is the rooted tree generated by Algorithm~\ref{alg:hierarchical_decomposition}, initialized with
\[
    \Call{Decompose}{\graphG, 0, N_{\mathrm{c},\max}, N_{\mathrm{w},\max}}.
\]
For notational simplicity, we adopt the convention that all leaves are regarded as lying at level $L$; if a branch terminates earlier, it may be extended by trivial copies of the same terminal subgraph. The resulting structure is characterized as follows:
\begin{enumerate}
    \item \textbf{Level 0 (root):} The root of the tree is the original factor graph,
    \[
        \graphG^{(0)} \defeq \graphG.
    \]
    
    \item \textbf{Meta-subgraphs:}
    Fix $\ell\in\{1,\ldots,L\}$ and consider a non-leaf node $\graphG^{(\ell-1)}_{\vk'}$ at level $\ell-1$, indexed by
    \[
        \vk'=(k_1,\ldots,k_{\ell-1})\in [K_{1}]\times\cdots\times [K_{\ell-1}],
    \]
    where $K_{1},\ldots,K_{\ell-1}\in\sZpp$. Let
    $N_{\mathrm{c},\mathrm{qubits}}(\vk')$ and
    $N_{\mathrm{w},\mathrm{qubits}}(\vk')$
    denote the coordinator and worker qubit requirements associated with $\graphG^{(\ell-1)}_{\vk'}$.
    We call $\graphG^{(\ell-1)}_{\vk'}$ a \emph{meta-subgraph} if it violates the hardware constraints, that is,
    \[
        N_{\mathrm{c},\mathrm{qubits}}(\vk')>N_{\mathrm{c},\max}
        \quad\text{or}\quad
        N_{\mathrm{w},\mathrm{qubits}}(\vk')>N_{\mathrm{w},\max}.
    \]
    Such a node is partitioned by an internal boundary-variable set $\set{V}^{(\ell)}_{\mathrm{B},\vk'}$ into child subgraphs
    \[
        \bigl\{\graphG^{(\ell)}_{\vk',k_\ell}\bigr\}_{k_\ell\in\mathcal{K}_\ell(\vk')},
    \]
    where $\mathcal{K}_\ell(\vk')$ is a finite index set. We write
    $K_\ell(\vk')\defeq |\mathcal{K}_\ell(\vk')|$ for the number of children of node $\vk'$.
    In general, different meta-subgraphs may have different branching factors.

    
    \item \textbf{Leaf-subgraphs:}
    The recursion terminates at nodes whose resource requirements satisfy the hardware constraints. Under the uniform-depth convention above, such terminal nodes are denoted by $\graphG^{(L)}_{\vk}$ and satisfy
    \[
        N_{\mathrm{c},\mathrm{qubits}}(\vk)\leq N_{\mathrm{c},\max}
        \quad\text{and}\quad
        N_{\mathrm{w},\mathrm{qubits}}(\vk)\leq N_{\mathrm{w},\max}.
    \]
    These leaves are the atomic subproblems executed on individual QPUs.
\end{enumerate}

\textbf{Bottom-Up Hierarchical Execution}.
Once the decomposition tree has been constructed, computation proceeds from the leaves toward the root. Each node of the tree is assigned a corresponding sub-network, with one QPU designated as the local coordinator for that node and the QPUs assigned to its children acting as the workers for the next higher level. At an internal node, the coordinator solves the associated subproblem by running a level-specific instance of $A_{\mathrm{dist}}$ over its children. This requires a coherent implementation not only of the amplitude-amplification operator in~\eqref{eqn: def of UAANz}, but also of the controlled phase-testing primitive in~\eqref{eqn: controll phase estimation_main_alg}. As shown in Appendix~\ref{apx:implementationof_Upe}, the corresponding distributed protocol may require parent-child teleportation both for the child-dependent components of~\eqref{eqn: def of UAANz} and for the controlled-unitary mechanism inside~\eqref{eqn: controll phase estimation_main_alg}. The distinction between coherent and hybrid execution is therefore a distinction about how these inter-level quantum primitives are realized.

\begin{definition}[\textbf{Hierarchical Algorithm}]
\label{def:hierarchical_algorithm}
Let $A_{\mathrm{hier}}$ operate on the hierarchical decomposition produced by Algorithm~\ref{alg:hierarchical_decomposition}. Its execution is defined recursively as follows.
\begin{enumerate}
    \item \textbf{Base case (level $L$).}
    For a leaf subgraph $\graphG^{(L)}_{\vk}$, the assigned QPU runs the generic maximization routine $\matr{U}_{\max}$ from Appendix~\ref{apx:generic_maximization_subroutine}, conditioned on the boundary variables supplied by its parent. The output is an $\Np$-qubit register encoding an $\Np$-bit approximation of the corresponding local optimum.

    \item \textbf{Recursive step (level $\ell<L$).}
    Consider an internal node $\graphG^{(\ell)}_{\vk}$ with children
    $\{\graphG^{(\ell+1)}_{\vk,k}\}_{k\in\mathcal{K}_{\ell+1}(\vk)}$.
    Its local coordinator executes the single-level distributed routine $A_{\mathrm{dist}}$ of Algorithm~\ref{alg:dist-opt}, treating the child nodes as its workers. The key design choice is how the level-$\ell$ inter-level primitives are implemented from the child routines. In particular, one must realize coherently or classically the child-dependent operations that appear in the level-$\ell$ counterparts of~\eqref{eqn: def of UAANz} and~\eqref{eqn: controll phase estimation_main_alg}.
\end{enumerate}

\noindent\textit{Execution modes.} For any sub-coordinator at level $\ell<L$:
\begin{enumerate}
    \setcounter{enumi}{2}
    \item \textbf{Mode 1: Coherent cascade.}
    The level-$\ell$ coordinator invokes the child routines without intermediate measurement and maintains a coherent parent-child interface throughout the realization of the level-$\ell$ counterparts of~\eqref{eqn: def of UAANz} and~\eqref{eqn: controll phase estimation_main_alg}. Consequently, the child sub-network is used coherently not only to implement the level-$\ell$ state-preparation, marking, and reflection operations, but also to support the controlled-unitary steps that occur inside the phase-testing subroutine. This preserves coherence across levels and retains the Grover-like quadratic speedup, but it also requires repeated teleportation of data and control qubits across the parent-child interface, deep circuits, and sustained high-fidelity inter-node entanglement.

    \item \textbf{Mode 2: Hybrid with intermediate measurement.}
    Each child routine is measured after completion, and only the resulting classical bit strings are returned to the level-$\ell$ coordinator. The interface between levels therefore no longer supports coherent implementations of the child-dependent parts of~\eqref{eqn: def of UAANz} or of the distributed phase-testing primitive in~\eqref{eqn: controll phase estimation_main_alg}. Instead, the coordinator classically assembles the effective objective for its node and evaluates it by explicit iteration over the relevant boundary assignments. This removes the need for inter-level qubit teleportation at the measured interface and reduces circuit depth and inter-level entanglement, but it replaces quantum search at that level by classical enumeration.
\end{enumerate}
\end{definition}

\subsection{Performance Analysis}
We now analyze the performance of the hierarchical algorithm $A_{\mathrm{hier}}$, making explicit the trade-off between query complexity and circuit depth in the coherent and hybrid modes.

\textbf{Setup.}
Fix a precision parameter $\Np$ and a per-call failure parameter $\delta\in(0,1)$.
Let $\mathcal{T}$ be an $L$-level hierarchical decomposition tree produced by Algorithm~\ref{alg:hierarchical_decomposition}.
Each internal node $\graphG^{(\ell)}_{\vk}$ is decomposed by an internal boundary-variable set $\set{V}^{(\ell+1)}_{\mathrm{B},\vk}$ into children
$\{\graphG^{(\ell+1)}_{\vk,k}\}_{k\in\mathcal{K}_{\ell+1}(\vk)}$, where $K_{\ell+1}(\vk)\defeq |\mathcal{K}_{\ell+1}(\vk)|$.
When node $\graphG^{(\ell)}_{\vk}$ is executed in the coherent mode, its sub-coordinator runs the single-level routine $A_{\mathrm{dist}}$ (Algorithm~\ref{alg:dist-opt}) and requires a lower bound on the overlap of its state-preparation oracle $\matr{U}^{(\ell)}_{\mathrm{ini}}$.
Under the success guarantee $(1-\delta)^{2\Np}$ for each child routine, Theorem~\ref{thm: overall_algorithm_properties} allows us to take
\begin{align}
    p_{\min,\vk}^{(\ell)} \defeq \frac{(1-\delta)^{2\Np\,K_{\ell+1}(\vk)}}{\lvert \setx_{\set{V}^{(\ell+1)}_{\mathrm{B},\vk}} \rvert}.
\label{eqn:pmin_hier_level_ell}
\end{align}
Each leaf $\graphG^{(L)}_{\vk}$ runs $\matr{U}_{\max}$ (Appendix~\ref{apx:generic_maximization_subroutine}) with a local overlap parameter $p_{\min,\mathrm{leaf},\vk}$ determined by its local oracles.
For the baseline Hadamard state preparation (uniform superposition over internal configurations), we have
$p_{\min,\mathrm{leaf},\vk}=\lvert \setx_{\set{V}^{(L)}_{\vk}} \rvert^{-1}$.

\begin{theorem}[\textbf{Overall Performance of the Hierarchical Algorithm}]
\label{thm:hierarchical_performance}
Let $A_{\mathrm{hier}}$ be the $L$-level hierarchical algorithm (Definition~\ref{def:hierarchical_algorithm}) operating on a recursive decomposition of $\graphG$ with the Setup described above. The following bounds hold.
\begin{enumerate}
    \item \label{itm:hier_qubits_per_proc} \textbf{Qubits per processor.}
    \begin{align*}
        &N_{\mathrm{qubits}}(\text{sub-coord for } \graphG^{(\ell)}_{\vk})
        \in O\Bigl( \lvert \set{V}^{(\ell+1)}_{\mathrm{B},\vk} \rvert
        + \mathrm{Poly}\bigl(K_{\ell+1}(\vk),\Np,\log(1/\delta)\bigr)\Bigr), \\[5pt]
        &N_{\mathrm{qubits}}(\text{leaf }\vk)
        \in O\Bigl(\lvert \set{V}^{(L)}_{\vk} \rvert
        + \mathrm{Poly}\bigl(\Np,\log(1/\delta)\bigr)\Bigr).
    \end{align*}
    Here, for a child $\graphG^{(\ell+1)}_{\vk,k}$ with $k\in\mathcal{K}_{\ell+1}(\vk)$, we write
    $\set{V}^{(\ell+1)}_{\mathrm{B},\vk,k}$ for the subset of boundary variables that appears in that child subgraph.

    \item \label{itm:hier_query_complexity} \textbf{Query complexity}
    \begin{align*}
        C_{\mathrm{hier}}
        \in
        O\left(
        \sum_{\graphG^{(L)}_{\vk}\in\mathcal{T}}
        N_{\mathrm{inv}}\!\left(\graphG^{(L)}_{\vk}\right)
        \cdot
        C_{\mathrm{leaf}}\!\left(\graphG^{(L)}_{\vk}\right)
        \right),
    \end{align*}
    where $\set{L}_{\mathrm{M}}\subseteq\{0,\ldots,L-1\}$ is the set of levels executed in Mode~2 (intermediate measurement),
    $C_{\mathrm{leaf}}(\graphG^{(L)}_{\vk})$ is the number of queries to the leaf's local oracle in one execution of $\matr{U}_{\max}$:
    \begin{align}
      C_{\mathrm{leaf}}\!\left(\graphG^{(L)}_{\vk}\right)
      \defeq
      \Np\cdot\Bigl(2^{t(p_{\min,\mathrm{leaf},\vk},\delta)+2}-2\Bigr)
      \qquad\text{(Proposition~\ref{prop:generic_maximization_performance})},
      \label{eqn:hier_leaf_query_cost_def}
    \end{align}
    and $N_{\mathrm{inv}}(\cdot)$ is the (policy-dependent) invocation count induced by the hierarchy, defined recursively by
    $N_{\mathrm{inv}}(\graphG^{(0)})\defeq 1$ and
    \begin{align}
      N_{\mathrm{inv}}\!\left(\graphG^{(\ell+1)}_{\vk,k}\right)
      \defeq
      N_{\mathrm{inv}}\!\left(\graphG^{(\ell)}_{\vk}\right)\cdot C_{\mathrm{eff}}\!\left(\graphG^{(\ell)}_{\vk}\right),
      \qquad
      k\in\mathcal{K}_{\ell+1}(\vk),
      \label{eqn:hier_invocation_count_def}
    \end{align}
    with the effective per-node child-invocation factor
    \begin{align}
      C_{\mathrm{eff}}\!\left(\graphG^{(\ell)}_{\vk}\right)\in
      \begin{cases}
      \Np\cdot\Bigl(2^{t(p_{\min,\vk}^{(\ell)},\delta)+2}-2\Bigr)
      & \ell\notin \set{L}_{\mathrm M} \text{ (coherent)}\\[4pt]
      \bigl|\setx_{\set{V}^{(\ell+1)}_{\mathrm{B},\vk}}\bigr|
      & \ell\in \set{L}_{\mathrm M} \text{ (measured, classical enumeration)}
      \end{cases},
      \label{eqn:hier_Ceff_def}
    \end{align}
    where, in the coherent case, Item~\ref{prop: overall_query_complexity: final} of Theorem~\ref{thm: overall_algorithm_properties} and~\eqref{eqn:pmin_hier_level_ell} imply
    \begin{align}
      C_{\mathrm{eff}}\!\left(\graphG^{(\ell)}_{\vk}\right)
      \in
      O\Biggl(\Np\cdot (1-\delta)^{-\Np\,K_{\ell+1}(\vk)}\cdot \sqrt{|\setx_{\set{V}^{(\ell+1)}_{\mathrm{B},\vk}}|}\Biggr).
      \label{eqn:hier_Ceff_coherent_bigO}
    \end{align}

    \item \label{itm:hier_epr} \textbf{EPR consumption (coherent mode).}
    \begin{align*}
      N_{\mathrm{EPR,hier}}
      \in
      O\left(
      \sum_{\substack{\graphG^{(\ell)}_{\vk}\in\mathcal{T}:\\ 0\le \ell \le L-1}}
      N_{\mathrm{inv}}\!\left(\graphG^{(\ell)}_{\vk}\right)
      \cdot
      N_{\mathrm{EPR}}\!\left(\graphG^{(\ell)}_{\vk}\right)
      \right).
    \end{align*}
    Here $N_{\mathrm{EPR}}(\graphG^{(\ell)}_{\vk})$ denotes the number of EPR pairs consumed by \emph{one execution} of the level-$\ell$ distributed primitive (the single-level routine $A_{\mathrm{dist}}$ run on the sub-network induced by $\graphG^{(\ell)}_{\vk}$ and its children), counting only coherent inter-node communication at that level, and we set $N_{\mathrm{EPR}}(\graphG^{(\ell)}_{\vk})\defeq 0$ for $\ell\in\set{L}_{\mathrm M}$.
    In the single-hop communication model, expression~\eqref{eqn:total_epr_consumption} yields
    \begin{align*}
      N_{\mathrm{EPR}}\!\left(\graphG^{(\ell)}_{\vk}\right)
      &=
      \Np \cdot \Bigl( 2^{\,t(p_{\min,\vk}^{(\ell)},\delta)+2} - 2 \Bigr)
      \cdot 
        \sum_{k\in\mathcal{K}_{\ell+1}(\vk)} \bigl( 2\,|\set{V}^{(\ell+1)}_{\mathrm{B},\vk,k}| + \Np \bigr)
      \\&\quad
      + \Np \cdot \Bigl( 2^{\,t(p_{\min,\vk}^{(\ell)},\delta)+2} - 4 \Bigr)\cdot K_{\ell+1}(\vk)
      +
      \Np \cdot t(p_{\min,\vk}^{(\ell)}, \delta) \cdot \bigl(2\,K_{\ell+1}(\vk)\bigr),
    \end{align*}
    for $\ell\notin\set{L}_{\mathrm M}$.

	    \item \label{itm:hier_success_precision} \textbf{Success and precision.}
	    \begin{align*}
	      \Pr(\text{success}) \ge (1-\delta)^{2\Np}
	        \qquad\text{in the coherent cascade }(\set{L}_{\mathrm M}=\emptyset).
		    \end{align*}
		    In hybrid execution, we define the set of coherently executed subroutines whose outputs are read out (measured) and used as classical data as
		    \begin{align*}
		      \mathcal{T}_{\mathrm{out}}
		      \defeq
		      \left\{
		        \graphG^{(\ell)}_{\vk}\in\mathcal{T} \ \middle| \
		        \ell\notin \set{L}_{\mathrm M}
		        \text{ and }
		        (\ell=0 \text{ or } \ell-1\in \set{L}_{\mathrm M})
		      \right\},
		    \end{align*}
		    and let
		    \begin{align*}
		      N_{\mathrm{exec}}
		      \defeq
		      \sum_{\graphG^{(\ell)}_{\vk}\in\mathcal{T}_{\mathrm{out}}}
		      N_{\mathrm{inv}}\!\left(\graphG^{(\ell)}_{\vk}\right).
		    \end{align*}
		    Then
		    \begin{align*}
		        \Pr(\text{success}) &\ge (1-\delta)^{2\Np\,N_{\mathrm{exec}}}.
		    \end{align*}
	    If we want the overall success probability satisfying \(\Pr(\text{success}) \ge 1-\delta'\) for any $0<\delta'<1$, 
	    Bernoulli's inequality implies that 
	    it suffices to choose
	    \begin{align*}
        \delta
        \leq
        \begin{cases}
        \frac{\delta'}{2\Np}, & \set{L}_{\mathrm M}=\emptyset \quad\text{(coherent cascade)}\\[2pt]
        \frac{\delta'}{2\Np\cdot N_{\mathrm{exec}}}, & \set{L}_{\mathrm M}\neq\emptyset \quad\text{(hybrid execution)}
        \end{cases}\quad.
      \end{align*}
      The corresponding $\Np$-bit approximation $\mathbf{z}_{\max,\rmp,\rmc}$ of the maximum $\gmax$ satisfies
      \begin{align*}
		        0 &\le \gmax - \zdec\bigl(\mathbf{z}_{\max,\rmp,\rmc}\bigr) \le |\mathcal{T}|\cdot 2^{-\Np}.
      \end{align*}
\end{enumerate}
\end{theorem}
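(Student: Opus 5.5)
We argue by induction on the depth of the decomposition tree $\mathcal{T}$, going from the leaves up to the root. At each internal node $\graphG^{(\ell)}_{\vk}$ we apply the single-level results, Theorem~\ref{thm: overall_algorithm_properties}, Lemma~\ref{lem: properties_approximation_of_oracle_Uini} and Theorem~\ref{thm: overall_algorithm_performance}, with the children playing the role of workers. The inductive hypothesis is that a child subroutine $\graphG^{(\ell+1)}_{\vk,k}$, run conditioned on its boundary register, behaves like a generic maximizer $\matr U_{\max}$ in the sense of Proposition~\ref{prop:generic_maximization_performance}: it outputs the best $\Np$-bit lower approximation of its local optimum with probability at least $(1-\delta)^{2\Np}$, and its additive error is at most (number of tree nodes in its subtree)$\,\cdot 2^{-\Np}$. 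For a leaf this is exactly Proposition~\ref{prop:generic_maximization_performance}, with $p_{\min,\mathrm{leaf},\vk}$ equal to the Hadamard overlap. For an internal coherent node, Lemma~\ref{lem: properties_approximation_of_oracle_Uini} shows that the level-$\ell$ state $\matr U^{(\ell)}_{\mathrm{ini}}\svzero$ satisfies \eqref{eqn: property of psi vect with positive magnitude: final} with $p^{(\ell)}_{\min,\vk}$ as in \eqref{eqn:pmin_hier_level_ell}, because the boundary superposition is uniform and each of the $K_{\ell+1}(\vk)$ children contributes a factor $(1-\delta)^{2\Np}$. Theorem~\ref{thm: overall_algorithm_properties} then returns the same guarantee one level higher.

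\textbf{Qubits.} The qubit bounds are read off locally from Theorem~\ref{thm: overall_algorithm_performance}, Item~\ref{prop:qubits_per_processor_thm_overall_alg_perf}. A sub-coordinator stores its boundary register, of size $|\set{V}^{(\ell+1)}_{\mathrm{B},\vk}|$, together with its fan-out copies, the phase-estimation registers of $t=O(\log(1/\delta)+|\set{V}^{(\ell+1)}_{\mathrm{B},\vk}|+\Np K)$ qubits, and $K(\Np+1)$ value qubits. Note that in the hierarchy the $|\set V|$ term of that theorem becomes the boundary size, since the internal variables live at the leaves. A leaf stores its internal variables plus $\mathrm{Poly}(\Np,\log(1/\delta))$ overhead.

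\textbf{Queries and EPR pairs.} We unroll the recursion. One execution of the level-$\ell$ routine calls each child routine $C_{\mathrm{eff}}$ times. In coherent mode this is the count of $\matr U_{\mathrm{ini}}$ applications in Theorem~\ref{thm: overall_algorithm_properties}, Item~\ref{prop: overall_query_complexity: final}, and substituting \eqref{eqn:pmin_hier_level_ell} into \eqref{eqn: def of fun t} gives \eqref{eqn:hier_Ceff_coherent_bigO}. In measured mode it is one call per boundary assignment. Multiplying these factors along the root-to-leaf path gives $N_{\mathrm{inv}}$, and each leaf execution costs $C_{\mathrm{leaf}}$ leaf-oracle queries, which yields the sum over leaves. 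The EPR count follows the same pattern: each coherent internal node is executed $N_{\mathrm{inv}}$ times, and each execution costs \eqref{eqn:total_epr_consumption} with the level-$\ell$ parameters. Measured levels contribute no teleportation.

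\textbf{Success and precision.} In the fully coherent cascade the whole computation is one unitary whose only measurement is at the root. The inner failures are already absorbed into $p^{(\ell)}_{\min,\vk}$ through the overlap condition, so only the root's $(1-\delta)^{2\Np}$ from Theorem~\ref{thm: overall_algorithm_properties} remains. In hybrid mode every coherent subroutine whose output is measured is an independent trial, and these are exactly the nodes in $\mathcal{T}_{\mathrm{out}}$, each executed $N_{\mathrm{inv}}$ times. Requiring all of them to succeed gives $(1-\delta)^{2\Np N_{\mathrm{exec}}}$; this holds by independence, or without independence via a union-type product bound. The $\delta$ choices then follow from Bernoulli's inequality. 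For precision, \eqref{eqn:prop_approx_gmax_revised} shows that each node adds its own truncation $2^{-\Np}$ to the errors of its children. Summing over the tree, this telescopes to $|\mathcal T|\,2^{-\Np}$, with nonnegativity preserved because every approximation is a lower one.

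\textbf{Main obstacle.} The hardest step is justifying that the approximately restored, imperfect child routines can serve as a valid $\matr U^{(\ell)}_{\mathrm{ini}}$ at the next level. This requires the branch holding the correct child values to have weight at least $(1-\delta)^{2\Np}$ for every boundary assignment, and the accuracy condition \eqref{eqn: bound of gstar and znsfn} to hold with the accumulated subtree error in place of $\Nsfn 2^{-\Np}$. I would handle this by strengthening the inductive hypothesis to exactly these two properties and re-running the proof of Lemma~\ref{lem: properties_approximation_of_oracle_Uini} with the child's error budget substituted in.
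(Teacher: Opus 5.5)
Your proposal is essentially the paper's proof: you instantiate the single-level results at each node, unroll the recursions $C=C_{\mathrm{eff}}\sum_k C(\text{child})$ and $E=N_{\mathrm{EPR}}+C_{\mathrm{eff}}\sum_k E(\text{child})$ into the $N_{\mathrm{inv}}$-weighted sums, prove coherent success bottom-up through the overlap $p^{(\ell)}_{\min,\vk}$, and charge one $2^{-\Np}$ per node for precision. Your ``main obstacle'' paragraph spells out a step the paper passes over quickly, and the one point to tighten is the hybrid bound: as in the paper, the product $(1-\delta)^{2\Np N_{\mathrm{exec}}}$ follows from the chain rule, since each read-out execution succeeds with probability at least $(1-\delta)^{2\Np}$ conditioned on the earlier history, whereas a union bound would only give the weaker $1-N_{\mathrm{exec}}\bigl(1-(1-\delta)^{2\Np}\bigr)$.
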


\begin{proof}
See Appendix~\ref{apx:hierarchical_performance}.
\end{proof}

Theorem~\ref{thm:hierarchical_performance} separates the hierarchical design problem into hardware feasibility, end-to-end query complexity, inter-node entanglement cost, and reliability.
\begin{itemize}
    \item \textbf{Qubits (Item~\ref{itm:hier_qubits_per_proc}):} This item is the hardware-admissibility condition for the decomposition. A valid hierarchy must keep every internal sub-coordinator within the coordinator budget and every leaf within the worker budget. In particular, a useful partition is one that reduces the sizes of the leaf subproblems without creating boundary sets so large that the internal coordinators become infeasible.
    \item \textbf{Query cost (Item~\ref{itm:hier_query_complexity}):} The bound is organized by leaves through the one-shot cost in~\eqref{eqn:hier_leaf_query_cost_def} and the recursive invocation count in~\eqref{eqn:hier_invocation_count_def}. This makes the source of the total cost transparent: each leaf contributes its local oracle cost, multiplied by the number of times it is triggered by its ancestors. The effective per-node amplification factor is specified in~\eqref{eqn:hier_Ceff_def}, with the coherent scaling summarized in~\eqref{eqn:hier_Ceff_coherent_bigO}.

    \item \textbf{Meaning of hybridization (Item~\ref{itm:hier_query_complexity}):} Measuring at level $\ell$ changes the recursion locally. In~\eqref{eqn:hier_Ceff_def}, the coherent child-invocation factor is replaced by explicit enumeration over the corresponding boundary assignments. The theorem therefore identifies the exact trade-off introduced by hybrid execution: one reduces coherent depth and removes EPR consumption at that cut, but increases the number of lower-level subroutine invocations.

    \item \textbf{EPR cost (Item~\ref{itm:hier_epr}):} The entanglement bound mirrors the query recursion, but counts only coherent communication. The per-node cost is inherited from the single-level distributed implementation in~\eqref{eqn:total_epr_consumption}, and the factor $N_{\mathrm{inv}}$ shows how often that coherent primitive is reused. As a result, a measured level contributes no EPR pairs directly, yet it may still increase total entanglement consumption below that level by replicating coherent calls deeper in the tree.
    
    \item \textbf{Success / precision (Item~\ref{itm:hier_success_precision}):} The theorem separates reliability from approximation quality. The success probability depends on how many coherent subroutines are executed and then consumed classically, whereas the approximation error accumulates additively across the decomposition tree. This makes explicit how the hierarchy trades depth and communication against a controlled, decomposition-dependent accuracy certificate.
\end{itemize}

\subsection{Gate-level Validation and Large-scale Resource Sweep}

\begin{figure}[t]
\centering
\includegraphics[width=0.6\linewidth]{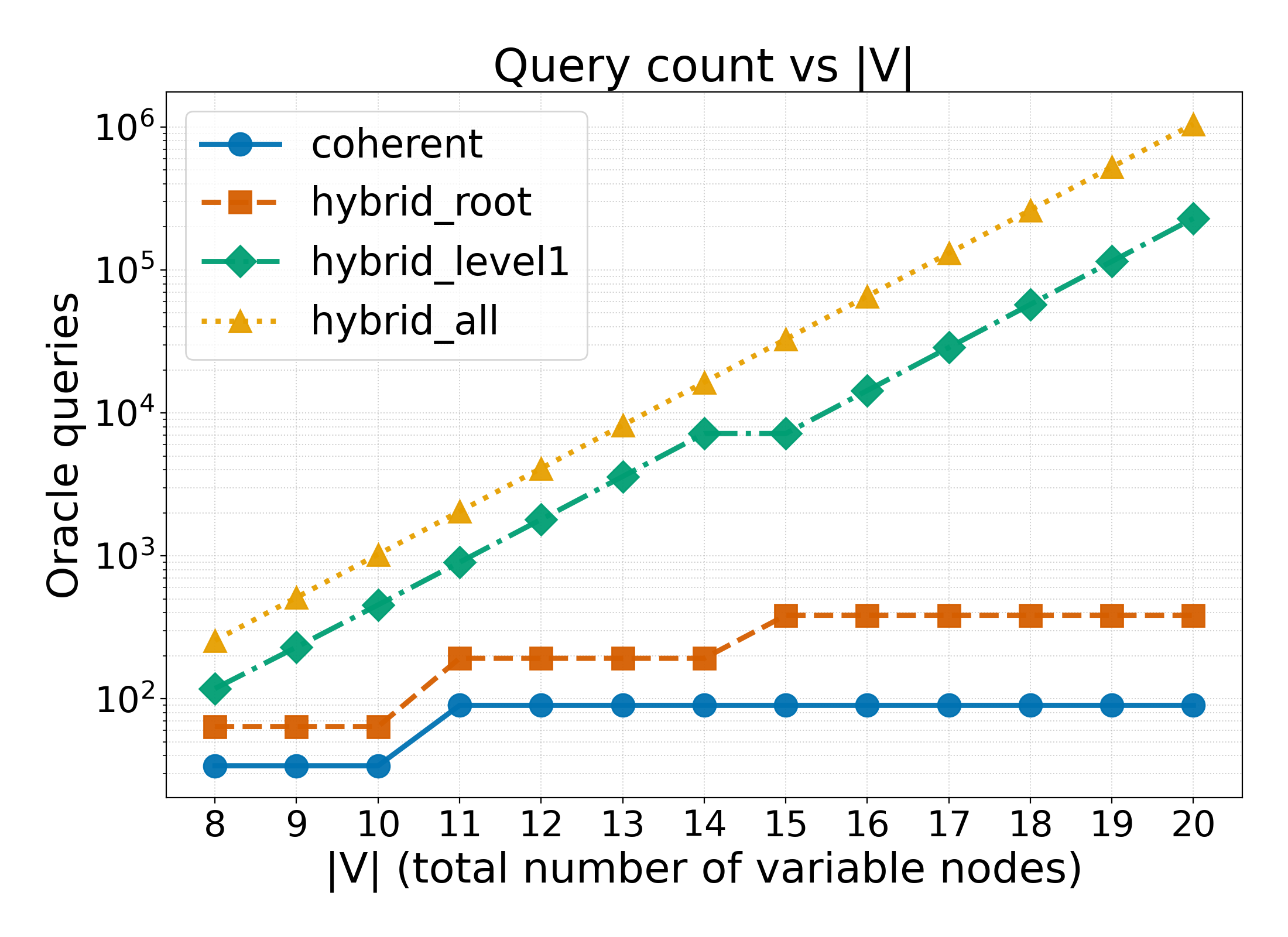}
\caption{Total oracle-query count versus $|\setV|$ (total number of variable nodes) on a two-level hierarchical benchmark, under four execution policies. The $y$-axis is logarithmic.}
\label{fig:hier_queries_vs_n_chain_main}
\centering
\includegraphics[width=0.6\linewidth]{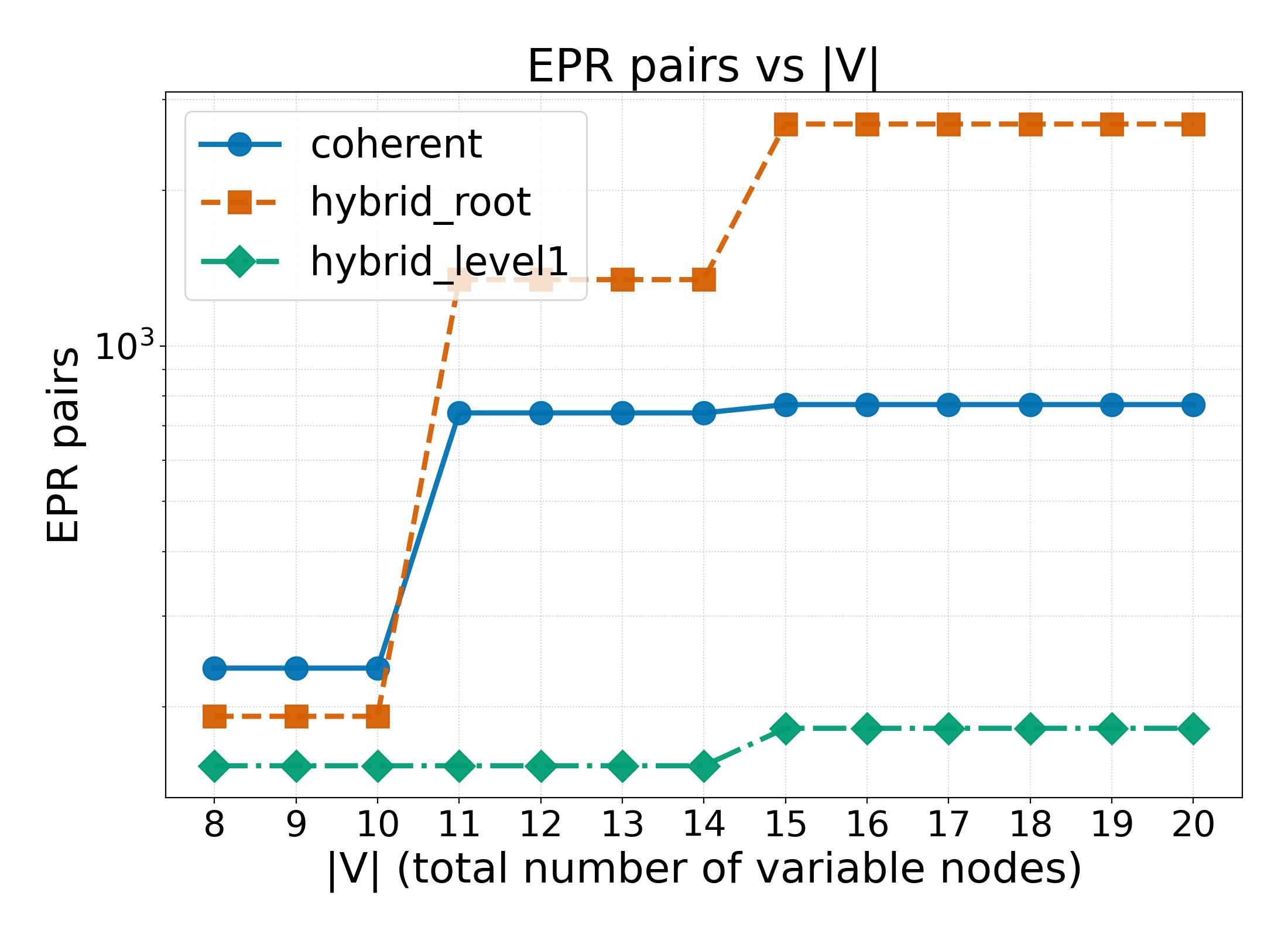}
\caption{Total EPR-pair count versus $|\setV|$ for the same benchmark. We plot only policies with nonzero coherent inter-node communication; in this accounting model, \texttt{hybrid\_all} is identically zero. The $y$-axis is logarithmic.}
\label{fig:hier_epr_vs_n_chain_main}
\end{figure}

Figures~\ref{fig:hier_queries_vs_n_chain_main}--\ref{fig:hier_epr_vs_n_chain_main} report exact state-vector simulations of the compiled gate-level circuits for a two-level hierarchy. We compare four execution policies: \texttt{coherent}, \texttt{hybrid\_root}, \texttt{hybrid\_level1}, and \texttt{hybrid\_all}. To obtain a dense but still statevector-feasible sweep, we evaluate thirteen benchmark instances along the monotone parameter path
\[
 (b_0,b_1,S_0,S_1,r)\in
\{(5,0,1,1,r)\}_{r=3}^{5}
\;\cup\;
\{(5,1,1,1,r)\}_{r=5}^{8}
\;\cup\;
\{(6,1,1,1,r)\}_{r=8}^{13},
\]
with fixed $(\Np,\delta,t_{\max})=(1,0.2,2)$.
Here $b_0$ and $b_1$ are boundary widths, $S_0$ and $S_1$ are branching factors at levels 0 and 1, and $r$ is the leaf-local bit width.
This construction yields the contiguous range $|\setV|=8,\ldots,20$ while keeping the state dimension tractable. All reported oracle-query and EPR totals are extracted directly from simulator counters, so the figures reflect the implemented circuits rather than a separate asymptotic model.

\textbf{Query trend.}
The ordering is
\[
\texttt{coherent} < \texttt{hybrid\_root} < \texttt{hybrid\_level1} < \texttt{hybrid\_all}.
\]
This ordering is consistent with Item~\ref{itm:hier_query_complexity}, and in particular with the invocation recursion in~\eqref{eqn:hier_invocation_count_def} and the policy-dependent effective factor in~\eqref{eqn:hier_Ceff_def}. The coherent policy is smallest because every internal level retains coherent search. By contrast, in this two-level hierarchy, \texttt{hybrid\_root}, \texttt{hybrid\_level1}, and \texttt{hybrid\_all} replace the coherent factor at the root level only, at level~1 only, and at both internal levels, respectively, by explicit enumeration of the corresponding boundary assignments. At the largest simulated point ($|\setV|=20$), the query multipliers relative to \texttt{coherent} are $4.27$, $2.55\times 10^{3}$, and $1.17\times 10^{4}$ for \texttt{hybrid\_root}, \texttt{hybrid\_level1}, and \texttt{hybrid\_all}, respectively.

\textbf{Entanglement trend.}
The EPR accounting follows Item~\ref{itm:hier_epr}, which counts only coherent inter-node communication. Accordingly, \texttt{hybrid\_all} is identically zero in this accounting model because both internal interfaces are measured. At $|\setV|=20$, the EPR multipliers relative to \texttt{coherent} are $3.49$ and $0.236$ for \texttt{hybrid\_root} and \texttt{hybrid\_level1}, respectively. This contrast is also explained by Item~\ref{itm:hier_epr}: measuring at the root removes the root-level coherent primitive but can replicate many coherent calls below it, whereas measuring at level~1 removes the deeper coherent interface itself and therefore reduces the total EPR count.

The visible jumps in the curves are structural rather than numerical artifacts. They occur when the sweep increases $b_1$ or $b_0$, thereby changing the boundary domains that enter~\eqref{eqn:hier_Ceff_def} and the coherent communication cost inherited from the single-level primitive. Between these jumps, increasing $r$ enlarges the leaf-local domain. As a result, policies with measured levels grow rapidly on the logarithmic scale because of explicit branching, whereas coherent policies change only when the relevant integer circuit parameters or boundary widths change. These simulation results therefore provide a direct circuit-level validation of the mechanisms identified in Theorem~\ref{thm:hierarchical_performance}. 

Appendix~\ref{sec:hier_chain_multilevel} provides the full simulation setup, additional observations, and detailed interpretation.

\begin{table}[t]
  \centering
  \caption{Comparison of the single-level and hierarchical variants of our framework.
  ``Coherent'' keeps global coherence across all levels; ``Hybrid'' inserts measurements at selected levels to cap
  circuit depth. }
  \label{table:comparison_of_dqc_paradigms}

  \scriptsize
  \begin{tabular}{@{}p{3.3cm}p{3.2cm}p{3.7cm}p{3.5cm}@{}}
    \toprule
    \textbf{Aspect} &
    \textbf{Single-level ($A_{\mathrm{dist}}$)} &
    \textbf{Hierarchical (coherent)} &
    \textbf{Hierarchical (hybrid)} \\
    \midrule

    \textbf{Organization} &
    One coordinator with one worker layer &
    \multicolumn{2}{p{7cm}}{Recursive coordinator tree matching the decomposition hierarchy} \\
    \midrule
    \addlinespace[2pt]

    \textbf{Scales when} &
    Each subproblem fits on one QPU &
    \multicolumn{2}{p{7cm}}{Oversized subproblems can be recursively decomposed until all leaves fit on a QPU} \\
    \midrule
    \addlinespace[2pt]

    \textbf{Global coherence} &
    Preserved end-to-end &
    Preserved across all levels &
    Broken at selected levels (by measurement) \\
    \midrule
    \addlinespace[2pt]

    \textbf{Grover-like scaling} &
    Yes &
    Yes (up to level-dependent multiplicative factors) &
    Only within each coherent block \\
    \midrule
    \addlinespace[2pt]

    \textbf{Dominant overhead} &
     Boundary search over $\setVb$ &
     Boundary-search factors compound across levels &
    Classical branching at measured levels \\
    \midrule
    \addlinespace[2pt]

    \textbf{Circuit depth} &
    High &
    Highest  (coherent cascade) &
     Lower (depth capped by measurements) \\
     \midrule
    \addlinespace[2pt]

    \textbf{Best-fit regime} &
    Mid-term fault-tolerant networks &
    Far-term fault-tolerant networks &
     NISQ / early fault-tolerant networks \\
    \bottomrule
  \end{tabular}








\end{table}

\subsection{Discussion: From a Bounded Method to a Universal Framework}
Definition~\ref{def:hierarchical_algorithm} lifts the single-level routine $A_{\mathrm{dist}}$ (Section~\ref{sec: distributed_quantum_algorithm}) to a recursion over a decomposition tree, yielding an end-to-end procedure for instances that exceed the capacity of any single processor.
The key shift is that $A_{\mathrm{dist}}$ requires every subproblem to fit on one QPU, whereas the hierarchical framework requires only that any oversized subproblem admit further decomposition until all leaves satisfy the hardware limits.
Table~\ref{table:comparison_of_dqc_paradigms} summarizes the qualitative differences, and Theorem~\ref{thm:hierarchical_performance} makes the resulting resource trade-offs explicit.

\textbf{Scalability via recursion.}
At each internal node, a sub-coordinator invokes $A_{\mathrm{dist}}$ on its children, treating the children's routines as effective oracles (Definition~\ref{def:hierarchical_algorithm}).
This can be viewed as a structure-aware compilation layer: given a factor graph and per-QPU budgets, it produces a distributed execution plan whose cost is governed by the boundary interfaces introduced by the cuts.

\textbf{A policy knob for hardware regimes.}
The designer may choose the set of measured levels $\set{L}_{\mathrm{M}}$ (Theorem~\ref{thm:hierarchical_performance}) to trade query / entanglement overhead for circuit depth.
The coherent cascade ($\set{L}_{\mathrm{M}}=\emptyset$) preserves global coherence across all levels and retains Grover-like scaling.
Hybrid execution ($\set{L}_{\mathrm{M}}\neq\emptyset$) inserts measurements to cap coherent depth and improve noise tolerance, but replaces the coherent boundary search at measured levels with classical branching.

\textbf{Design guidance.}
Theorem~\ref{thm:hierarchical_performance} shows that the dominant overhead is controlled by the boundary search spaces $\bigl|\set{X}_{\setVb^{(\ell+1)}}\bigr|$ across levels.
Good decompositions therefore keep boundary sets small throughout the hierarchy; otherwise recursion amplifies costs multiplicatively.
In hybrid mode, measurements should be introduced only where depth / noise constraints require them, because each measured level incurs the stronger $O\bigl(|\set{X}_{\setVb^{(\ell+1)}}|\bigr)$ enumeration penalty instead of the coherent $O\Bigl(\sqrt{|\set{X}_{\setVb^{(\ell+1)}}|}\Bigr)$ factor.


\section{Conclusion and Future Directions}
\label{sec:conclusion}

This paper has tackled a central obstacle in quantum optimization on networks: how to distribute a large search problem
across QPU nodes with limited qubits \emph{without} giving up the quadratic advantage that makes quantum search
attractive. We have proposed a \emph{structure-aware} framework that models the objective function via a factor graph and
decomposes it along a separator $\setVb$, so that communication and storage follow the problem topology.

At the single-decomposition level, we have shown that the fully coherent execution mode preserves Grover-like query
scaling, namely $O(\sqrt{N})$ oracle complexity up to processor- and separator-dependent factors, in contrast to
partitioning schemes that dilute the quadratic advantage~\cite{Avron2021}. We have also made the accompanying
communication cost explicit by deriving the required EPR consumption for the distributed implementation. At the systems
level, our analysis shows that the quality of the separator controls the central trade-off of the method: smaller and
sparser boundaries reduce inter-node communication and coordinator load while keeping the decomposition useful for
resource-constrained processors.

We have then extended the framework to a hierarchical setting for instances that exceed the capacity of any single QPU.
The hierarchical analysis clarifies how the overall cost is built from three components: per-node hardware feasibility,
recursive invocation multiplicity, and coherent inter-node communication. In particular, the hybrid mode does not aim to
preserve the full coherent speedup at measured levels; rather, it provides a controlled alternative that reduces coherent
circuit depth and entanglement requirements while preserving correctness and making the query-depth trade-off explicit.
The gate-level simulations support this interpretation: they corroborate the predicted scaling trends, identify the
dominant resource drivers, and confirm that measurement placement governs a concrete trade-off between oracle complexity
and entanglement consumption. Taken together, these results establish that structure-aware factor-graph decomposition is
a viable route to large-instance optimization on quantum networks: it reduces per-QPU qubit requirements, makes the
communication overhead analyzable, and retains Grover-like scaling in the fully coherent regime.

Future work will focus on three directions. We plan to develop \emph{hardware-aware} compilation that chooses $\setVb$
jointly with QPU placement, network topology, and link metrics (\eg, fidelity, rate, and latency) to reduce peak qubits,
EPR consumption, and end-to-end runtime. We plan to study \emph{noise-aware adaptivity} by modeling local-gate noise and
interconnect noise separately and using real-time calibration feedback to switch between coherent and hybrid execution
modes (and to choose the measurement levels). We also plan to release reference instances, oracle implementations, and
resource-accounting code to support reproducible evaluation on multi-node testbeds.

\ifsplitmainonly
\bibliographystyle{quantum}
\bibliography{biblio_used}
\fi

\fi
\ifbuildappendixpart
\appendix

\clearpage

\section{Basic Notations and Definitions}
\label{apx:basic_notation}
The sets $\mathbb{R}$, $\mathbb{R}_{\geq 0}$, and $\mathbb{R}_{>0}$ are
defined to be the field of real numbers, the set of nonnegative real numbers,
and the set of positive real numbers.
Unless we state otherwise, all variable
alphabets are assumed to be finite.
For any integer $ N \in \mathbb{Z}_{>0} $, the function $ [N] $ is defined to be the set $ [ N ] \defeq \{ 1,\ldots,N \} $ with cardinality $ N $.
For any length-$ N $ vector $ \mathbf{v} $, we define 
$ \mathbf{v} \defeq \bigl( v(1),\ldots,v(N) \bigr) $.
For any $ n_{1},n_{2} \in \sZ $ such that 
$ 1 \leq n_{1} \leq n_{2} \leq N $, we define
\begin{align*}
  \mathbf{v}(n_{1}:n_{2}) \defeq \bigl( v(n_{1}),\ldots,v(n_{2}) \bigr).
\end{align*}

Consider an arbitrary integer $ N \in \sZpp $.
For any $ (x_{1},\ldots,x_{N}) \in \{0,1\}^{N} $,
we define
\begin{align*}
  \ket{x_{1},\ldots,x_{N}}
  &\defeq \bigotimes_{n' = 1}^{N} \ket{x_{n'}}, 
\end{align*}
where
\begin{align*}
  \ket{0}
  &\defeq (1,\, 0)^{\Herm}, \qquad
  \ket{1}
  \defeq (0,\, 1)^{\Herm}.
\end{align*}
If there is no ambiguity, all the state vectors in this paper will be represented in the computational basis.
We use $ \imagunit $ for denoting the imaginary unit.
The Pauli matrices are defined as follows
\begin{align*}
  \matr{I} &\defeq \begin{pmatrix}
    1 & 0 \\ 0 & 1
  \end{pmatrix}, \qquad
  \matr{X} \defeq \begin{pmatrix}
    0 & 1 \\ 1 & 0
  \end{pmatrix}, \qquad
  \matr{Y} \defeq \begin{pmatrix}
    0 & - \imagunit \\ \imagunit & 0
  \end{pmatrix},\qquad
  \matr{Z} \defeq \begin{pmatrix}
    1 & 0 \\ 0 & -1
  \end{pmatrix}.
\end{align*}
The Hadamard gate is defined to be
\begin{align*}
  \matr{H} \defeq
  \frac{\sqrt{2}}{2}
  \begin{pmatrix}
    1 & 1 \\ 1 & -1
  \end{pmatrix}.
\end{align*}
For any complex-valued number $ c \in \sC $, we use $ \overline{c} $ denoting the complex conjugate of $ c $.
The controlled NOT (CNOT) gate is the unitary gate acting on two qubits:
\begin{align*}
    \mathrm{CNOT}
    \defeq 
    \begin{pmatrix}
        1 & 0 & 0 & 0 \\    
        0 & 1 & 0 & 0 \\
        0 & 0 & 0 & 1 \\
        0 & 0 & 1 & 0 
    \end{pmatrix},
\end{align*}
where the first qubit is the control 
qubit and the second qubit is the target qubit. 
If not specified, the measurement of a qubit is the measurement on the computational basis, \ie, the measurement
$ \{ \ket{0} \bra{0}, \ket{1} \bra{1} \} $.

\begin{definition}[\textbf{Marking Unitary Operator and Projection Matrix}]
 \label{def:marking_operators}
 Consider an $N$-qubit Hilbert space $\mathcal{H} \cong (\sC^2)^{\otimes N}$, where $N \in \sZpp{}$. A computational basis state is denoted by $\ket{\mathbf{z}} = \ket{z(1)\cdots z(N)}$ for a binary string $\mathbf{z} \in \{0,1\}^{N}$. For any subset of basis states $ \set{S} \subseteq \{0,1\}^{N} $, we define:
 \begin{enumerate}
  \item The \emph{marking oracle} for $ \set{S} $ is the unitary matrix $ \matr{U}_{\set{S}} \in \sC^{ 2^{N} \times 2^{N}} $, defined by its action on any computational basis state $\ket{\mathbf{z}}$:
  \begin{align*}
   \matr{U}_{ \set{S} } \ket{\mathbf{z}}
   \defeq \begin{cases}
   - \ket{\mathbf{z}} & \text{if } \mathbf{z} \in \set{S} \\
   \hphantom{-} \ket{\mathbf{z}} & \text{if } \mathbf{z} \notin \set{S}
   \end{cases}\quad .
  \end{align*}
  \item The \emph{projection matrix} onto the subspace spanned by states in $ \set{S} $ is $ \matr{P}_{ \set{S} } \in \sC^{ 2^{N} \times 2^{N}} $, which defined to be
  \begin{align*}
   \matr{P}_{ \set{S} }
   \defeq \sum_{\mathbf{z} \in \set{S}}
   \ket{\mathbf{z}}\bra{\mathbf{z}}.
  \end{align*}
 \end{enumerate}
\end{definition}

\begin{definition}
 \label{def: def of zdec}
 For an integer $ N \in \sZpp{} $ and a binary sequence $ \mathbf{z} = \bigl( z(1), z(2), \ldots, z(N) \bigr) \in \{0,1\}^{N} $, the function $\zdec(\mathbf{z})$ converts $\mathbf{z}$ to its fractional decimal representation as:
 \begin{align*}
  \zdec(\mathbf{z}) \defeq \sum_{n=1}^{N} 2^{-n} \cdot z(n).
 \end{align*}
 The value $\zdec(\mathbf{z})$ lies in the interval $[0,1-2^{-N}]$.
\end{definition}

\section{Factor Graph of an Illustrative Portfolio Investment Problem}
\label{ex:local_portfolio_investment_problem_factor_graph}
In order to provide a concrete illustration, we consider a portfolio optimization problem faced by an investment firm. The firm needs to allocate capital across nine different assets, which could represent individual stocks, bonds, or other financial instruments. The goal is to select a subset of these assets to maximize a utility function that balances expected returns against investment risk, subject to several regional budget constraints.

The problem is defined by the objective to maximize the mean-variance (Markowitz) utility function~\cite{Markowitz1952,Markowitz1959,Kolm2014}:
\begin{align*}
  U(\vx) &= \underbrace{\sum_{ \na =1 }^{9} x_{\na} \cdot\mu_{\na}}_{\text{expected return}} 
  - \lambda \cdot \underbrace{\left( \sum_{\na=1}^9 x_{\na}^{2} \cdot \sigma_{\na}^{2} + \!\!\!\sum_{(\na,\na') \in \mathcal{C}}\!\! 2 x_{\na} \cdot  x_{\na'} \cdot \mathrm{cov}_{\na,\na'} \right)}_{\text{portfolio variance (risk)}},
\end{align*}
where the decision variables are $\vx = (x_{1},\ldots,x_{9})$. For simplicity in this example, we assume $x_{\na} \in \{0,1\}$, representing a decision to either include ($x_{\na}=1$) or exclude ($x_{\na}=0$) a standard block of asset $\na$ in the portfolio. This 0--1 formulation is commonly used to model discrete investment decisions such as cardinality, minimum-lot, or indivisibility constraints~\cite{Mansini1999,Mansini2015,Kolm2014}.

Each term in the expression of $ U(\vx) $ has the following correspondence.
\begin{itemize}
    \item \textbf{Expected Return:} The term $\sum x_{\na} \cdot \mu_{\na}$ is the total expected financial return of the portfolio, calculated as the sum of expected returns $(\mu_{\na})_{\na \in [9]}$ from each included asset.
    \item \textbf{Portfolio Variance (Risk):} The second term represents the total risk. It includes the individual risk of each asset (its variance $\sigma_{\na}^{2}$) and the correlated risk between pairs of assets (their covariance $\mathrm{cov}_{\na,\na'}$). The set 
    \begin{align*}
      \mathcal{C} \defeq 
      \{ (1,2),(2,3),(3,4),(3,5),(5,7),(6,7),(7,8),(8,9) \}
    \end{align*}
    contains pairs of assets whose prices tend to move together.
    \item \textbf{Risk Aversion ($\lambda$):} The coefficient $\lambda \in \sRp$ models the investor's tolerance for risk. A higher $\lambda$ signifies a more conservative investor who heavily penalizes portfolio variance.
\end{itemize}

This maximization is subject to overlapping budget constraints, which might represent, for instance, regulatory limits or strategic allocation targets for different market sectors~\cite{Kolm2014,Mansini2015}:
\begin{align*}
    \text{(US Equities)} \quad & W_{\mathrm{l},1} \leq x_{1} + x_{2} + x_{3} \leq W_{\mathrm{u},1}, \\
    \text{(European Equities)} \quad & W_{\mathrm{l},2} \leq x_{3} + x_{4} + x_{5} \leq W_{\mathrm{u},2}, \\
    \text{(Tech Sector)} \quad & W_{\mathrm{l},3} \leq x_{5} + x_{6} + x_{7} \leq W_{\mathrm{u},3}, \\
    \text{(Emerging Markets)} \quad & W_{\mathrm{l},4} \leq x_{7} + x_{8} + x_{9} \leq W_{\mathrm{u},4}.
\end{align*}
Each inequality constrains the total capital allocated to a specific group of assets to be within a lower bound ($W_{\mathrm{l},k}$) and an upper bound ($W_{\mathrm{u},k}$). Notice that some assets, like asset 3 (in both US and European equities) and assets 5 and 7, belong to multiple groups, creating complex dependencies that make the optimization challenging.

\subsubsection*{Mapping to the factor graph formalism}

\begin{figure*}[t]
\centering
\scalebox{0.75}{
\begin{tikzpicture}[
vnode/.style={circle, draw, minimum size=0.7cm, inner sep=1pt},
fnode/.style={rectangle, draw, minimum size=0.7cm, inner sep=2pt, fill=white},
node distance=1.2cm and 1.5cm
]
\input{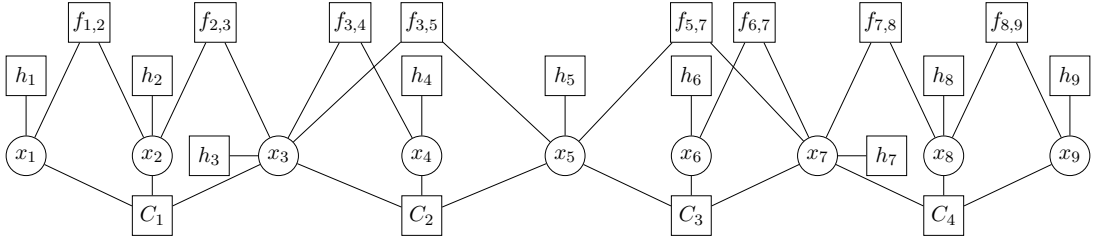}
\end{tikzpicture}
}
\caption{The factor graph $\graphG_{1}$ for the example portfolio investment problem in Example~\ref{ex:local_portfolio_investment_problem_factor_graph}, illustrating the relationships between asset decisions (circles) and financial objectives and constraints (squares).}
\label{fig: fg for a portfolio investment problem}
\end{figure*}

\begin{figure*}[t]
\centering
\scalebox{0.75}{
\begin{tikzpicture}[vnode/.style={circle, draw, minimum size=0.7cm, inner sep=1pt, fill=white},
 vnoder/.style={circle, draw, minimum size=0.7cm, inner sep=1pt, fill=red!40}, 
 fnode/.style={rectangle, draw, minimum size=0.7cm, inner sep=2pt, fill=white},
 node distance=1.2cm and 1.5cm,
 state_dash1/.style={shape=rectangle, draw, dashed, minimum width=4.4cm, minimum height=4.8cm, outer sep=-0.3pt, fill=black!10},
 state_dash2/.style={shape=rectangle, draw, dashed, minimum width=8.1cm, minimum height=4.8cm, outer sep=-0.3pt, fill=black!10},
 state_dash3/.style={shape=rectangle, draw, dashed, minimum width=1cm, minimum height=1.2cm, outer sep=-0.3pt, fill=black!10}]
 \begin{pgfonlayer}{above}
 \node[vnode] (x1) {$x_1$};
 \node[vnode] (x2) [right=of x1] {$x_2$};
 \node[vnode] (x3) [right=of x2] {$x_3$};
 \node[vnode] (x4) [right= 1.8 cm of x3] {$x_4$};
 \node[vnode] (x5) [right= 1.8 cm of x4] {$x_5$};
 \node[vnode] (x6) [right=of x5] {$x_6$};
 \node[vnode] (x7) [right=of x6] {$x_7$};
 \node[vnode] (x8) [right=of x7] {$x_8$};
 \node[vnode] (x9) [right=of x8] {$x_9$};
 \node[fnode] (h3) [left=0.5cm of x3] {$h_3$}; 
 \node[fnode] (h7) [right=0.5cm of x7] {$h_7$};

 \node[fnode] (h1) [above=0.7cm of x1] {$h_1$};
 \node[fnode] (h2) [above=0.7cm of x2] {$h_2$};
 \node[fnode] (h4) [above=0.7cm of x4] {$h_4$};
 \node[fnode] (h5) [above=0.7cm of x5] {$h_5$};
 \node[fnode] (h6) [above=0.7cm of x6] {$h_6$};
 \node[fnode] (h8) [above=0.7cm of x8] {$h_8$};
 \node[fnode] (h9) [above=0.7cm of x9] {$h_9$};

 \node[fnode] (f12) [above=2cm of $(x1)!0.5!(x2)$] {$f_{1,2}$};
 \node[fnode] (f23) [above=2cm of $(x2)!0.5!(x3)$] {$f_{2,3}$};
 \node[fnode] (f34) [above=2cm of $(x3)!0.5!(x4)$] {$f_{3,4}$};
 \node[fnode] (f35) [above=2cm of $(x3)!0.5!(x5)$] {$f_{3,5}$};
 \node[fnode] (f57) [above=2cm of $(x5)!0.5!(x7)$] {$f_{5,7}$};
 \node[fnode] (f67) [above=2cm of $(x6)!0.5!(x7)$] {$f_{6,7}$};
 \node[fnode] (f78) [above=2cm of $(x7)!0.5!(x8)$] {$f_{7,8}$};
 \node[fnode] (f89) [above=2cm of $(x8)!0.5!(x9)$] {$f_{8,9}$};

 \node[fnode] (I1) [below=0.7cm of $(x1)!0.5!(x3)$] {$C_{1}$};
 \node[fnode] (I2) [below=0.7cm of $(x3)!0.5!(x5)$] {$C_{2}$};
 \node[fnode] (I3) [below=0.7cm of $(x5)!0.5!(x7)$] {$C_{3}$};
 \node[fnode] (I4) [below=0.7cm of $(x7)!0.5!(x9)$] {$C_{4}$};
\end{pgfonlayer}

\begin{pgfonlayer}{main}
 \draw (x1) -- (h1); \draw (x2) -- (h2); \draw (x3) -- (h3); \draw (x4) -- (h4); \draw (x5) -- (h5);
 \draw (x6) -- (h6); \draw (x7) -- (h7); \draw (x8) -- (h8); \draw (x9) -- (h9);

 \draw (x1) -- (f12) -- (x2);
 \draw (x2) -- (f23) -- (x3);
 \draw (x3) -- (f34) -- (x4);
 \draw (x3) -- (f35) -- (x5);
 \draw (x5) -- (f57) -- (x7);
 \draw (x6) -- (f67) -- (x7);
 \draw (x7) -- (f78) -- (x8);
 \draw (x8) -- (f89) -- (x9);

 \draw (x1) -- (I1); \draw (x2) -- (I1); \draw (x3) -- (I1);
 \draw (x3) -- (I2); \draw (x4) -- (I2); \draw (x5) -- (I2);
 \draw (x5) -- (I3); \draw (x6) -- (I3); \draw (x7) -- (I3);
 \draw (x7) -- (I4); \draw (x8) -- (I4); \draw (x9) -- (I4);
\end{pgfonlayer}
 \begin{pgfonlayer}{above} 
  \node[vnoder] (x3_highlight) at (x3) {$x_3$}; 
  \node[vnoder] (x7_highlight) at (x7) {$x_7$}; 
 \end{pgfonlayer}
 \begin{pgfonlayer}{background}
 \node[state_dash1] (d1) [above=-1.7 cm of $(x1)!0.36!(x3)$, label=above: $\graphG_{\mathrm{s},1}$] {};
 \node[state_dash2] (d3) [above=-1.7 cm of $(x4)!0.48!(x6)$, label=above: $\graphG_{\mathrm{s},2}$] {};
 \node[state_dash1] (d5) [above=-1.7 cm of $(x9)!0.36!(x7)$, label=above: $\graphG_{\mathrm{s},3}$] {};
 \end{pgfonlayer}
\end{tikzpicture}
}
\caption{Decomposition of $\graphG_{1}$ (from Fig.~\ref{fig: fg for a portfolio investment problem}) using boundary variables $\setVb=\{x_3,x_7\}$ (red). Removing $\setVb$ splits $\graphG_{1}$ into $\Nsfn=3$ subgraphs $\graphG_{\mathrm{s},1}$, $\graphG_{\mathrm{s},2}$, and $\graphG_{\mathrm{s},3}$ (regions shown). The local boundary sets are $\setVb^{(1)}=\{x_3\}$, $\setVb^{(2)}=\{x_3,x_7\}$, and $\setVb^{(3)}=\{x_7\}$.}
\label{fig:decomposition_example_factor_graph_DQC_boundary_nodes_highlighted}
\end{figure*}

This optimization problem can be precisely mapped to the components of a factor graph $\graphG_{1}(\set{F}, \set{V}, \set{E}, \set{X})$ as defined in Definition~\ref{def: def of snfg}, as shown in Fig.~\ref{fig: fg for a portfolio investment problem}.

\begin{itemize}
    \item \textbf{Variable Nodes ($\set{V}$):} The set of variable nodes is $\set{V} = \{1, 2, \ldots, 9\}$. Each node $\na \in \set{V}$ corresponds to the decision variable $x_{\na}$ for the $\na$-th asset.

    \item \textbf{Function Nodes ($\set{F}$):} The function nodes represent the individual terms in the utility function and the constraints. The set $\set{F}$ is the union of three distinct subsets:
    \begin{enumerate}
        \item \textit{Individual Utility Nodes:} $\{h_{\na}\}_{\na \in \set{V}}$. Each node $h_{\na}$ corresponds to the local function $h_{\na}(x_{\na}) = x_{\na} \cdot \mu_{\na} - \lambda \cdot x_{\na}^{2} \cdot \sigma_{\na}^{2}$, which captures the individual return and risk contribution of asset $\na$.
        \item \textit{Covariance Risk Nodes:} $\{f_{\na,\na'}\}_{(\na,\na') \in \mathcal{C}}$. Each node $f_{\na,\na'}$ corresponds to the local function $f_{\na,\na'}(x_{\na}, x_{\na'}) = -2\lambda \cdot x_{\na} \cdot x_{\na'} \cdot \mathrm{cov}_{\na,\na'}$, representing the penalty from correlated risk between assets $\na$ and $\na'$.
        \item \textit{Constraint Nodes:} $\{C_k\}_{k \in [4]}$. Each node $C_k$ enforces a budget constraint. Its local function is a hard constraint: $C_k(\mathbf{x}_{\partial C_k}) = 0$ if the $k$-th budget constraint is met, and $C_k(\mathbf{x}_{\partial C_k}) = -c$ for a sufficiently large coefficient $c \in \sRpp$ otherwise. This penalizes any infeasible solution, effectively removing it from consideration.
    \end{enumerate}

    \item \textbf{Edges ($\set{E}$):} The edge set $\set{E}$ connects each variable node to the function nodes that depend on it. For example, variable node $x_3$ is connected to the function nodes $h_3$, $f_{2,3}$, $f_{3,4}$, $f_{3,5}$, $C_1$, and $C_2$.

    \item \textbf{Global Function ($g(\vx)$):} The global function to be maximized is the sum of all local functions:
    \begin{align*}
    g(\vx) &= \sum_{\na=1}^9 h_{\na}(x_{\na}) + \sum_{(\na,\na') \in \mathcal{C}} f_{\na,\na'}(x_{\na}, x_{\na'}) + \sum_{k=1}^4 C_k(\mathbf{x}_{\partial C_k}).
    \end{align*}
    By construction, maximizing $g(\vx)$ is equivalent to solving the original constrained portfolio optimization problem.
\end{itemize}
Solving such portfolio investment problems is in the complexity class NP-complete~\cite{Mansini1999}, which motivates the search for efficient quantum algorithms.


\subsection{Decomposition}

Based on Section~\ref{sec:decomposition_of_factor_graph}, a decomposition of $\graphG_{1}$ (from Fig.~\ref{fig: fg for a portfolio investment problem}) is shown in Fig.~\ref{fig:decomposition_example_factor_graph_DQC_boundary_nodes_highlighted}.

\section{Transformation to Binary and Normalized Factor Graphs}
\label{apx: binarization_normalization}

This section details a transformation from an arbitrary factor graph $\graphG$ (as per Definition~\ref{def: def of snfg}, assuming an additive global function) with general discrete alphabets and real-valued local functions, into an equivalent factor graph $\hat{\graphG}$. The transformed graph $\hat{\graphG}$ features binary variables and nonnegative-valued local functions $\hat{f}$ such that the corresponding global additive function $\hat{g}(\hat{\vx}) = \sum_{f \in \set{F}} \hat{f}(\hat{\vx}_{\partial \hat{f}})$ is normalized to the interval $[0,1]$. This transformation allows our quantum algorithms, which are designed for binary inputs and normalized functions, to address a broader class of optimization problems without loss of generality.

\begin{definition}
\label{def: appendix_binarized_normalized_factor_graph}
Consider an arbitrary factor graph $\graphG(\set{F}, \set{V}, \set{E}, \set{X})$ with local functions $f: \setx_{\setpf} \to \sR$ and an additive global function $g(\mathbf{x}) = \sum_{f \in \set{F}} f(\mathbf{x}_{\setpf})$. The transformation to an equivalent factor graph $\hat{\graphG}(\set{F}, \hat{\set{V}}, \hat{\set{E}}, \hat{\set{X}})$ proceeds in the following steps:

\begin{enumerate}
    \item \textbf{Binarization of Variables:} For each variable node $v \in \set{V}$ with a discrete alphabet $\setx_v$ of cardinality $N_v^{\mathrm{card}}$, we introduce $N_v^{\mathrm{bits}} \defeq \lceil \log_2 N_v^{\mathrm{card}} \rceil$ binary variables $\hat{x}_{v,1}, \ldots, \hat{x}_{v,N_v^{\mathrm{bits}}}$. Each value $x_v \in \setx_v$ is uniquely encoded by a binary vector 
    \begin{align*}
      \bigl( \hat{x}_{v,1}, \ldots, \hat{x}_{v,N_v^{\mathrm{bits}}} \bigr) \in \{0,1\}^{N_v^{\mathrm{bits}}}.
    \end{align*}
    If $N_v^{\mathrm{card}} < 2^{N_v^{\mathrm{bits}}}$, some binary vectors might be unused; the behavior of functions for these unused encodings must be consistently defined (\eg, by mapping them to a value that ensures they are not part of any optimal solution, or by restricting the domain of sums / optimizations to valid encodings). The set of new binary variable nodes is $\hat{\set{V}} \defeq \left\{ \hat{x}_{v,b}  \ \middle| \ v \in \set{V}, b \in [N_v^{\mathrm{bits}}] \right\}$, and their collective alphabet is $\hat{\set{X}} \defeq \{0,1\}^{|\hat{\set{V}}|}$.

    \item \textbf{Transformation of Local Functions to Binary Domain:} For each local function $f: \setx_{\setpf} \to \sR$ in $\graphG$, we define a corresponding function $f': \hat{\set{X}}_{\setpf'} \to \sR$ that operates on the binary encodings of the original variables. Specifically, $f'(\hat{\mathbf{x}}_{\setpf'}) \defeq f(\mathbf{x}_{\setpf})$, where $\hat{\mathbf{x}}_{\setpf'}$ is the collection of binary variables in $\hat{\set{V}}$ that encode the original variables $\mathbf{x}_{\setpf}$ connected to function node $f$.

    \item \textbf{Normalization of the local Function:} For each $ f \in \set{F} $, we compute the minimum and maximum values of $f'(\hat{\mathbf{x}}_{\setpf'})$ over its domain (considering only valid encodings): $f'_{\min} \defeq \min_{ \hat{\mathbf{x}}_{\setpf'} \in \hat{\set{X}}_{\mathrm{valid}, f'} } f'(\hat{\mathbf{x}}_{\setpf'})$ and $f'_{\max} \defeq \max_{ \hat{\mathbf{x}}_{\setpf'} \in \hat{\set{X}}_{\mathrm{valid}, f'} } f'(\hat{\mathbf{x}}_{\setpf'})$, where $ \hat{\set{X}}_{\mathrm{valid}, f'} $ denotes the subset of 
    $\hat{\set{X}}_{\setpf'} =\{0,1\}^{|\setpf'|}$ 
    corresponding to valid encodings of $\mathbf{x}_{\setpf} \in \set{X}_{\setpf}$. 
    Let $S_{\mathrm{range}} \defeq \sum_{f} (f'_{\max} - f'_{\min})$. The normalized local function 
    $\hat{f}: \hat{\set{X}} \to [0,1]$ is then defined to be
    \begin{align*}
      \hat{f}(\hat{\mathbf{x}}) \defeq 
      \begin{cases}
      \frac{f'(\hat{\mathbf{x}}) - f'_{\min}}{S_{\mathrm{range}}} & \text{if } 
      S_{\mathrm{range}} \neq 0 \text{ and } \hat{\mathbf{x}}_{\setpf'} \in \hat{\set{X}}_{\mathrm{valid}} \\
      0 & \text{if } S_{\mathrm{range}} = 0 \text{ and } \hat{\mathbf{x}}_{\setpf'} 
      \in \hat{\set{X}}_{\mathrm{valid}} \\
      \text{undefined} & \text{if } \hat{\mathbf{x}}_{\setpf'}
       \notin \hat{\set{X}}_{\mathrm{valid}}
      \end{cases} \quad.
    \end{align*}
    This ensures both the local function $\hat{f}(\hat{\mathbf{x}})$ and the global function 
    $ \hat{g}(\hat{\mathbf{x}}) \defeq \sum_{f \in \set{F}} 
    \hat{f}(\hat{\mathbf{x}}_{\setpf'}) = \sum_{f}(f'(\hat{\mathbf{x}}) - f'_{\min}) / S_{\mathrm{range}} $ lie within $[0,1]$ for all valid configurations.

    \item \textbf{The Transformed Factor Graph $\hat{\graphG}$:} The transformed factor graph is $\hat{\graphG}(\set{F}, \hat{\set{V}}, \hat{\set{E}}, \hat{\set{X}})$. It uses the same set of function nodes $\set{F}$ and a structurally equivalent set of edges $\hat{\set{E}}$ connecting these function nodes to the new binary variable nodes in $\hat{\set{V}}$. Each function node $f \in \set{F}$ in $\hat{\graphG}$ is now associated with the transformed local function $\hat{f}$.
\end{enumerate}
\end{definition}

The significance of this transformation lies in its preservation of the optimization problem's core structure while adapting it for algorithms that require binary inputs and normalized outputs.

\begin{proposition}
\label{prop: appendix_prop_simplification}
  Let $\graphG$ and its transformation $\hat{\graphG}$ be defined as in Definition~\ref{def: appendix_binarized_normalized_factor_graph}.
  If $S_{\mathrm{range}} \neq 0$, then a configuration $\mathbf{x}^{*} \in \set{X}$ maximizes $g(\mathbf{x})$ if and only if its corresponding valid binary configuration $\hat{\mathbf{x}}^{*} \in \hat{\set{X}}_{\mathrm{valid}}$ (derived from $\mathbf{x}^{*}$ through the binarization encoding) maximizes $\hat{g}(\hat{\mathbf{x}})$. Furthermore, $\hat{g}(\hat{\mathbf{x}}) \in [0,1]$ for all valid $\hat{\mathbf{x}} \in \hat{\set{X}}_{\mathrm{valid}}$. If $S_{\mathrm{range}} = 0$, then $g(\mathbf{x})$ is constant, any configuration $\mathbf{x} \in \set{X}$ is optimal, and $\hat{g}(\hat{\mathbf{x}}) = 0$ for all valid $\hat{\mathbf{x}} \in \hat{\set{X}}_{\mathrm{valid}}$.
\end{proposition}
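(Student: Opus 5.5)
The plan is to show that the transformed global function is a positive affine image of the original one on valid encodings. The binarization is a bijection between $\set{X}$ and $\hat{\set{X}}_{\mathrm{valid}}$: every $x_v\in\setx_v$ receives a distinct binary vector of length $N_v^{\mathrm{bits}}$, and the product of these encodings identifies configurations. By construction $f'(\hat{\mathbf{x}}_{\setpf'}) = f(\mathbf{x}_{\setpf})$ for every valid encoding. Summing over $\set{F}$ therefore gives $\sum_f f'(\hat{\mathbf{x}}) = g(\mathbf{x})$ whenever $\hat{\mathbf{x}}$ encodes $\mathbf{x}$.

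\textbf{Case $S_{\mathrm{range}}\neq 0$.} Since each term $f'_{\max}-f'_{\min}$ is nonnegative, we have $S_{\mathrm{range}}>0$. Write $C\defeq\sum_f f'_{\min}$. Then
\begin{align*}
\hat g(\hat{\mathbf{x}}) = \bigl(g(\mathbf{x}) - C\bigr)/S_{\mathrm{range}}.
\end{align*}
This map is strictly increasing in $g$. Combined with the bijection, it shows that $\mathbf{x}^*$ maximizes $g$ over $\set{X}$ if and only if $\hat{\mathbf{x}}^*$ maximizes $\hat g$ over $\hat{\set{X}}_{\mathrm{valid}}$. Both directions follow by comparing against an arbitrary competitor and pulling it back or pushing it forward through the bijection. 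For the range claim, each summand satisfies $0\le f'(\hat{\mathbf{x}})-f'_{\min}\le f'_{\max}-f'_{\min}$. Summing these bounds gives $0\le \sum_f(f'-f'_{\min})\le S_{\mathrm{range}}$, and dividing by $S_{\mathrm{range}}$ yields $\hat g\in[0,1]$.

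\textbf{Case $S_{\mathrm{range}}=0$.} Every term $f'_{\max}-f'_{\min}$ is a nonnegative number, and these sum to zero, so each one is zero. Hence each $f'$, and therefore each $f$, is constant on its valid domain. It follows that $g$ is constant, so every $\mathbf{x}$ is optimal. The definition of $\hat f$ sets it to $0$ on valid encodings, so $\hat g\equiv 0$ there.

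The only delicate point is making precise that optimization over $\hat{\mathbf{x}}$ is restricted to $\hat{\set{X}}_{\mathrm{valid}}$, since $\hat f$ is undefined on invalid encodings. Once that restriction is stated explicitly, the argument is routine.
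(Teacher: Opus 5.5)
Your proposal is correct and follows essentially the same route as the paper's proof. Both arguments use the binarization bijection to identify $g$ with $\sum_f f'$ on valid encodings, treat normalization as a positive affine map so that maximizers are preserved, and bound each summand to get $\hat g\in[0,1]$. You also justify two points the paper only asserts: that $S_{\mathrm{range}}=0$ forces every local range to vanish (hence $g$ is constant), and that the optimization over $\hat{\mathbf{x}}$ must be restricted to valid encodings.
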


\begin{proof}
  The binarization step establishes a bijective mapping between configurations $\mathbf{x} \in \set{X}$ and valid binary configurations $\hat{\mathbf{x}} \in \hat{\set{X}}_{\mathrm{valid}}$ such that $g(\mathbf{x}) = \sum_{f} f'(\hat{\mathbf{x}})$. The normalization step, $\hat{g}(\hat{\mathbf{x}}) = \sum_{f} (f'(\hat{\mathbf{x}}) - f'_{\min}) / S_{\mathrm{range}}$, is a positive affine transformation for $S_{\mathrm{range}} > 0$. Such transformations preserve the set of arguments that achieve the maximum value of the function. Since $f'(\hat{\mathbf{x}}) $ ranges from $f'_{\min}$ to $f'_{\max}$ for valid encodings, it follows directly that $\hat{g}(\hat{\mathbf{x}})$ ranges from $0$ to $1$. If $S_{\mathrm{range}} = 0$, $g(\mathbf{x})$ is constant, making all configurations optimal, and $\hat{g}(\hat{\mathbf{x}})$ is identically zero for valid encodings.
\end{proof}

This transformation ensures that optimization problems defined on general factor graphs can be converted into an equivalent form suitable for quantum algorithms designed for binary variables and normalized objective functions, thereby broadening the applicability of the methods presented in this paper.
The ability to perform such a conversion is crucial, as many quantum algorithmic primitives, such as those based on phase estimation or amplitude amplification, naturally operate on binary inputs (qubits) and expect function outputs to be mapped to a bounded range (\eg, phases or probabilities). By demonstrating this equivalence, the scope of the proposed distributed quantum computing framework is extended beyond problems that are natively binary and normalized.


\section{Register Layout for the distributed algorithm}
\label{apx:registers}
 Consider the factor graph decomposition strategy in Definition~\ref{def: setup of the distributed quantum} and arbitrary positive integers $ \Np, \Nsfn \in \sZpp $.
The quantum network employs the following collections of qubits, all initialized to the state $ \svzero $.
The initial physical location of each register collection within the quantum network (Section~\ref{def: the quantum network setup}) is specified.
For simplicity, we use $ ( \cdot )_{\np} $ and $ ( \cdot )_{\nsfn} $ instead of $ ( \cdot )_{\np \in [\Np]} $ and 
$ ( \cdot )_{\nsfn \in [\Nsfn]} $, respectively, if there is no ambiguity.

\begin{enumerate}
  \item \textbf{Boundary Variable Qubits ($ \vqB $):} Residing at the coordinator processor $\QPUc$, 
  this register $ \vqB \defeq \bigl( \qB(v) \bigr)_{\! v \in \setVb} $ comprises one qubit $\qB(v)$ for each boundary variable node $ x_{v} $ (where $v \in \setVb$) in the factor graph $ \graphG{} $. This register stores the quantum state of boundary variables, enabling coherent superposition over their configurations.

\item \textbf{Coordinator Processing Qubits ($ \vqcen $):} Located at $\QPUc$, the collection 
$ \vqcen \defeq \bigl( \vqstn{\mathrm{c}- }, \vqpn{\mathrm{c}} \bigr) $ includes:
    \begin{itemize}
     \item $\vqstn{\mathrm{c}}$: A quantum register contains $t(p_{\min,\mathrm{c}},\delta)$ qubits and is used by $\QPUc$ for phase estimation.
The function $ t(c, \delta) $, defined in expression~\eqref{eqn: def of fun t}, determines the number of qubits required for phase estimation to achieve a specified precision with a success probability $1-\delta$, where the parameter $c$ is related to the phase to be estimated (details of the phase estimation algorithm are provided in Appendix~\ref{apx:reversible_phase_estimation}).
     \item $\vqpn{\mathrm{c}}$: An $\Np$-qubit register designated for storing the $\Np$-bit approximation of the global maximum value $ \gmax $. Measurement outcomes of $\vqpn{\mathrm{c}}$ in the computational basis yield classical bit strings denoted by $ \mathbf{z}_{\rmc} \in \{0,1\}^{\Np} $.
    \end{itemize}

\item \textbf{Local Processing Qubits ($ \vqloc $):} This distributed collection 
$ \vqloc \defeq ( \vqlocn )_{\nsfn} $ consists of $ \Nsfn $ sub-collections, one for each worker processor.
Specifically, for each $\nsfn \in [\Nsfn]$, the local collection 
$ \vqlocn \defeq \bigl( \vqstn{\nsfn},\vqsrn{\nsfn},\vqpn{\nsfn} \bigr) $ resides at worker processor $\QPU_{\nsfn}$.
It is used to determine an $\Np$-bit approximation of the subgraph's local maximum value, $ g_{\nsfn,\mathbf{x}_{\setVbnsfn}}^{\max} $ (conditioned on the boundary variables $\mathbf{x}_{\setVbnsfn}$). This collection comprises:
\begin{itemize}
  \item $\vqstn{\nsfn}$: A register contains $t(p_{\min,\nsfn}, \delta)$ qubits and is used by $ \QPU_{\nsfn} $ for its local phase estimation.
  
  \item $ \vqsrn{\nsfn} $: A register consists of $|\set{V}_{\nsfn}|$ qubits. These qubits are used by $\QPU_{\nsfn}$ for implementing local oracles that provides information about the function
      $g_{\nsfn,\mathbf{x}_{\setVbnsfn}}$, 
      encoding the internal variables $\vx_{\set{V}_{\nsfn}}$ of the subgraph.

  \item $\vqpn{\nsfn}$: An $\Np$-qubit quantum register used to store the $\Np$-bit approximation 
  of $ g_{\nsfn,\mathbf{x}_{\setVbnsfn}}^{\max} $.
\end{itemize}
Measurement outcomes of the qubit register $ \vqpn{\nsfn} $ (for a single subgraph $ \nsfn $) in the computational basis yield classical bit strings $ \mathbf{z}_{\mathrm{p},\nsfn} \in \{0,1\}^{\Np} $.
The collection of all such outcomes is denoted by $ \mathbf{z}_{\mathrm{p},[\Nsfn]} \defeq ( \mathbf{z}_{\mathrm{p},\nsfn} )_{\nsfn \in [\Nsfn]} \in \prod_{\nsfn \in [\Nsfn]} \{0,1\}^{\Np} $.

\item \textbf{Auxiliary Qubits ($ \vqaux $):} The set $\vqaux$ comprises all auxiliary qubits necessary for intermediate computations throughout the algorithm. These are distributed across the network as follows:
    \begin{itemize}
    \item At the coordinator processor $\QPUc$:
        \begin{itemize}
        \item $ (\vqBauxn)_{\nsfn} $: A set of $ \Nsfn $ quantum registers. Each register $\vqBauxn$ contains $|\setVbnsfn|$ qubits. These are used by $\QPUc$ to temporarily hold or manipulate quantum information related to the boundary variables $\mathbf{x}_{\setVbnsfn}$ relevant to subgraph $\nsfn$, particularly during distribution to or retrieval from worker QPUs.
        \item $ (\vqstn{\mathrm{c},\nsfn})_{\nsfn} $: $\Nsfn$ quantum registers, each with $ t(p_{\min,\mathrm{c}}, \delta) $ qubits. These serve as auxiliary registers at $\QPUc$, potentially used to facilitate distributed controlled operations or to temporarily store the control information for phase estimation related to different subgraphs.
        
        \item $q_{\mathrm{ancilla}}$: a auxiliary qubit used by $\QPUc$ as part of a distributed implementation of the reflection operator $ \matr{U}_{\set{Z}_{\ge\mathbf{z}}} $ 
        (see Appendix~\ref{apx:implementationof_Upe}).

        \item $ q_{\mathrm{CZ},\rmc} $ and $ q_{\mathrm{global\_zero}} $: Two auxiliary qubits used by $\QPUc$ as part of a distributed implementation of the reflection operator $ 2\svzero\bra{\mathbf{0}} - \matr{I} $ 
        (see Appendix~\ref{apx:implementationof_Upe}).

        \end{itemize}
    \item At each worker processor $\QPU_{\nsfn}$:
        \begin{itemize}

          \item $ \vqoran{\nsfn} $: A register of $N_{\mathrm{ora},\nsfn}$ auxiliary qubits at $\QPU_{\nsfn}$, serving as ancillas for the local oracles.

          \item $ q_{\mathrm{CZ},\nsfn} $: A single auxiliary qubit at $\QPU_{\nsfn}$ used as part of a distributed implementation of the reflection operator $ 2\svzero\bra{\mathbf{0}} - \matr{I} $.

        \end{itemize}

      \item Define a collection of auxiliary qubits to be 
      \begin{align*}
        \hspace{-0.5cm}\vqauxc & 
        \defeq\bigl( (\vqBauxn,\, q_{\mathrm{CZ},\nsfn}, \, \vqstn{\mathrm{c},\nsfn} )_{\nsfn \in [\Nsfn]}, \,
        q_{\mathrm{CZ},\rmc}, \, q_{\mathrm{global\_zero}}
        \bigr),
      \end{align*}
      where the index ``c'' indicates that these qubits serve as ``clean ancillas'' after the implementation of the distributed quantum algorithm, \ie, the state of $\vqauxc$ remains $\svzero$. 
      For completeness, we define $\vqauxd \defeq (\vqoran{\nsfn})_{\nsfn}$, 
      where the index ``d'' indicates that these qubits may be entangled with other qubits after the implementation of the distributed quantum algorithm, \ie, these qubits are ``dirty ancillas''.
      It holds that
      \begin{align*}
        \vqaux = \bigl( \vqauxc,\, \vqauxd,\, q_{\mathrm{ancilla}} \bigr).
      \end{align*}
    \end{itemize}
\end{enumerate}
\begin{table}[t!]
  \centering
  \scriptsize
  \caption{Qubit counts in the distributed algorithm.}
  \label{table: number of qubits}
  \setlength{\tabcolsep}{7pt}
  \renewcommand{\arraystretch}{1.25}
  \begin{tabular}{@{}p{6.8cm}p{7.2cm}@{}}
    \toprule
    \textbf{Register collection (location)} & \textbf{Number of qubits} \\
    \midrule
    $\vqB$ (at $\QPUc$) &
    $|\setVb|$ \\
    \midrule
    \addlinespace[2pt]

    $\vqcen$ (at $\QPUc$) &
    $\Np + t\bigl(p_{\min,\mathrm{c}},\delta\bigr)$ \\
    \midrule
    \addlinespace[2pt]

    $\vqloc$ (total across all workers $\QPU_{\nsfn}$) &
    $\sum\limits_{\nsfn \in [\Nsfn]}\Bigl(\Np + t\bigl(p_{\min,\nsfn},\delta\bigr) + |\set{V}_{\nsfn}|\Bigr)$ \\
    \midrule
    \addlinespace[2pt]

    $\vqaux$ (total across the network) &
    $\begin{aligned}[t]
      3 + \sum_{\nsfn \in [\Nsfn]}\Bigl(&|\setVbnsfn| + N_{\mathrm{ora},\nsfn} 
      + t\bigl(p_{\min,\mathrm{c}},\delta\bigr) + 1\Bigr)
    \end{aligned}$ \\
    \bottomrule
  \end{tabular}
\end{table}
  Table~\ref{table: number of qubits} summarizes the number of qubits in these primary collections.
  The parameters $\delta \in (0,1]$ (overall error tolerance), $p_{\min,\nsfn}, p_{\min,\mathrm{c}} \in (0,1]$ (minimum success probabilities for local and coordinator operations, respectively), and $N_{\mathrm{ora},\nsfn} \in \sZpp$ (number of oracle ancillas for subgraph $\nsfn$) will be defined in subsequent sections and appendices where they are used.

\section{An Approximately Reversible Phase-Testing Subroutine}
\label{apx:reversible_phase_estimation}

This appendix formally defines a unitary subroutine based on the quantum phase estimation (QPE) algorithm, proves its key properties, and show a distributed implementation of this subroutine. This subroutine is a fundamental building block of the distributed quantum algorithm presented in Section~\ref{sec: distributed_quantum_algorithm}. Its purpose is to test for a non-zero phase associated with a given unitary oracle and to record a binary outcome on an qubit. The ability to perform this test in an approximately reversible manner, \ie, to uncompute intermediate results and restore the primary registers to their initial state with large amplitude, is a cornerstone of efficient quantum algorithm design, as it allows for the reuse of qubit registers and the preservation of quantum coherence across iterative steps. We begin by establishing the necessary definitions for the standard QPE algorithm before defining the full subroutine and proving its performance guarantees.

\begin{assumption}
Throughout this paper, the arcsine function, $\arcsin(y)$, is restricted to the range $[0, \pi/2]$ for any 
$y \in [0,1]$. Within this domain, $\arcsin$ is a bijective mapping.
\end{assumption}

\subsection{The Standard Quantum Phase Estimation Algorithm}

We first review the standard QPE algorithm~\cite{Nielsen2010}, which forms the core of our subroutine.

\begin{definition}
\label{def:setup_of_phase_estimation_apx}
The phase estimation algorithm operates under the following conditions:
\begin{enumerate}
    \item\label{initial_condition_Qdist} Let $\matr{U}_{\mathrm{pe,0}}$ be an initialization unitary oracle acting on an $\Npe$-qubit \textbf{target register}. Suppose that the initial state of the target register is 
    $ \ket{\mathbf{0}}_{\mathrm{tagt}} $ and we evolve it as 
    \begin{align}
      \ket{\psi_{\mathrm{pe,0}}}_{\mathrm{tagt}} = \matr{U}_{\mathrm{pe,0}}
      \cdot \ket{\mathbf{0}}_{\mathrm{tagt}}. \label{eqn:state_vector_oracle}
    \end{align}

    \item \label{state of the sec reg is the eigenvector of the oracle} 
    In the algorithm, we suppose that there is another unitary operator $\matr{U}_{\mathrm{o}}$ such that
    the state vector of the target register $\ket{\psi_{\mathrm{pe,0}}}_{\mathrm{tagt}} $ is a superposition of two orthonormal right-eigenvectors of $\matr{U}_{\mathrm{o}}$, \ie,
    \begin{align*}
        \ket{\psi_{\mathrm{pe,0}}}_{\mathrm{tagt}} = \frac{1}{\sqrt{2}} \cdot \left( \e^{ \imagunit \phi_p/2 } \cdot \ket{v_{\mathrm{e}}} 
        + \e^{ -\imagunit \phi_p/2 } \cdot \overline{\ket{v_{\mathrm{e}}}} \right),
    \end{align*}
	    where $\matr{U}_{\mathrm{o}} \cdot \ket{v_{\mathrm{e}}} = \e^{\imagunit \phi_p} \cdot \ket{v_{\mathrm{e}}}$ and $\matr{U}_{\mathrm{o}} \cdot \overline{\ket{v_{\mathrm{e}}}} = \e^{-\imagunit \phi_p} \cdot \overline{\ket{v_{\mathrm{e}}}}$. The phase to be estimated is $\phi_p = 2\arcsin(\sqrt{p})$ for some scalar $p \in [0,1]$. This specific superposition is characteristic of amplitude amplification subroutines (see, \eg,~\cite[Section 2]{brassard2000}). For the distributed algorithm studied here, the corresponding spectral decomposition is established in Lemma~\ref{lem:spectral_properties_U_AA}.

    \item \label{lower bound on the phase estimation precision} We assume a known positive lower bound $p_{\min}$. We want to show that if $p \geq p_{\min} > 0$, our proposed subroutine has a performance guarantee.

    \item Let $\delta \in (0,1]$ be a small positive scalar defining the error probability tolerance for a single phase-testing operation.
    \item An \textbf{estimation register} with $t(p_{\min}, \delta)$ qubits is used, where the function $t(\cdot,\cdot)$ is defined in~\eqref{eqn: def of fun t}.
\end{enumerate}

\end{definition}

\begin{definition}[QPE Procedure]
\label{def:qpe_procedure_apx}
The standard quantum phase estimation procedure is a unitary operation, denoted $\mUphamd{\matr{U}_{\mathrm{o}}}{p_{\min}, \delta}$, that acts on the estimation and target registers. It is implemented as follows:
\begin{enumerate}
    \item Initialize the estimation register to $ \svzero_{\mathrm{est}} $ and the target register to $\ket{\psi_{\mathrm{pe,0}}}_{\mathrm{tagt}}$.

    \item Apply a Hadamard gate to each qubit in the estimation register.

    \item For each $k \in \{0, \dots, t(p_{\min}, \delta)-1\}$, apply a controlled-$(\matr{U}_{\mathrm{o}})^{2^k}$ operation to the target register, where the $(k+1)$-th qubit of the estimation register serves as the control.

    \item Apply an inverse quantum Fourier transform (QFT) to the estimation register.
\end{enumerate}
The action of this procedure on the initial state is given by
\begin{align*}
    \ket{\psi_{\mathrm{pe,final}}}_{\mathrm{est,tagt}} &\defeq 
    \mUphamd{\matr{U}_{\mathrm{pe,0}},\matr{U}_{\mathrm{o}}}{p_{\min}, \delta} \cdot \bigl( \svzero_{\mathrm{est}} \otimes \ket{\psi_{\mathrm{pe,0}}}_{\mathrm{tagt}} \bigr),
\end{align*}
where $\ket{\psi_{\mathrm{pe,final}}}_{\mathrm{est,tagt}}$ is the final entangled state of the two registers.

\end{definition}

\subsection{The Approximately Reversible Subroutine and Its Properties}

We now define the full subroutine, which incorporates the QPE procedure to test for a non-zero phase in an approximately reversible manner.

\begin{definition}
\label{def:reversible_subroutine_apx}
Consider the setup from Definition~\ref{def:setup_of_phase_estimation_apx}. The approximately reversible phase-testing subroutine, denoted by the unitary operator $\mUpha{\matr{U}_{\mathrm{o}}}{p_{\min}, \delta}$, acts on the combined system of the estimation register, the target register, a single qubit $q_{\mathrm{pe,c}}$ for storing the phase information about whether $p>0$ or not, and an auxiliary qubit $q_{\mathrm{pe,aux}}$,
where the index $\mathrm{pe-sub}$ means that it is a phase estimation algorithm-based subroutine. 
The system is initialized to the state $\svzero_{\mathrm{est},\mathrm{tagt},q_{\mathrm{pe,c}},q_{\mathrm{pe,aux}}}$. The subroutine is the composition of the following six operations:
\begin{enumerate}

    \item\label{QPE_subroutine:Initialization} \textbf{Initialization} Apply the unitary oracle $ \matr{U}_{\mathrm{pe,0}} $ to the target register. 
    The state becomes $ \svzero_{\mathrm{est}} \ket{\psi_{\mathrm{pe,0}}}_{\mathrm{tagt}} \ket{0}_{q_{\mathrm{pe,c}}} \ket{0}_{q_{\mathrm{pe,aux}}} $

    \item\label{QPE_subroutine:Forward_QPE} \textbf{Forward QPE:} Apply the standard QPE procedure, \ie, applying
    \begin{align*}
      \mUphamd{\matr{U}_{\mathrm{pe,0}},\matr{U}_{\mathrm{o}}}{p_{\min}, \delta} \otimes \matr{I}_{q_{\mathrm{pe,c}},q_{\mathrm{pe,aux}}}.
    \end{align*}

    \item\label{QPE_subroutine:Conditional_Flip} \textbf{Conditional Flip:} Apply a multi-controlled NOT gate that flips the qubit $q_{\mathrm{pe,c}}$ if and only if the estimation register is in any state \emph{other than} the all-zero state $\svzero_{\mathrm{est}}$.

    \item\label{QPE_subroutine:Inverse_QPE} \textbf{Inverse QPE} Apply the inverse QPE procedure, $\mUphamd{\matr{U}_{\mathrm{o}}}{p_{\min}, \delta}^{\Herm} \otimes \matr{I}_{q_{\mathrm{pe,c}},q_{\mathrm{pe,aux}}}$.

    \item\label{QPE_subroutine:Inverse_Initialization} \textbf{Inverse Initialization}: Apply the inverse unitary oracle $ \matr{U}_{\mathrm{pe,0}}^{\Herm} $ to the target register.

    \item\label{QPE_subroutine:Error_Mitigation} \textbf{Error Mitigation}:
    Apply a controlled bit flip operation $\matr{U}_{\mathrm{error\_mitigation}}$ to the involved qubit registers and the auxiliary qubit $q_{\mathrm{pe,aux}}$ initialized to $\ket{0}$, which is defined via 
      \begin{align*}
        &\matr{U}_{\mathrm{error\_mitigation}}
         \ket{\vz_{\mathrm{est}}}
        \ket{\vz_{\mathrm{tagt}}}
        \ket{z}_{q_{\mathrm{pe,c}}}
        \ket{0}_{q_{\mathrm{pe,aux}}}
        \nonumber\\&=
        \ket{\vz_{\mathrm{est}}}
        \ket{\vz_{\mathrm{tagt}}}
        \otimes
        \begin{cases}
          \ket{0}_{q_{\mathrm{pe,c}}}\ket{1}_{q_{\mathrm{pe,aux}}} & \text{if $(\vz_{\mathrm{est}},\vz_{\mathrm{tagt}}) \neq \vect{0}$ and $z = 1$} \\
          \ket{z}_{q_{\mathrm{pe,c}}}
        \ket{0}_{q_{\mathrm{pe,aux}}} & \text{otherwise} \\
        \end{cases}\quad .
      \end{align*}
      
      
\end{enumerate}
The final state of the system is defined as
\begin{align*}
    &\ket{\psi_{\mathrm{pe,final}}}_{\mathrm{est,tagt},q_{\mathrm{pe,c}},q_{\mathrm{pe,aux}}} 
    \defeq \mUpha{\matr{U}_{\mathrm{pe,0}},\matr{U}_{\mathrm{o}}}{p_{\min}, \delta} 
    \cdot \svzero_{\mathrm{est},\mathrm{tagt},q_{\mathrm{pe,c}},q_{\mathrm{pe,aux}}}.
\end{align*}
\end{definition}


The key properties of this subroutine, which are essential for its use in our main algorithm, are summarized in the following theorem.

\begin{theorem}
  \label{thm:property_of_phase_estimation_apx}
  Consider the setup in Definition~\ref{def:setup_of_phase_estimation_apx}.
  The approximately reversible phase-testing subroutine from Definition~\ref{def:reversible_subroutine_apx} has the following properties:
  \begin{enumerate}
      \item\label{thm:property_of_phase_estimation_apx:query_complexity} \textbf{Query Complexity:} The total number of applications of the operator $\matr{U}_{\mathrm{o}}$ and its inverse version is $2^{t(p_{\min}, \delta)+1}-2$. The number of applications of the operator $\matr{U}_{\mathrm{pe,0}}$ and its inverse is $ 2 $.
      
      \item\label{thm:property_of_phase_estimation_apx:pefect_reverse} \textbf{Perfect Reversibility (Zero-Phase Case):} If $p = 0$ (implying $\phi_p = 0$), the subroutine is perfectly reversed. The final state of the system is identical to its initial state, 
      $\svzero_{\mathrm{est},\mathrm{tagt},q_{\mathrm{pe,c}},q_{\mathrm{pe,aux}}}$.
      
      \item\label{thm:property_of_phase_estimation_apx:error_confinement}
      \textbf{Error Confinement:} If $p > 0$, the final state $\ket{\psi_{\mathrm{pe,final}}}_{\mathrm{est,tagt},q_{\mathrm{pe,c}},q_{\mathrm{pe,aux}}}$ has no component in which the qubit $q_{\mathrm{pe,c}}$ is in the state $\sone$, and the estimation and target registers  and the auxiliary qubit are in any non-zero state. Formally, for any non-zero basis state $\ket{\mathbf{x}}_{\mathrm{est},\mathrm{tagt},q_{\mathrm{pe,aux}}} \neq \svzero_{\mathrm{est},\mathrm{tagt},q_{\mathrm{pe,aux}}}$, it holds that
      \begin{align*}
        \left( \bra{\mathbf{x}}_{\mathrm{est},\mathrm{tagt},q_{\mathrm{pe,aux}}}\bra{1}_{q_{\mathrm{pe,c}}} \right) \ket{\psi_{\mathrm{pe,final}}}_{\mathrm{est,tagt},q_{\mathrm{pe,c}},q_{\mathrm{pe,aux}}} = 0.
      \end{align*}

     \item\label{thm:property_of_phase_estimation_apx:appro_reverse} \textbf{Approximate Reversibility (Non-Zero Phase Case):} If $p > 0$ and we measure the final state in the computational basis, the probability of finding the ancilla flipped to $\sone$ \emph{and} the primary registers returned to their initial state 
      $\svzero_{\mathrm{est},\mathrm{tagt}}$ is at least $\bigl( 1 - \delta \bigr)^{2}$. Formally,
      \begin{align*}
          \left| \left( \bra{\mathbf{0}}_{\mathrm{est},\mathrm{tagt}} 
          \bra{1}_{q_{\mathrm{pe,c}}} \bra{0}_{q_{\mathrm{pe,aux}}} \right) \ket{\psi_{\mathrm{pe,final}}}_{\mathrm{est,tagt},q_{\mathrm{pe,c}},q_{\mathrm{pe,aux}}} \right|^2 \geq \left( 1 - \delta \right)^{2}.
      \end{align*}

\end{enumerate}
\end{theorem}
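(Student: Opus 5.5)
The plan is to verify the four items separately. Items~1--3 follow from the structure of Definition~\ref{def:reversible_subroutine_apx}. Item~4 needs a short amplitude computation combined with a bound on the probability that QPE returns the all-zero string.

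\textbf{Query complexity.} In the QPE procedure of Definition~\ref{def:qpe_procedure_apx}, the $(k+1)$-th control applies $(\matr{U}_{\mathrm{o}})^{2^k}$ for $k=0,\dots,t-1$. This gives $\sum_{k=0}^{t-1}2^k=2^{t}-1$ uses of $\matr{U}_{\mathrm{o}}$ per QPE, with $t=t(p_{\min},\delta)$. The forward QPE (step~2) and the inverse QPE (step~4) therefore contribute $2^{t+1}-2$ uses of $\matr{U}_{\mathrm{o}}$ and its inverse. Steps~1 and~5 contribute exactly two uses of $\matr{U}_{\mathrm{pe,0}}$ and its inverse.

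\textbf{Zero-phase case.} If $p=0$, then $\phi_p=0$ and both $\ket{v_{\mathrm e}}$ and $\overline{\ket{v_{\mathrm e}}}$ have eigenvalue $1$. Hence $\matr{U}_{\mathrm{o}}\ket{\psi_{\mathrm{pe,0}}}=\ket{\psi_{\mathrm{pe,0}}}$. QPE then maps $\svzero_{\mathrm{est}}\ket{\psi_{\mathrm{pe,0}}}$ exactly to itself: the controlled powers act trivially, and $\mathrm{QFT}^{-1}\matr{H}^{\otimes t}\svzero=\svzero$. The conditional flip of step~3 is therefore inactive, so steps~4--5 invert steps~1--2 exactly. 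Step~6 acts trivially because $q_{\mathrm{pe,c}}$ is in $\ket0$.

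\textbf{Error confinement.} Before step~6 the auxiliary qubit $q_{\mathrm{pe,aux}}$ is in $\ket0$. Step~6 moves every basis component with $q_{\mathrm{pe,c}}=1$ and $(\vz_{\mathrm{est}},\vz_{\mathrm{tagt}})\neq\vect 0$ to $q_{\mathrm{pe,c}}=0$, $q_{\mathrm{pe,aux}}=1$. It leaves components with $q_{\mathrm{pe,c}}=0$ untouched. Consequently the only surviving components with $q_{\mathrm{pe,c}}=1$ have $\mathrm{est},\mathrm{tagt},q_{\mathrm{pe,aux}}$ all zero, which is exactly Item~3.

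\textbf{Approximate reversibility.} Write $\matr W\defeq\mUphamd{\matr{U}_{\mathrm{pe,0}},\matr{U}_{\mathrm{o}}}{p_{\min},\delta}$ and decompose
\[
\matr W\svzero\ket{\psi_{\mathrm{pe,0}}}=\alpha\,\svzero_{\mathrm{est}}\ket{\chi_0}+\ket{\Phi_\perp},
\]
where $\ket{\Phi_\perp}$ has no support on $\svzero_{\mathrm{est}}$ and $\varepsilon\defeq|\alpha|^2$. After the flip of step~3 the state equals
\[
\bigl(\matr W\svzero\ket{\psi_{\mathrm{pe,0}}}\bigr)\ket1+\alpha\,\svzero\ket{\chi_0}\bigl(\ket0-\ket1\bigr).
\]
Applying $\matr W^{\Herm}$ and then $\matr{U}_{\mathrm{pe,0}}^{\Herm}$, the amplitude on $\svzero_{\mathrm{est},\mathrm{tagt}}\ket1\ket0$ becomes
\[
1-\alpha\,\bra{\mathbf 0,\psi_{\mathrm{pe,0}}}\matr W^{\Herm}\ket{\mathbf0,\chi_0}=1-|\alpha|^2=1-\varepsilon .
\]
Step~6 fixes this basis state. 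It therefore suffices to show $\varepsilon\le\delta$.

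For each eigenbranch with phase $\pm\phi_p$, the amplitude of the all-zero outcome is
\[
2^{-t}\sum_{k=0}^{2^t-1}\e^{\pm\imagunit k\phi_p},
\]
whose modulus is
\[
\frac{|\sin(2^{t-1}\phi_p)|}{2^t|\sin(\phi_p/2)|}\le\frac{1}{2^t\sqrt p}.
\]
The two branches land in orthogonal target states, and each has weight $1/2$. Hence $\varepsilon$ equals the single-branch probability, which is at most $1/(4^t p)\le 1/(4^t p_{\min})$.

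This last inequality is the main obstacle. Plugging in the definition~\eqref{eqn: def of fun t} of $t$ gives $4^t\ge 1/(2\delta p_{\min})$, i.e.\ only $\varepsilon\le2\delta$. To recover the factor $2$ I would first try to sharpen the Fej\'er-kernel bound, using $|\sin(2^{t-1}\phi_p)|^2$ averaged appropriately, or using $\sin^2(\phi_p/2)=p$ together with $\phi_p\le\pi$. If that fails, I would check whether the $-1/2$ offset inside the ceiling in $t$ absorbs the constant.
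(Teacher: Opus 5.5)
Items 1--3 and your identity that the clean amplitude equals $1-|\alpha|^2$ follow the paper's argument exactly. Your value $\varepsilon=\sin^2(2^{t-1}\phi_p)/(4^t p)$ for the all-zero QPE probability is the correct one. The obstacle you flag, however, cannot be removed by either proposed fix: the inequality $\varepsilon\le\delta$, and with it Item 4, is false for $t$ as defined in \eqref{eqn: def of fun t}. Take $\delta=1/2$ and $p=p_{\min}=1/4$. Then $t(p_{\min},\delta)=\lceil 3/2-1/2\rceil=1$ and $\phi_p=\pi/3$. With a single estimation qubit the all-zero outcome has probability $\tfrac14\|\ket{\psi_{\mathrm{pe,0}}}+\matr{U}_{\mathrm{o}}\ket{\psi_{\mathrm{pe,0}}}\|^2=(1+\cos\phi_p)/2=3/4>\delta$. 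The probability in Item 4 is therefore exactly $(1-3/4)^2=1/16<(1-\delta)^2=1/4$. Small $\delta$ does not rescue it: $t=3$, $p=p_{\min}=\sin^2(3\pi/16)$, $\delta=0.03$ gives $t(p_{\min},\delta)=3$ and $\varepsilon=1/(64p)\approx 0.051>\delta$. The estimate $|\sin(2^{t-1}\phi_p)|\le 1$ is attained whenever $2^{t-1}\phi_p$ is an odd multiple of $\pi/2$, so no uniform sharpening over $p\ge p_{\min}$ exists. The $-1/2$ offset in $t$ does not absorb the constant; it is what creates it, since it only guarantees $4^t\ge 1/(2\delta p_{\min})$.

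The paper's proof reaches $\varepsilon\le\delta$ only through a computational error. It writes the all-zero amplitude as the scalar $2^{-t}\sum_k 2^{-1/2}\bigl(\e^{\imagunit\phi_p/2}\e^{\imagunit k\phi_p}+\e^{-\imagunit\phi_p/2}\e^{-\imagunit k\phi_p}\bigr)$. That is, it lets the contributions of the orthogonal eigenvectors $\ket{v_{\mathrm e}}$ and $\overline{\ket{v_{\mathrm e}}}$ interfere. This yields $\sin^2(2^t\phi_p)/(2^{2t+1}p)$, which is $2\cos^2(2^{t-1}\phi_p)$ times the true value, and the $-1/2$ in $t$ is calibrated to that incorrect expression. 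What your argument actually establishes is Items 1--3, plus Item 4 with $(1-2\delta)^2$ in place of $(1-\delta)^2$ (for $\delta\le 1/2$). To obtain the stated bound you must enlarge the estimation register to $t(p_{\min},\delta/2)=\lceil-\tfrac12\log_2(\delta p_{\min})\rceil$. This costs at most one extra qubit and roughly doubles the count in Item 1. Then $\varepsilon\le 1/(4^t p_{\min})\le\delta$ and your proof goes through verbatim. The same correction has to be carried into every later use of $t(\cdot,\delta)$.
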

\begin{proof}
  See Section~\ref{subsection: proof of thm:property_of_phase_estimation_apx}.
\end{proof}

\begin{corollary}
The final state $\ket{\psi_{\mathrm{pe,final}}}_{\mathrm{est,tagt},q_{\mathrm{pe,c}},q_{\mathrm{pe,aux}}}$ can be written as
\begin{align*}
    \ket{\psi_{\mathrm{pe,final}}} =
    \begin{cases}
        \svzero_{\mathrm{est},\mathrm{tagt},q_{\mathrm{pe,c}},q_{\mathrm{pe,aux}}} & \text{if } p = 0 \\
        \begin{array}{l} 
          c_{1} \cdot \svzero_{\mathrm{est},\mathrm{tagt},q_{\mathrm{pe,aux}}}  
          \sone_{q_{\mathrm{pe,c}}} 
          + c_{0} \cdot \ket{\psi_{\mathrm{junk}}}_{\mathrm{est,tagt},q_{\mathrm{pe,aux}}} \szero_{q_{\mathrm{pe,c}}} 
        \end{array}
        & \text{if } p > 0
    \end{cases} \quad,
\end{align*}
where $c_0, c_1 \in \sC$ are complex coefficients satisfying $|c_1|^2 \geq (1-\delta)^2$ and $|c_0|^2 + |c_1|^2 = 1$, and where $\ket{\psi_{\mathrm{junk}}}_{\mathrm{est},\mathrm{tagt},q_{\mathrm{pe,aux}}}$ is some normalized state of the estimation register, the target register, and the auxiliary qubit $q_{\mathrm{pe,aux}}$. This structure ensures that the correct-outcome branch, \ie, $\szero_{q_{\mathrm{pe,c}}}$ for $p=0$ and $\sone_{q_{\mathrm{pe,c}}}$ for $p>0$, is always entangled with the clean state $\svzero_{\mathrm{est},\mathrm{tagt},q_{\mathrm{pe,aux}}}$. 

\end{corollary}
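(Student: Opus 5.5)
The corollary follows from Theorem~\ref{thm:property_of_phase_estimation_apx}; the only work is to decompose the final state by the value of $q_{\mathrm{pe,c}}$. Since $\mUpha{\matr{U}_{\mathrm{pe,0}},\matr{U}_{\mathrm{o}}}{p_{\min},\delta}$ is unitary, $\ket{\psi_{\mathrm{pe,final}}}$ is a normalized vector. I split into the cases $p=0$ and $p>0$.

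\textbf{Case $p=0$.} Item~\ref{thm:property_of_phase_estimation_apx:pefect_reverse} (perfect reversibility) directly gives $\ket{\psi_{\mathrm{pe,final}}}=\svzero$ on all four registers. This is the first line of the claimed form.

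\textbf{Case $p>0$.} I expand $\ket{\psi_{\mathrm{pe,final}}}$ in the computational basis of $q_{\mathrm{pe,c}}$:
\[
\ket{\psi_{\mathrm{pe,final}}}=\ket{\chi_1}_{\mathrm{est,tagt},q_{\mathrm{pe,aux}}}\sone_{q_{\mathrm{pe,c}}}+\ket{\chi_0}_{\mathrm{est,tagt},q_{\mathrm{pe,aux}}}\szero_{q_{\mathrm{pe,c}}}.
\]
By item~\ref{thm:property_of_phase_estimation_apx:error_confinement} (error confinement), every basis component of $\ket{\chi_1}$ other than $\svzero_{\mathrm{est,tagt},q_{\mathrm{pe,aux}}}$ vanishes. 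Hence $\ket{\chi_1}=c_1\svzero$ with
\[
c_1\defeq\bigl(\bra{\mathbf 0}_{\mathrm{est,tagt},q_{\mathrm{pe,aux}}}\bra{1}_{q_{\mathrm{pe,c}}}\bigr)\ket{\psi_{\mathrm{pe,final}}}.
\]
Item~\ref{thm:property_of_phase_estimation_apx:appro_reverse} (approximate reversibility) bounds exactly this overlap, so $|c_1|^2\ge(1-\delta)^2$.

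For the other branch, I set $c_0\defeq\|\ket{\chi_0}\|$. If $c_0\neq 0$, I define $\ket{\psi_{\mathrm{junk}}}\defeq\ket{\chi_0}/c_0$. If $c_0=0$, I take $\ket{\psi_{\mathrm{junk}}}$ to be any normalized state, since its coefficient is zero. The two branches are orthogonal because they differ in $q_{\mathrm{pe,c}}$. Therefore normalization of $\ket{\psi_{\mathrm{pe,final}}}$ gives $|c_0|^2+|c_1|^2=1$.

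The closing remark of the corollary then reads off directly. The correct-outcome branch ($\szero$ when $p=0$, $\sone$ when $p>0$) is attached to the clean state $\svzero$ of the estimation, target and auxiliary registers.

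No step is a real obstacle. The one point needing care is that error confinement is stated for basis states of $(\mathrm{est},\mathrm{tagt},q_{\mathrm{pe,aux}})$ jointly. So I must check that the $\sone_{q_{\mathrm{pe,c}}}$ branch collapses onto the all-zero state of all three registers together, including $q_{\mathrm{pe,aux}}$, and not merely of $\mathrm{est}$ and $\mathrm{tagt}$. This matches how the error-mitigation step is defined.
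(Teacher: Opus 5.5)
Your proposal is correct and takes the same route as the paper, which simply notes that the corollary follows from Theorem~\ref{thm:property_of_phase_estimation_apx}. Perfect reversibility handles $p=0$; for $p>0$, error confinement forces the $\sone_{q_{\mathrm{pe,c}}}$ branch onto $\svzero_{\mathrm{est},\mathrm{tagt},q_{\mathrm{pe,aux}}}$, approximate reversibility bounds $|c_1|^2$, and normalization gives $|c_0|^2+|c_1|^2=1$. One caveat comes from the theorem rather than your argument: the bound $|c_1|^2\ge(1-\delta)^2$ relies on the standing assumption $p\ge p_{\min}$ of Definition~\ref{def:setup_of_phase_estimation_apx}, not merely on $p>0$.
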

\begin{proof}
  This follows straightforwardly from Theorem~\ref{thm:property_of_phase_estimation_apx}.
\end{proof}


Therefore, the branch encoding the correct decision always returns the estimation register, the target register, and the auxiliary qubit $q_{\mathrm{pe,aux}}$ exactly to the state $\svzero_{\mathrm{est},\mathrm{tagt},q_{\mathrm{pe,aux}}}$. In particular, when $p \geq p_{\min} >0 $, this clean branch occurs with probability at least $(1-\delta)^2$. Consequently, conditioned on the correct outcome branch stored in $q_{\mathrm{pe,c}}$, these qubits can be reused as zero-initialized work qubits in subsequent operations.


\subsection{Proof of Theorem~\ref{thm:property_of_phase_estimation_apx}}
\label{subsection: proof of thm:property_of_phase_estimation_apx}

We prove each claim of Theorem~\ref{thm:property_of_phase_estimation_apx} in order. In the proof, we will use $ \mUphamd{\cdot}{\cdot} $ instead of $ \mUphamd{\matr{U}_{\mathrm{o}}}{p_{\min}, \delta} $ for simplicity, if there is no ambiguity.

\subsubsection{Proof of Claim~\ref{thm:property_of_phase_estimation_apx:query_complexity} (Query Complexity)}
The subroutine consists of six steps. The initialization (Step~\ref{QPE_subroutine:Initialization}) and uncomputation (Step~\ref{QPE_subroutine:Inverse_Initialization}) 
each require one application of $ \matr{U}_{\mathrm{pe,0}} $ or its inverse. The total query complexity of $ \matr{U}_{\mathrm{pe,0}} $ and its inverse is thus $ 2 $.

The forward QPE procedure (Step~\ref{QPE_subroutine:Forward_QPE}) with a $t$-qubit estimation register, where $t \defeq t(p_{\min}, \delta)$, requires $\sum_{k=0}^{t-1} 2^k = 2^t - 1$ applications of controlled-$\matr{U}_{\mathrm{o}}$ operations. The inverse QPE procedure (Step~\ref{QPE_subroutine:Inverse_QPE}) requires the same number of queries. The conditional flip (Step~\ref{QPE_subroutine:Conditional_Flip}) and the error mitigation (Step~\ref{QPE_subroutine:Error_Mitigation}) do not involve queries to $\matr{U}_{\mathrm{o}}$. Therefore, the total query complexity of $ \matr{U}_{\mathrm{o}} $ and its inverse 
is $(2^t - 1) + (2^t - 1) = 2^{t(p_{\min}, \delta)+1} - 2$.

\subsubsection{Proof of Claim~\ref{thm:property_of_phase_estimation_apx:pefect_reverse} (Perfect Reversibility)}
If $p=0$, then the phase is $\phi_p = 0$. The oracle $\matr{U}_{\mathrm{o}}$ acts as the identity on the eigenvectors $\ket{v_{\mathrm{e}}}$ and $\overline{\ket{v_{\mathrm{e}}}}$, and thus on their superposition $\ket{\psi_{\mathrm{pe,0}}}_{\mathrm{tagt}}$. Consequently, all controlled-$(\matr{U}_{\mathrm{o}})^{2^k}$ operations in the QPE procedure are identity operations. The forward QPE procedure (Step~\ref{QPE_subroutine:Forward_QPE}) simplifies to applying Hadamards to the estimation register, followed by an inverse QFT. This sequence deterministically results in the estimation register being in the state $\svzero_{\mathrm{est}}$. The state after Step~\ref{QPE_subroutine:Forward_QPE} is thus $\svzero_{\mathrm{est}} \ket{\psi_{\mathrm{pe,0}}}_{\mathrm{tagt}} \ket{0}_{q_{\mathrm{pe,c}}} \ket{0}_{q_{\mathrm{pe,aux}}}$. The conditional flip (Step~\ref{QPE_subroutine:Conditional_Flip}) is not triggered, as the estimation register is in the $\svzero_{\mathrm{est}}$ state. The qubits $q_{\mathrm{pe,c}}$ and $ q_{\mathrm{pe,aux}} $ remain in $\svzero$. The inverse QPE (Step~\ref{QPE_subroutine:Inverse_QPE}) perfectly cancels the forward QPE, and the uncomputation (Step~\ref{QPE_subroutine:Inverse_Initialization}) perfectly cancels the initialization. Then the state becomes $ \svzero_{\mathrm{est},\mathrm{tagt},q_{\mathrm{pe,c}},q_{\mathrm{pe,aux}}} $ and thus the error mitigation (Step~\ref{QPE_subroutine:Error_Mitigation}) is not triggered. Therefore, the final state is $\svzero_{\mathrm{est},\mathrm{tagt},q_{\mathrm{pe,c}},q_{\mathrm{pe,aux}}}$, which demonstrates perfect reversibility.

\subsubsection{Proof of Claims~\ref{thm:property_of_phase_estimation_apx:error_confinement} and~\ref{thm:property_of_phase_estimation_apx:appro_reverse}
(Error Confinement and Approximate Reversibility (Non-Zero Phase Case))}
The proof proceeds in two parts. First, we bound the probability that the forward QPE procedure fails to detect a non-zero phase. Second, we analyze the final state of the full subroutine to prove approximate reversibility. 

\textbf{(a) Bounding the QPE Failure Probability.}
If $p = 0$, \ie, $\phi_p = 0$, by Claim~\ref{thm:property_of_phase_estimation_apx:pefect_reverse}, we know that 
the QPE runs successfully with output $ \ket{\psi_{\mathrm{pe,final}}} =  \ket{\mathbf{0}} $.
Thus the QPE failure corresponds to obtaining the all-zero state $\svzero$ in the estimation register after the forward QPE procedure and then performing measurement on the estimation register under the condition $\phi_p > 0$. The probability of this event, $P(\text{outcome}=\mathbf{0})$, is the squared magnitude of the amplitude of the $\svzero_{\mathrm{est}}$ component in the state after Step~\ref{QPE_subroutine:Forward_QPE}.  Following the similar idea as in the analysis of the standard phase estimation algorithm in~\cite[Chapter 5.2]{Nielsen2010}, this amplitude can be written as
\begin{align*}
    \alpha_{\mathbf{0}} 
    &= \frac{1}{2^t} \cdot \sum_{k=0}^{2^t-1} \frac{1}{\sqrt{2}} 
    \cdot \Bigl( \e^{\imagunit \phi_p/2} \cdot \e^{\imagunit k \phi_p} 
    + \e^{-\imagunit \phi_p/2} \cdot \e^{-\imagunit k \phi_p} \Bigr) 
    \\
    &= \frac{\sqrt{2}}{2^t} \cdot \sum_{k=0}^{2^t-1} \cos\Biggl( \biggl( k + \frac{1}{2} \biggr) \cdot \phi_p \Biggr)
    \\&
    \overset{(a)}{=}
    \frac{\sqrt{2}}{2^t} \cdot
    \frac{\cos(2^{t-1}\phi_p) \cdot \sin(2^{t-1}\phi_p)}{\sin(\phi_p/2)}
    \\&= \frac{\sqrt{2}}{2^{t+1}} \cdot \frac{\sin(2^t\phi_p)}{\sin(\phi_p/2)},
\end{align*}
where $t \defeq t(p_{\min}, \delta)$, and where step $(a)$ follows from the trigonometric identity for a sum of cosines
\begin{align*}
  \sum_{k=0}^{n-1} \cos(a+kd) = \frac{\cos\bigl( a+(n-1) \cdot d/2 \bigr) \cdot \sin(nd/2)}{\sin(d/2)}
\end{align*}
with $n=2^t$, $a=\phi_p/2$, and $d=\phi_p$.
The probability is then bounded as follows:
\begin{align}
    P(\text{outcome}=\mathbf{0}) 
    &= |\alpha_{\mathbf{0}}|^2 
    \nonumber\\
    &= \frac{\sin^2(2^t\phi_p)}{2^{2t+1} \cdot \sin^2(\phi_p/2)} \nonumber \\
    &\overset{(a)}{\leq} \frac{1}{2^{2t+1} \cdot \sin^2(\phi_p/2)} \nonumber \\
    &\overset{(b)}{=} \frac{1}{2^{2t+1} \cdot p} \nonumber \\
    &\overset{(c)}{\leq} \frac{1}{2^{2t+1} \cdot p_{\min}} \nonumber \\
    &\overset{(d)}{\leq} \frac{1}{2^{-\log_{2}\left( \delta \cdot p_{\min} \right)} \cdot p_{\min}} \nonumber \\
    &= \delta, 
    \label{eqn:lower_bound_prob_phase_estimation}
\end{align}
where step $(a)$ follows from $|\sin(x)| \leq 1$, where step $(b)$ follows from  
\begin{align*}
  \sin(\phi_p/2) = \sin\bigl( \arcsin(\sqrt{p}) \bigr) = \sqrt{p},
\end{align*}
where step $(c)$ follows from $p \geq p_{\min} > 0$, and where step $(d)$ follows from the definition of $t(\cdot,\cdot)$ in~\eqref{eqn: def of fun t}.

\textbf{(b) Analysis of the Final State and Approximate Reversibility.}
Let  
\begin{align*}
  \ket{\psi_{\mathrm{before\_QPE}}} = \svzero_{\mathrm{est}} \ket{\psi_{\mathrm{pe,0}}}_{\mathrm{tagt}}.
\end{align*}
The state after the forward QPE (Step~\ref{QPE_subroutine:Forward_QPE}) is  
\begin{align*}
  \ket{\psi_{\mathrm{pe,final}}} 
  = \mUphamd{\cdot}{\cdot} \cdot \ket{\psi_{\mathrm{before\_QPE}}}.
\end{align*}
We can decompose this state into two orthogonal components:
\begin{align}
    \ket{\psi_{\mathrm{pe,final}}} = \alpha_0 \cdot \svzero_{\mathrm{est}} \ket{\phi_0}_{\mathrm{tagt}} + \alpha_1 \cdot \ket{\psi_{\perp}}_{\mathrm{est,tagt}},
    \label{eqn:property_of_state_psi_pe_org}
\end{align}
where $\ket{\phi_0}_{\mathrm{tagt}}$ is some suitably defined state vector for the target register associated with the state $\svzero_{\mathrm{est}}$, and where $\ket{\psi_{\perp}}$ is a normalized state in the subspace orthogonal to $\svzero_{\mathrm{est}}$, and $|\alpha_0|^2 = P(\text{outcome}=\mathbf{0}) \le \delta$ by~\eqref{eqn:lower_bound_prob_phase_estimation}.
The conditional flip (Step~\ref{QPE_subroutine:Conditional_Flip}) acts on this state, yielding
\begin{align*}
    \ket{\psi_{\mathrm{con\_flip}}} &= \alpha_0 \cdot \svzero_{\mathrm{est}} \ket{\phi_0}_{\mathrm{tagt}}  \szero_{q_{\mathrm{pe,c}}} + \alpha_1 \cdot \ket{\psi_{\perp}}_{\mathrm{est,tagt}} \sone_{q_{\mathrm{pe,c}}}.
\end{align*}
The inverse QPE (Step~\ref{QPE_subroutine:Inverse_QPE}) is then applied. The state after applying the inverse unitary oracle $ \matr{U}_{\mathrm{pe,0}}^{\Herm} $ (Step~\ref{QPE_subroutine:Inverse_Initialization}) is 
\begin{align*}
  \ket{\psi_{\mathrm{uncom}}} 
  &= (\matr{I}_{\mathrm{est}} \otimes \matr{U}_{\mathrm{pe,0}}^{\Herm} \otimes \matr{I}_{q_{\mathrm{pe,c}}}) 
  \cdot \bigl(\mUphamd{\cdot}{\cdot}^{\Herm} \otimes \matr{I}_{q_{\mathrm{pe,c}}} \bigr) \ket{\psi_{\mathrm{con\_flip}}}
  \\
  &= 
  \alpha_0 \cdot (\matr{I}_{\mathrm{est}} \otimes \matr{U}_{\mathrm{pe,0}}^{\Herm} \otimes \matr{I}_{q_{\mathrm{pe,c}}}) 
  \cdot \bigl( \mUphamd{\cdot}{\cdot}^{\Herm} \otimes \matr{I}_{q_{\mathrm{pe,c}}}\bigr) 
  \cdot \svzero_{\mathrm{est}} \ket{\phi_0}_{\mathrm{tagt}}\szero_{q_{\mathrm{pe,c}}} 
  \\&\quad+ \alpha_1 
  \cdot (\matr{I}_{\mathrm{est}} \otimes \matr{U}_{\mathrm{pe,0}}^{\Herm} \otimes \matr{I}_{q_{\mathrm{pe,c}}})
  \cdot \bigl( \mUphamd{\cdot}{\cdot}^{\Herm} \otimes \matr{I}_{q_{\mathrm{pe,c}}}\bigr) 
  \cdot \ket{\psi_{\perp}}_{\mathrm{est,tagt}} \sone_{q_{\mathrm{pe,c}}}.
\end{align*}
Note that the second term is decomposed as
\begin{align*}
  \begin{split}
  &\alpha_1 \cdot (\matr{I}_{\mathrm{est}} \otimes \matr{U}_{\mathrm{pe,0}}^{\Herm} \otimes \matr{I}_{q_{\mathrm{pe,c}}})
  \cdot \bigl( \mUphamd{\cdot}{\cdot}^{\Herm} \otimes \matr{I}_{q_{\mathrm{pe,c}}}\bigr) 
  \cdot\ket{\psi_{\perp}}_{\mathrm{est,tagt}} \sone_{q_{\mathrm{pe,c}}}
  \\&\qquad= \alpha_1 \cdot \beta_{0}
  \cdot
  \svzero_{\mathrm{est},\mathrm{tagt}}\sone_{q_{\mathrm{pe,c}}}
  + \alpha_1 \cdot \beta_{1}
   \cdot \ket{\varphi_{1}}_{\mathrm{est},\mathrm{tagt}} \sone_{q_{\mathrm{pe,c}}},
  \end{split}
\end{align*}
where $ \beta_{0},\beta_{1} \in \sC $ are suitably defined scalars such that $ |\beta_{0}|^{2} + |\beta_{1}|^{2}  =1 $ and $ \ket{\varphi_{1}}_{\mathrm{est},\mathrm{tagt}} $
consists of all non-zero components. We know that $ \matr{U}_{\mathrm{error\_mitigation}} $ is triggered only for 
the component $ \ket{\varphi_{1}}_{\mathrm{est},\mathrm{tagt}}\sone_{q_{\mathrm{pe,c}}} \ket{0}_{q_{\mathrm{pe,aux}}} $ and transform it into 
$ \ket{\varphi_{1}}_{\mathrm{est},\mathrm{tagt}} \szero_{q_{\mathrm{pe,c}}} \ket{1}_{q_{\mathrm{pe,aux}}} $.
Therefore, after the error mitigation (Step~\ref{QPE_subroutine:Error_Mitigation}), the final state becomes
\begin{align*}
  &\ket{\psi_{\mathrm{final}}}  
  \\&=
  \matr{U}_{\mathrm{error\_mitigation}} \cdot (\ket{\psi_{\mathrm{uncom}}} 
  \ket{0}_{q_{\mathrm{pe,aux}}})
  \\&= 
  \alpha_0 \cdot (\matr{I}_{\mathrm{est}} \otimes \matr{U}_{\mathrm{pe,0}}^{\Herm} \otimes \matr{I}_{q_{\mathrm{pe,c}},q_{\mathrm{pe,aux}}}) 
  \cdot \bigl( \mUphamd{\cdot}{\cdot}^{\Herm} \otimes 
  \matr{I}_{q_{\mathrm{pe,c}},q_{\mathrm{pe,aux}}}\bigr) 
  \cdot (\svzero_{\mathrm{est}}  \ket{\phi_0}_{\mathrm{tagt}} 
  \szero_{q_{\mathrm{pe,c}},q_{\mathrm{pe,aux}}} )
  \\&\quad+ \alpha_1 \cdot \beta_{1}
  \cdot \ket{\varphi_{1}}_{\mathrm{est},\mathrm{tagt}}\szero_{q_{\mathrm{pe,c}}}
  \ket{1}_{q_{\mathrm{pe,aux}}}
  + 
  \alpha_1 \cdot \beta_{0}
  \cdot
  \svzero_{\mathrm{est},\mathrm{tagt}}\sone_{q_{\mathrm{pe,c}}}
  \ket{0}_{q_{\mathrm{pe,aux}}},
\end{align*}
where for the branch associated with $\sone_{q_{\mathrm{pe,c}}}$, only the state vector $\svzero_{\mathrm{est},\mathrm{tagt}} \ket{0}_{q_{\mathrm{pe,aux}}}$ appears, which proves Claim~\ref{thm:property_of_phase_estimation_apx:error_confinement}.

Now we prove Claim~\ref{thm:property_of_phase_estimation_apx:appro_reverse}. We are interested in the amplitude associated with the target state $\svzero_{\mathrm{est},\mathrm{tagt}} \sone_{q_{\mathrm{pe,c}}} \ket{0}_{q_{\mathrm{pe,aux}}}$. This component can only arise from the second term in $\ket{\psi_{\mathrm{con\_flip}}}$. The amplitude is
\begin{align*}
    &\hspace{-0.25 cm}
    \bigl( \bra{\mathbf{0}}_{\mathrm{est},\mathrm{tagt}} \bra{1}_{q_{\mathrm{pe,c}}}
    \bra{0}_{q_{\mathrm{pe,aux}}}
    \bigr) \ket{\psi_{\mathrm{final}}} 
    \nonumber\\
    &= \alpha_1 \cdot \bigl( \bra{\mathbf{0}}_{\mathrm{est},\mathrm{tagt}} 
    \bra{1}_{q_{\mathrm{pe,c}}} \bra{0}_{q_{\mathrm{pe,aux}}} \bigr)
    \quad\cdot \matr{U}_{\mathrm{error\_mitigation}} \cdot
    \bigl( \matr{I}_{\mathrm{est}} \otimes \matr{U}_{\mathrm{pe,0}}^{\Herm} 
    \otimes \matr{I}_{q_{\mathrm{pe,c}},q_{\mathrm{pe,aux}}} \bigr) 
    \nonumber\\&\quad\cdot \bigl( \mUphamd{\cdot}{\cdot}^{\Herm} \otimes 
    \matr{I}_{q_{\mathrm{pe,c}},q_{\mathrm{pe,aux}}} \bigr) \cdot \bigl(\ket{\psi_{\perp}}_{\mathrm{est,tagt}} 
     \ket{1}_{q_{\mathrm{pe,c}}} \ket{0}_{q_{\mathrm{pe,aux}}} \bigr) \nonumber \\
    &\overset{(a)}{=} \alpha_1 \cdot \bigl( \bra{\mathbf{0}}_{\mathrm{est},\mathrm{tagt}} 
    \bra{1}_{q_{\mathrm{pe,c}}} \bra{0}_{q_{\mathrm{pe,aux}}} \bigr)
    \cdot
    \bigl(\matr{I}_{\mathrm{est}} \otimes \matr{U}_{\mathrm{pe,0}}^{\Herm} \otimes \matr{I}_{q_{\mathrm{pe,c}},q_{\mathrm{pe,aux}}} \bigr) 
    \nonumber\\&\qquad \quad\!\cdot \bigl( \mUphamd{\cdot}{\cdot}^{\Herm} \otimes \matr{I}_{q_{\mathrm{pe,c}},q_{\mathrm{pe,aux}}} \bigr)
    \cdot (\ket{\psi_{\perp}}_{\mathrm{est,tagt}} 
     \ket{1}_{q_{\mathrm{pe,c}}} \ket{0}_{q_{\mathrm{pe,aux}}} ) \nonumber \\
    &= \alpha_1 \cdot \bra{\mathbf{0}}_{\mathrm{est},\mathrm{tagt}}
    \bigl( \matr{I}_{\mathrm{est}} \otimes \matr{U}_{\mathrm{pe,0}}^{\Herm} \bigr) \cdot \mUphamd{\cdot}{\cdot}^{\Herm} \cdot \ket{\psi_{\perp}}_{\mathrm{est,tagt}} 
    \nonumber \\
    &\overset{(b)}{=} \alpha_1 \cdot \bra{\psi_{\mathrm{before\_QPE}}}  \mUphamd{\cdot}{\cdot}^{\Herm} \cdot \ket{\psi_{\perp}}_{\mathrm{est,tagt}}
    \nonumber\\&\overset{(c)}{=} 
    \underbrace{\braket{\psi_{\mathrm{before\_QPE}}}{\psi_{\mathrm{before\_QPE}}}}_{=1} 
     - 
    \alpha_0 \cdot \underbrace{\bra{\psi_{\mathrm{before\_QPE}}} 
    \mUphamd{\cdot}{\cdot}^{\Herm}\cdot (\svzero_{\mathrm{est}} \ket{\phi_0}_{\mathrm{tagt}})
    }_{\overset{(d)}{=}\overline{\alpha_0} }
    \nonumber\\&= 1-|\alpha_0|^2,
\end{align*}
where step $(a)$ follow the definition of $ \matr{U}_{\mathrm{error\_mitigation}} $, \ie, this conditional flip operation acts as an identity matrix for the state vector 
$ \bra{\mathbf{0}}_{\mathrm{est},\mathrm{tagt}} \bra{1}_{q_{\mathrm{pe,c}}} \bra{0}_{q_{\mathrm{pe,aux}}} $,
where step $(b)$ follows from~\eqref{eqn:state_vector_oracle}, \ie, 
\begin{align*}
  \ket{\psi_{\mathrm{before\_QPE}}} 
  = \svzero_{\mathrm{est}} \ket{\psi_{\mathrm{pe,0}}}_{\mathrm{tagt}}
  =
  (\matr{I}_{\mathrm{est}} \otimes 
  \matr{U}_{\mathrm{pe,0}})
  \cdot \svzero_{\mathrm{est},\mathrm{tagt}},
\end{align*}
where step $(c)$ follows from~\eqref{eqn:property_of_state_psi_pe_org}, \ie,
\begin{align}  
  \ket{\psi_{\mathrm{before\_QPE}}}
  &= 
  \mUphamd{\cdot}{\cdot}^{\Herm} \cdot \ket{\psi_{\mathrm{pe,final}}} 
  \nonumber\\&= 
  \alpha_0 \cdot \mUphamd{\cdot}{\cdot}^{\Herm} \cdot (\svzero_{\mathrm{est}} \ket{\phi_0}_{\mathrm{tagt}})
  + \alpha_1 \cdot \mUphamd{\cdot}{\cdot}^{\Herm} \cdot \ket{\psi_{\perp}}_{\mathrm{est},\mathrm{tagt}},
  \label{eqn:decomposition_psi_before_QPE}
\end{align}
which implies
\begin{align*}
  &\alpha_1 \cdot \mUphamd{\cdot}{\cdot}^{\Herm} \cdot \ket{\psi_{\perp}}_{\mathrm{est},\mathrm{tagt}}
  =\ket{\psi_{\mathrm{before\_QPE}}} - 
  \alpha_0 \cdot \mUphamd{\cdot}{\cdot}^{\Herm} \cdot (\svzero_{\mathrm{est}} \ket{\phi_0}_{\mathrm{tagt}}),
\end{align*}
and where step $(d)$ follows from the decomposition in~\eqref{eqn:decomposition_psi_before_QPE}
and the fact that $\ket{\psi_{\perp}}$ is a normalized state in the subspace orthogonal to $\svzero_{\mathrm{est}}$.
The probability of the outcome $ \ket{\mathbf{0}}_{\mathrm{est},\mathrm{tagt}} \ket{1}_{q_{\mathrm{pe,c}}} \ket{0}_{q_{\mathrm{pe,aux}}} $ is
\begin{align*}
   \left| \left( \bra{\mathbf{0}}_{\mathrm{est},\mathrm{tagt}} \bra{1}_{q_{\mathrm{pe,c}}} 
   \bra{0}_{q_{\mathrm{pe,aux}}}\right) \ket{\psi_{\mathrm{final}}} \right|^{2} 
   =  (1-|\alpha_0|^2)^2
   \overset{(a)}{\geq} (1-\delta)^2,
\end{align*}
where step $(a)$ follows from the bound $|\alpha_0|^2 \le \delta$ proven in~\eqref{eqn:lower_bound_prob_phase_estimation}.
This proves Claim~\ref{thm:property_of_phase_estimation_apx:appro_reverse}.

This concludes the proof.

\section{Distributed Implementation of the Approximately Reversible Subroutine \texorpdfstring{in~\eqref{eqn: controll phase estimation_main_alg}}{}}
\label{apx:implementationof_Upe}

This section presents a rigorous, distributed implementation of the controlled, approximately reversible subroutine shown in~\eqref{eqn: controll phase estimation_main_alg}. This operation is the core computational primitive of the iterative algorithm in Algorithm~\ref{alg:dist-opt}. The implementation is designed for the quantum network model specified in Section~\ref{def: the quantum network setup}, which comprises a coordinator processor, $\QPUc$, and $\Nsfn$ worker processors, $(\QPU_{\nsfn})_{\nsfn \in [\Nsfn]}$. We assume the availability of LOCC, pre-shared EPR pairs for quantum teleportation, and classical channels for synchronization between processors.

Our exposition is structured hierarchically to ensure clarity. First, we detail the distributed protocols for the fundamental unitary operators that constitute the amplitude amplification operator $\matr{U}_{\mathrm{AA},\mathsf{N}, (\mathbf{z}_{\mathrm{prefix}}, 1)}$. Second, we formalize a general protocol for applying a controlled-unitary operation in a distributed manner. Finally, we assemble these components to construct the complete, approximately reversible subroutine, providing a full accounting of the required quantum and classical resources.

\subsection{Distributed Implementation of Core Unitary Operations}
\label{sec:distributed_implementation}

The subroutine relies on the repeated application of the amplitude amplification operator $\matr{U}_{\mathrm{AA},\mathsf{N}, (\mathbf{z}_{\mathrm{prefix}}, 1)}$, which, as defined in~\eqref{eqn: def of UAANz}, is a composition of three distinct unitary operations: the state preparation oracle $\matr{U}_{\mathrm{ini}}$, the marking oracle $\matr{U}_{\set{Z}_{\geq\mathbf{z}}}$, and the reflection operator $(2\svzero\bra{\mathbf{0}}-\matr{I})_{\mathrm{all}}$. The distributed implementation of each is detailed below.

\subsubsection{State Preparation Oracle (\texorpdfstring{$\matr{U}_{\mathrm{ini}}$}{U\_ini}) and its Inverse}
The state preparation oracle $\matr{U}_{\mathrm{ini}}$ and its inverse are implemented according to the distributed procedure in Algorithm~\ref{alg:u_ini_protocol}. This protocol, coordinated by $\QPUc$, initializes the system by creating a coherent superposition of states, each corresponding to a different boundary variable configuration. In summary, the protocol proceeds in four main stages:
\begin{enumerate}[label=1.\arabic*)]
    \item \textbf{Superposition and Entanglement at $\QPUc$:} The coordinator applies Hadamard gates to its boundary register $\vqB$ and uses CNOT gates to entangle this state with auxiliary registers $(\vqBauxn)_{\nsfn \in [\Nsfn]}$.
    \item \textbf{Distribution to Workers:} $\QPUc$ teleports each entangled auxiliary register $\vqBauxn$ to the corresponding worker processor $\QPU_{\nsfn}$.
    \item \textbf{Parallel Local Maximization:} All worker processors execute the quantum maximization algorithm $\matr{U}_{\max}$ from Appendix~\ref{apx:generic_maximization_subroutine} in parallel, controlled by the received boundary information.
    \item \textbf{Return of Results and Uncomputation:} Each worker teleports its result register $\vqpn{\nsfn}$ and the auxiliary register $\vqBauxn$ back to $\QPUc$, and then $\QPUc$ uncomputes the initial entanglement between $\vqB$ and auxiliary registers $(\vqBauxn)_{\nsfn \in [\Nsfn]}$.
\end{enumerate}
The inverse operation, $\matr{U}_{\mathrm{ini}}^{\Herm}$, is constructed by executing these stages in reverse order and conjugating each unitary operation.

\subsubsection{Marking Unitary Operator (\texorpdfstring{$\matr{U}_{\set{Z}_{\ge\mathbf{z}}}$}{U\_Z})}
The marking oracle $\matr{U}_{\set{Z}_{\ge\mathbf{z}}}$ applies a phase of $-1$ to computational basis states corresponding to the set $\set{Z}_{\ge\mathbf{z}}$, which is defined by the arithmetic condition $\sum_{\nsfn \in [\Nsfn]} \zdec(\mathbf{z}_{\mathrm{p},\nsfn}) \ge \zdec(\mathbf{z})$. This operation acts on the collection of local result registers $(\vqpn{\nsfn})_{\nsfn \in [\Nsfn]}$. In order to implement this multi-register comparison as a single coherent operation, the registers must be physically co-located. Therefore, the protocol is executed locally at the coordinator processor, $\QPUc$, after gathering the necessary registers from the workers.

\begin{enumerate}[label=2.\arabic*)]
    \item \textbf{Gather Result Registers:} For each worker $\nsfn \in [\Nsfn]$, the processor $\QPU_{\nsfn}$ teleports its $\Np$-qubit result register $\vqpn{\nsfn}$ to $\QPUc$. This step consumes $\Nsfn \cdot \Np$ EPR pairs and requires classical communication for teleportation corrections. 
    \item\label{step:Perform Quantum Arithmetic} \textbf{Perform Quantum Arithmetic:} $\QPUc$ executes a quantum arithmetic circuit that performs the following actions:
    \begin{enumerate}
        \item It computes the sum $\sum_{\nsfn} \zdec(\mathbf{z}_{\mathrm{p},\nsfn})$ from the registers $(\vqpn{\nsfn})_{\nsfn}$, storing the result in a local auxiliary register $q_{\mathrm{CZ},\rmc}$.

        \item It compares this sum with the classical value $\zdec(\mathbf{z})$. The binary result of this comparison (true or false) is stored in a single ancilla qubit, $q_{\mathrm{ancilla}}$.
    \end{enumerate}
    \item \textbf{Apply Conditional Phase:} $\QPUc$ applies a controlled-Z gate to the system, controlled by the ancilla qubit $q_{\mathrm{ancilla}}$. This imparts the required phase of $-1$ to the appropriate states.
    \item \textbf{Uncompute and Return:} $\QPUc$ reverses the quantum arithmetic circuit from Step~\ref{step:Perform Quantum Arithmetic} to return the ancillas $q_{\mathrm{CZ},\rmc}$ and $q_{\mathrm{ancilla}}$ to the $\svzero_{q_{\mathrm{CZ},\rmc},q_{\mathrm{ancilla}}}$ state. Subsequently, it teleports each register $\vqpn{\nsfn}$ back to its respective worker processor $\QPU_{\nsfn}$. This step consumes an additional $\Nsfn \cdot \Np$ EPR pairs.
\end{enumerate}

\subsubsection{Reflection Operator (\texorpdfstring{$(2\svzero\bra{\mathbf{0}}-\matr{I})_{\mathrm{all}}$}{2|0><0|-I})}
\label{sec:condition_flip}
The reflection operator $(2\svzero\bra{\mathbf{0}}-\matr{I})_{\mathrm{all}}$ acts as the identity on the all-zero state $\svzero$ of the combined registers $(\vqB,\vqloc,\vqaux)$ and applies a phase of $-1$ to every orthogonal state. This is a distributed operation requiring network-wide coordination. The protocol corrects a logical flaw in the original description and proceeds as follows.

\begin{enumerate}[label=3.\arabic*)]
    \item\label{control_Z_flip_teleport:1} \textbf{Local Zero-State Check:} Each processor ($\QPUc$ and each worker $\QPU_{\nsfn}$) checks if all its local data qubits participating in the reflection are in the $\ket{0}$ state. This is achieved by:
    \begin{enumerate}
        \item Applying Pauli-$\matr{X}$ gates to all participating local data qubits.
        \item Applying a multi-controlled Toffoli gate where the controls are all the aforementioned data qubits and the target is a local ancilla qubit ($q_{\mathrm{CZ},\rmc}$ at $\QPUc$, $q_{\mathrm{CZ},\nsfn}$ at $\QPU_{\nsfn}$), initially in the $\ket{0}$ state.
        \item Re-applying Pauli-$\matr{X}$ gates to the data qubits to return them to their original state.
    \end{enumerate}
    After this step, the local ancilla is flipped to $\ket{1}$ if and only if all local data qubits were in the $\ket{0}$ state.
    \item\label{step:dist_phase_flip:gather_coordinate} \textbf{Gather Ancillas:} Each worker $\QPU_{\nsfn}$ teleports its ancilla qubit $q_{\mathrm{CZ},\nsfn}$ to $\QPUc$.
    \item \textbf{Global Zero-State Check:} $\QPUc$ performs a multi-controlled Toffoli gate where the control qubits are its own ancilla $q_{\mathrm{CZ},\rmc}$ and all received ancillas from the workers. The target is the global ancilla qubit $q_{\mathrm{global\_zero}}$. This target qubit is flipped to $\ket{1}$ if and only if all local ancillas were in the $\ket{1}$ state, indicating that the entire distributed system is in the $\svzero$ state.

    \item\label{Apply Global Phase} \textbf{Apply Global Phase:} $\QPUc$ applies a $-\matr{Z}$ gate to the global ancilla $q_{\mathrm{global\_zero}}$. 
    This imparts phases of $+1$ and $-1$ to the state components with $q_{\mathrm{global\_zero}}=\ket{1}$ and $q_{\mathrm{global\_zero}}=\ket{0}$, corresponding to 
    the entire distributed system being in the $\svzero$ state and in an arbitrary non-zero state, respectively.
    \item\label{step:dist_phase_flip:uncomputation}\label{control_Z_flip_teleport:2} \textbf{Uncomputation:} The protocol is reversed to disentangle all ancillas and return them to the $\svzero$ state. This involves:
    \begin{enumerate}
        \item Reversing the global Toffoli gate from Step 3.3.
        \item Teleporting the worker ancillas from $\QPUc$ back to their respective $\QPU_{\nsfn}$.
        \item Reversing the local zero-state check from Step 3.1 at each processor.
    \end{enumerate}
\end{enumerate}

\subsection{Protocol for Distributed Controlled-Unitary Application}
\label{sec:control_unitary}

The QPE algorithm requires the application of controlled powers of a unitary operator. In order to structure this clearly and avoid repetition, we define a general protocol for applying a distributed unitary $\matr{U}$ for $k$ times, conditioned on a control qubit $q_c$ (from the register $\vqstn{\mathrm{c}}$) at $\QPUc$.

\subsubsection*{Protocol 1: Distributed Controlled-$U^k$ Application}
\begin{itemize}
    \item \textbf{Objective:} To implement the operation $C\text{-}\matr{U}^k$, where $\matr{U}$ is a distributed unitary, and the control qubit $q_c$ (from the register $\vqstn{\mathrm{c}}$) resides at $\QPUc$.
    \item \textbf{Procedure:}
    \begin{enumerate}
        \item \textbf{Control Information Distribution:} For each worker $\nsfn \in [\Nsfn]$:
        \begin{enumerate}
            \item At $\QPUc$, apply a CNOT gate with $q_c$ (from the register $\vqstn{\mathrm{c}}$) as the control and a local auxiliary qubit $q_{\mathrm{aux},\nsfn}$ (from the register $\vqstn{\mathrm{c},\nsfn}$) as the target.
            \item $\QPUc$ sends the qubit $q_{\mathrm{aux},\nsfn}$ to $\QPU_{\nsfn}$ via quantum teleportation. This qubit now serves as the local control at $\QPU_{\nsfn}$.
            \item After all workers receive their control qubits, $\QPUc$ broadcasts a classical synchronization signal, `CONTROL\_DISTRIBUTED'.
        \end{enumerate}
        \item \textbf{Synchronized Execution:} Upon receiving the signal, all processors execute the following loop for $j=1$ to $k$:
        \begin{enumerate}
            \item All processors (coordinator and workers) apply their respective components of the distributed unitary $\matr{U}$. Every local quantum gate within this distributed operation is made conditional on the corresponding local control qubit ($q_c$ at $\QPUc$, $q_{\mathrm{aux},\nsfn}$ at each $\QPU_{\nsfn}$).
            \item After completing one application of $C\text{-}\matr{U}$, all processors wait for a classical synchronization signal, `ITERATION\_COMPLETE', from $\QPUc$ before proceeding to the next iteration (if $j < k$).
        \end{enumerate}
        \item \textbf{Control Information Uncomputation:} After $k$ iterations, $\QPUc$ broadcasts a classical signal, `UNCOMPUTE\_CONTROL'. The operations of Step 1 are then executed in reverse to disentangle and reset all auxiliary control qubits to the $\svzero$ state.
    \end{enumerate}
\end{itemize}

\subsection{Distributed Implementation of the Approximately Reversible Subroutine}
\label{sec:Implementation_Subroutine}
We now assemble the preceding components to realize the full distributed implementation of the approximately reversible subroutine from~\eqref{eqn: controll phase estimation_main_alg}. The protocol coherently applies a phase estimation procedure conditional on the state of the prefix register $\vqpc(1:\np-1)$ at $\QPUc$.

\subsubsection*{Protocol 2: Distributed Implementation of \texorpdfstring{$\mUpha{\cdot}{\cdot}$}{U\_pe-sub}}
\begin{itemize}
    \item \textbf{Objective:} To implement the controlled unitary operation in~\eqref{eqn: controll phase estimation_main_alg} in a distributed and approximately reversible manner.
    \item \textbf{Procedure:}
    \begin{enumerate}
        \item \textbf{Forward Quantum Phase Estimation (QPE):}
        \begin{enumerate}
            \item \textbf{Initialize Estimation Register (at $\QPUc$):} The coordinator applies a Hadamard transform, $\matr{H}^{\otimes t(p_{\min,\rmc},\delta)}$, to its local estimation register $\vqstn{\mathrm{c}}$.
            \item \textbf{Apply Controlled Unitaries:} For each qubit index $t'$ from $1$ to $t(p_{\min,\rmc},\delta)$:
            \begin{itemize}
                \item The network executes \textbf{Protocol 1} with the $t'$-th qubit of $\vqstn{\mathrm{c}}$ as the control qubit $q_c$
                and the $t'$-th qubit of $ \vqstn{\mathrm{c},\nsfn} $ as the local auxiliary qubit.
                \item The target unitary $\matr{U}$ is $\matr{U}_{\mathrm{AA},\mathsf{N}, (\mathbf{z}_{\mathrm{prefix}}, 1)}$. Its application is additionally controlled by the state of the prefix register $\vqpc(1:\np-1)$. 
                Note that only the implementation of $\matr{U}_{\set{Z}_{\ge\mathbf{z}}}$ needs this control register and $\matr{U}_{\set{Z}_{\ge\mathbf{z}}}$ is implemented by $\QPUc$, \ie, $\QPUc$ does not need to teleport any auxiliary qubits related to 
                $\vqpc(1:\np-1)$ to any worker QPU.            
                
                \item The number of repetitions is $k=2^{t'-1}$.
            \end{itemize}
        \end{enumerate}
        \item \textbf{Local Phase Readout (at $\QPUc$):}
        \begin{enumerate}
            \item \textbf{Inverse Quantum Fourier Transform:} $\QPUc$ applies the inverse QFT to the estimation register $\vqstn{\mathrm{c}}$.
            \item \textbf{Conditional Flip:} $\QPUc$ applies a multi-controlled NOT gate that flips the target result qubit $\qpc(\np)$ if and only if the state of the estimation register $\vqstn{\mathrm{c}}$ is not $\svzero$. This is implemented by applying Pauli-$\matr{X}$ gates to all qubits in $\vqstn{\mathrm{c}}$, followed by a standard multi-controlled Toffoli gate targeting $\qpc(\np)$, and finally re-applying the Pauli-$\matr{X}$ gates to $\vqstn{\mathrm{c}}$.
        \end{enumerate}
        \item \textbf{Uncomputation for Reversibility:} In order to ensure the subroutine is approximately reversible, the operations of the forward QPE are undone.
        \begin{enumerate}
            \item \textbf{Reverse Phase Readout:} $\QPUc$ applies the forward QFT to $\vqstn{\mathrm{c}}$ (reversing Step 2(a)).
            \item \textbf{Reverse QPE:} For each qubit index $t'$ from $t(p_{\min,\rmc},\delta)$ down to $1$ (in reverse order), the network executes the inverse of \textbf{Protocol 1}.

            \item \textbf{Reverse Initialization (at $\QPUc$):} The coordinator applies a Hadamard transform, $\matr{H}^{\otimes t(p_{\min,\rmc},\delta)}$, to its local estimation register $\vqstn{\mathrm{c}}$.
        \end{enumerate}
        
                \item \textbf{Error mitigation}: For the first invocation, \ie, for $\np=1$, the ancilla qubit $q_{\mathrm{ancilla}}$ is initialized to $\ket{0}$ and is returned to $\ket{0}$ after every application of $\matr{U}_{\set{Z}_{\ge\mathbf{z}}}$ by Step~2.4 of the marking-oracle implementation. By Theorem~\ref{thm:property_of_phase_estimation_apx}, for $\np \geq 2$, the same reset property is guaranteed on the clean component inherited from the previous invocations, namely the component on which the prefix register $\vqpc(1:\np-1)$ stores the correct prefix and all workspace qubits have been restored to their zero states after implementing $\mUpha{\cdot}{\cdot}$. On this clean component, $q_{\mathrm{ancilla}}$ is again available in the state $\ket{0}$ and can therefore be reused as the auxiliary qubit required in the present step. The quantum network then executes the distributed zero-state-check procedure from Section~\ref{sec:condition_flip} on all qubits involved in Protocol~2 except $\qpc(\np)$ and $q_{\mathrm{ancilla}}$. The procedure is identical to that in Section~\ref{sec:condition_flip}, except that Step~\ref{Apply Global Phase} is replaced by the following operation:
        \begin{enumerate}
            \item[3.4)] Conditioned on $q_{\mathrm{global\_zero}}$ being in the state $\ket{0}$, the coordinator $\QPUc$ applies a SWAP gate to the pair $\bigl(\qpc(\np), q_{\mathrm{ancilla}}\bigr)$. On the clean component entering this step, where $q_{\mathrm{ancilla}}$ is in the state $\ket{0}$, this maps $\ket{1}_{\qpc(\np)}\ket{0}_{q_{\mathrm{ancilla}}}$ to $\ket{0}_{\qpc(\np)}\ket{1}_{q_{\mathrm{ancilla}}}$, while leaving $\ket{0}_{\qpc(\np)}\ket{0}_{q_{\mathrm{ancilla}}}$ unchanged. Equivalently, this step is triggered precisely when the joint state of all qubits participating in the zero-state test is not the all-zero state.
        \end{enumerate}
    \end{enumerate}
\end{itemize}
Combining Item~\ref{thm:property_of_phase_estimation_apx:pefect_reverse} and Item~\ref{thm:property_of_phase_estimation_apx:appro_reverse} of Theorem~\ref{thm:property_of_phase_estimation_apx}, each execution of the phase-testing primitive $\mUpha{\cdot}{\cdot}$ admits a clean component in which the decision qubit stores the correct outcome and all workspace qubits, including $q_{\mathrm{ancilla}}$, are returned to their zero states. For a single execution, this clean component has norm at least $1-\delta$, and hence probability at least $(1-\delta)^2$.

Therefore, $q_{\mathrm{ancilla}}$ can be reused across successive executions of~\eqref{eqn: controll phase estimation_main_alg} on the clean component. More precisely, after the first $\np$ invocations, the component on which the prefix register $\vqpc(1:\np)$ stores the correct $\np$-bit prefix and $q_{\mathrm{ancilla}}$ is restored to $\ket{0}$ after each invocation has norm at least $(1-\delta)^{\np}$ and therefore probability at least $(1-\delta)^{2\np}$. The same argument applies to the other workspace registers that are uncomputed on the clean component.

This sequence of protocols provides a complete and feasible distributed implementation of the core subroutine. The explicit enumeration of steps, including classical synchronization signals and the handling of distributed control, demonstrates that the proposed algorithm is well-defined within the established model of distributed quantum computation.



\section{Proof of Theorem~\ref{thm: overall_algorithm_properties}}
\label{apx: proof_of_thm_overall_algorithm_properties}

This appendix provides a rigorous proof for the three performance guarantees of the distributed quantum algorithm asserted in Theorem~\ref{thm: overall_algorithm_properties}. We will establish each of the three claims of the theorem in turn: the query complexity, the success probability and outcome correctness, and the approximation precision.

The core of the algorithm is an iterative application of a phase-testing subroutine to a sequence of amplitude amplification operators. This structure is used to perform a binary search for the optimal value $\gmax$. The proof's foundation therefore rests on analyzing the spectral properties of the amplitude amplification operators and leveraging the performance guarantees of the underlying phase-testing subroutine, as detailed in Appendix~\ref{apx:reversible_phase_estimation}.

\subsection{Proof of Item~\ref{prop: overall_query_complexity: final} (Query Complexity to \texorpdfstring{$\matr{U}_{\mathrm{ini}}$}{U\_ini})}

The query complexity is determined by the total number of applications of the state preparation oracle $\matr{U}_{\mathrm{ini}}$ and its inverse $\matr{U}_{\mathrm{ini}}^{\Herm}$. The algorithm proceeds through an iterative loop that executes for $\Np$ steps, one for each bit of the final approximation.

In each iteration $\np \in [\Np]$, Step~\ref{step:threshold-test} of Algorithm~\ref{alg:dist-opt} executes the approximately reversible phase-testing subroutine $\mUpha{\cdot}{\cdot, \cdot}$ from Definition~\ref{def:reversible_subroutine_apx}. The core unitary operator for this subroutine is $\matr{U}_{\mathrm{o}} \defeq \matr{U}_{\mathrm{AA},\mathsf{N}, ( \mathbf{z}_{\mathrm{prefix}}, 1) }$, and the initial state preparation is performed by $\matr{U}_{\mathrm{pe,0}} = \matr{U}_{\mathrm{ini}}$.

From Definition~\ref{def: the set of operators UAA}, a single application of the operator $\matr{U}_{\mathrm{o}}$ requires one application of $\matr{U}_{\mathrm{ini}}$ and one application of its inverse $\matr{U}_{\mathrm{ini}}^{\Herm}$. According to Theorem~\ref{thm:property_of_phase_estimation_apx} (Item~\ref{thm:property_of_phase_estimation_apx:query_complexity}), a single execution of the subroutine $\mUpha{\matr{U}_{\mathrm{pe,0}}, \matr{U}_{\mathrm{o}}}{p_{\min,\mathrm{c}}, \delta}$ requires $2^{t(p_{\min,\mathrm{c}}, \delta)+1} - 2$ controlled applications of $\matr{U}_{\mathrm{o}}$ and one application of $ \matr{U}_{\mathrm{pe,0}} $ and its inverse.

Therefore, the total number of applications of the $\matr{U}_{\mathrm{ini}}$ and its inverse per iteration index by $\np \in [\Np]$ is given by
\begin{align*}
    2 \cdot \left( 2^{t(p_{\min,\mathrm{c}}, \delta)+1} - 2 \right) + 2 = 2^{t(p_{\min,\mathrm{c}}, \delta)+2} - 2.
\end{align*}
Summing this cost over all $\Np$ iterations gives the total query complexity to the $\matr{U}_{\mathrm{ini}}$ oracle and its inverse as
\begin{align*}
    \Np \cdot \left( 2^{t(p_{\min,\mathrm{c}}, \delta)+2} - 2 \right).
\end{align*}
This completes the proof of the query complexity.

\subsection{Proof of Item~\ref{prop: overall_success_probability: final} (Success Probability and Outcome)}
\label{apx:proof_success_probability}

The proof of the success probability and outcome correctness relies on the properties of the phase-testing subroutine from Theorem~\ref{thm:property_of_phase_estimation_apx}. In order to apply this theorem, we first establish two key lemmas regarding the spectral properties of the amplitude amplification operator $\matr{U}_{\mathrm{AA},\mathsf{N}, \vz }$.

\begin{lemma}[Eigen-decomposition for the amplitude-amplification operator]
\label{lem:spectral_properties_U_AA}
Fix $\np \in [\Np]$ and a prefix threshold $\vz \in \{0,1\}^{\np}$. Let
$\matr{U}_{\mathrm{o}} = \matr{U}_{\mathrm{AA},\mathsf{N}, \vz}$
be the amplitude-amplification operator from Definition~\ref{def: the set of operators UAA}. We define the projector onto the marked subspace by
\begin{align*}
  \matr{\Pi}_{\vz}
  \defeq
  \matr{I}_{\mathrm{rem}} \otimes \matr{P}_{\set{Z}_{\ge \vz}},
\end{align*}
and define the corresponding weight of the prepared state by
\begin{align*}
  p_{\vz}
  \defeq
  \left\|
  \matr{\Pi}_{\vz} \cdot \ket{\psi_{\rmini}}
  \right\|_2^2
  \in [0,1].
\end{align*}
We assume that $0 < p_{\vz} < 1$, and define
\[
  \phi_{p_{\vz}} \defeq 2\arcsin\bigl(\sqrt{p_{\vz}}\bigr) \in (0,\pi).
\]
Then $\matr{U}_{\mathrm{o}}$ has two orthonormal eigenvectors
$\ket{\phi_{\rmini}}_{\vz}$ and
$\overline{\ket{\phi_{\rmini}}_{\vz}}$
with eigenvalues $\e^{\pm \imagunit \phi_{p_{\vz}}}$, respectively, and the prepared state satisfies
\begin{align*}
  \ket{\psi_{\rmini}}
  =
  \frac{1}{\sqrt{2}}
  \left(
    \e^{ \imagunit \phi_{p_{\vz}}/2 } \cdot \ket{\phi_{\rmini}}_{\vz}
    +
    \e^{ -\imagunit \phi_{p_{\vz}}/2 } \cdot \overline{\ket{\phi_{\rmini}}_{\vz}}
  \right).
\end{align*}
\end{lemma}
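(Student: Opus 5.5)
The plan is the standard two-dimensional (Grover) subspace argument of~\cite{Brassard1997,brassard2000}, specialised to the operator in~\eqref{eqn: def of UAANz}. First I rewrite the two factors of $\matr{U}_{\mathrm{o}}$ as reflections on the full workspace $(\vqB,\vqloc,\vqaux)$.

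\begin{itemize}
\item By~\eqref{eqn: generated state vector}, $\matr U_{\mathrm{ini}}\ket{\mathbf 0}$ is the prepared state. I write $\ket{\psi_{\rmini}}$ for this full vector, with $\ket{\mathbf 0}_{\vqauxc}$ absorbed.
\item Conjugating the zero-reflection then gives $\matr U_{\mathrm{ini}}(2\ket{\mathbf 0}\bra{\mathbf 0}-\matr I)\matr U_{\mathrm{ini}}^{\Herm}=2\ket{\psi_{\rmini}}\bra{\psi_{\rmini}}-\matr I \defeq \matr S_\psi$.
\item Definition~\ref{def:marking_operators} gives $\matr I_{\mathrm{rem}}\otimes\matr U_{\set Z_{\ge\vz}}=\matr I-2\matr\Pi_{\vz}\defeq \matr S_{\Pi}$.
\item Hence $\matr U_{\mathrm{o}}=\matr S_\psi\matr S_\Pi$.
\end{itemize}

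Next I set $\theta\defeq\arcsin(\sqrt{p_{\vz}})\in(0,\pi/2)$, so that $\phi_{p_{\vz}}=2\theta$. Since $0<p_{\vz}<1$, both projections of $\ket{\psi_{\rmini}}$ are nonzero, and I define the orthonormal pair
\[
\ket{g}\defeq \matr\Pi_{\vz}\ket{\psi_{\rmini}}/\sqrt{p_{\vz}},\qquad \ket{b}\defeq(\matr I-\matr\Pi_{\vz})\ket{\psi_{\rmini}}/\sqrt{1-p_{\vz}}.
\]
Then $\ket{\psi_{\rmini}}=\sin\theta\ket g+\cos\theta\ket b$.

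The subspace $\mathrm{span}\{\ket g,\ket b\}$ is invariant under both reflections. Indeed, $\matr S_\Pi$ acts on it as $\ket g\mapsto-\ket g$, $\ket b\mapsto\ket b$, and $\matr S_\psi$ only involves $\ket{\psi_{\rmini}}$, which lies in the span. So on this span, in the ordered basis $(\ket b,\ket g)$, $\matr U_{\mathrm{o}}$ is a $2\times2$ real matrix. I will compute it explicitly as the product of two reflections whose axes meet at angle $\theta$. The result should be the rotation
\[
\begin{pmatrix}\cos 2\theta & -\sin 2\theta\\ \sin 2\theta & \cos 2\theta\end{pmatrix}.
\]

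A rotation by $2\theta$ has eigenvalues $\e^{\pm\imagunit 2\theta}=\e^{\pm\imagunit\phi_{p_{\vz}}}$. Its eigenvectors are $(\ket b\mp \imagunit\ket g)/\sqrt2$, which are orthonormal and complex conjugates of each other in the real basis $(\ket b,\ket g)$. I set
\[
\ket{\phi_{\rmini}}_{\vz}\defeq(\ket b-\imagunit\ket g)/\sqrt2,\qquad \overline{\ket{\phi_{\rmini}}_{\vz}}\defeq(\ket b+\imagunit\ket g)/\sqrt2,
\]
so the overline is read as conjugation of coefficients with respect to $(\ket b,\ket g)$. A direct expansion then gives the decomposition claimed in the statement:
\[
\tfrac1{\sqrt2}\bigl(\e^{\imagunit\theta}\ket{\phi_{\rmini}}_{\vz}+\e^{-\imagunit\theta}\overline{\ket{\phi_{\rmini}}_{\vz}}\bigr)=\cos\theta\ket b+\sin\theta\ket g=\ket{\psi_{\rmini}}.
\]

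The main obstacle is sign bookkeeping. The rotation direction, and hence which eigenvector receives $\e^{+\imagunit\phi}$ and which phase $\e^{\pm\imagunit\phi/2}$ goes with it, depends on the order of the factors and on the overall sign conventions of the two reflections. I will fix these by explicitly computing $\matr S_\psi\matr S_\Pi\ket b$ and $\matr S_\psi\matr S_\Pi\ket g$. If the rotation comes out with the opposite orientation, I swap the labels of the two eigenvectors, which leaves the displayed decomposition invariant. A further check is that the $\ket{\mathbf 0}_{\vqauxc}$ factor and the dirty ancillas $\vqauxd$ are carried inside $\ket{\psi_{\rmini}}$, so that $\matr\Pi_{\vz}$ commutes with them and the invariance argument is unaffected.
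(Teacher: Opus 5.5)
Your proposal is correct and follows the paper's proof step for step: the factorization $\matr{U}_{\mathrm{o}}=\bigl(2\ket{\psi_{\rmini}}\bra{\psi_{\rmini}}-\matr{I}\bigr)\bigl(\matr{I}-2\matr{\Pi}_{\vz}\bigr)$, the good/bad orthonormal pair, the rotation by $\phi_{p_{\vz}}$ in the ordered basis $(\ket{b},\ket{g})$, and the eigenvectors $(\ket{b}\mp\imagunit\ket{g})/\sqrt{2}$ with eigenvalues $\e^{\pm\imagunit\phi_{p_{\vz}}}$ all match. One minor caveat is that your fallback of merely swapping the two labels would not preserve the displayed decomposition: it yields $\cos\theta\ket{b}-\sin\theta\ket{g}$ unless you also re-phase the eigenvectors. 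The explicit computation gives exactly the orientation you predicted, however, so the fallback is never needed.
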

\begin{proof}
By Definition~\ref{def: the set of operators UAA} and~\eqref{eqn: def of UAANz}, it holds that
\begin{align*}
  \matr{U}_{\mathrm{o}}
  &=
  \matr U_{\mathrm{ini}}
  \cdot\bigl(2\ket{\mathbf 0}\bra{\mathbf 0}-\matr I\bigr)_{\mathrm{all}}
  \cdot \matr U_{\mathrm{ini}}^{\Herm}
  \cdot\bigl(\matr I_{\mathrm{rem}}\otimes \matr U_{\set{Z}_{\ge \vz}}\bigr).
\end{align*}
We define
\begin{align*}
  \matr{R}_{\psi}
  &\defeq
  \matr U_{\mathrm{ini}}
  \cdot\bigl(2\ket{\mathbf 0}\bra{\mathbf 0}-\matr I\bigr)_{\mathrm{all}}
  \cdot \matr U_{\mathrm{ini}}^{\Herm},
  \\
  \matr{R}_{\vz}
  &\defeq
  \matr I_{\mathrm{rem}}\otimes \matr U_{\set{Z}_{\ge \vz}}.
\end{align*}
Then $\matr{U}_{\mathrm{o}}=\matr{R}_{\psi}\matr{R}_{\vz}$, where
\begin{align*}
  \matr{R}_{\psi}
  &=
  \matr U_{\mathrm{ini}}
  \cdot\Bigl(2\ket{\mathbf 0}\bra{\mathbf 0}\Bigr)_{\mathrm{all}}
  \cdot \matr U_{\mathrm{ini}}^{\Herm}
  - \matr I
  \overset{(a)}{=}
  2\ket{\psi_{\rmini}}\bra{\psi_{\rmini}} - \matr I,
\end{align*}
and step $(a)$ follows from the state-preparation specification~\eqref{eqn: generated state vector} (absorbing the always-zero clean ancillas into $\ket{\psi_{\rmini}}$) and the unitarity of $\matr U_{\mathrm{ini}}$. Moreover, since $\matr U_{\set{Z}_{\ge \vz}}$ applies a phase of $-1$ exactly to the marked computational-basis states, it holds that
\begin{align}
  \matr{R}_{\vz}
  =
  \matr I - 2\matr{\Pi}_{\vz}.
  \label{eqn: express Rz in terms of Pi z}
\end{align}

Since $0 < p_{\vz} < 1$, we define
\begin{align*}
  \ket{\psi_{\vz}^{\mathrm{good}}}
  &\defeq
  \frac{\matr{\Pi}_{\vz} \cdot \ket{\psi_{\rmini}}}{\sqrt{p_{\vz}}},
  \qquad
  \ket{\psi_{\vz}^{\mathrm{bad}}}
  \defeq
  \frac{(\matr I-\matr{\Pi}_{\vz}) \cdot \ket{\psi_{\rmini}}}{\sqrt{1-p_{\vz}}}.
\end{align*}
These two vectors are normalized and orthogonal because $\matr{\Pi}_{\vz}$ is a projector and $\matr{\Pi}_{\vz}(\matr I-\matr{\Pi}_{\vz})=\matr 0$. Furthermore, it holds that
\begin{align}
  \ket{\psi_{\rmini}}
  &=
  \matr{\Pi}_{\vz} \cdot \ket{\psi_{\rmini}} + (\matr I-\matr{\Pi}_{\vz}) \cdot \ket{\psi_{\rmini}}
  \nonumber\\&=
  \sqrt{p_{\vz}} \cdot \ket{\psi_{\vz}^{\mathrm{good}}} + \sqrt{1-p_{\vz}}\cdot \ket{\psi_{\vz}^{\mathrm{bad}}}
  \nonumber\\&=
  \sin\bigl(\phi_{p_{\vz}}/2\bigr) \cdot \ket{\psi_{\vz}^{\mathrm{good}}}
  +
  \cos\bigl(\phi_{p_{\vz}}/2\bigr) \cdot \ket{\psi_{\vz}^{\mathrm{bad}}},
  \label{eqn:psi_ini_good_bad_decomp}
\end{align}
where the last equality uses $\phi_{p_{\vz}} = 2\arcsin(\sqrt{p_{\vz}})$.

We now analyze the action of $\matr{U}_{\mathrm{o}}$ on the two-dimensional subspace
\[
  \mathcal{K}_{\vz}
  \defeq
  \mathrm{span}\Bigl\{
    \ket{\psi_{\vz}^{\mathrm{bad}}},
    \ket{\psi_{\vz}^{\mathrm{good}}}
  \Bigr\}.
\]
By~\eqref{eqn: express Rz in terms of Pi z}, it holds that
\begin{align*}
  \matr{R}_{\vz} \cdot \ket{\psi_{\vz}^{\mathrm{bad}}}
  &= \ket{\psi_{\vz}^{\mathrm{bad}}},
  \qquad
  \matr{R}_{\vz} \cdot \ket{\psi_{\vz}^{\mathrm{good}}}
  = -\ket{\psi_{\vz}^{\mathrm{good}}}.
\end{align*}
Using $\matr{R}_{\psi}=2\ket{\psi_{\rmini}}\bra{\psi_{\rmini}}-\matr I$ together with~\eqref{eqn:psi_ini_good_bad_decomp}, we obtain
\begin{align*}
  \matr{R}_{\psi} \cdot \ket{\psi_{\vz}^{\mathrm{bad}}}
  &\overset{(a)}{=}
  \Bigl(2\ket{\psi_{\rmini}}\bra{\psi_{\rmini}}-\matr I\Bigr)
  \cdot\ket{\psi_{\vz}^{\mathrm{bad}}}
  \\&\overset{(b)}{=}
  2\ket{\psi_{\rmini}}\braket{\psi_{\rmini}}{\psi_{\vz}^{\mathrm{bad}}} - \ket{\psi_{\vz}^{\mathrm{bad}}}
  \\&\overset{(c)}{=}
  2 \cos(\phi_{p_{\vz}}/2) \cdot \ket{\psi_{\rmini}} - \ket{\psi_{\vz}^{\mathrm{bad}}}
  \\&\overset{(d)}{=}
  \Bigl(2\cos^2(\phi_{p_{\vz}}/2)-1\Bigr) \cdot \ket{\psi_{\vz}^{\mathrm{bad}}}
  +2\sin(\phi_{p_{\vz}}/2)  \cdot \cos(\phi_{p_{\vz}}/2) \cdot \ket{\psi_{\vz}^{\mathrm{good}}}
  \\&\overset{(e)}{=}
  \cos\bigl(\phi_{p_{\vz}}\bigr)\cdot \ket{\psi_{\vz}^{\mathrm{bad}}}
  +\sin\bigl(\phi_{p_{\vz}}\bigr) \cdot \ket{\psi_{\vz}^{\mathrm{good}}},
\end{align*}
where step $(a)$ follows from the definition of $\matr{R}_{\psi}$, where step $(b)$ follows from linearity, where step $(c)$ follows from~\eqref{eqn:psi_ini_good_bad_decomp} and orthonormality of $\ket{\psi_{\vz}^{\mathrm{bad}}}$ and $\ket{\psi_{\vz}^{\mathrm{good}}}$, where step $(d)$ follows from substituting~\eqref{eqn:psi_ini_good_bad_decomp}, and where step $(e)$ follows from the trigonometric identities $2\cos^2(\theta)-1=\cos(2\theta)$ and $2\sin(\theta)\cos(\theta)=\sin(2\theta)$. Similarly, we have
\begin{align*}
  \matr{R}_{\psi}\ket{\psi_{\vz}^{\mathrm{good}}}
  &\overset{(a)}{=}
  2\ket{\psi_{\rmini}}\braket{\psi_{\rmini}}{\psi_{\vz}^{\mathrm{good}}} - \ket{\psi_{\vz}^{\mathrm{good}}}
  \\&\overset{(b)}{=}
  2 \sin(\phi_{p_{\vz}}/2) \cdot \ket{\psi_{\rmini}} - \ket{\psi_{\vz}^{\mathrm{good}}}
  \\&\overset{(c)}{=}
  \sin\bigl(\phi_{p_{\vz}}\bigr) \cdot \ket{\psi_{\vz}^{\mathrm{bad}}}
  -\cos\bigl(\phi_{p_{\vz}}\bigr) \cdot \ket{\psi_{\vz}^{\mathrm{good}}},
\end{align*}
where step $(a)$ follows from the same expansion, where step $(b)$ follows from~\eqref{eqn:psi_ini_good_bad_decomp}, and where step $(c)$ follows from substitution and the same trigonometric identities. Therefore, it holds that
\begin{align}
  \matr{U}_{\mathrm{o}}\ket{\psi_{\vz}^{\mathrm{bad}}}
  &= \cos\bigl(\phi_{p_{\vz}}\bigr) \cdot \ket{\psi_{\vz}^{\mathrm{bad}}}
  +\sin\bigl(\phi_{p_{\vz}}\bigr) \cdot \ket{\psi_{\vz}^{\mathrm{good}}},
  \label{eqn:Uo_action_bad}
  \\
  \matr{U}_{\mathrm{o}} \cdot \ket{\psi_{\vz}^{\mathrm{good}}}
  &= -\sin\bigl(\phi_{p_{\vz}}\bigr) \cdot \ket{\psi_{\vz}^{\mathrm{bad}}}
  +\cos\bigl(\phi_{p_{\vz}}\bigr) \cdot \ket{\psi_{\vz}^{\mathrm{good}}}.
  \label{eqn:Uo_action_good}
\end{align}
Hence, $\mathcal{K}_{\vz}$ is invariant under $\matr{U}_{\mathrm{o}}$, and in the ordered orthonormal basis
\[
  \Bigl(
    \ket{\psi_{\vz}^{\mathrm{bad}}},
    \ket{\psi_{\vz}^{\mathrm{good}}}
  \Bigr),
\]
the restriction of $\matr{U}_{\mathrm{o}}$ to $\mathcal{K}_{\vz}$ is the rotation matrix
\[
  \begin{pmatrix}
    \cos\bigl(\phi_{p_{\vz}}\bigr) & -\sin\bigl(\phi_{p_{\vz}}\bigr) \\
    \sin\bigl(\phi_{p_{\vz}}\bigr) & \cos\bigl(\phi_{p_{\vz}}\bigr)
  \end{pmatrix}.
\]

We define
\begin{align*}
  \ket{\phi_{\rmini}}_{\vz}
  &\defeq
  \frac{1}{\sqrt{2}} \cdot \Bigl(\ket{\psi_{\vz}^{\mathrm{bad}}} - \imagunit \cdot \ket{\psi_{\vz}^{\mathrm{good}}}\Bigr),
  \\
  \overline{\ket{\phi_{\rmini}}_{\vz}}
  &\defeq
  \frac{1}{\sqrt{2}} \cdot \Bigl(\ket{\psi_{\vz}^{\mathrm{bad}}} + \imagunit \cdot \ket{\psi_{\vz}^{\mathrm{good}}}\Bigr).
\end{align*}
Since $\ket{\psi_{\vz}^{\mathrm{bad}}}$ and $\ket{\psi_{\vz}^{\mathrm{good}}}$ are orthonormal, the vectors $\ket{\phi_{\rmini}}_{\vz}$ and $\overline{\ket{\phi_{\rmini}}_{\vz}}$ are also orthonormal. Using~\eqref{eqn:Uo_action_bad}--\eqref{eqn:Uo_action_good}, we compute
\begin{align*}
  \matr{U}_{\mathrm{o}}\ket{\phi_{\rmini}}_{\vz}
  &\overset{(a)}{=}
  \frac{1}{\sqrt{2}} \cdot \Bigl(\matr{U}_{\mathrm{o}}\ket{\psi_{\vz}^{\mathrm{bad}}} - \imagunit \cdot \matr{U}_{\mathrm{o}}\ket{\psi_{\vz}^{\mathrm{good}}}\Bigr)
  \\&\overset{(b)}{=}
  \frac{1}{\sqrt{2}} \cdot \Bigl(
  \bigl(\cos(\phi_{p_{\vz}})+\imagunit \cdot \sin(\phi_{p_{\vz}})
  \bigr) \cdot \ket{\psi_{\vz}^{\mathrm{bad}}}
  - \imagunit \cdot \bigl(\cos(\phi_{p_{\vz}})+\imagunit\sin(\phi_{p_{\vz}})\bigr) \cdot \ket{\psi_{\vz}^{\mathrm{good}}}
  \Bigr)
  \\&\overset{(c)}{=}
  \e^{\imagunit \phi_{p_{\vz}}}
  \cdot
  \frac{1}{\sqrt{2}} \cdot \Bigl(\ket{\psi_{\vz}^{\mathrm{bad}}} - \imagunit \ket{\psi_{\vz}^{\mathrm{good}}}\Bigr)
  \\&=
  \e^{\imagunit \phi_{p_{\vz}}} \cdot \ket{\phi_{\rmini}}_{\vz},
\end{align*}
where step $(a)$ follows from linearity, where step $(b)$ follows from substituting~\eqref{eqn:Uo_action_bad}--\eqref{eqn:Uo_action_good}, and where step $(c)$ follows from $\cos\alpha+\imagunit\sin\alpha=\e^{\imagunit\alpha}$ with $\alpha=\phi_{p_{\vz}}$. Similarly, it holds that
\begin{align*}
  \matr{U}_{\mathrm{o}}\overline{\ket{\phi_{\rmini}}_{\vz}}
  &\overset{(a)}{=}
  \frac{1}{\sqrt{2}} \cdot \Bigl(\matr{U}_{\mathrm{o}}\ket{\psi_{\vz}^{\mathrm{bad}}} + \imagunit \cdot \matr{U}_{\mathrm{o}}\ket{\psi_{\vz}^{\mathrm{good}}}\Bigr)
  \\&\overset{(b)}{=}
  \frac{1}{\sqrt{2}} \cdot \Bigl(
  \bigl(\cos(\phi_{p_{\vz}})-\imagunit \cdot \sin(\phi_{p_{\vz}})
  \bigr) \cdot \ket{\psi_{\vz}^{\mathrm{bad}}}
  + \imagunit \cdot \bigl(\cos(\phi_{p_{\vz}})-\imagunit\sin(\phi_{p_{\vz}})\bigr) \cdot \ket{\psi_{\vz}^{\mathrm{good}}}
  \Bigr)
  \\&\overset{(c)}{=}
  \e^{-\imagunit \phi_{p_{\vz}}}
  \cdot
  \frac{1}{\sqrt{2}} \cdot \Bigl(\ket{\psi_{\vz}^{\mathrm{bad}}} + \imagunit \ket{\psi_{\vz}^{\mathrm{good}}}\Bigr)
  \\&=
  \e^{-\imagunit \phi_{p_{\vz}}} \cdot \overline{\ket{\phi_{\rmini}}_{\vz}},
\end{align*}
where step $(a)$ follows from linearity, where step $(b)$ follows from substituting~\eqref{eqn:Uo_action_bad}--\eqref{eqn:Uo_action_good}, and where step $(c)$ follows from $\cos\alpha-\imagunit\sin\alpha=\e^{-\imagunit\alpha}$ with $\alpha=\phi_{p_{\vz}}$.
Finally, it holds that
\begin{align*}
  \frac{1}{\sqrt{2}}&
  \left(
    \e^{\imagunit\phi_{p_{\vz}}/2} \cdot \ket{\phi_{\rmini}}_{\vz}
    +
    \e^{-\imagunit\phi_{p_{\vz}}/2} \cdot \overline{\ket{\phi_{\rmini}}_{\vz}}
  \right)
  \\&\overset{(a)}{=}
  \frac{1}{2}
  \Biggl(
    \e^{\imagunit\phi_{p_{\vz}}/2}\cdot \Bigl(\ket{\psi_{\vz}^{\mathrm{bad}}}-\imagunit\ket{\psi_{\vz}^{\mathrm{good}}}\Bigr)
    +
    \e^{-\imagunit\phi_{p_{\vz}}/2} \cdot \Bigl(\ket{\psi_{\vz}^{\mathrm{bad}}}+\imagunit\ket{\psi_{\vz}^{\mathrm{good}}}\Bigr)
  \Biggr)
  \\&\overset{(b)}{=}
  \cos(\phi_{p_{\vz}}/2) \cdot \ket{\psi_{\vz}^{\mathrm{bad}}}
  +\sin(\phi_{p_{\vz}}/2) \cdot \ket{\psi_{\vz}^{\mathrm{good}}}
  \\&\overset{(c)}{=}
  \ket{\psi_{\rmini}},
\end{align*}
where step $(a)$ follows from the definitions of $\ket{\phi_{\rmini}}_{\vz}$ and $\overline{\ket{\phi_{\rmini}}_{\vz}}$, where step $(b)$ uses $\e^{\imagunit\theta}+\e^{-\imagunit\theta}=2\cos\theta$ and $\e^{-\imagunit\theta}-\e^{\imagunit\theta}=-2\imagunit\sin\theta$, and where step $(c)$ follows from~\eqref{eqn:psi_ini_good_bad_decomp}. This completes the proof.
\end{proof}

Lemma~\ref{lem:spectral_properties_U_AA} shows that, for each prefix threshold $\vz \in \{0,1\}^{\np}$ with $0 < p_{\vz} < 1$, the phase-testing instance arising in Theorem~\ref{thm: overall_algorithm_properties} fits the abstract setup of Definition~\ref{def:setup_of_phase_estimation_apx} under the identifications
\begin{align*}
    \matr{U}_{\mathrm{pe,0}} = \matr{U}_{\mathrm{ini}},\qquad
    \matr{U}_{\mathrm{o}} = \matr{U}_{\mathrm{AA},\mathsf{N}, \vz},\qquad
    p_{\min} = p_{\min,\rmc}.
\end{align*}
Moreover, Lemma~\ref{lem:lower_bound_p_z} below shows that whenever $p_{\vz} > 0$, we have $p_{\vz} \geq p_{\min,\rmc}$, so the lower-bound requirement in Definition~\ref{def:setup_of_phase_estimation_apx} is satisfied in the nondegenerate case. Therefore, Theorem~\ref{thm:property_of_phase_estimation_apx} can be applied to each phase-testing call used in the proof of Theorem~\ref{thm: overall_algorithm_properties}. The case $p_{\vz}=0$ is covered by Item~\ref{thm:property_of_phase_estimation_apx:pefect_reverse} of Theorem~\ref{thm:property_of_phase_estimation_apx}.

\begin{lemma}
    \label{lem:lower_bound_p_z}
    If $p_{\vz} > 0$ for some $\vz \in \{0,1\}^{\np}$, then $p_{\vz} \ge p_{\min,\rmc}$.
\end{lemma}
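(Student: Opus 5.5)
The plan is to show that whenever the marked subspace carries any weight, it already contains the distinguished branch associated with $\mathbf{z}_{\max,\rmp,[\Nsfn]}$. Property~\eqref{eqn: property of psi vect with positive magnitude: final} then delivers the bound directly. Concretely, I would prove the implication
\[
p_{\vz}>0 \;\Longrightarrow\; \mathbf{z}_{\max,\rmp,[\Nsfn]}\in\set{Z}_{\ge\vz}.
\]
Granting this, we have $\matr{P}_{\{\mathbf{z}_{\max,\rmp,[\Nsfn]}\}}\le \matr{P}_{\set{Z}_{\ge\vz}}$ as projectors, since $\matr{P}_{\set{S}}$ is a sum of rank-one basis projectors and $\{\mathbf{z}_{\max,\rmp,[\Nsfn]}\}\subseteq\set{Z}_{\ge\vz}$. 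It follows that
\[
p_{\vz}=\bra{\psi_{\rmini}}\matr{\Pi}_{\vz}\ket{\psi_{\rmini}}\ge \bra{\psi_{\rmini}}\bigl(\matr I_{\mathrm{rem}}\otimes\matr{P}_{\{\mathbf{z}_{\max,\rmp,[\Nsfn]}\}}\bigr)\ket{\psi_{\rmini}}\ge p_{\min,\rmc}.
\]

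To prove the implication, I would start from $p_{\vz}>0$. This gives a computational basis outcome $\mathbf{z}_{\mathrm p,[\Nsfn]}\in\set{Z}_{\ge\vz}$ with nonzero amplitude in $\ket{\psi_{\rmini}}$, so $\sum_{\nsfn}\zdec(\mathbf{z}_{\mathrm p,\nsfn})\ge\zdec(\vz)$. The key step is to upper-bound $\sum_{\nsfn}\zdec(\mathbf{z}_{\mathrm p,\nsfn})$ on any branch in the support by $\sum_{\nsfn}\zdec(\mathbf{z}_{\max,\rmp,\nsfn})$.

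I would argue this from the structure of $\matr{U}_{\mathrm{ini}}$ in Algorithm~\ref{alg:u_ini_protocol}. In the branch with boundary assignment $\vx_{\setVb}$, worker $\nsfn$ runs $\matr{U}_{\max}$ for $g_{\nsfn,\vx_{\setVbnsfn}}$. The thresholding construction of Appendix~\ref{apx:generic_maximization_subroutine}, by the error-confinement property (Item~\ref{thm:property_of_phase_estimation_apx:error_confinement} of Theorem~\ref{thm:property_of_phase_estimation_apx}), can only write a bit $1$ when the tested threshold is truly attained. Consequently every value string in the support satisfies $\zdec(\mathbf{z}_{\mathrm p,\nsfn})\le g^{\max}_{\nsfn,\vx_{\setVbnsfn}}$. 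Summing over $\nsfn$ and maximizing over the boundary gives $\sum_{\nsfn}\zdec(\mathbf{z}_{\mathrm p,\nsfn})\le\gmax$.

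It remains to pass from $\gmax$ to $\sum_{\nsfn}\zdec(\mathbf{z}_{\max,\rmp,\nsfn})$. Here I would use that $\vz$ is an $\Np$-bit (zero-padded) threshold. By~\eqref{eqn: bound of gstar and znsfn}, $\mathbf{z}_{\max,\rmp,\nsfn}$ is the best $\Np$-bit lower approximation of each local optimum under $\vx^\ast_{\setVb}$, so its sum is the largest sum of $\Np$-bit truncations achievable in any branch. More precisely, for each branch $\zdec(\mathbf{z}_{\mathrm p,\nsfn})\le \lfloor g^{\max}_{\nsfn,\vx_{\setVbnsfn}}\rfloor_{\Np}$. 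I would then need that the sum of these truncations is maximized at $\vx^\ast_{\setVb}$, or else define $\vx^\ast_{\setVb}$ as the maximizer of the truncated sum, which is consistent with~\eqref{eqn: bound of gstar and znsfn}.

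This last comparison is the main obstacle. The order of truncation and summation matters, so a branch other than $\vx^\ast_{\setVb}$ could in principle exceed the optimal branch after truncation. I would first try to resolve it by reading $\vx^\ast_{\setVb}$ and $\mathbf{z}_{\max,\rmp,[\Nsfn]}$ as defined to maximize $\sum_{\nsfn}\zdec$ over support branches, noting that property~\eqref{eqn: bound of gstar and znsfn} still holds for that choice. With this reading we get $\zdec(\vz)\le\sum_{\nsfn}\zdec(\mathbf{z}_{\max,\rmp,\nsfn})$, hence $\mathbf{z}_{\max,\rmp,[\Nsfn]}\in\set{Z}_{\ge\vz}$, which completes the argument.
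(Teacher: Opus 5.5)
Your skeleton is the paper's: show that the $\sum_{\nsfn}\zdec$-maximal tuple lies in $\set{Z}_{\ge\vz}$, then invoke \eqref{eqn: property of psi vect with positive magnitude: final}. Your final ``reading'' is also what the paper does in \eqref{eqn: def of z p nG xB}, where $\mathbf{z}_{\max,\rmp,[\Nsfn]}$ is quietly redefined as the maximizer of $\sum_{\nsfn}\zdec$ over the support of $\ket{\psi_{\rmini}}$. You are right that this need not be the tuple of \eqref{eqn: property of psi vect with positive magnitude: final}, which is tied to a boundary assignment attaining $\gmax$. For example, with $\Np=1$, local optima $(0.49,0.49)$ at the optimal boundary truncate to sum $0$, while $(0.5,0.47)$ elsewhere truncate to sum $1/2$.

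The gap is that, after redefining, you must re-establish the weight bound \eqref{eqn: property of psi vect with positive magnitude: final}, since that is what the last inequality of your first paragraph uses. Verifying \eqref{eqn: bound of gstar and znsfn} does not supply it. Nor can it come from the black-box specification. A state obeying \eqref{eqn: property of psi vect with positive magnitude: final}--\eqref{eqn: bound of gstar and znsfn} may put weight $0<\epsilon<p_{\min,\rmc}$ on a tuple of larger sum; a threshold between the two sums then gives $0<p_{\vz}<p_{\min,\rmc}$. You have to use the construction. The argument for Item~3 in the proof of Lemma~\ref{lem: properties_approximation_of_oracle_Uini} applies to \emph{every} boundary assignment $\vx_{\setVb}$. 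It shows that the clean branch carrying the truncations $(\lfloor g^{\max}_{\nsfn,\vx_{\setVbnsfn}}\rfloor_{\Np})_{\nsfn}$ has weight at least $(1-\delta)^{2\Np\Nsfn}/|\setx_{\setVb}|=p_{\min,\rmc}$. Choosing $\vx^{\ast}_{\setVb}$ to maximize the truncated sum then closes the argument, provided every tuple in the support satisfies $\sum_{\nsfn}\zdec(\mathbf{z}_{\mathrm p,\nsfn})\le\max_{\vx_{\setVb}}\sum_{\nsfn}\lfloor g^{\max}_{\nsfn,\vx_{\setVbnsfn}}\rfloor_{\Np}$. The paper's one-line appeal to \eqref{eqn: property of psi vect with positive magnitude: final} passes over both of these points.

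That support bound is your second weak step. Item~\ref{thm:property_of_phase_estimation_apx:error_confinement} of Theorem~\ref{thm:property_of_phase_estimation_apx} only describes the phase test started from an all-zero workspace. From the second bit on, the junk component enters $\mUphano$ with nonzero estimation and target registers, where the theorem says nothing. A direct argument works instead. If the tested threshold exceeds the local maximum, nothing is marked. Then $\ket{\psi}\defeq\matr{U}_F\svzero$ is a $+1$ eigenvector of the amplitude-amplification operator, and $V\svzero=\svzero_{\mathrm{est}}\otimes\ket{\psi}$, where $V$ denotes initialization followed by forward QPE. Hence, for every input $\ket{\xi}$ of the estimation and target registers, the amplitude onto ``clean workspace, decision bit $1$'' is $\bra{\mathbf{0}}V^{\Herm}\matr{P}_{\neq\mathbf{0}}V\ket{\xi}=(\bra{\mathbf{0}}_{\mathrm{est}}\otimes\bra{\psi})\,\matr{P}_{\neq\mathbf{0}}V\ket{\xi}=0$. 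Here $\matr{P}_{\neq\mathbf{0}}$ projects onto the nonzero estimation states selected by the conditional flip. Error mitigation then sends every bit-$1$ component with dirty workspace to bit $0$, provided the mitigation ancilla is fresh for each bit; the paper leaves this unspecified for $\matr{U}_{\max}$. Thus a $1$ written at position $k$ certifies $g^{\max}_{\nsfn,\vx_{\setVbnsfn}}\ge\zdec(\mathbf{z}(1{:}k))$. This yields $\zdec(\mathbf{z}_{\mathrm p,\nsfn})\le\lfloor g^{\max}_{\nsfn,\vx_{\setVbnsfn}}\rfloor_{\Np}$ on every branch, as required.
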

\begin{proof}
    The condition $p_{\vz} > 0$ implies that the state $\ket{\psi_{\rmini}}$ has a non-zero projection onto the subspace marked by $\matr{P}_{\set{Z}_{\ge\vz}}$. This means there exists at least one configuration $\vz_{\rmp,[\Nsfn]}' \in \set{Z}_{\ge\vz}$ such that the probability of measuring this configuration in the registers $(\vqpn{\nsfn})_{\nsfn}$ is strictly positive. Let $\mathbf{z}_{\max,\rmp,[\Nsfn]}$ be the configuration defined as
    \begin{align}
      \begin{aligned}
       \mathbf{z}_{\max,\rmp,[\Nsfn]}
       \in &\arg \max\limits_{ \mathbf{z}_{\rmp,[\Nsfn]} } \sum_{\nsfn \in [\Nsfn]}
       \zdec( \mathbf{z}_{\rmp,\nsfn} ) \\
       &\qquad  \  \mathrm{s.t.} \quad
       \bra{\psi_{\rmini}}
        \Bigl(
         \matr{I}_{\mathrm{rem}}
         \otimes
         \ket{ \mathbf{z}_{\rmp,[\Nsfn]} }
         \bra{ \mathbf{z}_{\rmp,[\Nsfn]} }
        \Bigr)
       \ket{\psi_{\rmini}}
       > 0 \\
       &\qquad \qquad \qquad \mathbf{z}_{\rmp,[\Nsfn]}= (\mathbf{z}_{\rmp,\nsfn})_{\nsfn} \in \prod_{\nsfn \in [\Nsfn]} \{0,1\}^{\Np}
      \end{aligned}\quad .\label{eqn: def of z p nG xB}
    \end{align}
    By definition, $\mathbf{z}_{\max,\rmp,[\Nsfn]}$ is the configuration with the highest value of $\sum_{\nsfn} \zdec(\mathbf{z}_{\rmp,\nsfn})$ among all configurations that have a non-zero measurement probability. If $p_{\vz} > 0$, it must be that $\sum_{\nsfn} \zdec(\mathbf{z}_{\max,\rmp,[\Nsfn]}) \ge \sum_{\nsfn} \zdec(\vz_{\rmp,[\Nsfn]}')$ for some $\vz_{\rmp,[\Nsfn]}' \in \set{Z}_{\ge\vz}$. Since $\vz_{\rmp,[\Nsfn]}' \in \set{Z}_{\ge\vz}$, we have $\sum_{\nsfn} \zdec(\vz_{\rmp,[\Nsfn]}') \ge \zdec(\vz)$. Thus, we have $\sum_{\nsfn} \zdec(\mathbf{z}_{\max,\rmp,[\Nsfn]}) \ge \zdec(\vz)$, which implies that $\mathbf{z}_{\max,\rmp,[\Nsfn]} \in \set{Z}_{\ge\vz}$. From~\eqref{eqn: property of psi vect with positive magnitude: final}, the probability of measuring $\mathbf{z}_{\max,\rmp,[\Nsfn]}$ is at least $p_{\min,\rmc}$. Since $p_{\vz}$ is the total probability of measuring any state in $\set{Z}_{\ge\vz}$, it must be at least as large as the probability of measuring the single state $\mathbf{z}_{\max,\rmp,[\Nsfn]}$. Therefore, we have $p_{\vz} \ge p_{\min,\rmc}$.
\end{proof}

We now prove Item~\ref{prop: overall_success_probability: final} of Theorem~\ref{thm: overall_algorithm_properties}. The proof proceeds by induction on the number of iterations, $\np$, to demonstrate that the iterative procedure results in a quantum state that has a large amplitude associated with an ideal state encoding the correct solution. This ideal state after $k$ iterations is defined as
\begin{align}
    \ket{\psi_{\mathrm{ideal},k}} \defeq \ket{\bm{0}}_{\vqstn{\mathrm{c}}, \vqB, \vqloc, \vqaux} \ket{ \mathbf{z}_{\max,\rmp,\rmc}(1:k) }_{\vqpn{\rmc}(1:k)},
    \label{eqn:ideal_state_revised}
\end{align}
where $k \in [\Np]$.
Here, the state $\ket{\bm{0}}$ signifies that all registers other than the result register $\vqpn{\rmc}$ are returned to their initial all-zero state, and the state $\ket{ \mathbf{z}_{\max,\rmp,\rmc}(1:k) }$ represents the first $k$ bits of the best $\Np$-bit lower approximation of the target value, as defined in expression~\eqref{eqn: def_z_max_p_c_thm_main_overall}.

\textbf{Inductive Hypothesis $P(\np)$}
Let $P(\np)$ be the inductive hypothesis for an iteration $\np \in [\Np]$. After $\np$ iterations of the algorithm, the system is in a state $\ket{\Psi^{(\np)}}$. The hypothesis $P(\np)$ asserts that the squared magnitude of the projection of this state onto the ideal state from expression~\eqref{eqn:ideal_state_revised} is lower-bounded as follows:
\begin{align*}
    \left| \braket{\psi_{\mathrm{ideal},\np}}{\Psi^{(\np)}} \right|^2 \geq (1-\delta)^{2\np}.
\end{align*}
This inequality states that the probability of a measurement on the relevant registers yielding the outcome corresponding to the ideal state $\ket{\psi_{\mathrm{ideal},\np}}$ is at least $(1-\delta)^{2\np}$.

\textbf{Base Case ($\np=1$)}
For the base case, we establish that the hypothesis $P(1)$ holds. This step corresponds to the algorithm's first iteration, which determines the most significant bit of the solution, $\mathbf{z}_{\max,\rmp,\rmc}(1)$. This is equivalent to testing whether the true value $\sum_{\nsfn} \zdec(\vz_{\max,\rmp,\nsfn})$ is greater than or equal to $0.5$. The analysis is divided into two exhaustive cases.

\begin{itemize}
    \item \textbf{Case 1:} Suppose that $\sum_{\nsfn} \zdec(\vz_{\max,\rmp,\nsfn}) \ge 0.5$. This condition implies that the best 1-bit lower approximation is $\mathbf{z}_{\max,\rmp,\rmc}(1)=1$, which corresponds to the set of outcomes $\vz_{\max,\rmp,[\Nsfn]} \in \set{Z}_{\ge (1) }$. By the definition of $\vz_{\max,\rmp,\nsfn}$ in~\eqref{eqn: def of z p nG xB}, the initial state $\ket{\psi_{\rmini}}$ must have a non-zero projection onto this set. Therefore, the probability $p_{(1)}$ is strictly positive,
    \begin{align*}
        p_{(1)} = \left\|
        (\matr{I}_{\mathrm{rem}} \otimes \matr{P}_{\set{Z}_{\ge (1)}}) \ket{ \psi_{\rmini} } \right\|_{2}^{2} > 0.
    \end{align*}
    From Lemma~\ref{lem:lower_bound_p_z}, it follows that this probability is lower-bounded by $p_{(1)} \ge p_{\min,\rmc} > 0$. According to Theorem~\ref{thm:property_of_phase_estimation_apx} and Lemma~\ref{lem:spectral_properties_U_AA}, the resulting system state $\ket{\Psi^{(1)}}$ satisfies
    \begin{align*}
        &\Bigl| \underbrace{\bigl( \bra{\bm{0}}_{\vqstn{\mathrm{c}}, \vqB, \vqloc, \vqaux} 
        \bra{1}_{\vqpn{\rmc}(1)} \bigr)}_{ = \bra{\psi_{\mathrm{ideal},1}}  } \ket{\Psi^{(1)}} \Bigr|^2 = 
        \left| \braket{\psi_{\mathrm{ideal},1}}{\Psi^{(1)}} \right|^2 \geq (1-\delta)^2.
    \end{align*}
    This satisfies the condition for $P(1)$, since $(1-\delta)^2 \ge (1-\delta)^{2 \cdot 1}$.

    \item \textbf{Case 2:} Suppose that $\sum_{\nsfn} \zdec(\vz_{\max,\rmp,\nsfn}) < 0.5$. In this case, the best 1-bit lower approximation is $\mathbf{z}_{\max,\rmp,\rmc}(1)=0$. This condition implies that no component of the state $\ket{\psi_{\rmini}}$ corresponds to an outcome in $\set{Z}_{\ge (1) }$. Therefore, the probability $p_{(1)}$ is exactly zero,
    \begin{align*}
        p_{(1)} = \left\|
        (\matr{I}_{\mathrm{rem}} \otimes \matr{P}_{\set{Z}_{\ge (1)}}) \ket{ \psi_{\rmini} } \right\|_{2}^{2} = 0.
    \end{align*}
    According to Theorem~\ref{thm:property_of_phase_estimation_apx} and Lemma~\ref{lem:spectral_properties_U_AA}, when the phase is zero, the system state $\ket{\Psi^{(1)}}$ satisfies
    \begin{align*}
        &\Bigl| \underbrace{\bigl( \bra{\bm{0}}_{\vqstn{\mathrm{c}}, \vqB, \vqloc, \vqaux} 
        \bra{0}_{\vqpn{\rmc}(1)} \bigr)}_{ = \bra{\psi_{\mathrm{ideal},1}}  } \ket{\Psi^{(1)}} \Bigr|^2 =
        \left| \braket{\psi_{\mathrm{ideal},1}}{\Psi^{(1)}} \right|^2 = 1.
    \end{align*}
    This also satisfies the condition for $P(1)$, since $1 \ge (1-\delta)^2$.
\end{itemize}
In both cases, the hypothesis $P(1)$ holds.

\textbf{Inductive Step}
Assume that the inductive hypothesis $P(k)$ holds for an arbitrary integer $k$ such that $1 \le k < \Np$. Our objective is to prove that $P(k+1)$ also holds.

By the inductive hypothesis $P(k)$,
\[
    \left| \braket{\psi_{\mathrm{ideal},k}}{\Psi^{(k)}} \right|
    \geq
    (1-\delta)^k.
\]
Equivalently, the state $\ket{\Psi^{(k)}}$ contains a clean component of norm at least $(1-\delta)^k$ in which all registers other than $\vqpn{\rmc}(1:k)$ are returned to their all-zero states and the register $\vqpn{\rmc}(1:k)$ stores the correct prefix $\ket{\mathbf{z}_{\max,\rmp,\rmc}(1:k)}$. In the $(k+1)$-th iteration, the algorithm applies the same phase-testing subroutine as in the base case, now conditioned on this $k$-bit prefix. Accordingly, the relevant threshold string is obtained by appending the bit $1$ to the correct prefix, namely
\[
    \bigl(\mathbf{z}_{\max,\rmp,\rmc}(1:k),1\bigr).
\]
Therefore, the analysis of the $(k+1)$-th step is the same as in the base case, with the test threshold replaced by $\zdec\bigl(\mathbf{z}_{\max,\rmp,\rmc}(1:k),1\bigr)$.

Let $\mathbf{z}_{\mathrm{prefix}} \defeq \mathbf{z}_{\max,\rmp,\rmc}(1:k)$. The algorithm tests whether the true value is greater than or equal to $\zdec\bigl((\mathbf{z}_{\mathrm{prefix}},1)\bigr)$. This is determined by the value of the probability $p_{(\mathbf{z}_{\mathrm{prefix}},1)}$.
\begin{itemize}
    \item \textbf{Case 1:} Suppose that $\sum_{\nsfn} \zdec(\vz_{\max,\rmp,\nsfn}) \ge \zdec\bigl( (\mathbf{z}_{\mathrm{prefix}},1) \bigr)$. This condition implies that the $(k+1)$-th bit of the solution is $\mathbf{z}_{\max,\rmp,\rmc}(k+1)=1$. Following the same logic as in the base case, this leads to $p_{(\mathbf{z}_{\mathrm{prefix}},1)} > 0$
    and according to Theorem~\ref{thm:property_of_phase_estimation_apx} and Lemma~\ref{lem:spectral_properties_U_AA}, the resulting system state $\ket{\Psi^{(k+1)}}$ satisfies $ \left| \braket{\psi_{\mathrm{ideal},k+1}}{\Psi^{(k+1)}} \right|^2 \geq (1-\delta)^{2(k+1)}. $

    \item \textbf{Case 2:} Suppose that $\sum_{\nsfn} \zdec(\vz_{\max,\rmp,\nsfn}) < \zdec\bigl( (\mathbf{z}_{\mathrm{prefix}},1) \bigr)$. This condition implies that the $(k+1)$-th bit of the solution is $\mathbf{z}_{\max,\rmp,\rmc}(k+1)=0$. Following the same logic as in the base case, this leads to $p_{(\mathbf{z}_{\mathrm{prefix}},1)} = 0$
    and according to Theorem~\ref{thm:property_of_phase_estimation_apx} and Lemma~\ref{lem:spectral_properties_U_AA}, the resulting system state 
    $\ket{\Psi^{(k+1)}}$ satisfies $ \left| \braket{\psi_{\mathrm{ideal},k+1}}{\Psi^{(k+1)}} \right|^2 \geq (1-\delta)^{2k}. $
\end{itemize}
In each iteration, the probability of correctly determining the next bit is at least $(1-\delta)^2$. The total success probability after $(k+1)$ iterations is the product of the success probability after $k$ iterations and the success probability of the $(k+1)$-th step. Therefore, we have
\begin{align*}
    \left| \braket{\psi_{\mathrm{ideal},k+1}}{\Psi^{(k+1)}} \right|^2 &\geq \left| \braket{\psi_{\mathrm{ideal},k}}{\Psi^{(k)}} \right|^2 \cdot (1-\delta)^2 \\
    &\geq (1-\delta)^{2k} \cdot (1-\delta)^2 \\
    &= (1-\delta)^{2(k+1)}.
\end{align*}
This shows that if $P(k)$ is true, then $P(k+1)$ is also true.

\textbf{Conclusion}
We have established the base case $P(1)$ and have shown that the inductive step holds. By the principle of mathematical induction, the hypothesis $P(\np)$ is true for all $\np \in [\Np]$. For $\np = \Np$, this implies that the final state of the algorithm $\ket{\Psi_{\mathrm{final}}} \equiv \ket{\Psi^{(\Np)}}$ satisfies
\begin{align*}
    \left| \braket{\psi_{\mathrm{ideal},\Np}}{\Psi_{\mathrm{final}}} \right|^2 \geq (1-\delta)^{2\Np}.
\end{align*}
This completes the proof of Item~\ref{prop: overall_success_probability: final}.

\subsection{Proof of Item~\ref{prop: overall_approximation_precision} (Approximation Precision)}

The total approximation error, $\gmax - \zdec(\mathbf{z}_{\max,\rmp,\rmc})$, can be decomposed into two distinct components: the state preparation error and the quantization error. We can write the total error as
\begin{align*}
    &\gmax - \zdec(\mathbf{z}_{\max,\rmp,\rmc}) 
    \\&\quad= \left( \gmax - \sum_{\nsfn \in [\Nsfn]} \zdec(\mathbf{z}_{\max,\rmp,\nsfn}) \right) 
    + \left( \sum_{\nsfn \in [\Nsfn]} \zdec(\mathbf{z}_{\max,\rmp,\nsfn}) - \zdec(\mathbf{z}_{\max,\rmp,\rmc}) \right).
\end{align*}
We now bound each of these terms.

\textbf{State Preparation Error:}
The first term, $\gmax - \sum_{\nsfn} \zdec(\mathbf{z}_{\max,\rmp,\nsfn})$, represents the error inherent in the state preparation oracle $\matr{U}_{\mathrm{ini}}$. This error arises because each local maximization algorithm $\matr{U}_{\max}$ (used within the implementation of $\matr{U}_{\mathrm{ini}}$) produces an $\Np$-bit approximation of the local maximum, not the exact value. As established in~\eqref{eqn: bound of gstar and znsfn}, which will be proven in Lemma~\ref{lem: properties_approximation_of_oracle_Uini}, this error is bounded as
\begin{align}
    0 \le \gmax - \sum_{\nsfn \in [\Nsfn]} \zdec(\mathbf{z}_{\max,\rmp,\nsfn}) \le \Nsfn \cdot 2^{-\Np}.
    \label{eqn:bound_state_prep_error_proof}
\end{align}

\textbf{Quantization Error:}
The second term, $\sum_{\nsfn} \zdec(\mathbf{z}_{\max,\rmp,\nsfn}) - \zdec(\mathbf{z}_{\max,\rmp,\rmc})$, is the error introduced by approximating the real-valued sum $\sum_{\nsfn} \zdec(\mathbf{z}_{\max,\rmp,\nsfn})$ with a finite, $\Np$-bit binary fraction. By the definition of $\mathbf{z}_{\max,\rmp,\rmc}$ in expression~\eqref{eqn: def_z_max_p_c_thm_main_overall} as the best $\Np$-bit lower approximation of this sum, this error is strictly non-negative and bounded above by the resolution of the $\Np$-bit representation. Specifically, we have
\begin{align}
    0 \le \sum_{\nsfn \in [\Nsfn]} \zdec(\mathbf{z}_{\max,\rmp,\nsfn}) - \zdec(\mathbf{z}_{\max,\rmp,\rmc}) < 2^{-\Np}.
    \label{eqn:bound_quantization_error_proof}
\end{align}

\textbf{Total Error Bound:}
Combining the bounds from inequalities~\eqref{eqn:bound_state_prep_error_proof} and~\eqref{eqn:bound_quantization_error_proof}, we obtain the bound on the total error:
\begin{align*}
    0 &\le \gmax - \zdec(\mathbf{z}_{\max,\rmp,\rmc}) 
    < (\Nsfn \cdot 2^{-\Np}) + 2^{-\Np} = (\Nsfn + 1) \cdot 2^{-\Np}.
\end{align*}
This completes the proof of the approximation precision.

\section{Transforming Classical Circuits to Unitary Operations}
\label{apx:quantum_is_universal_for_classical}

The simulation involves two main steps:
\begin{enumerate}

  \item \textbf{Converting Classical Logic to Reversible Quantum Circuits:} Any classical computation, including irreversible logical gates, can be transformed into an equivalent reversible circuit, typically by adding ancilla bits to store intermediate results that would otherwise be overwritten. Each gate in this reversible classical circuit can then be replaced by a corresponding universal quantum gate (\eg, Toffoli gates). This construction ensures that the quantum circuit size is polynomially related to the classical circuit size.

  \item \textbf{Simulating Classical Randomness with Quantum Superposition:} Classical randomness can be simulated by introducing auxiliary qubits initialized to $\svzero$ and applying Hadamard gates to create uniform superpositions. These qubits can then control different computational branches, mimicking random choices. More general probability distributions can also be prepared.

\end{enumerate}

This principle is formally captured by the complexity class inclusion $\mathrm{BPP} \subseteq \mathrm{BQP}$~\cite{Bernstein1997}, indicating that problems solvable by bounded-error probabilistic polynomial-time classical algorithms are also solvable by bounded-error quantum polynomial-time algorithms. For a detailed construction, see, \eg,~\cite[Chapter 4.5]{Nielsen2010}.


\section{A Generic Quantum Maximization Algorithm}
\label{apx:generic_maximization_subroutine}

This appendix provides the detailed construction of the generic quantum algorithm for function maximization, which serves as a fundamental building block for the distributed protocol in Section~\ref{sec:implementation_of_U_components}. The algorithm finds an $\Np$-bit approximation of the maximum value of an arbitrary function $F: \{0,1\}^N \to \sR$ by leveraging quantum phase estimation and amplitude amplification in an iterative binary search. (For details of quantum phase estimation and amplitude amplification, see, \eg,~\cite{Nielsen2010}.)

\begin{algorithm}[t!]
\caption{Quantum Algorithm for Maximization (Generic $F$)}
\label{alg:generic_maximization_appendix}
\begin{algorithmic}[1]
\Require State preparation oracle $\matr{U}_{F}$, marking oracles $\matr{U}_{\set{Z}_{F\ge \mathbf{z}}}$, precision bits $\Np$, error parameter $\delta \in (0,1)$, amplitude lower bound $p_{\min,F}>0$.
\Ensure Register $\vqpn{F}$ holds the information about $\mathbf{z}_{\max,\rmp,F}$, the best $\Np$-bit lower approximation of $F^{\max}$.
\State \textbf{Initialize} all registers to $\svzero$: $(\vqfrn{F},\,\vqsrn{F},\,\vqoran{F},\,\vqpn{F})\leftarrow \svzero$.
\For{$\np=1,\ldots,\Np$}
    \State \textbf{Determine bit $\np$ via controlled phase-testing:} Execute the subroutine from expression~\eqref{eqn: controll_phase_estimation_genericF_alg_detail_appendix} on registers $\bigl(\vqfrn{F},\vqsrn{F},\vqoran{F},\vqpn{F}(1:\np)\bigr)$. This operation tests the phase of the amplitude amplification operator $\matr{U}_{\mathrm{AA},F,(\mathbf{z}_{\mathrm{prefix}},1)}$ and writes the binary outcome to the qubit $\qpn{F}(\np)$.
\EndFor
\end{algorithmic}
\end{algorithm}

The high-level procedure is outlined in Algorithm~\ref{alg:generic_maximization_appendix}.
The formal specification of this procedure is as follows.
\begin{definition}[Quantum Algorithm for Maximization (Generic $F$)]
\label{def: quantum_maximization_algorithm_appendix}
Consider the setup in Section~\ref{sec: setup of quantum optimization algorithm}. The algorithm in Algorithm~\ref{alg:generic_maximization_appendix} aims to produce an $\Np$-bit string $\mathbf{z}_{\max,\rmp,F}$ representing the largest $\Np$-bit number whose decoded value does not exceed
\[
    F^{\max} = \max_{\mathbf{y} \in \{0,1\}^{N}} F(\mathbf{y}).
\]
Equivalently,
\begin{align*}
    \begin{aligned}
    \mathbf{z}_{\max,\rmp,F} \defeq \arg &\max\limits_{ \substack{ \mathbf{z}_{\rmp} \in \{0,1\}^{\Np} } } \zdec(\mathbf{z}_{\rmp})
    \\&\qquad \!\!
    \text{s.t. } \zdec(\mathbf{z}_{\rmp}) \le F^{\max}
    \end{aligned} \quad.
\end{align*}
The algorithm uses the registers $\vqfrn{F}$ (with $t(p_{\min,F},\delta)$ qubits), $\vqpn{F}$ (with $\Np$ qubits), $\vqsrn{F}$ (with $N$ qubits), and $\vqoran{F}$ (with $N_F$ ancilla qubits used to implement the state-preparation oracle $\matr{U}_{F}$). All registers are initialized to $\svzero$. The oracle $\matr{U}_{F}$ acts on $(\vqsrn{F},\vqoran{F})$ and satisfies~\eqref{eqn: state preparation property}.

For each iteration $\np \in [\Np]$, the algorithm applies an approximately reversible phase-testing step conditioned on the current prefix stored in $\vqpn{F}(1:\np-1)$. This controlled operation is
\begin{align}
    \sum_{ \mathbf{z}_{\mathrm{prefix}} \in \{0,1\}^{\np-1} }
    \mUpha{\matr{U}_{F}, \matr{U}_{\mathrm{AA},F, (\mathbf{z}_{\mathrm{prefix}}, 1) }}{p_{\min,F}, \delta }
    \otimes
    \ket{\mathbf{z}_{\mathrm{prefix}}} \bra{\mathbf{z}_{\mathrm{prefix}}}_{\vqpn{F}(1:\np-1)},
\label{eqn: controll_phase_estimation_genericF_alg_detail_appendix}
\end{align}
where, for each prefix $\mathbf{z}_{\mathrm{prefix}} \in \{0,1\}^{\np-1}$, the amplitude-amplification operator is
\begin{align*}
\begin{split}
    \matr{U}_{\mathrm{AA},F, (\mathbf{z}_{\mathrm{prefix}}, 1) }
    &\defeq
    \matr{U}_{F}
    \cdot
    \bigl( 2 \svzero \bra{\mathbf{0}} - \matr{I} \bigr)_{\vqsrn{F},\vqoran{F}}
    \cdot
    \matr{U}_{F}^{\Herm}
    \cdot
    \bigl( \matr{U}_{\set{Z}_{ F \geq (\mathbf{z}_{\mathrm{prefix}},1) }} \otimes \matr{I}_{\vqoran{F}} \bigr),
\end{split}
\end{align*}
and the phase-flip oracle $\matr{U}_{\set{Z}_{ F \geq (\mathbf{z}_{\mathrm{prefix}},1) }}$ acts on the search register according to
\begin{align*}
  \matr{U}_{\set{Z}_{ F \geq (\mathbf{z}_{\mathrm{prefix}},1) }} 
  \ket{\vect{y}}
  \defeq
  \begin{cases}
  -\ket{\vect{y}} & 
  \text{if } F(\vy) \geq \zdec(\mathbf{z}_{\mathrm{prefix}},1)\\
  \ket{\vect{y}} & \text{otherwise}
  \end{cases} \quad ,
\end{align*}
where $\zdec(\mathbf{z}_{\mathrm{prefix}},1)$ denotes the decoded value of the $\np$-bit string obtained by appending the bit $1$ to the prefix $\mathbf{z}_{\mathrm{prefix}}$. The subroutine $\mUphano$ (defined in Appendix~\ref{apx:reversible_phase_estimation}) uses $\vqfrn{F}$ as its estimation register, $(\vqsrn{F},\vqoran{F})$ as its target register, and $\qpn{F}(\np)$ as its output qubit.

Repeating this controlled phase test for $\np=1,\ldots,\Np$ yields the unitary operator
\[
    \matr{U}_{\max}(\Np, \delta, p_{\min,F}, \matr{U}_F).
\]
The corresponding final state is defined by
\[
    \ket{\psi_{\mathrm{final},F}}
    \defeq
    \matr{U}_{\max}(\Np, \delta, p_{\min,F}, \matr{U}_F)
    \cdot
    \svzero_{\vqfrn{F},\vqpn{F},\vqsrn{F},\vqoran{F}}.
\]
\end{definition}

\begin{theorem}[\textbf{Properties of Quantum Maximization Algorithm (Generic $F$)}]
 \label{thm: properties_estimated_quantum_algorithm}
 Let $F:\{0,1\}^{N}\to[0,1]$ and consider the algorithm in Definition~\ref{def: quantum_maximization_algorithm_appendix}. Suppose that the state-preparation oracle $\matr{U}_{F}$ satisfies~\eqref{eqn: state preparation property}. Then the following statements hold.
 \begin{enumerate}
  \item\label{prop_F_query_complexity} \textbf{Query complexity.} The total number of queries to the state-preparation oracle $\matr{U}_{F}$ and its inverse $\matr{U}_{F}^{\Herm}$ is
  \begin{align*}
    \Np \cdot \bigl( 2^{t(p_{\min,F}, \delta )+2} - 2 \bigr).
  \end{align*}
  \item\label{prop_F_success_pro_outcome} \textbf{Success probability and output.} Measuring all registers of the final state $\ket{\psi_{\mathrm{final},F}}$ yields the clean output state
  \[
      \svzero_{\vqfrn{F},\vqsrn{F},\vqoran{F}}
      \ket{\mathbf{z}_{\max,\rmp,F}}_{\vqpn{F}}
  \]
  with probability at least $(1-\delta)^{2\Np}$. Equivalently,
  \begin{align*}
    \bigl| 
      &\bra{0}_{\vqfrn{F},\vqsrn{F},\vqoran{F}}
      \bra{ \mathbf{z}_{\max,\rmp,F} }_{\vqpn{F}} 
      \ket{ \psi_{\mathrm{final},F} } 
    \bigr|^{2}
    \geq (1-\delta)^{2\Np}.
  \end{align*}
  If one requires $(1-\delta)^{2\Np} \geq 1-\delta'$ for some $0<\delta'<1$, then by Bernoulli's inequality it suffices to choose $\delta \leq \delta'/(2\Np)$.

  \item\label{prop_F_approx_precision} \textbf{Approximation precision.} The returned approximation $\mathbf{z}_{\max,\rmp,F}$ satisfies
  \[
      0 \leq F^{\max} - \zdec\bigl(\mathbf{z}_{\max,\rmp,F}\bigr) \leq 2^{-\Np}.
  \]
 \end{enumerate}
\end{theorem}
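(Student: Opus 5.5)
The plan is to mirror the proof of Theorem~\ref{thm: overall_algorithm_properties} (Appendix~\ref{apx: proof_of_thm_overall_algorithm_properties}), specialized to a single ``worker'' whose value register is $\vqpn{F}$. The threshold test is applied directly to $F(\vy)$ on the search register, rather than to a sum of decoded local values. We instantiate the abstract phase-testing setup of Definition~\ref{def:setup_of_phase_estimation_apx} with
\[
\matr{U}_{\mathrm{pe,0}}=\matr{U}_F,\qquad \matr{U}_{\mathrm{o}}=\matr{U}_{\mathrm{AA},F,(\mathbf{z}_{\mathrm{prefix}},1)},\qquad p_{\min}=p_{\min,F},
\]
and then invoke Theorem~\ref{thm:property_of_phase_estimation_apx} once per bit.

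\textbf{Query complexity.} The count is immediate. Each $\matr{U}_{\mathrm{o}}$ uses one $\matr{U}_F$ and one $\matr{U}_F^{\Herm}$. Theorem~\ref{thm:property_of_phase_estimation_apx}, Item~\ref{thm:property_of_phase_estimation_apx:query_complexity}, gives $2^{t+1}-2$ uses of $\matr{U}_{\mathrm{o}}$ plus $2$ uses of $\matr{U}_{\mathrm{pe,0}}$ and its inverse. Each iteration therefore costs $2(2^{t+1}-2)+2=2^{t+2}-2$ queries, and summing over the $\Np$ iterations gives the claimed total. The prefix control does not change this count.

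\textbf{Spectral structure.} We repeat Lemma~\ref{lem:spectral_properties_U_AA} verbatim with the marked projector $\matr{P}_{\{\vy:F(\vy)\ge\zdec(\mathbf{z})\}}\otimes\matr{I}_{\vqoran{F}}$ and $p_{\mathbf{z}}$ its weight in $\ket{\psi_{F,\mathrm{ini}}}$. If $0<p_{\mathbf{z}}<1$, the state is the required equal superposition of two eigenvectors with phases $\pm\phi_{p_{\mathbf{z}}}$.

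The analogue of Lemma~\ref{lem:lower_bound_p_z} is in fact simpler here. If $\zdec(\mathbf{z})\le F^{\max}$, then $\set{Y}^{\max}$ lies inside the marked set, so \eqref{eqn: state preparation property} gives $p_{\mathbf{z}}\ge p_{\min,F}$. If $\zdec(\mathbf{z})>F^{\max}$, nothing is marked and $p_{\mathbf{z}}=0$.

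One degenerate case needs separate treatment: $p_{\mathbf{z}}=1$, where $\phi=\pi$ and the two-dimensional analysis collapses. I would handle it directly. Then $\matr{U}_{\mathrm{o}}\ket{\psi}=-\ket{\psi}$, so the amplitude of the zero estimation outcome is $2^{-t}\sum_k(-1)^k=0$. The QPE failure bound $\le\delta$ therefore still holds, and the proof of Items~\ref{thm:property_of_phase_estimation_apx:error_confinement}--\ref{thm:property_of_phase_estimation_apx:appro_reverse} goes through, since it only used that bound.

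\textbf{Success probability.} We induct on $\np$ with the ideal state $\svzero\ket{\mathbf{z}_{\max,\rmp,F}(1:\np)}$, exactly as in the proof of Item~\ref{prop: overall_success_probability: final}. On the clean component carrying the correct prefix, the threshold $(\mathbf{z}_{\mathrm{prefix}},1)$ is tested. If $F^{\max}\ge\zdec(\mathbf{z}_{\mathrm{prefix}},1)$, the correct bit is $1$ and Item~\ref{thm:property_of_phase_estimation_apx:appro_reverse} gives amplitude at least $1-\delta$ on the clean branch. Otherwise the correct bit is $0$ and Item~\ref{thm:property_of_phase_estimation_apx:pefect_reverse} restores the state exactly.

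The main obstacle is justifying the multiplication of amplitudes across iterations. The controlled operation \eqref{eqn: controll_phase_estimation_genericF_alg_detail_appendix} is block-diagonal in the prefix, so the clean component evolves independently of the junk components. The junk cannot feed back into the ideal state: the overlap $\braket{\psi_{\mathrm{ideal},\np+1}}{\Psi^{(\np+1)}}$ only receives contributions from the correct-prefix block. Within that block, Error Confinement (Item~\ref{thm:property_of_phase_estimation_apx:error_confinement}) ensures that the only $\ket{1}$-output term is the clean one.

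Multiplying the per-step factor $(1-\delta)^2$ over the $\Np$ steps gives $(1-\delta)^{2\Np}$. The Bernoulli remark then follows from $(1-\delta)^{2\Np}\ge 1-2\Np\delta$.

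\textbf{Precision.} Item~\ref{prop_F_approx_precision} follows from the definition of $\mathbf{z}_{\max,\rmp,F}$ as the largest $\Np$-bit fraction not exceeding $F^{\max}\in[0,1]$. Consecutive representable values differ by $2^{-\Np}$. The only edge case is $F^{\max}=1$, where the error is exactly $2^{-\Np}$ and the bound is attained.
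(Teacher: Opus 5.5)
Your proposal is correct and follows the paper's route. You count queries via Item~\ref{thm:property_of_phase_estimation_apx:query_complexity} of Theorem~\ref{thm:property_of_phase_estimation_apx}, use \eqref{eqn: state preparation property} to get either $p_{\mathbf z}\ge p_{\min,F}$ or $p_{\mathbf z}=0$ at each threshold, run the bit-by-bit induction with per-step factor $(1-\delta)^2$, and read off the precision from the $2^{-\Np}$ grid spacing. Your separate handling of the degenerate case $p_{\mathbf z}=1$ (excluded by Lemma~\ref{lem:spectral_properties_U_AA}) and your block-diagonality/error-confinement justification for multiplying amplitudes across iterations fill in details the paper's short proof leaves implicit, and both are sound (the $p_{\mathbf z}=1$ computation needs $t\ge 1$).
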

\begin{proof}
The proof follows the same bit-by-bit induction as the proof of Theorem~\ref{thm: overall_algorithm_properties}, specialized to the single-function setting.

For Item~\ref{prop_F_query_complexity}, each of the $\Np$ iterations invokes one instance of the approximately reversible phase-testing subroutine $\mUpha{\cdot}{\cdot}$. By Item~\ref{thm:property_of_phase_estimation_apx:query_complexity} of Theorem~\ref{thm:property_of_phase_estimation_apx}, one such invocation uses $2^{t(p_{\min,F},\delta)+1}-2$ applications of the amplitude-amplification operator and two uses of the initialization oracle $\matr{U}_{F}$ or $\matr{U}_{F}^{\Herm}$. Since each application of the amplitude-amplification operator $\matr{U}_{\mathrm{AA},F,\vz}$ contains one call to $\matr{U}_{F}$ and one call to $\matr{U}_{F}^{\Herm}$, the total number of queries per iteration is
\[
    2\bigl(2^{t(p_{\min,F},\delta)+1}-2\bigr)+2
    =
    2^{t(p_{\min,F},\delta)+2}-2.
\]
Multiplying by $\Np$ gives the claimed query complexity.

For Item~\ref{prop_F_success_pro_outcome}, the argument is identical to the inductive proof used for Theorem~\ref{thm: overall_algorithm_properties}: at iteration $\np$, the algorithm tests whether $F^{\max}$ is at least the threshold associated with the candidate prefix obtained by appending the bit $1$. If the threshold is feasible, then the corresponding marked subspace has weight at least $p_{\min,F}$ by~\eqref{eqn: state preparation property}; otherwise the marked weight is zero. In the first case, Item~\ref{thm:property_of_phase_estimation_apx:appro_reverse} of Theorem~\ref{thm:property_of_phase_estimation_apx} guarantees that the correct branch is produced with probability at least $(1-\delta)^2$, while in the second case Item~\ref{thm:property_of_phase_estimation_apx:pefect_reverse} gives the correct outcome with probability $1$. Repeating this argument over $\Np$ iterations shows that the final clean output state is obtained with probability at least $(1-\delta)^{2\Np}$.

Finally, Item~\ref{prop_F_approx_precision} follows from the construction of $\mathbf{z}_{\max,\rmp,F}$ as the largest $\Np$-bit lower approximation of $F^{\max}$. By definition, $\zdec(\mathbf{z}_{\max,\rmp,F}) \leq F^{\max}$, and the spacing between adjacent $\Np$-bit numbers in $[0,1]$ is $2^{-\Np}$. Therefore, the gap between $F^{\max}$ and its largest $\Np$-bit lower approximation is at most $2^{-\Np}$.
\end{proof}


\section{Proof of Lemma~\ref{lem: properties_approximation_of_oracle_Uini}}
\label{apx: properties_approximation_of_oracle}
The proof is structured to address each of the three claims in Lemma~\ref{lem: properties_approximation_of_oracle_Uini} sequentially.

We will establish the query complexity, the EPR pair consumption, and the state preparation fidelity by leveraging the formal procedures outlined in Algorithm~\ref{alg:u_ini_protocol} and the performance guarantees of the generic maximization algorithm from Theorem~\ref{thm: properties_estimated_quantum_algorithm}.

\subsection{Proof of Item 1 (Query Complexity to Local Oracles)}
The query complexity of the unitary operator $\matr{U}_{\mathrm{ini}}$ is determined by the total number of calls to the fundamental local oracles $\bigl\{ \matr{U}_{g_{\nsfn,\mathbf{x}_{\setVbnsfn}}}, \matr{U}_{g_{\nsfn,\mathbf{x}_{\setVbnsfn}}}^{\Herm} \bigr\}_{\nsfn \in [\Nsfn],\, \mathbf{x}_{\setVbnsfn} \in \set{X}_{\setVbnsfn}}$. The distributed procedure in Algorithm~\ref{alg:u_ini_protocol} reveals that these oracles are invoked exclusively within Stage 2 in Algorithm~\ref{alg:u_ini_protocol}. This step consists of the parallel application of the generic quantum maximization algorithm, $\matr{U}_{\max}$, by each of the $\Nsfn$ worker processors, coherently controlled by the qubits $\bigl(\vqB,(\vqBauxn)_{\nsfn \in [\Nsfn]}\bigr)$ corresponding to the boundary variable configuration $\mathbf{x}_{\setVb}$.

According to Item~\ref{prop_F_query_complexity} of Theorem~\ref{thm: properties_estimated_quantum_algorithm}, a single execution of $\matr{U}_{\max}(\Np, \delta, p_{\min,\nsfn}, \matr{U}_{g_{\nsfn,\mathbf{x}_{\setVbnsfn}}})$ requires $\Np \cdot \bigl(2^{t(p_{\min,\nsfn}, \delta)+2}-2 \bigr)$ queries to its respective local oracle and its inverse. Since these maximization procedures are executed in parallel for each of the $\Nsfn$ subgraphs, the total query complexity for one coherent application of $\matr{U}_{\mathrm{ini}}$ is the sum of the queries made by each worker processor. This sum is given by
\begin{align*}
\sum_{\nsfn \in [\Nsfn]} \Np \cdot \left(2^{t(p_{\min,\nsfn}, \delta)+2}-2\right),
\end{align*}
which validates the expression in expression~\eqref{eqn: Uini_query_complexity_of_oracle_lemma}.

\subsection{Proof of Item 2 (EPR Pair Consumption)}
The consumption of EPR pairs within the $\matr{U}_{\mathrm{ini}}$ protocol is entirely attributable to the quantum teleportation operations required for inter-processor communication. We can account for the total consumption by itemizing the teleportation events as described in Algorithm~\ref{alg:u_ini_protocol}.
\begin{itemize}
    \item \textbf{Distribution Phase:} In Stage 1 in Algorithm~\ref{alg:u_ini_protocol}, the coordinator $\QPUc$ teleports the auxiliary register $\vqBauxn$, which contains $|\setVbnsfn|$ qubits, to each worker processor $\QPU_{\nsfn}$. The total number of EPR pairs consumed during this distribution phase is the sum over all workers, which amounts to $\sum_{\nsfn\in[\Nsfn]} |\setVbnsfn|$.
    \item \textbf{Collection Phase:} In Stage 3 in Algorithm~\ref{alg:u_ini_protocol}, each worker processor $\QPU_{\nsfn}$ teleports its registers $\vqBauxn$ (containing $|\setVbnsfn|$ qubits) and $\vqpn{\nsfn}$ (containing $\Np$ qubits) back to the coordinator $\QPUc$. The total number of EPR pairs consumed during this collection phase is therefore $\sum_{\nsfn\in[\Nsfn]} (|\setVbnsfn| + \Np)$.
\end{itemize}
The total EPR pair consumption for a single application of $\matr{U}_{\mathrm{ini}}$ is the sum of the costs from both phases. This sum is
\begin{align*}
\sum_{\nsfn\in[\Nsfn]} |\setVbnsfn| + \sum_{\nsfn\in[\Nsfn]} (|\setVbnsfn| + \Np) = \sum_{\nsfn \in [\Nsfn]} (2|\setVbnsfn| + \Np),
\end{align*}
thereby confirming the expression for EPR pair consumption.

\subsection{Proof of Item 3 (State Preparation Fidelity and Accuracy)}
We prove separately the overlap bound~\eqref{eqn: property of psi vect with positive magnitude: final} and the accuracy bound~\eqref{eqn: bound of gstar and znsfn}.

\textbf{Proof of \texorpdfstring{\eqref{eqn: property of psi vect with positive magnitude: final}}{} (weight on the optimal outcome).}
Let $\vx_{\setVb}^{\ast}$ be an optimal boundary assignment attaining $\gmax$, and let
\[
  \mathbf{z}_{\max,\rmp,[\Nsfn]}
  =
  \bigl(
    \mathbf{z}_{\max,\rmp,\nsfn}
  \bigr)_{\nsfn\in[\Nsfn]}
\]
be the associated tuple of local $\Np$-bit approximations introduced below~\eqref{eqn: generated state vector}.

After Stage~1 of Algorithm~\ref{alg:u_ini_protocol}, the boundary register is in the uniform superposition
\begin{align*}
  \frac{1}{\sqrt{|\setx_{\setVb}|}}
  \sum_{\vx_{\setVb}\in\setx_{\setVb}}
  \ket{\vx_{\setVb}}_{\vqB}.
\end{align*}
For each worker $\nsfn$ and each boundary assignment $\vx_{\setVb}$, define
\begin{align*}
  \ket{\psi_{\nsfn,\vx_{\setVb}}}
  \defeq
  \matr{U}_{\max}\bigl(
    \Np,\delta,p_{\min,\nsfn},
    \matr{U}_{g_{\nsfn,\vx_{\setVbnsfn}}}
  \bigr)
  \cdot
  \svzero_{\vqstn{\nsfn},\vqpn{\nsfn},\vqsrn{\nsfn},\vqoran{\nsfn}}.
\end{align*}
Since Stage~2 applies these local maximization unitaries coherently and in parallel, and Stage~3 only teleports $(\vqBauxn,\vqpn{\nsfn})_{\nsfn}$ back to the coordinator and then uncomputes $(\vqBauxn)_{\nsfn}$, the state prepared by $\matr{U}_{\mathrm{ini}}$ can be written as
\begin{align}
  \ket{\psi_{\rmini}}
  =
  \frac{1}{\sqrt{|\setx_{\setVb}|}}
  \sum_{\vx_{\setVb}\in\setx_{\setVb}}
  \ket{\vx_{\setVb}}_{\vqB}
  \otimes
  \bigotimes_{\nsfn\in[\Nsfn]}
  \ket{\psi_{\nsfn,\vx_{\setVb}}}.
  \label{eqn:uini_branch_decomposition}
\end{align}
In particular, on the optimal branch $\vx_{\setVb}^{\ast}$, the worker states factorize across $\nsfn$.

For each worker $\nsfn$, we define the corresponding clean success state by
\begin{align*}
  \ket{\chi_{\nsfn}^{\ast}}
  \defeq
  \svzero_{\vqstn{\nsfn},\vqsrn{\nsfn},\vqoran{\nsfn}}
  \ket{\mathbf{z}_{\max,\rmp,\nsfn}}_{\vqpn{\nsfn}}.
\end{align*}
By Item~\ref{prop_F_success_pro_outcome} of Theorem~\ref{thm: properties_estimated_quantum_algorithm}, applied to the local function
$g_{\nsfn,\vx_{\setVbnsfn}^{\ast}}$, we have
\begin{align}
  \left|
    \braket{\chi_{\nsfn}^{\ast}}{\psi_{\nsfn,\vx_{\setVb}^{\ast}}}
  \right|^{2}
  \geq
  (1-\delta)^{2\Np},
  \qquad
  \forall\,\nsfn\in[\Nsfn].
  \label{eqn:uini_local_success_bound}
\end{align}
Therefore, the amplitude of the specific global branch in which the boundary register equals $\ket{\vx_{\setVb}^{\ast}}_{\vqB}$ and every worker is in its clean success state is
\begin{align*}
  &\left|
  \left(
    \bra{\vx_{\setVb}^{\ast}}_{\vqB}
    \otimes
    \bigotimes_{\nsfn\in[\Nsfn]}
    \bra{\chi_{\nsfn}^{\ast}}
  \right)
  \ket{\psi_{\rmini}}
  \right|^{2}
  \overset{(a)}{=}
  \frac{1}{|\setx_{\setVb}|}
  \prod_{\nsfn\in[\Nsfn]}
  \left|
    \braket{\chi_{\nsfn}^{\ast}}{\psi_{\nsfn,\vx_{\setVb}^{\ast}}}
  \right|^{2}
  \overset{(b)}{\geq}
  \frac{(1-\delta)^{2\Np\Nsfn}}{|\setx_{\setVb}|},
\end{align*}
where step $(a)$ follows from~\eqref{eqn:uini_branch_decomposition}, and where step $(b)$ follows from~\eqref{eqn:uini_local_success_bound}. Since the projector
$\matr{I}_{\mathrm{rem}}\otimes\matr{P}_{\{\mathbf{z}_{\max,\rmp,[\Nsfn]}\}}$
retains every component whose value registers equal
$\mathbf{z}_{\max,\rmp,[\Nsfn]}$, its expectation is at least the probability weight of this particular branch. Hence,
\begin{align*}
  \bra{\psi_{\rmini}}
  \bigl(
    \matr{I}_{\mathrm{rem}}
    \otimes
    \matr{P}_{\{\mathbf{z}_{\max,\rmp,[\Nsfn]}\}}
  \bigr)
  \ket{\psi_{\rmini}}
  \geq
  \frac{(1-\delta)^{2\Np\Nsfn}}{|\setx_{\setVb}|}
  =
  p_{\min,\rmc}.
\end{align*}
This proves~\eqref{eqn: property of psi vect with positive magnitude: final}.

\textbf{Proof of \texorpdfstring{\eqref{eqn: bound of gstar and znsfn}}{} (accuracy of the encoded value).}
Let $\vx_{\setVb}^{\ast}$ be as above. Then
\begin{align*}
  \gmax
  =
  \sum_{\nsfn\in[\Nsfn]}
  g_{\nsfn,\vx_{\setVbnsfn}^{\ast}}^{\max}.
\end{align*}
For each $\nsfn\in[\Nsfn]$, Item~\ref{prop_F_approx_precision} of Theorem~\ref{thm: properties_estimated_quantum_algorithm}, applied to the local maximization problem with boundary assignment $\vx_{\setVb}^{\ast}$, yields
\begin{align*}
  0
  \leq
  g_{\nsfn,\vx_{\setVbnsfn}^{\ast}}^{\max}
  -
  \zdec\bigl(\mathbf{z}_{\max,\rmp,\nsfn}\bigr)
  \leq
  2^{-\Np}.
\end{align*}
Summing over $\nsfn\in[\Nsfn]$ gives
\begin{align*}
  0
  &\leq
  \sum_{\nsfn\in[\Nsfn]}
  g_{\nsfn,\vx_{\setVbnsfn}^{\ast}}^{\max}
  -
  \sum_{\nsfn\in[\Nsfn]}
  \zdec\bigl(\mathbf{z}_{\max,\rmp,\nsfn}\bigr)
  =
  \gmax
  -
  \sum_{\nsfn\in[\Nsfn]}
  \zdec\bigl(\mathbf{z}_{\max,\rmp,\nsfn}\bigr)
  \leq
  \Nsfn\cdot 2^{-\Np}.
\end{align*}
This is exactly~\eqref{eqn: bound of gstar and znsfn}. The proof of Item~3 is complete.

\section{Proof of Theorem~\ref{thm: overall_algorithm_performance}}
\label{apx: overall_algorithm_performance}

The proof of Theorem~\ref{thm: overall_algorithm_performance} establishes the overall performance and resource requirements of the distributed quantum algorithm presented in Algorithm~\ref{alg:dist-opt}. The proof is a synthesis of previously established results. It combines the abstract performance guarantees of the main algorithmic structure, as proven in Theorem~\ref{thm: overall_algorithm_properties}, with the concrete resource costs of its distributed components. Specifically, we integrate the properties of the state preparation unitary $\matr{U}_{\mathrm{ini}}$, proven in Lemma~\ref{lem: properties_approximation_of_oracle_Uini}, and the distributed implementations of the reflection and controlled-phase-estimation operators, which are detailed in Appendix~\ref{apx:implementationof_Upe}. Each item of the theorem is proven sequentially below.

\subsection{Item~\ref{prop:total_qubits_count_overall} and Item~\ref{prop:qubits_per_processor_thm_overall_alg_perf}: Total Qubit Count and Allocation per Processor}

The total number of qubits and their allocation to each processor are derived by enumerating all quantum registers defined in Appendix~\ref{apx:registers} and accounting for their physical location throughout the algorithm's execution. The analysis distinguishes between registers that are statically allocated to a processor and those that are transient, being transferred via quantum teleportation during the protocol.

\subsubsection{Qubits Processed by the Coordinator Processor (\texorpdfstring{$\QPUc$}{QPU\_c})}
The coordinator processor manages a set of registers throughout the algorithm. The maximum number of qubits that it needs to simultaneously control, determines its hardware requirement.
\begin{itemize}
    \item \textbf{Statically Allocated Registers:} These registers reside permanently at $\QPUc$.
    \begin{itemize}
        \item Boundary variable register $\vqB$: $|\setVb|$ qubits.
        \item Coordinator processing registers $\vqcen$, which include the phase estimation register $\vqstn{\mathrm{c}}$ and the final result register $\vqpc$: $t(p_{\min,\rmc}, \delta) + \Np$ qubits.
        \item Ancilla qubits for the distributed reflection operator, $q_{\mathrm{CZ},\rmc}$ and $q_{\mathrm{global\_zero}}$: $2$ qubits.
    \end{itemize}
    \item \textbf{Transiently Held Registers:} These registers are teleported to $\QPUc$ from the worker processors at specific stages of the algorithm.
    \begin{itemize}
        \item Local result registers $(\vqpn{\nsfn})_{\nsfn}$, received from workers for implementing the marking oracle $\matr{U}_{\set{Z}_{\geq\mathbf{z}}}$ and for the final uncomputation step of $\matr{U}_{\mathrm{ini}}$: $\Nsfn \cdot \Np$ qubits.
        
        \item Auxiliary registers for distributing control information, $(\vqstn{\mathrm{c},\nsfn})_{\nsfn}$: $\Nsfn \cdot t(p_{\min,\rmc}, \delta)$ qubits.
        
        \item Auxiliary boundary registers $(\vqBauxn)_{\nsfn}$, received from workers during the uncomputation step of $\matr{U}_{\mathrm{ini}}$: $\sum_{\nsfn} |\setVbnsfn|$ qubits.
        \item Ancilla qubits for the distributed reflection, $(q_{\mathrm{CZ},\nsfn})_{\nsfn}$, received from workers to perform the multi-controlled gate: $\Nsfn$ qubits.
        
        \item An ancilla qubit $q_{\mathrm{ancilla}}$ for implementing $\matr{U}_{\set{Z}_{\ge\mathbf{z}}}$ and $\mUpha{\cdot}{\cdot}$ in a distributed manner.
    \end{itemize}
\end{itemize}
The total number of qubits processed by $\QPUc$ is the sum of the sizes of these registers, which yields the expression in Item~\ref{prop:qubits_per_processor_thm_overall_alg_perf} of the theorem.

\subsubsection{Qubits Processed by a Worker Processor (\texorpdfstring{$\QPU_{\nsfn}$}{QPU\_s})}
Each worker processor $\QPU_{\nsfn}$ manages its own set of local and auxiliary registers.
\begin{itemize}
    \item \textbf{Statically Allocated Registers:}
    \begin{itemize}
        \item Local processing registers $\vqlocn$, which include the local phase estimation register $\vqstn{\nsfn}$, the internal variable register $\vqsrn{\nsfn}$, and the local result register $\vqpn{\nsfn}$: $t(p_{\min,\nsfn}, \delta) + |\set{V}_{\nsfn}| + \Np$ qubits.
        \item Local oracle ancilla register $\vqoran{\nsfn}$: $N_{\mathrm{ora},\nsfn}$ qubits.
    \end{itemize}
    \item \textbf{Transiently Held Registers:}
    \begin{itemize}
        \item Auxiliary boundary register $\vqBauxn$, received from $\QPUc$ during the execution of $\matr{U}_{\mathrm{ini}}$: $|\setVbnsfn|$ qubits.
        \item Copies of distributed control information, $(\vqstn{\mathrm{c},\nsfn})$, received from $\QPUc$ during the main phase estimation loop: $t(p_{\min,\rmc}, \delta)$ qubits.
        \item An ancilla qubit for the distributed reflection, $q_{\mathrm{CZ},\nsfn}$: $1$ qubit.
    \end{itemize}
\end{itemize}
Summing the sizes of these registers gives the total qubit count for each worker processor, as stated in Item~\ref{prop:qubits_per_processor_thm_overall_alg_perf}. The total qubit count for the entire network, stated in Item~\ref{prop:total_qubits_count_overall}, is the sum of the coordinator's qubits and the qubits of all $\Nsfn$ workers, taking care not to double-count the transient registers. This corresponds to the sum of all entries in Table~\ref{table: number of qubits}.

\subsection{Item~\ref{prop:overall_query_complexity_main_thm}: Overall Query Complexity}

The overall query complexity is measured in the standard distributed model, where the fundamental cost is the number of calls to the local oracles $\bigl\{ \matr{U}_{g_{\nsfn,\mathbf{x}_{\setVbnsfn}}} \bigr\}_{\nsfn \in [\Nsfn],\, \mathbf{x}_{\setVbnsfn} \in \set{X}_{\setVbnsfn}}$ and their inverse, which encode the problem structure. Other resource costs, such as communication, are analyzed separately. The derivation follows a two-level analysis that reflects the nested structure of the algorithm.

First, we consider the high-level complexity of the main algorithm described in Algorithm~\ref{alg:dist-opt}. According to Item~\ref{prop: overall_query_complexity: final} of Theorem~\ref{thm: overall_algorithm_properties}, the main algorithm's iterative structure requires a total of $\Np \cdot (2^{t(p_{\min,\rmc}, \delta) + 2} - 2)$ applications of the state preparation unitary $\matr{U}_{\mathrm{ini}}$ and its inverse $\matr{U}_{\mathrm{ini}}^{\Herm}$.

Second, we consider the low-level complexity of a single application of $\matr{U}_{\mathrm{ini}}$. As established in Item~1 of Lemma~\ref{lem: properties_approximation_of_oracle_Uini}, each execution of the distributed procedure for $\matr{U}_{\mathrm{ini}}$ requires a total of $\sum_{\nsfn \in [\Nsfn]} \Np \cdot (2^{t(p_{\min,\nsfn}, \delta)+2} - 2)$ queries to the fundamental local oracles and their inverses.

The total query complexity of the entire distributed algorithm is the product of these two quantities. This multiplicative relationship arises because the main loop repeatedly calls the $\matr{U}_{\mathrm{ini}}$ subroutine, and each of these subroutine calls in turn triggers a sequence of local oracle queries. Therefore, the total number of queries to the fundamental local oracles is
\begin{align*}
    C_{\mathrm{distr}}(\graphG, \mathsf{N},\delta,\Np) 
    &= \Np^{2} \cdot \bigl( 2^{ t( p_{\min,\mathrm{c}}, \delta ) + 2 } - 2 \bigr) 
    \cdot \Biggl( \sum_{\nsfn \in [\Nsfn]} \bigl(2^{t(p_{\min,\nsfn}, \delta)+2} - 2 \bigr)\Biggr),
\end{align*}
which confirms the expression in~\eqref{eqn: overall_query_complexity_discussion_revised}.

\subsection{Item~\ref{prop:overall_EPR_consumption_main_thm}: Overall EPR Pair Consumption}

The total number of EPR pairs consumed is the sum of pairs used in all quantum teleportation steps throughout the algorithm, assuming an idealized, single-hop quantum network. We derive this total by systematically accounting for the three distinct sources of EPR pair consumption.

\subsubsection{Contribution from State Preparation and Uncomputation}
The state preparation unitary $\matr{U}_{\mathrm{ini}}$ and its inverse $\matr{U}_{\mathrm{ini}}^{\Herm}$ are called a total of $\Np \cdot (2^{t(p_{\min,\mathrm{c}}, \delta)+2} - 2)$ times during the execution of the main algorithm. According to Item~2 of Lemma~\ref{lem: properties_approximation_of_oracle_Uini}, each call consumes $\sum_{\nsfn \in [\Nsfn]} (2 |\setVbnsfn| + \Np)$ EPR pairs for distributing and collecting the auxiliary boundary registers. The total consumption from this source is therefore
\begin{align*}
    N_{\mathrm{EPR},1} = \Np \cdot \bigl( 2^{ t( p_{\min,\mathrm{c}}, \delta ) + 2 } - 2 \bigr) \cdot \Biggl( \sum_{\nsfn \in [\Nsfn]} (2|\setVbnsfn| + \Np) \Biggr).
\end{align*}

\subsubsection{Contribution from the Distributed Reflection Operator}
The reflection operator $(2\svzero\bra{\mathbf{0}} - \matr{I})_{\mathrm{all}}$ is a component of the amplitude amplification operator $\matr{U}_{\mathrm{AA}}$. The phase-testing subroutine, as detailed in Theorem~\ref{thm:property_of_phase_estimation_apx} in Appendix~\ref{apx:reversible_phase_estimation}, in each of the $\Np$ iterations of the main algorithm makes $2^{t(p_{\min,\mathrm{c}}, \delta)+1} - 2$ calls to the controlled-$\matr{U}_{\mathrm{AA}}$ operator. The distributed implementation of the reflection requires each of the $\Nsfn$ workers to teleport its ancilla qubit $q_{\mathrm{CZ},\nsfn}$ to the coordinator and to receive it back, consuming $2\Nsfn$ EPR pairs per reflection. The total consumption from this source is
\begin{align*}
    N_{\mathrm{EPR},2} = \Np \cdot \bigl(2^{t(p_{\min,\mathrm{c}}, \delta)+1} - 2\bigr) \cdot (2\Nsfn).
\end{align*}

\subsubsection{Contribution from Control Information Distribution}
This cost arises from the implementation of the controlled-unitary operations within the phase-testing subroutine, as detailed in Appendix~\ref{apx:implementationof_Upe}. For each of the $\Np$ bits being determined, the algorithm executes a phase-testing subroutine that uses $t(p_{\min,\rmc}, \delta)$ control qubits from the register $\vqstn{\mathrm{c}}$. For each of these control qubits, its quantum state is distributed to all $\Nsfn$ workers and subsequently returned to uncompute the entanglement. This process requires $2\Nsfn t(p_{\min,\rmc}, \delta)$ teleportations per control qubit. The total consumption from this source is the product of the number of main iterations, the number of control qubits per iteration, and the communication cost per control qubit, which is
\begin{align*}
    N_{\mathrm{EPR},3} = \Np \cdot t(p_{\min,\rmc}, \delta) \cdot (2\Nsfn).
\end{align*}

\subsubsection{Total EPR Pair Consumption}
The total number of EPR pairs consumed by the algorithm is the sum of these three contributions, $N_{\mathrm{EPR}} = N_{\mathrm{EPR},1} + N_{\mathrm{EPR},2} + N_{\mathrm{EPR},3}$. This sum yields the expression given in~\eqref{eqn:total_epr_consumption}. We note that the dominant contribution to the EPR pair consumption arises from the repeated calls to $\matr{U}_{\mathrm{ini}}$ and the distributed reflection operator, as both scale exponentially with the number of phase estimation qubits, $t(p_{\min,\rmc}, \delta)$. This highlights the communication cost associated with the phase-testing subroutines as the primary bottleneck for the algorithm's scalability in terms of network resources.

\section{Proof of Proposition~\ref{prop:generalization_to_arbitrary_topologies}}
\label{apx:generalization_to_arbitrary_topologies}

The construction of the algorithm $A_{\mathrm{dist}}'$ on the network $\mathsf{N}'$ follows directly from the implementation of $A_{\mathrm{dist}}$ on $\mathsf{N}$, with modifications only to the communication primitives. The proof proceeds in three steps.

\textbf{1. Remote Communication via Entanglement Swapping.}
The original algorithm $A_{\mathrm{dist}}$ requires communication between the coordinator $\QPUc$ and each worker $\QPU_{\nsfn}$. In an arbitrary connected network $\mathsf{N}'$, these processors may not be directly connected. However, since $\mathsf{N}'$ is a connected graph, a path of length $k = \mathrm{dist}_{\mathsf{N}'}(\mathrm{c}, \nsfn)$ exists between $\QPUc$ and $\QPU_{\nsfn}$,
where $\mathrm{dist}_{\mathsf{N}'}(\mathrm{c}, \nsfn)$ is defined to be shortest-path distance from the coordinator to the worker $\nsfn$ in the network $ \mathsf{N}' $. 
A single EPR pair can be established between these two remote processors by sequentially performing $k-1$ entanglement swapping operations along this path~\cite{ZelseZfiukowski1993, Bennett1993}.
This process consumes one elementary EPR pair for each of the $k$ links along the path.
Thus, teleporting a single qubit from $\QPUc$ to $\QPU_{\nsfn}$ (or vice versa), which consumes one EPR pair in a direct-link network, consumes $k = \mathrm{dist}_{\mathsf{N}'}(\mathrm{c}, \nsfn)$ EPR pairs in the network $\mathsf{N}'$.

\textbf{2. Adapting the Algorithm and Aggregating Communication Costs.}
The algorithm $A_{\mathrm{dist}}'$ is constructed by replacing every single-hop quantum communication step in $A_{\mathrm{dist}}$ (\eg, the quantum teleportation steps in Algorithm~\ref{alg:u_ini_protocol} and Appendix~\ref{apx:implementationof_Upe}) with the corresponding multi-hop entanglement swapping protocol on $\mathsf{N}'$.
Let $N_{\mathrm{EPR}, i}$ be the number of EPR pairs consumed by the $i$-th communication event in the original algorithm, where communication occurs between $\QPUc$ and a worker $\QPU_{\nsfn(i)}$.
In the adapted algorithm $A_{\mathrm{dist}}'$, this event now consumes $N'_{\mathrm{EPR}, i} = \mathrm{dist}_{\mathsf{N}'}(\mathrm{c}, \nsfn(i)) \cdot N_{\mathrm{EPR}, i}$ EPR pairs.
The total EPR pair consumption for $A_{\mathrm{dist}}'$ is the sum over all such events:
\begin{align*}
    N'_{\mathrm{EPR}} = \sum_{i} N'_{\mathrm{EPR}, i} = \sum_{i} \mathrm{dist}_{\mathsf{N}'}(\mathrm{c}, \nsfn(i)) \cdot N_{\mathrm{EPR}, i}.
\end{align*}
By the definition of the network diameter, $\mathrm{dist}_{\mathsf{N}'}(\mathrm{c}, \nsfn(i)) \le d(\mathsf{N}')$ for all $i$.
Therefore, we can establish the upper bound:
\begin{align}
    N'_{\mathrm{EPR}} &\le \sum_{i} d(\mathsf{N}') \cdot N_{\mathrm{EPR}, i} = d(\mathsf{N}') \cdot \sum_{i} N_{\mathrm{EPR}, i} = d(\mathsf{N}') \cdot N_{\mathrm{EPR}}.\nonumber
\end{align}
This proves the communication cost property.

\textbf{3. Invariance of Computational Metrics.}
The adaptation from $A_{\mathrm{dist}}$ to $A_{\mathrm{dist}}'$ only modifies the implementation of communication between processors.
The sequence and nature of local quantum operations (gates, oracles, measurements) performed within each QPU remain unchanged.
Consequently, the total number of queries to local oracles, the number of qubits required at each processor, the overall success probability, and the final approximation precision are all independent of the network topology and remain identical to those specified in Theorem~\ref{thm: overall_algorithm_performance}.
This completes the proof.

\section{Gate-level Wavefunction Validation: Query Scaling and Topology-Driven Entanglement Cost}
\label{sec:gate_level_validation}

\begin{figure}[t]
\centering
\includegraphics[width=0.6\linewidth]{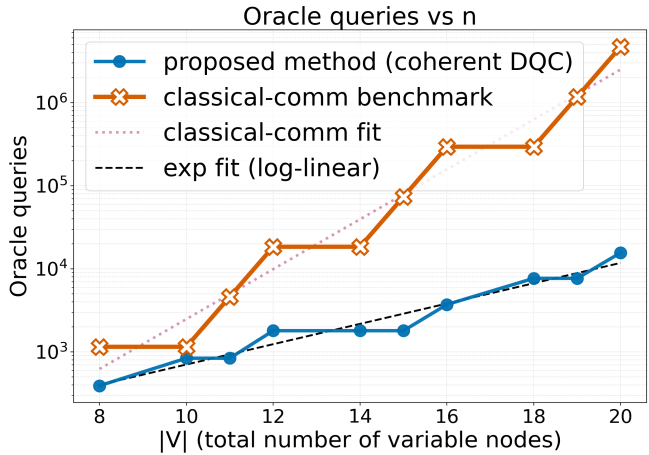}
\caption{Total oracle queries per full execution of the distributed primitive versus the total number of variables $|\setV|$ (gate-level state vector simulation). The $y$-axis is logarithmic. Dashed lines are log-linear fits for both the proposed method and the classical-communication benchmark. Benchmark markers are all actual simulation points (no interpolation).}
\label{fig:queries_vs_n_gate}
\end{figure}
\begin{figure}[t]
\centering
\includegraphics[width=0.6\linewidth]{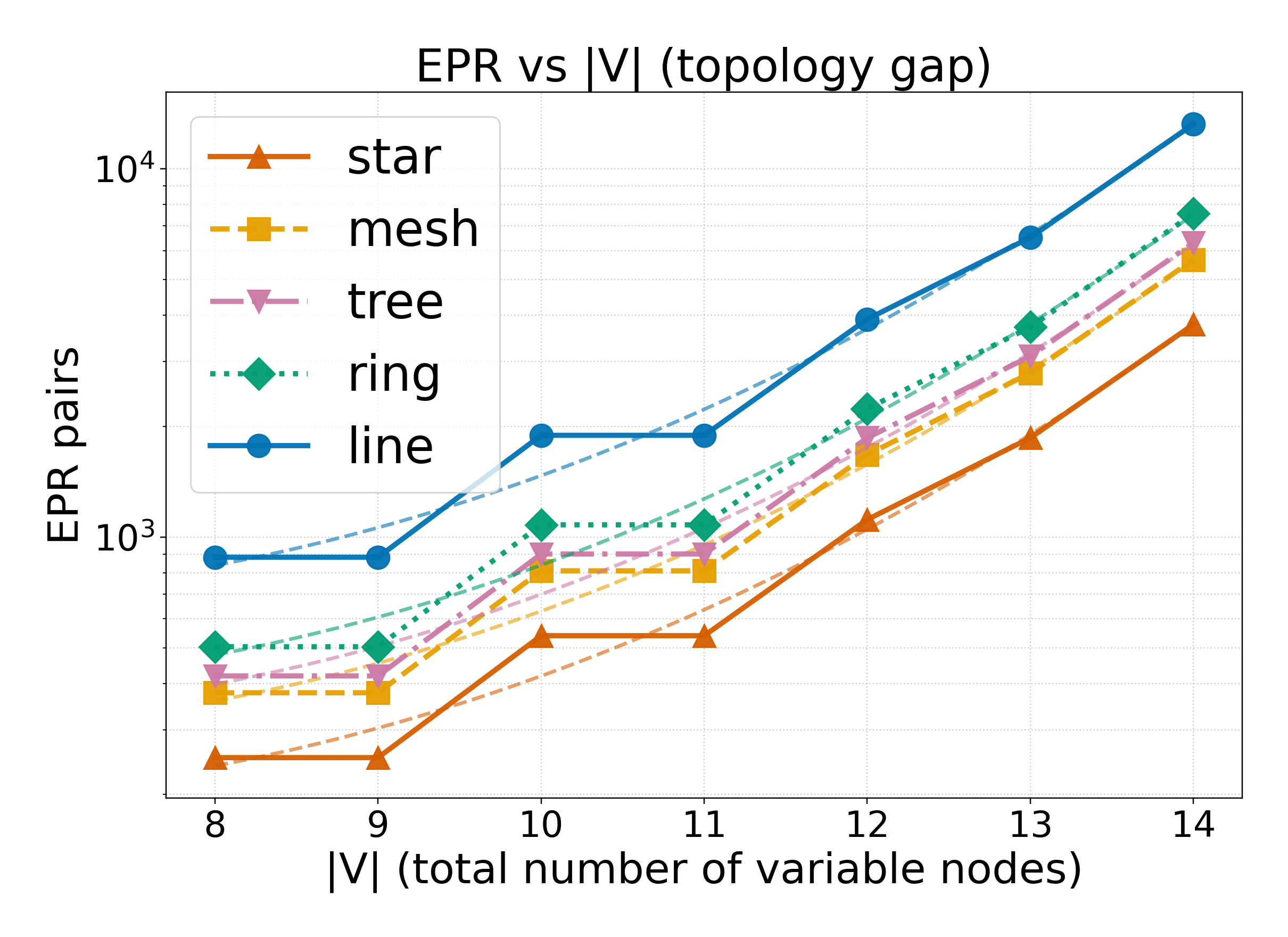}
\caption{Total EPR pairs consumed versus $|\setV|$ across canonical coordinator--worker topologies (gate-level state vector simulation). Here $|\setV|$ increases via the boundary-size schedule under fixed worker settings. Dashed curves are quadratic trend fits in log-space.}
\label{fig:epr_vs_vb_gate}
\end{figure}

\begin{figure}[t]
\centering
\includegraphics[width=0.6\linewidth]{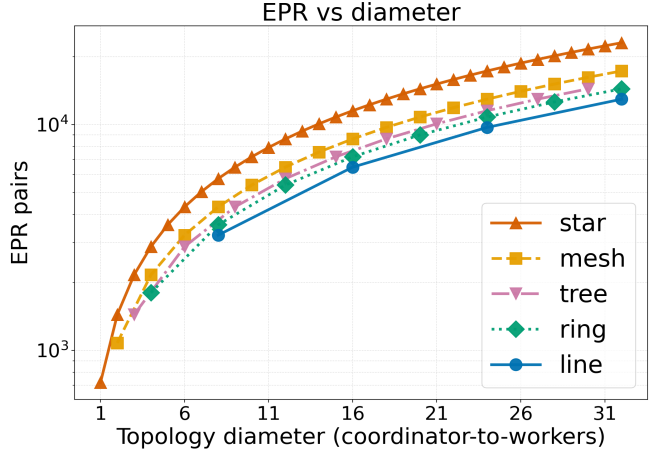}
\caption{Total EPR pairs consumed versus coordinator-to-workers diameter (hops) under topology stretching (gate-level state vector simulation).}
\label{fig:epr_vs_diameter_gate}
\end{figure}

\textbf{Goal.}
These experiments execute the same boundary-conditioned phase-estimation and amplitude-amplification control flow as the
distributed primitive, so they serve as an end-to-end check of the protocol assumptions used in
Theorem~\ref{thm: overall_algorithm_performance} and the topology accounting in
Proposition~\ref{prop:generalization_to_arbitrary_topologies}. Two full-execution counters are considered:
(i) total leaf-oracle evaluations (oracle queries), and (ii) total elementary EPR pairs consumed under the
teleportation-and-swapping hop accounting.

\textbf{Fig.~\ref{fig:queries_vs_n_gate}: query scaling and the coherence gap.}
We sweep $|\setV|\in\{8,9,\ldots,20\}$ (13 simulated points) with two workers ($\Nsfn=2$). The baseline is a
classical-communication benchmark (\textit{cf.}~\cite{Avron2021}): workers run local quantum maximum-finding and exchange
only classical messages, with boundary-level repetition chosen to match the same precision and success targets as our
primitive. The coherent distributed primitive uses fewer oracle queries at every point, with a ratio that ranges from
$1.37\times$ to $302\times$ over the sweep. Both curves show plateau--jump transitions because key protocol parameters
(\eg, the phase-estimation register length) are integer-valued ceilings of logarithmic bounds: they remain constant over
intervals and increase by one at thresholds, producing discrete multiplicative increases in the oracle-query counter.

\textbf{Fig.~\ref{fig:epr_vs_vb_gate}: boundary growth separates topologies.}
We fix $\Nsfn=6$, $|\setV_{\nsfn}|=1$ for all $\nsfn\in[\Nsfn]$, $\Np=1$, $\delta=0.25$, and sweep
$|\setVb|\in\{2,\ldots,8\}$ (hence $|\setV|\in\{8,\ldots,14\}$). The EPR cost increases rapidly with $|\setV|$, and the
topology ordering is stable:
\[
\text{line} > \text{ring} > \text{tree} > \text{mesh} > \text{star}.
\]
The gap is systematic (\eg, line is about $3.5\times$ star across the sweep), which matches
Proposition~\ref{prop:generalization_to_arbitrary_topologies}: at fixed algorithmic workload, EPR consumption is governed
by topology-dependent hop aggregation.

\textbf{Fig.~\ref{fig:epr_vs_diameter_gate}: stretching isolates hop-geometry dependence.}
We fix $|\setVb|=5$, $\Nsfn=8$, $|\setV\setminus\setVb|=1$, $\Np=1$, $\delta=0.25$, and increase hop distances by edge
stretching up to diameter $32$. This keeps the logical computation fixed and changes only routed distances. EPR
consumption increases monotonically with diameter and remains topology-separated, confirming that hop geometry is a
first-order driver. At matched diameter, aggregate coordinator-to-worker hop sums dominate, so diameter alone does not
predict the full ordering; under stretching, star becomes most expensive because every spoke scales with the stretch
(\eg, at diameter $24$: star $17280$ vs.\ line $9720$ EPR pairs).

\textbf{Summary.}
The gate-level data align with the theory: coherent distributed execution reduces oracle-query volume relative to
classical communication only (Theorem~\ref{thm: overall_algorithm_performance}), while entanglement cost is controlled by
boundary size and topology-dependent hop aggregation (Proposition~\ref{prop:generalization_to_arbitrary_topologies}).

\section{Proof of Theorem~\ref{prop: for an oracle solving the maximum how to find an oracle finds the location}}
\label{apx: for an oracle solving the maximum how to find an oracle finds the location}
we assume, without loss of generality, that the variable set is $ \set{V} = \{1, 2, \ldots, |\set{V}|\}$ and that each variable $x_{i}$ is binary, \ie, $x_{i} \in \{0,1\}$, consistent with the factor graph assumptions in this paper (see Section~\ref{sec:factor_graph_main_definition}).

\begin{definition}[Algorithm $ A_{\mathrm{dist,config}}(\graphG, \mathsf{N}, \delta, \Np) $]
\label{def:A_dist_config_alg_revised_V2}
The algorithm to find an optimal configuration $\hat{\vx}$ proceeds as follows:

\textbf{Initial Step:}
Run $ A_{\mathrm{dist}}(\graphG, \mathsf{N}, \delta, \Np) $ on the original factor graph $\graphG$ to obtain an $\Np$-bit approximation $\mathbf{z}_{\max,\rmp,\rmc}^{(\mathrm{ref})}$ of $\gmax$.
Let $\hat{g}_{\mathrm{ref}} = \zdec( \mathbf{z}_{\max,\rmp,\rmc}^{(\mathrm{ref})} )$. Initialize an empty list for the configuration $\hat{\vx}$.
For consistency, we define $\mathbf{z}_{\max,\rmp,\rmc}^{(0)} \defeq \mathbf{z}_{\max,\rmp,\rmc}^{(\mathrm{ref})}$.

\textbf{Iterative Variable Fixing:}
For each variable index $i \in \{1,\ldots, |\set{V}|\}$:
\begin{enumerate}
    \item \textbf{Construct Modified Graphs and Evaluate:}
    For each possible assignment $x' \in \{0,1\}$ for the current variable $x_{i}$:
    \begin{itemize}
        \item Define a new factor graph $ \graphG^{(i,x')} $. This graph is identical to the original graph $\graphG$ except that the alphabets of variables $x_{1}, \ldots, x_{i-1}$ are restricted to their previously determined optimal values $\{\hat{x}_{1}\}, \ldots, \{\hat{x}_{i-1}\}$, and the alphabet of $x_{i}$ is restricted to $\{x'\}$. The alphabets of $x_{i+1}, \ldots, x_{|\set{V}|}$ remain unchanged.
        \item Let $\gmaxn{i,x'}$ be the true maximum of the global function corresponding to the factor graph $\graphG^{(i,x')}$.
        \item Run $ A_{\mathrm{dist}}(\graphG^{(i,x')}, \mathsf{N}, \delta, \Np) $ to obtain an $\Np$-bit approximation $\mathbf{z}_{\max,\rmp,\rmc}^{(i,x')}$ of $\gmaxn{i,x'}$. By Theorem~\ref{thm: overall_algorithm_properties}, with probability at least $(1-\delta)^{2\Np}$, this approximation satisfies:
        \begin{align}
        \bigl| \gmaxn{i,x'} - \zdec( \mathbf{z}_{\max,\rmp,\rmc}^{(i,x')} ) \bigr| \leq \Delta_{\Np},
        \label{eqn: bound_of_error_for_g_hat_and_g_star_iz_proof_revised_V3}
        \end{align}
        where $\Delta_{\Np} \defeq (\Nsfn + 1) \cdot 2^{-\Np}$.
    \end{itemize}
    \item \textbf{Set Current Variable $\hat{x}_{i}$:}
    If
    \begin{align}
    \Bigl| \zdec( \mathbf{z}_{\max,\rmp,\rmc}^{(i,0)} ) - \zdec\bigl( \mathbf{z}_{\max,\rmp,\rmc}^{(i-1)} \bigr) \Bigr| \leq 2 \cdot \Delta_{\Np},
    \label{eqn: the_inequality_of_the_sequence_proof_revised}
    \end{align}
    then set $ \hat{x}_{i} = 0 $. Otherwise, set $ \hat{x}_{i} = 1 $.
    Append $\hat{x}_{i}$ to $\hat{\vx}$ and define $\mathbf{z}_{\max,\rmp,\rmc}^{(i)} \defeq \mathbf{z}_{\max,\rmp,\rmc}^{(i, \hat{x}_i)}$.
\end{enumerate}
\textbf{Output:} The algorithm outputs the configuration $ \hat{\vx} = (\hat{x}_{1}, \ldots, \hat{x}_{|\set{V}|}) $.
\end{definition}

Now we prove the performance guarantee statements.

The algorithm $A_{\mathrm{dist,config}}$ involves one initial call to $A_{\mathrm{dist}}$ and then $2|\set{V}|$ subsequent calls to $A_{\mathrm{dist}}$ during the iterative variable fixing stage (two calls for each of the $|\set{V}|$ variables).
Thus, there are a total of $1 + 2|\set{V}|$ calls to $A_{\mathrm{dist}}$.
The query complexity of $A_{\mathrm{dist}}(\graphG^{(i,x')}, \mathsf{N},\delta,\Np)$ is at most $C_{\mathrm{distr}}(\graphG, \mathsf{N},\delta,\Np)$ because fixing variables does not increase (and typically reduces) the size or complexity of the optimization problem. Thus the total query complexity to the fundamental local oracles is upper bounded by $(2|\set{V}| +1) \cdot C_{\mathrm{distr}}(\graphG, \mathsf{N},\delta,\Np) \in O(2^{|\set{V}|/2})$.

The probability that all these $1 + 2|\set{V}|$ calls to $A_{\mathrm{dist}}$ succeed in providing approximations within the bound $\Delta_{\Np}$ (as specified in inequalities~\eqref{eqn:prop_approx_gmax_revised} for the initial call and~\eqref{eqn: bound_of_error_for_g_hat_and_g_star_iz_proof_revised_V3} for subsequent calls) is at least $ ( 1 - \delta 
)^{2 \Np \cdot (1+2|\set{V}|)} $.

Assuming all calls to $A_{\mathrm{dist}}$ are successful (\ie, their outputs satisfy the respective approximation bounds), we analyze the error accumulation.
Let $\gmaxn{i}$ denote the true maximum value of $g(\vx)$ when variables $x_{1}, \ldots, x_{i}$ are fixed to the determined values $\hat{x}_{1}, \ldots, \hat{x}_{i}$, and the remaining variables $x_{i+1}, \ldots, x_{|\set{V}|}$ are optimized.
Let $\gmaxn{0} \defeq \gmax$. 
From the properties of $A_{\mathrm{dist}}$, we have $|\gmaxn{i} - \zdec(\mathbf{z}_{\max,\rmp,\rmc}^{(i, \hat{x}_i)})| \leq \Delta_{\Np}$ for all $i \geq 1$, and $|\gmaxn{0} - \zdec(\mathbf{z}_{\max,\rmp,\rmc}^{(0)})| \leq \Delta_{\Np}$.
Consider the $i$-th step of variable fixing, where $\hat{x}_i$ is determined.

\textbf{Case 1: $\hat{x}_{i} = 0$.}
This assignment is chosen if the condition in inequality~\eqref{eqn: the_inequality_of_the_sequence_proof_revised} holds: $|\zdec(\mathbf{z}_{\max,\rmp,\rmc}^{(i,0)}) - \zdec(\mathbf{z}_{\max,\rmp,\rmc}^{(i-1)})| \leq 2 \cdot \Delta_{\Np}$.
We bound the difference between the true maxima of consecutive optimally extended partial assignments:
\begin{align*}
& \bigl| \gmaxn{i-1} - \gmaxn{i,0} \bigr| \nonumber \\
&\quad = \Bigl| \bigl( \gmaxn{i-1} - \zdec(\mathbf{z}_{\max,\rmp,\rmc}^{(i-1)}) \bigr) + \bigl(\zdec(\mathbf{z}_{\max,\rmp,\rmc}^{(i-1)}) 
- \zdec(\mathbf{z}_{\max,\rmp,\rmc}^{(i,0)}) \bigr) + \bigl(\zdec(\mathbf{z}_{\max,\rmp,\rmc}^{(i,0)}) - \gmaxn{i,0}\bigr) \Bigr| \nonumber \\
&\quad \overset{(a)}{\leq} \underbrace{\bigl| \gmaxn{i-1} - \zdec(\mathbf{z}_{\max,\rmp,\rmc}^{(i-1)}) \bigr|}_{\leq \Delta_{\Np}} 
+ \underbrace{\bigl| \zdec(\mathbf{z}_{\max,\rmp,\rmc}^{(i-1)}) - \zdec(\mathbf{z}_{\max,\rmp,\rmc}^{(i,0)}) \bigr|}_{\leq 2 \cdot \Delta_{\Np} \text{ (by condition for this case)}} 
+ \underbrace{\bigl| \zdec(\mathbf{z}_{\max,\rmp,\rmc}^{(i,0)}) - \gmaxn{i,0} \bigr|}_{\leq \Delta_{\Np}} \nonumber \\
&\quad \leq \Delta_{\Np} + 2 \cdot \Delta_{\Np} + \Delta_{\Np} \nonumber
\\&\quad= 4 \cdot \Delta_{\Np},
\end{align*}
where step $(a)$ follows from the triangle inequality. The subsequent inequality uses the approximation bounds from $A_{\mathrm{dist}}$ and the condition for choosing $\hat{x}_i=0$.
Since $\hat{x}_{i}=0$ in this case, $\gmaxn{i} = \gmaxn{i,0}$. Thus, $| \gmaxn{i-1} - \gmaxn{i} | \leq 4 \cdot \Delta_{\Np}$.

\textbf{Case 2: $\hat{x}_{i} = 1$.}
This assignment is chosen if $\bigl| \zdec(\mathbf{z}_{\max,\rmp,\rmc}^{(i,0)}) - \zdec(\mathbf{z}_{\max,\rmp,\rmc}^{(i-1)}) \bigr| > 2 \cdot \Delta_{\Np}$.
Since $x_{i}$ is binary, the true maximum $\gmaxn{i-1}$ (optimizing over $x_i, \dots, x_{|\set{V}|}$ given $\hat{x}_1, \dots, \hat{x}_{i-1}$) is equal to either $\gmaxn{i,0}$ (if $x_i=0$ is optimal for the $i$-th variable) or $\gmaxn{i,1}$ (if $x_i=1$ is optimal for the $i$-th variable).
Suppose, for contradiction, that $\gmaxn{i-1} = \gmaxn{i,0}$ (\ie, setting $x_i=0$ would have been optimal). Then:
\begin{align*}
&\bigl| \zdec(\mathbf{z}_{\max,\rmp,\rmc}^{(i,0)}) - \zdec(\mathbf{z}_{\max,\rmp,\rmc}^{(i-1)}) \bigr|
\\&\leq \bigl| \zdec(\mathbf{z}_{\max,\rmp,\rmc}^{(i,0)}) - \gmaxn{i,0} \bigr| 
+ \bigl| \gmaxn{i,0} - \gmaxn{i-1} \bigr|
+ \bigl| \gmaxn{i-1} - \zdec(\mathbf{z}_{\max,\rmp,\rmc}^{(i-1)}) \bigr| \\
&\leq \Delta_{\Np} + 0 + \Delta_{\Np} 
\\&= 2 \cdot \Delta_{\Np}.
\end{align*}
This contradicts the condition for setting $\hat{x}_{i}=1$. Therefore, our supposition is false, implying that $\gmaxn{i-1} = \gmaxn{i,1}$.
In this case, since $\hat{x}_{i}=1$, we have $\gmaxn{i} = \gmaxn{i,1}$.
Thus, $| \gmaxn{i-1} - \gmaxn{i} | = | \gmaxn{i,1} - \gmaxn{i,1} | = 0 \leq 4 \cdot \Delta_{\Np}$.

In both cases (whether $\hat{x}_i=0$ or $\hat{x}_i=1$), we establish that for each step $i$:
\begin{align*}
| \gmaxn{i-1} - \gmaxn{i} | \leq 4 \cdot \Delta_{\Np}.
\end{align*}
The total difference between the true global maximum $\gmax = \gmaxn{0}$ and the value of the function at the constructed configuration $g(\hat{\vx}) = \gmaxn{|\set{V}|}$ can be bounded by a telescoping sum:
\begin{align*}
\Bigl| \gmax - g(\hat{\vx}) \Bigr|
&= \Bigl| \gmaxn{0} - \gmaxn{|\set{V}|} \Bigr| \\
&= \Biggl| \sum_{i = 1}^{|\set{V}|} \bigl( \gmaxn{i-1} - \gmaxn{i} \bigr) \Biggr| \\
&\leq \sum_{i = 1}^{|\set{V}|} \bigl| \gmaxn{i-1} - \gmaxn{i} \bigr| \\
&\leq \sum_{i = 1}^{|\set{V}|} (4 \cdot \Delta_{\Np}) \\
&= 4 \cdot |\set{V}| \cdot \Delta_{\Np} \\
&= 4 \cdot |\set{V}| \cdot (\Nsfn + 1) \cdot 2^{-\Np}.
\end{align*}
This establishes the accuracy bound in inequality~\eqref{eqn:prop_approx_x_accuracy_revised}.

\section{Proof of Theorem~\ref{thm:hierarchical_performance}}
\label{apx:hierarchical_performance}

We use the notation from the Setup and Theorem~\ref{thm:hierarchical_performance}.
In particular, $\mathcal{T}$ denotes the decomposition tree, and each internal node $\graphG^{(\ell)}_{\vk}$ is decomposed by $\set{V}^{(\ell+1)}_{\mathrm{B},\vk}$ into children $\{\graphG^{(\ell+1)}_{\vk,k}\}_{k\in\mathcal{K}_{\ell+1}(\vk)}$.
Each such internal node runs a single-level instance of $A_{\mathrm{dist}}$ (Algorithm~\ref{alg:dist-opt}) on its children, and each leaf runs the local maximization routine $\matr{U}_{\max}$ (Appendix~\ref{apx:generic_maximization_subroutine}).
The proof consists of instantiating the single-level guarantees (Theorems~\ref{thm: overall_algorithm_properties} and~\ref{thm: overall_algorithm_performance}) at each node and composing them along the tree.

\textbf{1. Qubits per processor (Item~\ref{itm:hier_qubits_per_proc}).}
Fix an internal node $\graphG^{(\ell)}_{\vk}$.
When its sub-coordinator executes $A_{\mathrm{dist}}$ on its children, it plays the role of the coordinator in Theorem~\ref{thm: overall_algorithm_performance}, with the substitutions
\[
\Nsfn \leftarrow K_{\ell+1}(\vk),\qquad
\setVb \leftarrow \set{V}^{(\ell+1)}_{\mathrm{B},\vk},\qquad
\setVbnsfnn{k} \leftarrow \set{V}^{(\ell+1)}_{\mathrm{B},\vk,k}.
\]
Applying Item~\ref{prop:qubits_per_processor_thm_overall_alg_perf} of Theorem~\ref{thm: overall_algorithm_performance} and taking the dominant problem-dependent term yields the stated scaling in $\lvert \set{V}^{(\ell+1)}_{\mathrm{B},\vk} \rvert$ (up to factors polynomial in $K_{\ell+1}(\vk)$, $\Np$, and $\log(1/\delta)$).
The leaf bound follows analogously from the worker expression in Item~\ref{prop:qubits_per_processor_thm_overall_alg_perf}.

\textbf{2. Query complexity (Item~\ref{itm:hier_query_complexity}).}
For each node $\graphG^{(\ell)}_{\vk}\in\mathcal{T}$, let $C(\graphG^{(\ell)}_{\vk})$ denote the total number of queries to the \emph{leaf-level} local oracles incurred by one execution of the hierarchical subroutine rooted at $\graphG^{(\ell)}_{\vk}$.

\emph{Base case (leaf).}
If $\ell=L$, then $\graphG^{(L)}_{\vk}$ executes $\matr{U}_{\max}$ on local data.
By Proposition~\ref{prop:generic_maximization_performance}, this contributes
$
C(\graphG^{(L)}_{\vk}) = C_{\mathrm{leaf}}(\graphG^{(L)}_{\vk})
$
with $C_{\mathrm{leaf}}(\cdot)$ as defined in~\eqref{eqn:hier_leaf_query_cost_def}.

\emph{Recursive case (internal node).}
Let $\ell<L$.
If $\ell\notin \set{L}_{\mathrm M}$ (coherent), then the level-$\ell$ sub-coordinator executes $A_{\mathrm{dist}}$ and, by Item~\ref{prop: overall_query_complexity: final} of Theorem~\ref{thm: overall_algorithm_properties}, makes
$\Np\cdot(2^{t(p_{\min,\vk}^{(\ell)},\delta)+2}-2)$ applications of its state-preparation oracle $\matr{U}^{(\ell)}_{\mathrm{ini}}$ (and its inverse).
Each such application invokes \emph{each} child subroutine once, hence
\begin{align}
  C(\graphG^{(\ell)}_{\vk})
  =
  C_{\mathrm{eff}}(\graphG^{(\ell)}_{\vk})
  \cdot
  \sum_{k\in\mathcal{K}_{\ell+1}(\vk)} C(\graphG^{(\ell+1)}_{\vk,k}),
  \label{eqn:hier_query_recursion}
\end{align}
where $C_{\mathrm{eff}}(\graphG^{(\ell)}_{\vk})$ is given by~\eqref{eqn:hier_Ceff_def}.
If $\ell\in \set{L}_{\mathrm M}$ (measured), the coordinator enumerates all boundary configurations, so the number of child invocations is $\lvert \setx_{\set{V}^{(\ell+1)}_{\mathrm{B},\vk}}\rvert$, which again gives~\eqref{eqn:hier_query_recursion}.
Unrolling the recursion from the root to the leaves yields
\[
C_{\mathrm{hier}} = C(\graphG^{(0)})
=
\sum_{\graphG^{(L)}_{\vk}\in\mathcal{T}}
N_{\mathrm{inv}}(\graphG^{(L)}_{\vk})\cdot C_{\mathrm{leaf}}(\graphG^{(L)}_{\vk}),
\]
with invocation counts $N_{\mathrm{inv}}(\cdot)$ defined in~\eqref{eqn:hier_invocation_count_def}.
Finally, the coherent big-$O$ bound~\eqref{eqn:hier_Ceff_coherent_bigO} follows by substituting~\eqref{eqn:pmin_hier_level_ell} into the asymptotic bound in Item~\ref{prop: overall_query_complexity: final} of Theorem~\ref{thm: overall_algorithm_properties}.

\textbf{3. EPR consumption (Item~\ref{itm:hier_epr}).}
Let $E(\graphG^{(\ell)}_{\vk})$ be the total number of EPR pairs consumed by one execution of the hierarchical subroutine rooted at $\graphG^{(\ell)}_{\vk}$.
If $\ell=L$ (leaf), then $E(\graphG^{(L)}_{\vk})=0$ under our accounting model because the computation is local.
If $\ell<L$ (internal node), then one execution consists of:
(i) the level-$\ell$ single-level routine $A_{\mathrm{dist}}$ itself, which consumes $N_{\mathrm{EPR}}(\graphG^{(\ell)}_{\vk})$ EPR pairs for coherent communication with the children (expression~\eqref{eqn:total_epr_consumption} instantiated with the substitutions above, and $N_{\mathrm{EPR}}(\graphG^{(\ell)}_{\vk})\defeq 0$ when $\ell\in\set{L}_{\mathrm M}$), and
(ii) the child subroutines, each invoked $C_{\mathrm{eff}}(\graphG^{(\ell)}_{\vk})$ times.
Therefore,
\begin{align}
  E(\graphG^{(\ell)}_{\vk})
  =
  N_{\mathrm{EPR}}(\graphG^{(\ell)}_{\vk})
  +
  C_{\mathrm{eff}}(\graphG^{(\ell)}_{\vk})
  \cdot
  \sum_{k\in\mathcal{K}_{\ell+1}(\vk)} E(\graphG^{(\ell+1)}_{\vk,k}).
  \label{eqn:hier_epr_recursion}
\end{align}
Unrolling~\eqref{eqn:hier_epr_recursion} from the root yields the node-wise sum in Item~\ref{itm:hier_epr}, where each node's local EPR cost is weighted by its invocation count $N_{\mathrm{inv}}(\cdot)$.

\textbf{4. Success probability and precision (Item~\ref{itm:hier_success_precision}).}
\begin{itemize}
  \item \textbf{Coherent-cascade success.}
  We prove by induction on the level (from $L$ up to $0$) that each coherent subroutine succeeds with probability at least $(1-\delta)^{2\Np}$.
  For $\ell=L$, a leaf runs $\matr{U}_{\max}$, which succeeds with probability at least $(1-\delta)^{2\Np}$ by Proposition~\ref{prop:generic_maximization_performance}.
  For the inductive step, fix an internal node $\graphG^{(\ell)}_{\vk}$ with $\ell<L$ executed coherently.
  By the induction hypothesis, each child routine succeeds with probability at least $(1-\delta)^{2\Np}$.
  Hence the success branch of the state-preparation oracle for $\graphG^{(\ell)}_{\vk}$ has overlap at least
  $
  (1-\delta)^{2\Np\,K_{\ell+1}(\vk)} / \lvert\setx_{\set{V}^{(\ell+1)}_{\mathrm{B},\vk}}\rvert
  $,
  which is exactly $p_{\min,\vk}^{(\ell)}$ from~\eqref{eqn:pmin_hier_level_ell}.
  Applying Item~\ref{prop: overall_success_probability: final} of Theorem~\ref{thm: overall_algorithm_properties} to this level-$\ell$ instance of $A_{\mathrm{dist}}$ yields success probability at least $(1-\delta)^{2\Np}$.
  The claim for the root follows.

	  \item \textbf{Hybrid success.}
	  In hybrid execution, success requires that every coherently executed subroutine whose output is subsequently read out, namely each node in $\mathcal{T}_{\mathrm{out}}$ from Theorem~\ref{thm:hierarchical_performance}, returns the correct $\Np$-bit result across all its $N_{\mathrm{inv}}(\cdot)$ invocations.
	  By the coherent-cascade argument above, each such coherent execution succeeds with probability at least $(1-\delta)^{2\Np}$ (for leaves this is Proposition~\ref{prop:generic_maximization_performance}, and for internal coherent nodes it follows from Item~\ref{prop: overall_success_probability: final} of Theorem~\ref{thm: overall_algorithm_properties} applied at that node).
	  Therefore, by the chain rule for conditional probability,
	  \(
	  \Pr(\text{success})
	  \ge
	  (1-\delta)^{2\Np\,N_{\mathrm{exec}}}.
	  \)
	  The sufficient choice of $\delta$ for a target overall failure probability $\delta'$ then follows from Bernoulli's inequality, as stated in Theorem~\ref{thm:hierarchical_performance}.

  \item \textbf{Additive-error certificate.}
  For each node $\graphG^{(\ell)}_{\vk}$, let $\gmaxn{\ell,\vk}$ denote the true optimum value of that subproblem, and let $\hat{g}^{(\ell)}_{\vk}\defeq \zdec(\mathbf{z}^{(\ell)}_{\vk})$ be the decoded value returned by the corresponding hierarchical subroutine upon success.
  Define the additive error $\epsilon^{(\ell)}_{\vk}\defeq \gmaxn{\ell,\vk}-\hat{g}^{(\ell)}_{\vk}\ge 0$.
  For a leaf, Proposition~\ref{prop:generic_maximization_performance} gives $\epsilon^{(L)}_{\vk}\le 2^{-\Np}$.
  For an internal node, we decompose the error as in the single-level analysis (\textit{cf.}~\eqref{eqn: bound of gstar and znsfn} and the discussion around~\eqref{eqn:prop_approx_gmax_revised}):
  \begin{align}
    \epsilon^{(\ell)}_{\vk}
    &=
    \Biggl(\gmaxn{\ell,\vk} - \sum_{k\in\mathcal{K}_{\ell+1}(\vk)} \hat{g}^{(\ell+1)}_{\vk,k}\Biggr)
    +
    \Biggl(\underbrace{\sum_{k\in\mathcal{K}_{\ell+1}(\vk)} \hat{g}^{(\ell+1)}_{\vk,k} - \hat{g}^{(\ell)}_{\vk}}_{\overset{(a)}{\leq}2^{-\Np} }\Biggr)
    \nonumber\\
    &\leq
    \sum_{k\in\mathcal{K}_{\ell+1}(\vk)} \epsilon^{(\ell+1)}_{\vk,k}
    + 2^{-\Np},
    \label{eqn:hier_precision_recursion}
  \end{align}
  where step $(a)$ follows from the fact that $\hat{g}^{(\ell)}_{\vk}$ is, by construction of $A_{\mathrm{dist}}$ (Item~\ref{prop: overall_success_probability: final} of Theorem~\ref{thm: overall_algorithm_properties}), the best $\Np$-bit lower approximation of the summed child values, and any such lower approximation has resolution at most $2^{-\Np}$.
  Applying~\eqref{eqn:hier_precision_recursion} recursively down the tree shows that the root error is bounded by one $2^{-\Np}$ contribution per node in $\mathcal{T}$, yielding
  $
  0 \le \gmax - \hat{g}^{(0)} \le |\mathcal{T}|\cdot 2^{-\Np},
  $
  which is the claimed precision bound.
\end{itemize}

\section{Hierarchical Execution: Gate-level Policy Study (Queries and Entanglement)}
\label{sec:hier_chain_multilevel}

\begin{figure}[t]
\centering
\includegraphics[width=0.6\linewidth]{fig_hier_queries_vs_n_chain_from_csv.png}
\caption{Total oracle-query count $Q_{\mathrm{total}}$ versus $|\setV|$ (total number of variable nodes) for the coherent policy and three hybrid-policy benchmarks. The $y$-axis is logarithmic.}
\label{fig:hier_queries_vs_n_chain}
\centering
\includegraphics[width=0.6\linewidth]{fig_hier_epr_vs_n_chain_from_csv.png}
\caption{Total EPR-pair count $E_{\mathrm{total}}$ versus $|\setV|$ for the same benchmark. We plot only policies with nonzero coherent inter-node communication; \texttt{hybrid\_all} is identically zero in this accounting model. The $y$-axis is logarithmic.}
\label{fig:hier_epr_vs_n_chain}
\end{figure}

\textbf{Goal and policies.}
This experiment isolates a single architectural choice in the hierarchical framework of Section~\ref{sec:hierarchical_algorithm}: the placement of intermediate measurements. We compare the fully coherent policy (\texttt{coherent}) with three hybrid policies: \texttt{hybrid\_root} (measurement at the root only), \texttt{hybrid\_level1} (measurement at level~1 only), and \texttt{hybrid\_all} (measurement at both internal levels). All policies use the same benchmark family, decomposition path, accuracy parameters, and accounting rules. The observed differences therefore isolate the effect of measurement placement itself.

\textbf{Parameter setting (gate-level).}
We use exact gate-level state-vector simulation with
\begin{align*}
  (b_0,b_1,S_0,S_1,r)\in
\{(5,0,1,1,r)\}_{r=3}^{5}
\cup
\{(5,1,1,1,r)\}_{r=5}^{8}
\cup
\{(6,1,1,1,r)\}_{r=8}^{13}
\end{align*}
and fixed $(\Np,\delta,t_{\max})=(1,0.2,2)$.
Here $b_0$ and $b_1$ are boundary widths at levels 0 and 1, respectively, $S_0$ and $S_1$ are branching factors at levels 0 and 1, respectively, and $r$ is the leaf-local bit width.
This monotone parameter path yields the contiguous sweep $|\setV|=8,\ldots,20$ while keeping the state dimension feasible for exact state-vector simulation. All reported oracle-query and EPR totals are extracted directly from the simulator's gate-level counters, so the figures validate the implemented circuits.

\textbf{Query ordering.}
The observed ordering is
\[
  \texttt{coherent} < \texttt{hybrid\_root} < \texttt{hybrid\_level1} < \texttt{hybrid\_all}.
\]
This ordering is exactly the behavior predicted by Item~\ref{itm:hier_query_complexity}. The total query count is built from the leaf cost in~\eqref{eqn:hier_leaf_query_cost_def}, weighted by the recursive invocation count in~\eqref{eqn:hier_invocation_count_def}. In this two-level hierarchy, \texttt{hybrid\_root}, \texttt{hybrid\_level1}, and \texttt{hybrid\_all} replace the coherent factor in~\eqref{eqn:hier_Ceff_def} at the root only, at level~1 only, and at both internal levels, respectively. Each such replacement increases the number of lower-level calls, so the ordering is strict across the full sweep. At $|\setV|=20$, the query multipliers relative to \texttt{coherent} are $4.27$, $2.55\times 10^{3}$, and $1.17\times 10^{4}$ for \texttt{hybrid\_root}, \texttt{hybrid\_level1}, and \texttt{hybrid\_all}, respectively.

\textbf{Entanglement ordering.}
The EPR accounting follows Item~\ref{itm:hier_epr}, which counts only coherent inter-node communication and inherits the per-node cost from~\eqref{eqn:total_epr_consumption}. Accordingly, \texttt{hybrid\_all} is identically zero in this accounting model because both internal interfaces are measured. At $|\setV|=20$, the EPR multipliers relative to \texttt{coherent} are $3.49$ and $0.236$ for \texttt{hybrid\_root} and \texttt{hybrid\_level1}, respectively. This contrast is important. Measuring at the root removes the root-level coherent primitive, but it also increases the number of coherent calls executed below the root; in the present benchmark, that replication dominates and leads to higher total EPR consumption. By contrast, measuring at level~1 removes the deeper coherent interface itself and therefore lowers the total EPR count. The figure thus confirms that Item~\ref{itm:hier_epr} must be read as a node-wise sum weighted by invocation multiplicity, rather than as a purely local saving rule.

\textbf{Interpreting the curve shapes.}
To obtain a dense \emph{contiguous} sweep in $|\setV|$ while keeping the state-vector dimension feasible, we vary the decomposition parameters $(b_0,b_1,S_0,S_1,r)$ along a \emph{monotone} path. 
The resulting piecewise behavior is expected. Policies that keep a level coherent are sensitive to integer-valued circuit parameters and to discrete changes in boundary width, so their curves exhibit plateau--jump behavior. Policies that measure a level inherit explicit branching through~\eqref{eqn:hier_Ceff_def}; along the present benchmark path, this yields approximately log-linear growth whenever the dominant cost comes from enumeration.
Concretely, the sweep comprises three controlled regimes:
\begin{itemize}
    \item \emph{Regime A} ($|\setV|=8,\ldots,10$): we fix $(b_0,b_1,S_0,S_1)=(5,0,1,1)$ and increase only the leaf-local width $r$. This primarily enlarges the measured enumeration domain at the lower interface. Consequently, \texttt{hybrid\_level1} and \texttt{hybrid\_all} rise rapidly, whereas the coherent policy and \texttt{hybrid\_root} remain nearly flat while the relevant integer circuit parameters stay unchanged.
    \item \emph{Regime B} ($|\setV|=11,\ldots,14$): we increase the level-1 boundary from $b_1=0$ to $b_1=1$ and continue increasing $r$. This changes the level-1 effective factor in~\eqref{eqn:hier_Ceff_def} and also modifies the coherent communication cost inherited from~\eqref{eqn:total_epr_consumption}, producing a visible step in the policies that still use or replicate the level-1 coherent primitive.
    \item \emph{Regime C} ($|\setV|=15,\ldots,20$): we increase the root boundary from $b_0=5$ to $b_0=6$ and continue increasing $r$. This changes the root-level factor in~\eqref{eqn:hier_Ceff_def}, produces a visible jump in \texttt{hybrid\_root}, and increases the coherent EPR cost per invocation. By contrast, in this two-level benchmark \texttt{hybrid\_all} effectively enumerates the full assignment space, so its query curve remains close to a straight line on the logarithmic scale.
\end{itemize}

\textbf{Implication.}
Measurement placement trades coherent depth against multiplicative replication of lower-level work. The simulations make this trade-off explicit at the circuit level: the query penalty is governed by the invocation recursion in~\eqref{eqn:hier_invocation_count_def}, while the entanglement cost depends on where coherent interfaces remain. The main value of the study is therefore not a single universally best policy, but a concrete validation that Theorem~\ref{thm:hierarchical_performance} predicts how the preferred execution mode should depend on the available entanglement budget, allowable circuit depth, and latency constraints of the target network.


\bibliographystyle{quantum}
\bibliography{biblio_used}

\fi
\end{document}   